\newif\ifshort\shortfalse
\newcommand{\citet}{\textcite}
\newcommand{\AppRef}[2]{%
\ifshort%
Appendix #1~\cite{extended_version}%
\else%
\Cref{#2}%
\fi%
}
\renewcommand{\qwbundle}[2][]{%
  \pgfkeys{/quantikz/gates/.cd,style=,Strike Width=0.08cm,Strike Height=0.12cm,#1}%
  \pgfkeysgetvalue{/quantikz/gates/style}{\qz@style}%
  \pgfkeysgetvalue{/quantikz/gates/Strike Width}{\qz@sw}%
  \pgfkeysgetvalue{/quantikz/gates/Strike Height}{\qz@sh}%
  \expanded{%
    \noexpand\arrow[strike arrow={\qz@sw}{\qz@sh}{\unexpanded{#2}},\qz@style,phantom]{l}%
  }%
}
\newcommand{\CC}{\mathbb C}
\newcommand{\II}{\mathbb I}
\newcommand{\NN}{\mathbb N}
\newcommand{\VV}{\mathbb V}
\newcommand{\WW}{\mathbb W}
\newcommand{\cE}{\mathcal{E}}
\newcommand{\cH}{\mathcal{H}}
\newcommand{\cL}{\mathcal{L}}
\newcommand{\cR}{\mathcal{R}}
\newcommand{\rarr}{\rightarrow}
\newcommand{\parens}[1]{\left(#1\right)}
\newcommand{\braces}[1]{\left\{#1\right\}}
\newcommand{\abs}[1]{\left\lvert #1 \right\rvert}
\newcommand{\norm}[1]{\left\lVert #1 \right\rVert}
\newcommand{\adj}{^\dagger}
\newcommand\doubleplus{\mathbin{+\!+}}
\newcommand\partition{\mathbin{\|}}
\newcommand{\bm}[1]{\begin{bmatrix}#1\end{bmatrix}}
\newcommand{\zero}{\texttt{0}}
\newcommand{\one}{\texttt{1}}
\DeclarePairedDelimiter{\msem}{\llbracket}{\rrbracket}
\newcommand*{\defeq}{\mathrel{\vcenter{\baselineskip0.5ex \lineskiplimit0pt
                     \hbox{\scriptsize.}\hbox{\scriptsize.}}}%
                     =}
\newcommand*{\defeqq}{\mathrel{\vcenter{\baselineskip0.5ex \lineskiplimit0pt
                     \hbox{\scriptsize.}\hbox{\scriptsize.}}}%
                     \mathrel{\vcenter{\baselineskip0.5ex \lineskiplimit0pt
                     \hbox{\scriptsize.}\hbox{\scriptsize.}}}%
                     =}
\newcommand{\subcap}[1]{_{\textsc{#1}}}
\newcommand{\Bit}{\mathrm{Bit}}
\newcommand{\lef}[2]{\texttt{left}_{\color{gray}{#1 \oplus #2}}}
\newcommand{\rit}[2]{\texttt{right}_{\color{gray}{#1 \oplus #2}}}
\newcommand{\pair}[2]{\texttt{(} #1 \texttt{,} #2 \texttt{)}}
\newcommand{\spanning}[2]{\textup{spanning}_{\color{gray} #1}\left( \begin{aligned} #2 \end{aligned} \right)}
\newcommand{\ortho}[2]{\textup{ortho}_{\color{gray} #1}\left( \begin{aligned} #2 \end{aligned} \right)}
\newcommand{\erases}[1]{\textup{erases}_{\color{gray} #1}}
\newcommand{\trycatch}[2]{\texttt{try}\; #1 \; \texttt{catch}\; #2}
\newcommand{\cntrl}[4]{\texttt{ctrl } #1 \tensor*[_{\color{gray}{#2}}]{\left\{ \begin{aligned} #3 \end{aligned} \right\}}{_{\color{gray}{#4}}}}
\newcommand{\pmatch}[3]{\texttt{pmatch } \tensor*[_{\color{gray}{#1}}]{\left\{ \begin{aligned} #2 \end{aligned} \right\}}{_{\color{gray}{#3}}}}
\newcommand{\match}[4]{\texttt{match } #1 \tensor*[_{\color{gray}{#2}}]{\left\{ \begin{aligned} #3 \end{aligned} \right\}}{_{\color{gray}{#4}}}}
\newcommand{\gphase}[2]{\texttt {gphase}_{\color{gray}{#1}}\texttt{(} #2  \texttt{)}}
\newcommand{\rphase}[4]{\texttt {rphase}_{\color{gray}{#1}} \left\{ \begin{aligned} #2 &\mapsto #3 \\ \texttt{else} &\mapsto #4 \end{aligned} \right\}}
\newcommand{\uthree}[3]{\texttt u_{\texttt 3} \texttt{(} #1 \texttt{,} #2 \texttt{,} #3 \texttt{)}}
\newcommand{\had}{\texttt{had}}
\newcommand{\Void}{\texttt{Void}}
\newcommand{\Unit}{\texttt{Unit}}
\renewcommand{\unit}{\texttt{()}}
\newcommand{\Ccptp}{\mathcal{C}\subcap{cptp}}
\newcommand{\Leaf}{\mathrm{Leaf}}
\DeclareMathOperator{\tr}{tr}
\DeclareMathOperator{\size}{size}
\DeclareMathOperator{\height}{height}
\DeclareMathOperator{\enc}{enc}
\DeclareMathOperator{\Span}{span}
\DeclareMathOperator{\un}{un}
\DeclareMathOperator{\iso}{iso}
\DeclareMathOperator{\classical}{classical}
\DeclareMathOperator{\FV}{FV}
\DeclareMathOperator{\dom}{dom}
\DeclareMathOperator{\LEVEL}{\textsc{level}}
\DeclareMathOperator{\FINALMERGE}{\textsc{FinalMerge}}
\DeclareMathOperator{\MERGE}{\textsc{merge}}
\DeclareMathOperator{\inj}{inj}
\DeclareMathOperator{\cptp}{CPTP}
\newcommand{\op}[2]{| #1 \rangle\!\langle #2 |}
\newcommand{\ip}[2]{\langle #1 | #2 \rangle}
\definecolor{codegreen}{rgb}{0,0.6,0}
\definecolor{codegray}{rgb}{0.5,0.5,0.5}
\definecolor{codeblue}{rgb}{0,0,0.9}
\definecolor{codeteal}{rgb}{0.1,0.5,0.5}
\definecolor{codepurple}{rgb}{0.58,0,0.82}
\definecolor{codeorange}{rgb}{0.5,0.2,0}
\definecolor{backcolour}{rgb}{1,1,1}
\lstdefinelanguage{qunity}
{
morekeywords={def,type,end,if,then,else,endif,fail,of,try,catch,lambda,let,in,ctrl,match,pmatch},
morekeywords=[2]{left,right,u3,gphase,rphase,pi,euler,Void,Unit,Bit,Maybe,Array,List,Num,BinaryTree},
morekeywords=[3]{deutsch,plus,minus,had,f,add_const,id,multiply_helper,multiply,grover_iter,grover,equal_superpos,is_odd_sum,not,equal_superpos_list,mod_mult,mod_exp,order_finding,repeated,num_to_state,fst,snd,adjoint,reverse,qft,and,reflect,add,double,gate_1q,controlled_1q,cnot,apply_phase,phase_estimation,multi_and,couple,rotations,cdkm_maj,cdkm_uma,rev_adder_helper,rev_adder,answer},
morecomment=[l]{//},
morecomment=[s]{/*}{*/},
literate= %*
    {unit}{()}1 % I'd call this unit (not Unit)
    {->}{$\to$}2 % can change back
    {|>}{$\rhd$}2 % can change back
    {oplus}{$\oplus$}2 
    {\&}{{{\color{codeorange}\$}}}1
    {@}{{{\color{codeorange}@}}}1
    {\#}{{{\color{codeteal}\#}}}1
    {\&0}{{{\color{codeblue}\$0}}}2
    {\&1}{{{\color{codeblue}\$1}}}2
    {\&Nothing}{{{\color{codeblue}\$Nothing}}}8
    {@Just}{{{\color{codeblue}@Just}}}5
    {\&ListEmpty}{{{\color{codeblue}\$ListEmpty}}}{10}
    {@ListCons}{{{\color{codeblue}@ListCons}}}9
    {\&Leaf}{{{\color{codeblue}\$Leaf}}}5
    {@Node}{{{\color{codeblue}@Node}}}5
    {'a}{{{\color{codeblue}'a}}}2
    {\#a}{{{\color{codeteal}\#a}}}2
    {\#n}{{{\color{codeteal}\#n}}}2
    {\#m}{{{\color{codeteal}\#m}}}2
    {\#i}{{{\color{codeteal}\#i}}}2
    {\#j}{{{\color{codeteal}\#j}}}2
    {\#k}{{{\color{codeteal}\#k}}}2
    {\#p}{{{\color{codeteal}\#p}}}2
    {\#n\_iter}{{{\color{codeteal}\#n\_iter}}}6
}
\lstdefinestyle{mystyle}{
    backgroundcolor=\color{backcolour},   
    commentstyle=\color{codegreen},
    keywordstyle=\color{codepurple},
    keywordstyle=[2]\color{codeblue},
    keywordstyle=[3]\color{codeorange},
    basicstyle=\ttfamily\normalsize,
    breakatwhitespace=false,         
    breaklines=true,                 
    captionpos=b,                    
    keepspaces=true,
    numbers=left,                    
    numbersep=5pt,                  
    showspaces=false,                
    showstringspaces=false,
    showtabs=false,                  
    tabsize=2,
    numbers=none
}
\lstdefinestyle{small}{
    style=mystyle,
    basicstyle=\ttfamily\footnotesize\fontsize{5}{5}\selectfont
}
\newcommand*{\SavedLstInline}{}
\LetLtxMacro\SavedLstInline\lstinline
\DeclareRobustCommand*{\lstinline}{%
  \ifmmode
    \let\SavedBGroup\bgroup
    \def\bgroup{%
      \let\bgroup\SavedBGroup
      \hbox\bgroup
    }%
  \fi
  \SavedLstInline
}
\newcommand{\cmark}{\ding{51}}
\newcommand{\xmark}{\ding{55}}
  \newcommand{\fixme}[1]{\textbf{\textcolor{red}{[ Fixme: #1]}}}
  \newcommand{\todo}[1]{\textbf{\textcolor{green}{[ TODO: #1 ]}}}
  \newcommand{\rnr}[1]{\textbf{\textcolor{blue}{[ Robert: #1 ]}}}
  \newcommand{\leo}[1]{\textbf{\textcolor{red}{[ Leo: #1 ]}}}
  \newcommand{\mikhail}[1]{\textbf{\textcolor{orange}{[ Mikhail: #1 ]}}}
  \newcommand{\fixme}[1]{}
  \newcommand{\todo}[1]{}
  \newcommand{\rnr}[1]{}
  \newcommand{\leo}[1]{}
  \newcommand{\mikhail}[1]{}
\keywords{quantum programming languages, high-level programming languages, quantum control flow, quantum subroutines, compiler optimizations}
\begin{document}

\title{Compositional Quantum Control Flow with Efficient Compilation in Qunity}

\author{Mikhail Mints}
\orcid{0009-0004-4508-353X}
\affiliation{%
  \institution{California Institute of Technology}
  \city{Pasadena}
  \country{USA}
}
\email{mmints@caltech.edu}

\author{Finn Voichick}
\orcid{0000-0002-1913-4178}
\affiliation{%
  \institution{University of Maryland}
  \city{College Park}
  \country{USA}
}
\email{finn@umd.edu}

\author{Leonidas Lampropoulos}
\orcid{0000-0003-0269-9815}
\affiliation{%
  \institution{University of Maryland}
  \city{College Park}
  \country{USA}
}
\email{leonidas@umd.edu}

\author{Robert Rand}
\orcid{0000-0001-6842-5505}
\affiliation{%
  \institution{University of Chicago}
  \city{Chicago}
  \country{USA}
}
\email{rand@uchicago.edu}

\begin{abstract}
Most existing quantum programming languages are based on the quantum circuit model of computation, as higher-level abstractions are particularly challenging to implement---especially ones relating to quantum control flow. The Qunity language, proposed by Voichick et al., offered such an abstraction in the form of a quantum control construct, with great care taken to ensure that the resulting language is still realizable. However, Qunity lacked a working implementation, and the originally proposed compilation procedure was very inefficient, with even simple quantum algorithms compiling to unreasonably large circuits.

In this work, we focus on the efficient compilation of high-level quantum control flow constructs, using Qunity as our starting point. We introduce a wider range of abstractions on top of Qunity's core language that offer compelling trade-offs compared to its existing control construct. We create a complete implementation of a Qunity compiler, which converts high-level Qunity code into the quantum assembly language OpenQASM 3. We develop optimization techniques for multiple stages of the Qunity compilation procedure, including both low-level circuit optimizations as well as methods that consider the high-level structure of a Qunity program, greatly reducing the number of qubits and gates used by the compiler.
\end{abstract}

\maketitle

\section{Introduction}

In recent years, many \textit{high-level quantum programming
languages} have been proposed, aiming to allow algorithm implementers
to work at a higher level of abstraction compared to the quantum
circuit model of computation. Languages such as QML~\cite{QML},
Silq~\cite{Silq}, Tower~\cite{tower}, and Qunity~\cite{Voichick_2023} walk a
fine line between providing constructs familiar from classical
computing and being realizable in quantum hardware. High-level quantum control flow constructs are particularly challenging since quantum programs must be compilable to fixed-length quantum circuits generated by classical computation~\cite{yuan-limits}. 
As a result, proposed high-level quantum languages offer branching
statements with significant restrictions on the expressive powers
of their individual branches.

Among these languages, Qunity stands out for its unique handling of control flow that emphasizes compositionality. Qunity naturally extends classical programming constructs into the domain of quantum computation and admits a \emph{compositional denotational semantics} defined in terms of quantum operators (acting on state vectors) and superoperators (acting on density matrices), allowing for more flexible design than what is possible with unitary gate-based quantum circuits. Qunity's \texttt{ctrl} construct offers a generalization of pattern matching that can coherently control on the output of an arbitrary (possibly irreversible) quantum or classical computation, relying on the BQP subroutine theorem \cite{Watrous2009}. The construct matches the scrutinee against a set of reversible classical patterns and outputs a superposition of the corresponding outcomes, provided that certain conditions (like the orthogonality of the patterns and the consistent erasure of quantum variables) hold.
Consider, for example, Deutsch's algorithm~\cite{deutsch} as written in abstract Qunity syntax:
\begin{align*}
	&\texttt{deutsch}(f) \defeq \\
	&\quad\quad\texttt{let } x \texttt{ =}_{\color{gray}{\Bit}}\; {(\had\; \zero)} \texttt{ in} \\
	&\quad\quad\left(\cntrl{(f\; x)}{\color{gray}\Bit}{\zero &\mapsto x \\ \one &\mapsto x \triangleright \gphase{\Bit}{\pi}}{\Bit}\right) \triangleright \had
\end{align*}

\noindent
Given an \emph{arbitrary} quantum oracle $f$ that takes in and outputs a single qubit,
\texttt{deutsch} will test whether $f$ is constant or not using only a single evaluation (rather than two as in classical computing).
Unlike in other quantum programming languages, Qunity's quantum control flow allows the programmer to write subroutines in a style similar to classical programming, without having to explicitly translate irreversible programs into unitary circuit forms. For instance, to input the constant-one oracle into the above example of Deutsch's algorithm, the programmer can just use $\lambda x \xmapsto{{\color{gray}\Bit}} 1$ instead of constructing a reversible map of the form $U_f \ket{x, y} = \ket{x, y \oplus f(x)} = \ket{x, y \oplus 1}$ as is typical in circuit-style programming. However, Qunity's type system places several constraints on the expressions in the \texttt{ctrl} block to ensure its realizability: specifically, the right-hand-side patterns must \emph{erase} the quantum variables in the scrutinee expression in a consistent way so that the compiler can perform automatic uncomputation.

To address these limitations, we introduce two new pattern matching constructs to Qunity, which trade off the subroutine capabilities of \texttt{ctrl} for greater flexibility in designing pure and mixed quantum computations. The new \texttt{match} construct corresponds to classical or mixed pattern matching and has minimal restrictions, allowing the programmer to easily write classical logic (which may then become a quantum subroutine). By contrast, \texttt{pmatch} (or pure match) allows for symmetric pattern matching between orthogonal sets of pure expressions, similarly to Sabry et al.~\cite{SPM}. This allows us to easily express reversible quantum programs without the erasure restrictions of \texttt{ctrl} and simultaneously aids in compiling Qunity code to efficient circuits.

Compiling Qunity is a key contribution of this work. Qunity's high degree of abstraction from the quantum circuit model of computation makes compilation particularly challenging. In addition to control flow and pattern matching, Qunity also generalizes the notion of error handling through \texttt{try/catch} statements (which can be viewed as projective measurements) and provides support for quantum sum types (whose semantics corresponds to direct sums of Hilbert spaces). To represent these abstractions, the compiler must encode Qunity types and contexts into quantum registers and allocate ancillary qubits to express Qunity's non-unitary semantics in terms of unitary gates, a process which can lead to significant inefficiencies. While \citeauthor{Voichick_2023} describe an algorithm for converting Qunity into low-level qubit circuits, this procedure is more of a
proof-of-realizability than a practical compiler: even simple quantum algorithms tend to be compiled into prohibitively large circuits. In this work, we develop a practical Qunity compiler, incorporating optimizations at various stages of the compilation process that significantly reduce the number of qubits and gates used in the final compiled circuits.

We begin by providing technical background on Qunity (\Cref{sec:background}), followed by a series of examples to introduce
the intricacies of the language.
Then, we make the following contributions:
\begin{itemize}
    \item We develop the first implementation of the Qunity language. We add a surface syntax to the core Qunity language, allowing the user to define parameterized types and subroutines that may involve recursion and higher-order operations (\Cref{sec:examples}).
    \item We introduce new pattern-matching constructs into the core Qunity language, to avoid the restrictions imposed by the quantum control construct and improve the language's expressiveness (\Cref{sec:extending}).
    \item We implement the first working Qunity compiler that converts high-level Qunity code into low-level quantum circuits in OpenQASM 3, as well as a Qunity interpreter (\Cref{sec:compiler_design}).
    \item We design optimizations for the compilation procedure to reduce the number of qubits and gates used in the final compiled circuit, acting at several stages of the compilation process (\Cref{sec:low_level_opt} and \Cref{sec:high_level_opt}).
\end{itemize}

\section{Qunity's Goals, Syntax, and Typing}
\label{sec:background}

In this section, we discuss the central ideas and motivations behind the construction of Qunity's type system and semantics, 
and we will introduce our new surface syntax.

\subsection{Unified Programming in Qunity}

Qunity's primary goal is a \emph{unified} treatment of quantum and classical computation. Traditional models such as Knill's quantum random access machine (QRAM) \cite{Knill_1996} and dynamic circuits~\cite{OpenQASM} have a classical computer constructing and running quantum circuits, using the measurement results in classical control flow to determine what circuits to run next. This reflects the paradigm of ``quantum data, classical control''~\cite{Selinger2004a}, creating a clear separation between classical and quantum components of an algorithm. 

On the other hand, Qunity aims to blur the line between ``classical'' and ``quantum'' as much as possible, introducing quantum language constructs that generalize classical ones while emphasizing compositionality. 
While Qunity still allows, in principle, to compile Qunity programs to dynamic circuits, it is 
not tailored for 
quantum algorithms that rely heavily on classical computation (e.g. ones that require
floating point manipulation such as variational hybrid quantum-classical algorithms~\cite{McClean_2016} and quantum machine learning~\cite{Wang_2024}).
Instead, Qunity is designed to support the implementation of complex quantum algorithms that operate on a level of abstraction above the circuit model, require manipulating complex data structures in quantum superposition, or use irreversible programs as subroutines in a reversible quantum computation.

\subsection{Surface Syntax} 
\label{sec:surface}

Qunity's core language does not support higher-order functions and recursion, due to theoretical limitations on ways in which these ideas can be generalized to the quantum setting~\cite{yuan-limits}. Furthermore, Qunity's core syntax does not allow subroutines to be named and called in a program multiple times. For these reasons, we augment Qunity with a \emph{surface syntax} which provides a layer of \emph{metaprogramming} on top of the core Qunity language. This system allows the user to create parameterized types, expressions, programs, and real number expressions that get evaluated at compile time during the \emph{preprocessing} stage. User-defined types can be written as \emph{variants} with named \emph{constructors}, which are evaluated into the core language's left and right injections during the preprocessing stage.
The full grammar of the surface syntax, as used in Qunity source files, can
be found in \AppRef{B}{app:surface_grammar}.

For concreteness, let us revisit the Deutsch example from the introduction and present it in terms of 
the new surface syntax:

\ifshort\else\newpage\fi

\begin{lstlisting}
def &deutsch{@f : Bit -> Bit} : Bit :=
  let x = &plus in
  ctrl @f(x) [
    &0 -> x;
    &1 -> x |> gphase{pi}
  ]
  |> @had
end
\end{lstlisting}

In this syntax, the symbols \lstinline|&|, \lstinline|@|, and \lstinline|#| are used as sigils to write names for Qunity expressions, programs, and numbers, respectively. Qunity types have names starting with a capital letter, and quantum variables (which are part of the core syntax and remain after the preprocessing stage) start with a lowercase letter or underscore. The syntax \lstinline!|>! is a shorthand for function application: \lstinline!x |> @f! is equivalent to \lstinline|@f(x)|. Additionally, \lstinline|let x = y in z| is syntactic sugar for \lstinline|(lambda x -> z)(y)|. Global phase (\lstinline|gphase|) can be defined as syntactic sugar over the primitive \lstinline|rphase| (relative phase), with \lstinline|gphase{r}| equivalent to \lstinline|rphase{_, r, r}|.

The type \lstinline{Bit} is our first example of a user-defined datatype, which is provided by the Qunity standard library:
\begin{lstlisting}
type Bit := &0 | &1 end
\end{lstlisting}
Here, \lstinline{Bit} is defined as a variant type with two constructors that do not take any arguments. When the preprocessor evaluates this into the base Qunity syntax, it converts the variant types into sum types and the constructors into left and right injections. So, \lstinline|&0| becomes $\lef{\texttt{Unit}}{\texttt{Unit}}\unit$, and \lstinline|&1| becomes $\rit{\texttt{Unit}}{\texttt{Unit}}\unit$.
We can also define \lstinline{@had} and \lstinline|&plus| to represent the Hadamard gate and the $\ket{+}$ state:
\begin{lstlisting}
def @had : Bit -> Bit := u3{pi/2, 0, pi} end
def &plus : Bit := @had(&0) end
\end{lstlisting}
where \lstinline{u3} is a Qunity primitive for general single-qubit gates with the type $\texttt{Unit} \oplus \texttt{Unit} \rightsquigarrow \texttt{Unit} \oplus \texttt{Unit}$ in the core Qunity language.

\subsection{Typing and Semantics}

Qunity has two distinct typing judgments for expressions: \emph{pure} expression typing and \emph{mixed} expression typing. These correspond to two distinct but interrelated semantics: a pure semantics that is defined in terms of norm non-increasing operators acting on state vectors and a mixed semantics that is defined in terms of trace non-increasing superoperators acting on density matrices.

We write $\Gamma \partition \Delta \vdash e : T$ to indicate that expression $e$ has pure type $T$ with respect to classical context $\Gamma$ and quantum context $\Delta$. Similarly, we write $\Gamma \partition \Delta \Vdash e : T$ (using $\Vdash$ instead of $\vdash$) to indicate that $e$ has \emph{mixed type} $T$. The classical contexts here are not to be viewed as literally containing classical variables---a variable being in the classical context indicates that it can only be accessed by copying it in the classical basis, and, unlike quantum variables, its relevance is not enforced by the type system. \AppRef{A.2}{app:type_system} contains a more detailed discussion of the role of these contexts.

We write $\vdash f : T \rightsquigarrow T'$ to indicate that the program $f$ is typed as a \emph{coherent map} from $T$ to $T'$, and we write $\vdash f : T \Rrightarrow T'$ to indicate that $f$ is typed as a \emph{quantum channel}. Coherent maps have operator semantics, acting on pure states, while quantum channels act on density matrices.

The semantics of a pure expression corresponds to a linear map $\msem{\sigma : \Gamma \partition \Delta \vdash e : T} \in \cL(\cH(\Delta), \cH(T))$, sending quantum states in the Hilbert space $\cH(\Delta)$ associated with the quantum context $\Delta$ to states in the space $\cH(T)$ associated with the type $T$ (see \AppRef{A.3}{app:semantics} for definitions of these Hilbert spaces). Similarly, $\msem{\vdash f : T \rightsquigarrow T'} \in \cL(\cH(T), \cH(T'))$ sends states from the space $\cH(T)$ to $\cH(T')$. Pure expressions and coherent maps are used to represent \emph{reversible} quantum computation, which does not discard quantum information and has a well-defined \emph{adjoint}. Note that ``reversible'' does not necessarily mean ``invertible'' and the adjoint of an operator representing the semantics of a pure expression or coherent map does not need to be its inverse. Thus, unlike in the quantum circuit model and languages such as SPM \cite{SPM}, these linear maps do not necessarily need to be unitary. Instead, they are restricted to the much broader class of \emph{contractions} (norm non-increasing operators), which includes \emph{projectors} and non-unitary \emph{isometries}. Consider, for example, the expression \lstinline|lambda &0 -> &0|, which has the operator semantics of a projector:
\[
\ket{0}\bra{0} = \bm{1 & 0 \\ 0 & 0}.
\]
Using \lstinline|&0| as a pattern in the \lstinline|lambda| effectively creates an assertion that the input is the $\ket{0}$ state. If \lstinline|&1| is given as input, the result is $0$ (the zero vector), signifying an ``error state'' or an ``exception'' being thrown and if $\lstinline|&plus|$ (whose semantics is $\ket{+} = \frac{1}{\sqrt{2}}(\ket{0} + \ket{1})$) is given as input, the result is $\frac{1}{\sqrt{2}}\ket{0}$, which can be viewed as a quantum superposition of the assertion succeeding and failing.

Despite the freedom gained by not restricting pure semantics to unitaries, Qunity's type system ensures that pure expressions and programs never discard quantum information, placing relevance constraints on variables in quantum contexts. However, sometimes maintaining reversibility is unnecessary and requires a large amount of tedious bookkeeping. For instance, if a programmer wishes to implement an AND gate reversibly, they would need to implement a $3$-bit operation such as $(a, b, c) \mapsto (a, b,(a \wedge b) \oplus c)$, corresponding to the Toffoli gate. Keeping track of this extra data can be inconvenient and unnecessary. Mixed typing and semantics allow the Qunity programmer to create decoherence by irreversibly discarding quantum information when necessary. The semantics of mixed expressions and quantum channels is described by \emph{trace non-increasing superoperators} acting on the space of \emph{density matrices}. In quantum mechanics, density matrices describe the state of an open quantum system that has interacted with its environment. Quantum algorithm designers often work with both the state vector and density matrix formalisms, and the design of the Qunity language allows both to be implemented in a convenient way. Consider, for example, the Qunity program \lstinline{lambda x -> (x, x) |> lambda (x, y) -> x}. Here, \lstinline[breaklines=false]{lambda x -> (x, x)} implements the isometry $\ket{0, 0}\bra{0} + \ket{1, 1}\bra{1}$,\footnote{Note that Qunity allows variables to be reused: This does not violate the no-cloning theorem \cite{nocloning}, since this corresponds to \emph{sharing} a state along the classical basis (an operation that creates entanglement) rather than \emph{cloning} it.} while \lstinline{lambda (x, y) -> x} cannot be typed as a coherent map since it introduces decoherence by discarding the variable $y$. Taken as a whole, the program shares \lstinline|x| along the classical basis and discards the new copy of it, which effectively performs a \emph{measurement}. If the pure state $\ket{+}$ is input into this program, it transforms into the maximally mixed state $\frac{1}{2}(\ket{0}\bra{0} + \ket{1}\bra{1})$, which is a probabilistic mixture (rather than a quantum superposition) of $\ket{0}$ and $\ket{1}$.

Finally, Qunity's \lstinline|try|/\lstinline|catch| construct can be viewed as performing a measurement to determine whether an exception has occurred and performing another computation if it has. This can convert a norm-decreasing operator into a trace-preserving superoperator. For example, the Qunity expression \lstinline{try &plus |> lambda &0 -> &0 catch &plus} corresponds to the density matrix
\[
\frac{1}{2}\ket{0}\bra{0} + \frac{1}{2}\ket{+}\bra{+} = \bm{3/4 & 1/4 \\ 1/4 & 1/4}.
\]

A density matrix can always be viewed as a partial trace of a pure state on a larger system, and a quantum channel can always be represented as the application of an isometry followed by a partial trace using the Stinespring dilation \cite{stinespring-dilation}. 

The design of Qunity's type system and semantics thus allows for a rich interplay between pure and mixed programming, allowing concepts prominent in quantum algorithm \emph{analysis} to be easily extended into the realm of quantum algorithm \emph{implementation}.

\section{Qunity by Example}\label{sec:examples}

In this section, we demonstrate several examples of quantum algorithms implemented in Qunity. These examples showcase the power of Qunity's compositional quantum control flow constructs: \lstinline|ctrl|, as well as the new \lstinline|pmatch| and \lstinline|match| constructs that we introduce to overcome some of the limitations of \lstinline|ctrl|. \Cref{sec:extending} will discuss these in more technical detail.

\subsection{Order Finding}\label{sec:examples_order_finding}

The following example is a simplified demonstration of the order finding algorithm, which is the quantum part of Shor's algorithm for integer factorization \cite{Shor_1997}. Given relatively prime integers $N$ and $a < N$, the goal is to find an integer $r$ such that $a^r \equiv 1 \pmod{N}$. For the purposes of this demonstration, we will assume that $N$ is a power of two -- the resulting program is not useful for Shor's algorithm (as it assumes $N$ is \emph{odd} and orders modulo $2^n$ are easy to calculate classically~\cite{order_mod_pow2}), but it is simpler to implement and illustrative of Qunity's features.

In our surface language, we can define a datatype of arrays of a given length, and then represent numbers \lstinline|Num{#n}| in little-endian form using this datatype:

\begin{lstlisting}
type Array{#n, 'a} := if #n <= 0 then Unit
                      else 'a * Array{#n - 1, 'a} endif end
type Num{#n} := Array{#n, Bit} end
\end{lstlisting}

Using this representation, we can write the following code to add a fixed number \lstinline|#a| to a quantum \lstinline|#n|-bit number (overflowing so that the addition is modulo $2^{\lstinline|#n|}$) in a reversible way:

\begin{lstlisting}
def @add_const{#n, #a} : Num{#n} -> Num{#n} :=
  if #n <= 0 then
    @id{Unit}
  else
    pmatch [
      (&0, x) -> (if #a % 2 = 0 then &0 else &1 endif,
                    x |> @add_const{#n - 1, (#a-#a%2)/2});
      (&1, x) -> (if #a % 2 = 0 then &1 else &0 endif,
                    x |> @add_const{#n - 1, (#a-#a%2)/2 + #a%2})
    ]
  endif
end
\end{lstlisting}
This showcases the use of the newly added \lstinline|pmatch| construct. This form of pattern matching, which has \emph{pure program} semantics, allows us to reversibly transform between two orthogonal sets of Qunity expressions. The type system imposes orthogonality requirements (\AppRef{F}{app:ortho_judgment}) on \emph{both} sides of the pattern-matching block in \lstinline|pmatch|, while \lstinline|ctrl| only imposes them on the left-hand side patterns. However, since \lstinline|pmatch| does not need to perform automatic uncomputation, it avoids the \emph{erasure judgment} (\AppRef{H}{app:erasure_judgment}) required by \lstinline|ctrl|, so the use of variables on the right-hand side of the pattern matching block is much less restricted in this respect. It would be difficult to define this operation using only the \lstinline|ctrl| construct.

For this example, \lstinline|@add_const| is defined recursively, using versions of itself with \lstinline[breaklines=false]|#n - 1| and different values of \lstinline|#a| depending on whether there is a carry bit. It is clear that \lstinline|@add_const{0, #a}| has unitary semantics since it is the identity map. Now, for any integer \lstinline|#n|, assuming that \lstinline|@add_const{#n - 1, #a}| is unitary, the typechecker will correctly type the \lstinline|pmatch| statement and conclude that \lstinline|@add_const{#n, #a}| is also unitary because the two sets of orthogonal expressions \emph{span} their respective spaces. This ensures that the orthogonality requirements are satisfied.

We can now use constant addition to define modular multiplication by an odd constant \lstinline|#a| modulo $2^{\lstinline|#n|}$, using the approach in~\citet{gidney_modinv}:

\begin{lstlisting}
def @mod_mult{#n, #a} : Num{#n} -> Num{#n} :=
  if #n = 1 then
    @id{Num{#n}}
  else
    lambda (x0, x1) ->
    let (x0, x1) = (x0, x1 |> @mod_mult{#n - 1, #a}) in
    ctrl x0 [
      &0 -> (x0, x1);
      &1 -> (x0, @add_const{#n - 1, (#a - 1) / 2}(x1))
    ]
  endif
end
\end{lstlisting}

This procedure works because if $a$ is odd, then
\[
(1 + 2x)a \equiv 1 + 2 x a + a - 1 \equiv 1 + 2 \parens{xa + \frac{a - 1}{2}}
\]
where all addition is modulo $2^n$. Now, we can define a modular exponentiation program that implements the operation $\ket{x, y} \mapsto \ket{x, y \cdot a^x}$, where we use \lstinline|ctrl| to coherently condition on the bits of $x$ and apply a modular multiplication by the current power of \lstinline|#a| (note that now we cannot use \lstinline|pmatch| since the RHS is not necessarily orthogonal):

\begin{lstlisting}
def @mod_exp{#m, #n, #a} : Num{#m}*Num{#n} -> Num{#m}*Num{#n} :=
  if #m = 0 then @id{Num{#m} * Num{#n}}
  else
    lambda ((x0, x1), y) -> let ((x0, x1), y) = ctrl x0 [
      &0 -> ((x0, x1), y);
      &1 -> ((x0, x1), @mod_mult{#n, #a}(y)) ]
    in let (x0, (x1, y)) =
      (x0, @mod_exp{#m - 1, #n, #a * #a}(x1, y))
    in ((x0, x1), y)
  endif
end
\end{lstlisting}

\ifshort\else\newpage\fi

With all the above machinery in place, we can finally define the main order finding procedure:

\begin{lstlisting}
def &order_finding{#n, #a} : Num{#n} :=
  (&repeated{#n, Bit, &plus}, &num_to_state{#n, 1})
  |> @mod_exp{#n, #n, #a}
  |> @fst{Num{#n}, Num{#n}}
  |> @adjoint{Num{#n}, Num{#n}, @qft{#n}}
  |> @reverse{#n, Bit}
end
\end{lstlisting}

The order finding procedure starts with a uniform superposition of the values of $x$ and $y = 1$, applies the modular exponentiation, then discards the second register (\lstinline|@fst| is defined as just \lstinline|lambda (x, y) -> x|) and applies the inverse quantum Fourier transform (see \AppRef{D.2}{app:qft_impl} for the implementation), reversing the result to display it in big-endian. Measuring the output will result in a random number of the form $N j/r$ for some integer $j$. For instance, running \lstinline[breaklines=false]|&order_finding{5, 13}| would output random multiples of $4$, since the order of $13$ modulo $2^5$ is $8$ and $2^5 / 8 = 4$. The procedure relies on the reversibility of the operations and the pure typing judgment to maintain quantum coherence before applying the inverse quantum Fourier transform.

\subsection{Grover's Algorithm}\label{sec:examples_grover}

An interesting example of the use of \lstinline{ctrl} is Grover's search algorithm~\cite{grover1996}: Given any quantum oracle that takes in a value of some type and outputs a bit, we can start from a uniform superposition of the type and amplify the amplitude of states for which the oracle outputs $1$. The generalized code for Grover's algorithm in Qunity is as follows:

\begin{lstlisting}
def @grover_iter{'a, &equal_superpos : 'a, 
                 @f : 'a -> Bit} : 'a -> 'a :=
  lambda x -> ctrl @f(x) [
    &0 -> x;
    &1 -> x |> gphase{pi}
  ] |> @reflect{'a, &equal_superpos}
end

def &grover{'a, &equal_superpos : 'a, 
            @f : 'a -> Bit, #n_iter} : 'a :=
  if #n_iter = 0 then
    &equal_superpos
  else
    &grover{'a, &equal_superpos, @f, #n_iter - 1} |>
    @grover_iter{'a, &equal_superpos, @f}
  endif
end
\end{lstlisting}

Here \lstinline|@grover_iter| takes in a type parameter \lstinline|'a|, a user-provided expression \lstinline|&equal_superpos| expected to create an equal superposition of all values of type \lstinline|'a| (as there is no way to do this generically), and an oracle \lstinline|@f : 'a -> Bit| that is the function input to Grover's algorithm. The result is a program that transforms inputs of type \lstinline|'a|, bringing them closer to the input that satisfies the oracle. Then, \lstinline|&grover| is a parameterized expression rather than a program: it becomes a Qunity expression after the parameters are substituted in the preprocessing stage. It repeatedly applies \lstinline|@grover_iter| to a state that begins as an equal superposition, for a given number of iterations.
Observe the use of the \lstinline|ctrl| construct in this example: this code works with any function \lstinline|@f|, including an irreversible one. This allows us to define quantum oracles for Grover's algorithm in a completely classical way, without ever needing to worry about the reversibility of the computations. 

The construction of such subroutines is greatly facilitated by the newly added \lstinline|match| construct - a form of mixed or irreversible pattern matching that has minimal restrictions, functioning like the matching constructs in classical programming languages. Consider, for instance, the following example of a function testing whether a list of bits has an odd number of ones:

\begin{lstlisting}
def @is_odd_sum{#n} : List{#n, Bit} -> Bit :=
  if #n = 0 then
    lambda l -> &0
  else
    lambda l -> match l [
      &ListEmpty{#n, Bit} -> &0;
      @ListCons{#n, Bit}(&0, l') -> @is_odd_sum{#n - 1}(l');
      @ListCons{#n, Bit}(&1, l') -> @not(@is_odd_sum{#n - 1}(l'))
    ]
  endif
end
\end{lstlisting}

Here, the \lstinline{List} datatype is a list of variable but bounded length, defined as:
\begin{lstlisting}
type List{#n, 'a} :=
  | &ListEmpty
  | @ListCons of (if #n <= 0 then Void
                    else 'a * List{#n - 1, 'a} endif)
end
\end{lstlisting}

The definition of \lstinline|@is_odd_sum| is typed as a mixed program, with the \lstinline|match| block typed as a mixed expression. It is defined recursively, using \lstinline|@is_odd_sum{#n - 1}| in the RHS of the pattern-matching block. This means that it would be difficult to write this in terms of just the \lstinline|ctrl| construct, which types its RHS expressions as pure. Since \lstinline|match| has mixed semantics, it is not directly responsible for performing any uncomputation and is able to discard quantum data: hence, it can avoid the erasure requirements present in \lstinline|ctrl|.

Now, we can use this as an oracle in Grover's algorithm simply by writing

\begin{lstlisting}
def #n := 2 end
&grover{List{#n, Bit}, &equal_superpos_list{#n},
        @is_odd_sum{#n}, 3}
\end{lstlisting}
where \lstinline|&equal_superpos_list| is an expression generating an equal superposition of all possible lists of bits with lengths bounded by \lstinline{#n} (implementation in \AppRef{D.7}{app:grover_with_lists_impl}). This code uses Grover's algorithm to amplify the probability of measuring the lists of bits (of length at most 2) the sum of whose elements is odd.

\section{Generalizing Qunity's Control Flow}
\label{sec:extending}

Qunity's \lstinline|ctrl| construct is very powerful, as it allows an arbitrary quantum computation, potentially involving decoherence, to be used as a subroutine in a pure, reversible computation. However, this feature is accompanied by some notable restrictions that can make this construct more difficult to use than necessary.

The first significant restriction is that the left-hand-side (LHS) expressions of the \lstinline|ctrl|-block must be classical---that is, they cannot include any invocations of \lstinline|u3| or \lstinline|rphase|, and so their semantics correspond to classical basis states. The second restriction arises from the requirement that the right-hand-side (RHS) expressions must satisfy the \emph{erasure judgment}, which stipulates that all variables in the quantum context $\Delta$ of the scrutinee expression must be present ``in the same way'' in each of the RHS expressions. If, for instance, $\Delta$ involves the variable $x$, this essentially forces the programmer to make all the RHS expressions have the form $(x, \dots)$ if they want to output something other than just $x$ itself. See \AppRef{H}{app:erasure_judgment} for the definition of the erasure inference rules.

The erasure requirement is essential to ensure that it is possible to implement \lstinline|ctrl|: without it, the semantics of \lstinline|ctrl| will not be reversible. At the circuit level, the context $\Delta$ needs to be shared to produce the RHS expressions, but applying the adjoint of the purified scrutinee expression inevitably creates another copy of $\cH(\Delta)$ alongside the output $\cH(T')$, and the erasure judgment is needed to coherently delete this. The classical requirement in the orthogonality judgment does not seem as obvious; indeed, applying an isometric transformation to an orthogonal set of states maintains their orthogonality. However, dropping the classical requirement from \lstinline|ctrl| typing would allow for non-physical norm-\emph{increasing} semantics. For instance, using the definition of \lstinline|ctrl|'s semantics from \AppRef{A.3}{app:semantics} on the expression
\begin{lstlisting}
&0 |> lambda x -> ctrl x [&plus -> x; &minus -> x]
\end{lstlisting}
would result in the physically impossible state $\sqrt{2}\ket{0}$ (with norm $\sqrt{2} > 1$). The correctness of the circuit construction in \AppRef{L.3}{app:typing_judgment_compilation} relies on this classical-basis assumption.

It would be useful to have a construct that defines a reversible transformation between two orthogonal sets of expressions, which do not necessarily need to be in the classical basis, while avoiding the erasure requirement. Also, if the programmer is implementing classical logic (which may be part of a larger quantum algorithm), it may be useful to have a construct that defines \emph{irreversible} control flow in a manner most similar to a \lstinline|match| statement in classical programming languages, again without the restrictions of the erasure judgment. These ideas give rise to two new Qunity primitives: \lstinline|pmatch| and \lstinline|match|.

\begin{table}[ht]
\caption{Comparison of Qunity's three pattern-matching constructs (extended version in \AppRef{C}{app:control_flow_comparison}).}
\begin{minipage}{\columnwidth}
\begin{center}
\begin{tabular}{llll} \toprule
Feature & \lstinline|ctrl| & \lstinline|match| & \lstinline|pmatch| \\\midrule
Purely typed & \cmark & \xmark & \cmark \\
Mixed scrutinee allowed & \cmark & \cmark & \xmark \\
Mixed expressions allowed on RHS & \xmark & \cmark & \xmark \\
Non-orthogonal expressions allowed on RHS & \cmark & \cmark & \xmark \\
Non-classical expressions allowed on LHS & \xmark & \xmark & \cmark \\
Avoids erasure requirements & \xmark & \cmark & \cmark \\
Can be used in the RHS of a \lstinline|ctrl| & \cmark & \xmark & \cmark \\
Can be used in the RHS of a \lstinline|match| & \cmark & \cmark & \cmark \\
Can be used in the LHS or RHS of a \lstinline|pmatch| & \cmark* & \xmark & \cmark* \\
\bottomrule
\end{tabular}
\footnotetext{*if the isometry judgment holds}
\end{center}
\end{minipage}
\end{table}

\subsection{The \texttt{pmatch} Construct}\label{sec:pmatch}

The \lstinline|pmatch| construct (the ``p'' stands for ``pure'') has a design that is very similar to that of the symmetric pattern-matching language (SPM) described by \citet{SPM}. We can write a typing judgment for \lstinline|pmatch| as follows:

\[
\inference{\ortho{T}{e_1, \dots, e_n} & \qquad \varnothing \partition \Delta_j \vdash e_j : T \;\forall j \\ \ortho{T'}{e_1', \dots, e_n'} & \qquad \varnothing \partition \Delta_j \vdash e_j' : T' \;\forall j}{\vdash \pmatch{T}{e_1 &\mapsto e_1' \\ &\cdots \\ e_n &\mapsto e_n'}{T'} : T \rightsquigarrow T'}[\textsc{T-Pmatch}]
\]

\noindent
Each pattern and branch must be appropriately typed using the same quantum context $\Delta_j$, and both the LHS and RHS must satisfy the orthogonality judgment (\AppRef{F}{app:ortho_judgment}). If the classical requirement as in \lstinline|ctrl| was still in place, then \lstinline|pmatch|\footnote{Called ``match'' by \citeauthor{Voichick_2023}, but we change the terminology as \textcolor{codepurple}{\texttt{match}} will be a different construct.} could be described as syntactic sugar over the \lstinline|ctrl| construct, using a technique known as \emph{specialized erasure} \cite{Voichick_2023}.
Instead, we make \lstinline|pmatch| more general by allowing expressions in an arbitrary basis on both sides. This is done at the expense of losing some of the capabilities of \lstinline|ctrl|: since the semantics of \lstinline|pmatch| is that of a pure program rather than a pure expression, it does not include a scrutinee directly and thus does not have the power to uncompute arbitrary mixed expressions via purification. It also has a requirement dictating that the same contexts must be used symmetrically in the LHS and RHS, making it possible to simply flip the two sides to invert the program.

We can describe the semantics of \lstinline|pmatch| as:
\[
\msem{\vdash \pmatch{T}{e_1 &\mapsto e_1' \\ &\cdots \\ e_n &\mapsto e_n'}{T'} : T \rightsquigarrow T'}\ket{v} =
\sum_{j=1}^n \msem{\varnothing : \varnothing \partition \Delta_j \vdash e_j' : T'}
\msem{\varnothing : \varnothing \partition \Delta_j \vdash e_j : T}\adj \ket{v}
\]

Each term of the above sum is formed by applying the adjoint of the semantics of an LHS expression, $\msem{\varnothing : \varnothing \partition \Delta_j \vdash e_j : T}\adj : \cH(T) \rarr \cH(\Delta_j)$, followed by an application of the corresponding RHS expression $\msem{\varnothing : \varnothing \partition \Delta_j \vdash e_j' : T'} : \cH(\Delta_j) \rarr \cH(T')$. The images of the operators $\msem{\varnothing : \varnothing \partition \Delta_j \vdash e_j : T}$ correspond to a set of orthogonal subspaces of $\cH(T)$. In the special case where all the LHS expressions are classical and $\ket{v}$ is a classical basis state, at most one of the terms in the sum will be nonzero. The variables in the pattern will be extracted from the input, and the result will be the RHS expression applied to their values. In general, all the branches of the \lstinline|pmatch| block can be taken in superposition. If the LHS patterns are \emph{spanning} (\AppRef{F}{app:ortho_judgment}), then \lstinline|pmatch| will act as an \emph{isometry}, as every input can be described as a linear combination of the patterns. If the RHS expressions are also spanning, then the semantics of \lstinline|pmatch| will correspond to a \emph{unitary} operator.

For an example of the benefit provided by \lstinline|pmatch|, consider the following Qunity code which prepares an equal superposition of \lstinline|Maybe{Bit}|, which is isomorphic to a qutrit:

\begin{lstlisting}
&0
|> u3{2 * arccos(sqrt(1 / 3)), 0, 0}
|> pmatch [
  &0 -> &Nothing{Bit};
  &1 -> @Just{Bit}(&plus)
]
\end{lstlisting}

Here, we are easily able to transform between two orthogonal bases of expressions, one in the type \lstinline|Bit| and the other in the type \lstinline|Maybe{Bit}|. The semantics of this can be written as:
\begin{align*}
& \msem{\texttt{pmatch}\{\dots\}}
\msem{\texttt{u3}(\arccos(\sqrt{1/3}, 0, 0))}\ket{\texttt{\$0}} = \\
={}& \parens{
\msem{\texttt{Nothing}\{\texttt{Bit}\}}
\msem{\texttt{\$0}}^\dagger +
\msem{\texttt{@Just}\{\texttt{Bit}\}}
\msem{\texttt{\$plus}}
\msem{\texttt{\$1}}^\dagger
}
\parens{
\frac{1}{\sqrt{3}} \ket{\texttt{\$0}}
+ \frac{\sqrt{2}}{\sqrt{3}} \ket{\texttt{\$1}}
} = \\
={}& \parens{
\ket{\texttt{Nothing}\{\texttt{Bit}\}}
\bra{\texttt{\$0}} +
\frac{1}{\sqrt{2}}
\ket{\texttt{@Just}\{\texttt{Bit}\}(\texttt{\$0})}
\bra{\texttt{\$1}} +
\frac{1}{\sqrt{2}}
\ket{\texttt{@Just}\{\texttt{Bit}\}(\texttt{\$1})}
\bra{\texttt{\$1}}
} \\
& \hspace{27em} \parens{
\frac{1}{\sqrt{3}} \ket{\texttt{\$0}}
+ \frac{\sqrt{2}}{\sqrt{3}} \ket{\texttt{\$1}}
} = \\
={}& \frac{1}{\sqrt{3}} \parens{
\ket{\texttt{Nothing}\{\texttt{Bit}\}} +
\ket{\texttt{@Just}\{\texttt{Bit}\}(\texttt{\$0})} +
\ket{\texttt{@Just}\{\texttt{Bit}\}(\texttt{\$1})}
}.
\end{align*}

If we were to implement this only using \lstinline|ctrl|, we would need to write something like the following:
\begin{lstlisting}
&0
|> u3 {2 * arccos(sqrt(1 / 3)), 0, 0}
|> lambda x -> ctrl x [
  &0 -> (x, &Nothing {Bit});
  &1 -> (x, @Just{Bit}(&plus))
]
|> lambda (
  ctrl x' [
    &Nothing{Bit} -> (&0, x');
    @Just{Bit}(_) -> (&1, x')
  ]
) -> x'
\end{lstlisting}
\noindent
This is an instance of the specialized erasure pattern: in order to respect the erasure judgment, we must explicitly uncompute the variable \texttt{x} through an inverted \lstinline|ctrl| expression.

Moreover, isometries can be combined with \lstinline{pmatch} in more complicated ways, such as the following piece of code 
which allows us to view the rest of the tuple in the standard or Fourier basis depending on the value of the first element:
\begin{lstlisting}
&repeated{4, Bit, &minus} |> pmatch [
  (&0, (x, y)) -> (@qft{2}(&plus, (x, ())), @qft{2}(y));
  (&1, @qft{3}(x, y)) ->
      (@qft{2}(&minus, (x, ())), @add_const{2, 1}(y));
]
\end{lstlisting}
Note that we can use the Quantum Fourier Transform (\lstinline|@qft|, given in \AppRef{D.2}{app:qft_impl}) and \lstinline|@add_const| (from \Cref{sec:examples_order_finding}) on both sides of the pattern-matching block since the type system can recognize that both of them have isometric (and in fact, unitary) semantics.

\subsection{The \texttt{match} Construct}\label{sec:match}

In a sense, \lstinline|match| is a step in the opposite direction from \lstinline|pmatch|: while \lstinline|pmatch| extends the ``pure'' capabilities of \lstinline|ctrl|, \lstinline|match| extends its ``mixed'' capabilities. The goal of the \lstinline|match| construct is to have as few restrictions as possible. We can create a typing judgment for \lstinline|match|, ensuring that the types of the scrutinee, patterns, and branches align in the current context, while also requiring orthogonality:
\[
\inference{\Gamma \partition \Delta, \Delta_0 \Vdash e : T \qquad \ortho{T}{e_1, \dots, e_n} \qquad \varnothing \partition \Gamma_j \vdash e_j : T \; \forall j \\
\classical(e_j) \; \forall j \qquad \Gamma, \Gamma_j \partition \Delta, \Delta_1 \Vdash e_j' : T' \; \forall j}
{\Gamma \partition \Delta, \Delta_0, \Delta_1 \Vdash \match{e}{T}{e_1 &\mapsto e_1' \\ &\cdots \\ e_n &\mapsto e_n'}{T'} : T'}[\textsc{T-Match}]
\]
and we can define its semantics as:
\begin{align*}
\msem{\sigma : \Gamma \partition \Delta, \Delta_0, \Delta_1 \Vdash \match{e}{T}{e_1 &\mapsto e_1' \\ &\cdots \\ e_n &\mapsto e_n'}{T'} : T'} \parens{\ket{\tau, \tau_0, \tau_1}\bra{\tau', \tau_0', \tau_1'}} \\
= \sum_{v \in \VV(T)} \bra{v} \parens{\msem{\sigma : \Gamma \partition \Delta, \Delta_0 \Vdash e : T}\parens{\ket{\tau, \tau_0}\bra{\tau', \tau_0'}}} \ket{v} \cdot \\
\cdot \sum_{j=1}^n \sum_{\sigma_j \in \VV(\Gamma_j)}
\bra{\sigma_j} \msem{\varnothing : \varnothing \partition \Gamma_j \vdash e_j : T}\adj \ket{v} \cdot \\
\cdot \msem{\sigma, \sigma_j : \Gamma, \Gamma_j \partition \Delta,\Delta_1 \Vdash e_j' : T'}\parens{\ket{\tau, \tau_1}\bra{\tau', \tau_1'}}
\end{align*}

This is very similar to the semantics of \lstinline|ctrl|, described in \AppRef{A.3}{app:semantics}. Here,
\[
\bra{v} \parens{\msem{\sigma : \Gamma \partition \Delta, \Delta_0 \Vdash e : T}\parens{\ket{\tau, \tau_0}\bra{\tau', \tau_0'}}} \ket{v}
\]
gives the \emph{probability} of the expression $e$ being measured as $\ket{v}$ in the classical basis, with the given classical and quantum context variables. The expression $e$ is thus treated as a ``classical black box'': thus, there is no way to distinguish the pure state $\frac{1}{\sqrt{2}}(\ket{0} + \ket{1})$ from the mixed state $\frac{1}{2}(\ket{0}\bra{0} + \ket{1}\bra{1})$, since their probability distributions when measured in the classical basis are identical. This behavior is exactly the same as in \lstinline|ctrl|. However, unlike \lstinline|ctrl|, which has pure semantics, \lstinline|match| has mixed semantics: it uses the superoperator $\msem{\sigma, \sigma_j : \Gamma, \Gamma_j \partition \Delta,\Delta_1 \Vdash e_j' : T'}$ formed from the RHS expression $e_j'$.

The \lstinline|match| construct is designed to be used in circumstances where we do not care if our computation is reversible or not. For instance, it is convenient to implement a logical-AND operation using \lstinline|match| as follows:
\begin{lstlisting}
def @and : Bit * Bit -> Bit :=
  lambda x -> match x [
    (&1, &1) -> &1;
    else -> &0;
  ]
end
\end{lstlisting}

Note that the \lstinline{else} keyword here is a syntactic construct that gets converted to the remaining expressions spanning \lstinline|Bit * Bit| (that is, \lstinline|(&1, &0)| and \lstinline|(&0, _)|) during the preprocessing stage. The \lstinline|match| construct no longer has the erasure requirement from \lstinline|ctrl|, simply because it does not need to be reversible and can always discard any extra data it has, without the need to uncompute it. Like for \lstinline|pmatch|, there exists a version of \lstinline|match| that can be built as syntactic sugar over \lstinline|ctrl|. This version can be constructed by simply combining all the variables in $\Delta$ into a tuple and pairing it with each of the RHS expressions, and then feeding the result into \lstinline|@snd|. However, the additional feature of \lstinline|match| when it is implemented as a primitive is the fact that the RHS expressions, unlike in \lstinline|ctrl|, do not need to be pure. See the example in \Cref{sec:examples_grover} for an instance where this becomes useful for quantum algorithm implementation. The \lstinline|match| construct is designed to be maximally similar to pattern matching in classical languages, avoiding the restrictions placed on \lstinline|ctrl| other than the orthogonality requirement for the LHS. Using the \lstinline|match| construct, Qunity programmers may write long and complex classical algorithms without worrying about reversibility, and then use them as subroutines in larger quantum algorithms.

\section{Compiler Design}\label{sec:compiler_design}

\begin{wrapfigure}{r}{0.5\textwidth}
\vspace{-1.5em}
\centering
\resizebox{\linewidth}{!}{%
\begin{circuitikz}
\tikzstyle{every node}=[font=\LARGE]
\draw  (6.25,14.5) rectangle  node {\LARGE Qunity Source Code} (11.25,13.25);
\draw  (6.25,12) rectangle  node {\LARGE Qunity Surface Syntax} (11.25,10.75);
\draw [->, >=Stealth] (8.75,13.25) -- (8.75,12)node[pos=0.5, fill=white]{Lexing, Parsing};
\draw  (6.25,9.5) rectangle  node {\LARGE Qunity Core Language} (11.25,8.25);
\draw [->, >=Stealth] (8.75,10.75) -- (8.75,9.5)node[pos=0.5, fill=white]{Preprocessing, Definition Evaluation};
\draw  (-1.25,4.5) rectangle  node {\LARGE Superoperators as Matrices} (6.25,3.25);
\draw [->, dashed, >=Stealth] (6.25,5.75) -- (2.5,4.5)node[pos=0.5, fill=white]{Semantics Evaluation};
\draw  (11.25,4.5) rectangle  node {\LARGE Intermediate Representation} (18.75,3.25);
\draw [->, >=Stealth] (11.25,5.75) -- (15,4.5)node[pos=0.5, fill=white]{Compilation Step 1};
\draw  (11.25,2) rectangle  node {\LARGE Low-Level Circuit Specification} (18.75,0.75);
\draw [->, >=Stealth] (15,3.25) -- (15,2)node[pos=0.5, fill=white]{Compilation Step 2};
\draw  (12.5,-0.5) rectangle  node {\LARGE Low-Level Circuit} (17.5,-1.75);
\draw [->, >=Stealth] (15,0.75) -- (15,-0.5)node[pos=0.5, fill=white]{Circuit Instantiation, Postprocessing};
\draw  (12.5,-3) rectangle  node {\LARGE OpenQASM} (17.5,-4.25);
\draw [->, >=Stealth] (15,-1.75) -- (15,-3)node[pos=0.5, fill=white]{Conversion};
\draw  (-1.25,0.75) rectangle  node {\LARGE Final Density Matrices} (6.25,-0.5);
\draw [->, dashed, >=Stealth] (2.5,3.25) -- (2.5,0.75)node[pos=0.5, fill=white]{Matrix Computation};
\draw  (-1.25,-3) rectangle  node {\LARGE Output Distribution (or Sample)} (6.25,-4.25);
\draw [->, dashed, >=Stealth] (2.5,-0.5) -- (2.5,-3)node[pos=0.5, fill=white]{Born Rule};
\draw [->, dashed, >=Stealth] (12.5,-1) -- (6.25,0.25)node[pos=0.5, fill=white]{Circuit Simulation};
\draw [->, >=Stealth] (12.5,-3.75) .. controls (11.5,-5.5) and (7.25,-5.5) .. (6.25,-3.75) node[pos=0.5, fill=white]{Execution on Quantum Hardware};
\draw [->, dashed, >=Stealth] (12.5,-3.75) .. controls (11.5,-1.75) and (7.25,-1.75) .. (6.25,-3.75) node[pos=0.5, fill=white]{QASM Simulation};
\draw  (5,7) rectangle  node {\LARGE Typing Judgment Proof Structures} (12.5,5.75);
\draw [->, >=Stealth] (8.75,8.25) -- (8.75,7)node[pos=0.5, fill=white]{Typechecking};
\end{circuitikz}
}%

\caption{Diagram showing the procedures used in the Qunity interpreter (left path), and the Qunity compiler (right path). Solid and dashed arrows indicate polynomial and exponential time processes respectively.}
\label{fig:overview}
\end{wrapfigure}

Figure \ref{fig:overview} shows an outline of the main components of the Qunity interpreter and compiler, which are implemented in OCaml. 

The preprocessor first evaluates the surface syntax definitions and produces a single Qunity expression, with the same abstract syntax as in \citeauthor{Voichick_2023} 
The typechecker then uses Qunity's extended typing judgments to check if the expression is well-typed and outputs a data structure representing a proof of typing, effectively elaborating the Qunity expression with helpful information that can later be used by the interpreter and the compiler, such as the contexts associated with its subexpressions. 
We can then either interpret or compile this elaborated expression:
On the left side of \Cref{fig:overview}, we depict a simulator explicitly calculating the matrices corresponding to its semantics. This is an exponentially slow process, however, as it is equivalent to simulating a quantum computer on a classical computer. 
The main focus of this work is the compilation process shown on the right side, transforming a high-level elaborated Qunity expression into OpenQASM 3. This process can be performed efficiently on a classical computer, and the final output is a quantum circuit in a form that can be run on quantum hardware.

\subsection{Formalism}

Before making improvements to Qunity's original proof-of-concept compilation scheme, we found it necessary to revise some of the mathematical formalism that governs how the low-level circuits should be constructed in order to establish stronger correctness invariants throughout the process. In particular, many Qunity values are stored in a quantum register, represented as bitstrings using a type-based \emph{encoding}. However, the space of all possible encodings does not necessarily span the space of all possible register states---for instance, a qutrit (a $3$-level quantum system) can be stored in two bits, but only $3$ out of $4$ possible bit strings will be valid encodings. The compilation procedure in \citeauthor{Voichick_2023} includes some low-level circuits (namely, the direct sum) that can potentially output invalid encodings due to the way flag and output registers in the circuit's components are combined. The compilation lemmas do not address this possibility, which makes some parts of the compilation procedure incorrect. Allowing redundancy in the encodings would be problematic because the extra qubits would need to be treated as flag or garbage qubits when inverting the direct sum injections: a ``non-standard'' encoding would either cause an error to be raised or introduce unwanted decoherence.

We considered two methods of resolving this problem. The first possibility was to create an error-checking circuit component, which would ensure that any state in the invalid encoding subspace is sent to the error subspace (with nonzero values for the flag qubits). However, the issue with this approach is that such an error-checking component would need to be inserted after every circuit component in the compilation procedure, making it much more complex and inefficient. The alternative approach was to ensure that no states enter the invalid encoding subspace under any conditions. We can create a mathematical definition of this as follows:

\begin{definition}\label{def:valid_enc_space}
For a type $T$, define the space of valid encodings as
\[
\WW(T) = \Span\braces{\ket{\enc(v)} \mid v \in \VV(T)},
\]
where $\enc$ is the encoding function that maps values to bitstrings:
\begin{align*}
\enc(\unit) &= \texttt{""} \\
\enc(\lef{T_0}{T_1}\: v) &= \texttt{"0"} \doubleplus \enc(v) \doubleplus \texttt{"0"}^{\max\{\size(T_0), \size(T_1)\} - \size(T_0)} \\
\enc(\rit{T_0}{T_1}\: v) &= \texttt{"1"} \doubleplus \enc(v) \doubleplus \texttt{"0"}^{\max\{\size(T_0), \size(T_1)\} - \size(T_1)} \\
\enc(\pair{v_0}{v_1}) &= \enc(v_0) \doubleplus \enc(v_1),
\end{align*}
and the size of a type is defined as
\begin{align*}
\size(\Void) &= 0 \\
\size(\Unit) &= 0 \\
\size(T_0 \oplus T_1) &= 1 + \max\{\size(T_0), \size(T_1)\} \\
\size(T_0 \otimes T_1) &= \size(T_0) + \size(T_1).
\end{align*}
\end{definition}

\begin{definition}\label{def:possible_to_implement_op}
We say that it is possible to implement a norm non-increasing operator \\ $E : \cH(T) \rarr \cH(T')$ if there is a low-level circuit implementing a unitary operator \\ $U : \CC^{2^{\size(T) + p}} \rarr \CC^{2^{\size(T') + f}}$ for some $p, f \in \NN$ such that the following conditions hold:

\begin{itemize}
\item $\bra{\enc(v'), 0^{\otimes f}}U\ket{\enc(v), 0^{\otimes p}} = \bra{v'}E\ket{v}$ for all $v \in \VV(T)$, $v' \in \VV(T')$,
\item $U\ket{\enc(v), 0^{\otimes p}} \in \WW(T') \otimes \CC^{2^{f}}$ for all $v \in \VV(T)$, and
\item $U\adj\ket{\enc(v'), 0^{\otimes f}} \in \WW(T) \otimes \CC^{2^{p}}$ for all $v' \in \VV(T')$.
\end{itemize}
\end{definition}

\begin{definition}\label{def:possible_to_implement_superop}
We say that it is possible to implement a trace non-increasing operator \\ $\cE : \cL(\cH(T)) \rarr \cL(\cH(T'))$ if there is a low-level circuit implementing a unitary operator $U : \CC^{2^{\size(T) + p}} \rarr \CC^{2^{\size(T') + f + g}}$ for some $p, f, g \in \NN$ such that for all $v_1, v_2 \in \VV(T)$, $v_1', v_2' \in \VV(T')$,
\[
\bra{v_1'} \cE\parens{\op{v_1}{v_2}} \ket{v_2'} =
\sum_{b \in \{0,1\}^g} \bra{\enc(v_1'), 0^{\otimes f}, b} U \op{\enc(v_1), 0^{\otimes p}}{\enc(v_2), 0^{\otimes p}} U\adj \ket{\enc(v_2'), 0^{\otimes f}, b},
\]
and, for all $v \in \VV(T)$,
\[
U \ket{\enc(v), 0^{\otimes p}} \in \WW(T') \otimes \CC^{2^{f + g}}.
\]
\end{definition}

These definitions are formed by augmenting Definitions 6.2 and 6.3 from \citeauthor{Voichick_2023} with conditions that guarantee encoding validity is preserved. Now, any optimizations to the low-level circuit components need to respect the encoding-validity-preserving property.

\subsection{Gates, Circuits, and Circuit Specifications}

The final output of the compiler takes the form of a low-level qubit circuit, consisting of a series of gates, which can be:
\begin{itemize}[leftmargin=*]
    \item The identity
    \item A single-qubit $U_3(\theta, \phi, \lambda)$ gate
    \item A global phase gate
    \item A reset gate, which measures and resets a qubit to the $\ket{0}$ state
    \item An error-measurement gate, which measures a qubit and signals an error if it is in the $\ket{1}$ state
    \item A swap gate
    \item A controlled gate, with a list of control qubits
    \item Special zero-state labels and potential deletion labels, whose uses are detailed in Section \ref{sec:postprocessing}.
\end{itemize}

These gates can very straightforwardly be converted into OpenQASM 3 code. However, each gate must already contain information that specifies the indices of the qubits it is applied to, which makes this data structure lack modularity. When compiling the intermediate representation into low-level circuits, we often want to separately construct two circuit components representing different operators, and then feed the output of one into the input of the other, or maybe reuse the same input and prep qubits for controlled versions of two different operators (see Section \ref{sec:improve_dirsum}), which is difficult to do with just the gates. The Qunity compiler uses a system of \textit{circuits} and \textit{circuit specifications} to allow for the construction of low-level circuits out of smaller components. This system is constructed in a way that facilitates the tracking of qubits in various registers to ensure that the components fit together properly.

A circuit consists of a series of gates combined with additional data that specifies the roles of the qubits used by it - specifically, which qubit indices constitute the input registers, the output registers, the prep register (additional inputs prepared in the $\ket{0}$ state), the flag register (output qubits expected to be in the $\ket{0}$ state in the absence of error), and the garbage register (output qubits that are measured and discarded, corresponding to a partial trace operation).

A circuit specification is a wrapper around a function, \texttt{circ\_fun}, that can construct a circuit given a list of input registers, a set of used wires (qubits that are in use elsewhere and cannot be used as new prep qubits), and additional instantiation settings that can dictate certain aspects of how the circuit is built. In addition to the circuit, this function also outputs the updated set of used wires. The circuit specification also includes data on the required input and output register dimensions. Circuit specifications are more modular than gates or circuits, because they are not tied to specific qubit indices - these are not realized until their \texttt{circ\_fun} is called. Thus, it is possible to easily construct circuit specifications from other circuit specifications through various circuit construction functions. For instance, one such function could take in two circuit specifications, and define its \texttt{circ\_fun} to instantiate the first specification into a circuit, and then use its output registers and used wires set to instantiate the second specification into a circuit, and then output a combined circuit whose gate is a sequence of the two original circuits' gates. The possible ways in which circuit specifications can be combined correspond closely to the low-level circuits defined in \AppRef{K}{app:low_level_compilation}.

\subsection{The Intermediate Representation}

The intermediate representation used in the Qunity compiler describes high-level circuit diagrams in which the wires correspond to Hilbert spaces associated with Qunity types and contexts. This representation is built from \emph{operators} that take in some list of input registers and output some registers. This level abstracts away the ``flag'' and ``garbage'' utility registers and can thus describe operators with non-unitary semantics. Operators in the intermediate representation can be either built from smaller components using primitive constructors (corresponding to specific low-level circuit components), or they can be constructed from a series of commands in a fashion similar to a simple imperative language. These custom operators can associate registers with variable names and apply other operators to them, enforcing the condition that every variable must be used exactly once. This design makes it convenient to describe high-level circuits corresponding to each Qunity typing rule.

\begin{figure}[ht]
\centering
\begin{subfigure}[t]{0.4\textwidth}
    \begin{subfigure}[t]{\textwidth}
        \vspace{0.5cm}
        \centering
        \resizebox{\textwidth}{!}{%
        \(
       	\inference{\Gamma\partition \Delta, \Delta_0 \vdash e_0 : T_0 \qquad \Gamma\partition \Delta, \Delta_1 \vdash e_1 : T_1}{\Gamma\partition \Delta, \Delta_0, \Delta_1 \vdash \pair {e_0} {e_1} : T_0 \otimes T_1}[\textsc{T-PurePair}]
        \)
        }%
        \caption{}
    \end{subfigure}
    \hfill
    \begin{subfigure}[t]{\textwidth}
        \vspace{0.5cm}
        \centering
        \resizebox{\textwidth}{!}{%
        \begin{quantikz}
    	\lstick{$\cH(\Gamma)$} &&& \ctrl{1} & \ctrl{3} & \rstick{$\cH(\Gamma)$} \\
    	\lstick{$\cH(\Delta)$} & \ctrl{2} & \qwbundle{\cH(\Delta)} & \gate[2]{{\Gamma \partition \Delta, \Delta_0 \vdash e_0 : T_0}} && \rstick{$\cH(T_0)$} \\
    	\lstick{$\cH(\Delta_0)$} &&& \\
         \setwiretype{n} & \gate[style={cloud}]{} & \setwiretype{q} \qwbundle{\cH(\Delta)} && \gate[2]{{\Gamma \partition \Delta, \Delta_1 \vdash e_1 : T_1}} & \rstick{$\cH(T_1)$} \\
    	\lstick{$\cH(\Delta_1)$} &&&&
    \end{quantikz}
        }%
        \caption{}
        \label{fig:inter_rep_example_diagram}
    \end{subfigure}
\end{subfigure}
    \begin{subfigure}[t]{0.5\textwidth}
        \centering
        \begin{lstlisting}[language=caml,style=small]
| TPurePair { t0; t1; d; d0; d1; e0; e1; _ } -> begin
    let op0 = compile_pure_expr_to_inter_op e0 in
    let op1 = compile_pure_expr_to_inter_op e1 in
      inter_func_marked "TPurePair" iso un
        [("g", gsize); ("dd0d1", dsize)]
        [
          inter_comment "Starting TPurePair";
          inter_letapp ["d"; "d0d1"]
            (IContextPartition (d_whole, map_dom d))
            ["dd0d1"];
          inter_letapp ["d0"; "d1"]
            (IContextPartition (map_merge_noopt false d0 d1, map_dom d0))
            ["d0d1"];
          inter_letapp ["d"; "d*"] (IContextShare d) ["d"];
          inter_letapp ["dd0"] (IContextMerge (d, d0)) ["d"; "d0"];
          inter_letapp ["d*d1"] (IContextMerge (d, d1)) ["d*"; "d1"];
          inter_letapp ["g"; "t0"] op0 ["g"; "dd0"];
          inter_letapp ["g"; "t1"] op1 ["g"; "d*d1"];
          inter_letapp ["res"] (IPair (t0, t1)) ["t0"; "t1"];
          inter_comment "Finished TPurePair";
        ]
        [("g", gsize); ("res", tsize)]
  end
        \end{lstlisting}
        \vspace{-0.25cm}
        \caption{}
        \label{fig:inter_rep_example_code}
    \end{subfigure}
    \caption{(a) The typing judgment \textsc{T-PurePair} for typing pairs of Qunity expressions with a product type. \\ (b) The compilation circuit for \textsc{T-PurePair} as presented in \AppRef{L.3}{app:typing_judgment_compilation}. \\ (c) Excerpt from the \texttt{compile\_pure\_expr\_to\_inter\_op} function in the Qunity compiler, written in OCaml. This shows the compilation of the \textsc{T-PurePair} typing judgment into the intermediate representation, corresponding to the circuit in (b).}
    \label{fig:inter_rep_example}
    \Description{}
\end{figure}

In the first step of the compilation procedure, the typing judgment proofs obtained from Qunity expressions by the typechecker are converted into custom operators in the intermediate representation. This can be done easily due to the close correspondence between the intermediate representation and the circuit diagrams described in \AppRef{L}{app:high_level_compilation}. Figure \ref{fig:inter_rep_example_code} shows the code for compiling \textsc{T-PurePair} into the intermediate representation. We can see that this describes an operator that is a map of the form $\cH(\Gamma) \otimes \cH(\Delta, \Delta_0, \Delta_1) \rarr \cH(\Gamma) \otimes \cH(T)$. First, the typing judgment proofs of $e_0$ and $e_1$ are compiled into the intermediate representation. Then, they are used in the \texttt{inter\_func\_marked}, which is a wrapper around the \texttt{IFunc} (user-defined operator) constructor, adding an extra operator around it using isometry and unitary judgment information. The \texttt{inter\_comment} is just used to create annotations for debugging purposes. Next, the input register corresponding to the space $\cH(\Delta, \Delta_0, \Delta_1)$ is partitioned into separate registers corresponding to $\cH(\Delta)$, $\cH(\Delta_0)$, and $\cH(\Delta_1)$. Note that the compiler maintains an invariant that in a register corresponding to a context, the values of all variables must be stored in lexicographic order. Thus, the partitioning and merging circuits are not necessarily the identity and may involve some rearrangement of wires, which is done at the low level through a special circuit construction function. Then, the share gate (whose low-level implementation involves a new prep register of the same size as the input and CNOT gates) is applied, followed by the previously compiled operator circuits, and the result is combined into a single register. While it can be somewhat verbose, this code is essentially equivalent to the circuit diagram shown in Figure \ref{fig:inter_rep_example_diagram}.

\section{Low-Level Optimizations}\label{sec:low_level_opt}

This section describes optimizations that modify the second step of the Qunity compilation procedure (where the intermediate representation is instantiated into low-level circuits that keep track of qubit indices) and additional post-processing steps that are applied to the resulting circuits. Each time a circuit component is instantiated, the compiler needs to recursively instantiate its constituent parts, and decide how to connect the inputs and outputs of the parts to each other. This process may also require allocation of additional prep qubits, which increases the complexity of the generated circuit. The goal of the low-level optimizations is to improve the process of constructing these low-level circuits, to avoid unnecessarily using many qubits at once.

\subsection{Improving the Direct Sum Circuits}\label{sec:improve_dirsum}

An important component of the second step of the compilation procedure is being able to implement the direct sum of two operators. Given circuits $U_0$ implementing $E_0 : \cH(T_0) \rarr \cH(T_0')$ (with input size $s_0$, output size $s_0'$, prep size $p_0$, and flag size $f_0$), and $U_1$ implementing $E_1 : \cH(T_1) \rarr \cH(T_1')$ (with corresponding sizes $s_1, s_1', p_1, f_1$), the goal is to implement the direct sum circuit $E_0 \oplus E_1 : \cH(T_0 \oplus T_1) \rarr \cH(T_0' \oplus T_1')$. The implementation must be in accordance with Definition \ref{def:possible_to_implement_op}.

An initial na\"ive implementation that still respects the encoding-validity-preserving property specified in \Cref{def:possible_to_implement_op} uses $\min\{s_0, s_1\} + p_0 + p_1$ prep wires. Because the number of prep wires scales with the size of the input register, this can cause the number of qubits used by the circuit to grow rapidly when an operator is summed with itself multiple times, which is a common pattern in the compilation of quantum control.

We have created an improved version of the direct sum circuit, which still ensures encoding validity but reuses input and prep qubits as much as possible. \Cref{tab:dirsum_counts} shows the number of prep qubits needed for this improved circuit in different cases. The full circuits and a proof of correctness for them are shown in \AppRef{I}{app:dirsum_proof}.

\begin{table}[ht]
    \caption{Prep qubits needed for the improved direct sum circuit in $4$ different cases. The $4$ other cases can be obtained from these ones by applying a commutativity isomorphism (which is just an $X$ gate).}
    \label{tab:dirsum_counts}
    \centering
    \begin{tabular}{cc}
    \toprule
    Case & Number of required prep qubits\\
    \midrule
    $s_0 \geq s_1$, $s_0' \geq s_1'$, $p_0 \geq p_1$ & $p_0 + \max\{0, f_1 - f_0\}$ \\
    $s_0 \geq s_1$, $s_0' \geq s_1'$, $p_0 \leq p_1$ & $p_1 + \max\{0, f_1 - f_0 + p_0 - p_1\}$ \\
    $s_0 \geq s_1$, $s_0' \leq s_1'$, $p_0 \geq p_1$ & $p_0 + s_1' - s_0'$ \\
    $s_0 \geq s_1$, $s_0' \leq s_1'$, $p_0 \leq p_1$ & $p_1 + \max\{0, s_1' - s_0' + p_0 - p_1\}$ \\
    \bottomrule
    \end{tabular}
\end{table}

\subsection{Recycling of Garbage and Flag Qubits}\label{sec:recycling}

An important optimization to the Qunity compiler comes from the possibility of measuring and resetting garbage and flag qubits in the middle of a circuit and reusing them as new prep qubits for subsequent circuit components - that is, \emph{recycling} the garbage and flag qubits. This has to be done with some care, because there are situations in which we want to avoid resetting these qubits. Specifically, we may be taking the purification of a circuit, which feeds its garbage qubits into the output register, or we may be using an error-handling circuit, which uses the flag register to coherently toggle a new prep qubit, outputting values in the space $\cH \oplus \CC$, where $\cH$ is the output space of the original circuit. This means that a circuit specification has to ``know,'' at the time of its instantiation into a circuit, whether or not it is allowed to recycle its garbage or flag register. This is done by using the instantiation settings, which are passed to a circuit specification at the time of instantiation, and are used to decide whether the qubits can be recycled. Purification is implemented as a function that takes in a circuit specification and outputs another one, whose instantiation function passes in a value of \texttt{false} for the instantiation settings' \texttt{reset\_garb} field when calling the wrapped circuit specification's instantiation function. The same thing happens with the error-handling circuits and the \texttt{reset\_flag} field. This optimization reduces the qubit count of the circuits output by the compiler, since now, if there are many circuit components in series that each can throw an error and are not part of a larger error-handling component, the compiler does not have to allocate new qubits every time.

While more advanced recycling techniques exist, such as the one found in \citet{qubit_recycling}, the simple recycling technique we introduced here is closely tied to the specifics of Qunity's compilation procedure, and thus is not directly comparable. A potential future step would be to incorporate some of the more advanced qubit recycling methods from the literature as part of the \emph{postprocessing} stage, where they can be directly applied to a quantum circuit without the need to integrate with any Qunity-specific data structures.

\subsection{Post-Processing Optimizations}\label{sec:postprocessing}

We have introduced a post-processing step that optimizes the quantum circuits generated by the Qunity compiler. The overall procedure for low-level optimization converts the input circuit into a list of gates and then repeatedly applies \emph{optimization passes} to this list, until it reaches a fixed point. In a single optimization pass, the gate list is processed from left to right, detecting possible patterns that can be optimized. This process is guaranteed to eventually terminate, since each change made to the circuit has to decrease the number of qubits or gates. The currently implemented pass includes the following:

\begin{itemize}
\item Canceling a gate that is immediately followed by its adjoint.
\item Combining adjacent controlled-$U$ and anti-controlled-$U$ gates into just $U$.
\item Removing physical swap gates without controls and relabeling the subsequent wires.
\item Performing a commutation pass: taking a gate and commuting it to the right until either it cannot commute past something (in which case it is reverted to its original position) or it cancels with something.
\item Classical propagation - moving through the circuit and keeping track of qubits we know to be in a classical state, modifying any gates that are controlled on them by removing controls or deleting the gates.
\item Detecting regions where a CNOT gate is applied, sharing one qubit to another one initially in the $\ket{0}$ state, then any other gates are the Pauli $X$ gate or only use either qubit as a control, and then an identical CNOT is applied. In this case, it is possible to transfer the controls on the second qubit to the first one, which can make it possible to delete the second qubit with the procedure described below.
\item Deleting qubits (or gates on a qubit between measurements) if certain conditions are met.
\end{itemize}

We will now discuss this last point in more detail. This mechanism relies on high-level information from the \emph{isometry judgment} in the Qunity typechecker (see \AppRef{G}{app:isometry_judgment}) to identify candidate qubits for potential deletion. These candidates are either garbage qubits or flag qubits that we are certain will always be in the $\ket{0}$ state when measured.

The candidate qubits are identified during the procedure of instantiating a circuit specification into a concrete circuit. The instantiation settings, which include Boolean fields \texttt{reset\_garb} and \texttt{reset\_flag} for whether or not the garbage and flag registers should be reset, as described in Section \ref{sec:recycling}, also contain a field \texttt{iso} for whether or not this circuit can be interpreted as an isometry and thus whether we can expect the flag qubits to be in the $\ket{0}$ state. This variable is set to \texttt{true} by a special wrapper circuit specification that acts very similarly to the purification and error-handling circuits, and is added to the circuit based on the results of the isometry judgment in the first step of the compilation process. Now, during instantiation, if \texttt{reset\_garb} is \texttt{true}, then a special gate called a \emph{potential deletion label} is added to the qubits in the garbage register. Similarly, if both \texttt{reset\_flag} and \texttt{iso} are \texttt{true}, the potential deletion labels are also added to the flag register. Additionally, instead of measuring flag qubits known to be in the $\ket{0}$ state, the compiler adds a special \textit{zero-state label} onto them.

However, not all candidate qubits are safe to delete because they may be entangled with the output qubits at some point during the computation, possibly being used as ancillas. In the post-processing stage, all the deletion labels are first shifted to the left as far as possible until they hit a measurement gate or a zero-state label to ensure that they are selecting an entire region of interest. Then, during an optimization pass, when seeing a potential deletion label, the segment between it and the next measurement is evaluated to determine whether it is safe to delete. The procedure determines that all gates in the region involving the selected qubit are safe to delete if they are all Pauli $X$ gates, controlled $X$ gates with the qubit as a target, or uncontrolled global phase gates. This is because if the qubit starts in the $\ket{0}$ state, then applying only these gates will not have any effect on the reduced density matrix of the system obtained by removing the qubit. At the end of the optimization procedure, if a qubit has no more gates remaining on it, it is completely removed from the circuit.

\section{High-Level Optimizations}\label{sec:high_level_opt}

This section describes optimizations that modify the first step of the Qunity compilation procedure, where the typing judgment proof structures are converted to the intermediate representation.

\subsection{Simplifying the Orthogonality Circuit}

A significant part of the circuit for compiling the \lstinline|ctrl|, \lstinline|match|, and \lstinline|pmatch| constructs is the following circuit component:

\begin{quantikz}
\lstick{$\cH(T)$} & \gate{\msem{\ortho{T}{e_1, \dots, e_n}}} & \qwbundle{\cH(T)^{\oplus n}} && \gate{\bigoplus_j \msem{e_j}\adj} & \rstick{$\bigoplus_j \cH(\Delta_j)$}
\end{quantikz}

This circuit is quite inefficient: it uses $O(n)$ extra qubits to store the register corresponding to $\cH(T)^{\oplus n}$, and it involves many expensive associativity isomorphisms to transform the direct sum structure of the given space to something of the form $\cH(T) \oplus (\cH(T) \oplus (\cH(T) \oplus \dots))$. In this work, we introduce two major changes to this: allowing direct sums of operators to be taken \emph{along a binary tree}, and replacing the two above circuit components with a single one that goes directly from $\cH(T)$ to $\bigoplus_j \cH(\Delta_j)$, without the intermediate step of $\cH(T)^{\oplus n}$. Refer to \AppRef{J}{app:binary_trees} for the definitions and notations used to describe sums of operators along a binary tree and leveling a tree to a given height.

We can derive a binary tree from the structure of the orthogonality judgment (\AppRef{F}{app:ortho_judgment}), as follows:

\begin{definition}[Tree derived from the orthogonality judgment]\label{def:ortho_tree}\leavevmode
\begin{itemize}
\item The trees derived from \textsc{O-Void}, \textsc{O-Unit}, and \textsc{O-Var} are all $\Leaf$.
\item The tree derived from \textsc{O-IsoApp} is the same as that derived from $\ortho{T}{e_1, \dots, e_n}$.
\item The \textsc{O-Sum} rule states that if $e_1, \dots, e_m$ are orthogonal in $T$ and $e_1', \dots, e_m'$ are orthogonal in $T'$, then the left injections of the $e_j$ and the right injections of the $e_j'$ together are orthogonal in $T \oplus T'$. The tree derived from \textsc{O-Sum} is a root whose left subtree is the tree derived from the \texttt{left} expressions and whose right subtree is the tree derived from the \texttt{right} expressions.
\item The \textsc{O-Pair} rule states that if $e_1, \dots, e_m$ are orthogonal in $T$, and for each $1 \leq j \leq m$, the expressions $e'_{j,1}, \dots, e'_{j,n_j}$ are orthogonal in $T'$, then the list of pairs $(e_j, e'_{j,k})$ are orthogonal in $T \otimes T'$. The tree derived from \textsc{S-Pair} is constructed by taking the tree $\cR_0$ derived from $\ortho{T}{e_1, \dots, e_m}$, and replacing each $j$th leaf with the tree $\cR_1^j$ derived from $\ortho{T'}{e'_{j,1}, \dots, e'_{j,n_j}}$. This essentially decomposes a tensor product into a direct sum structure.
\item The tree derived from from \textsc{O-Sub} is constructed by removing all subtrees whose leaves only contain discarded expressions.
\end{itemize}
\end{definition}

Now, we want to define $\msem{\ortho{T}{e_1, \dots, e_n}}$ in such a way that when $\ortho{T}{e_1, \dots, e_n}$ holds with tree structure $\cR$, then for all $j$, for all $\tau_j \in \VV(\Delta_j)$, we have that
\[
\msem{\ortho{T}{e_1, \dots, e_n}}\msem{e_j}\ket{\tau_j} = \inj_j^\cR \ket{\tau_j}.
\]
That is, we are constructing an operator
\[
\msem{\ortho{T}{e_1, \dots, e_n}} : \cH(T) \rarr \bigoplus_{j : \cR} \cH(\Delta_j).
\]
This construction is given in \AppRef{L.2}{app:ortho_compilation}.

With this change, we can now modify the circuit for the compilation of \textsc{T-Ctrl} to use a single circuit component that converts the input $\cH(T)$ directly into a direct sum of the $\Delta_j$ over the tree derived from the orthogonality judgment. Then, for the rest of the circuit, all direct sums can be taken along this tree instead of a standardized associativity structure.

\subsection{Compilation of \texttt{pmatch}}\label{sec:pmatch_compilation}

The same construction as above can be used to construct a high-level circuit to compile \textsc{T-Pmatch} into the intermediate representation. This can be done as follows:

\resizebox{0.95\textwidth}{!}{
\begin{quantikz}
\lstick{$\cH(T)$} & \gate{\msem{\ortho{T}{e_1, \dots, e_n}}} & \qwbundle{\bigoplus_{j:\cR_0} \cH(\Delta_j)} &&& \gate{\textsc{TreeRearrange}(\cR_0, \cR_1)} & \qwbundle{\bigoplus_{j:\cR_1} \cH(\Delta_j)} &&& \gate{\msem{\ortho{T}{e_1, \dots, e_n}}\adj} & \rstick{$\cH(T')$}
\end{quantikz}
}

where $\cR_0$ and $\cR_1$ are binary trees derived from the respective orthogonality judgments, except their leaves are also labeled with the indices of the corresponding expressions. Then, \textsc{TreeRearrange} is an algorithm for transforming one binary tree into another using only operations from a certain set, encoded as a quantum circuit.

\begin{wrapfigure}{r}{0.3\textwidth}
\centering
\resizebox{\linewidth}{!}{%
\begin{circuitikz}
\tikzstyle{every node}=[font=\LARGE]
\draw [line width=1pt, short] (8.75,12) -- (3.75,10.75);
\draw [line width=1pt, short] (8.75,12) -- (13.75,10.75);
\draw [line width=1pt, short] (3.75,10.75) -- (1.25,9.5);
\draw [ color={rgb,255:red,255; green,0; blue,0}, line width=1pt, short] (3.75,10.75) -- (6.25,9.5);
\draw [line width=1pt, short] (13.75,10.75) -- (11.25,9.5);
\draw [ color={rgb,255:red,255; green,0; blue,0}, line width=1pt, short] (13.75,10.75) -- (16.25,9.5);
\draw [line width=1pt, short] (1.25,9.5) -- (0,8.25);
\draw [line width=1pt, short] (1.25,9.5) -- (2.5,8.25);
\draw [ color={rgb,255:red,255; green,0; blue,0}, line width=1pt, short] (6.25,9.5) -- (5,8.25);
\draw [line width=1pt, short] (6.25,9.5) -- (7.5,8.25);
\draw [line width=1pt, short] (5,8.25) -- (3.75,7);
\draw [ color={rgb,255:red,255; green,0; blue,0}, line width=1pt, short] (5,8.25) -- (6.25,7);
\draw [line width=1pt, short] (6.25,7) -- (5,5.75);
\draw [ color={rgb,255:red,255; green,0; blue,0}, line width=1pt, short] (6.25,7) -- (7.5,5.75);
\draw [ color={rgb,255:red,255; green,0; blue,0}, line width=1pt, short] (16.25,9.5) -- (15,8.25);
\draw [ color={rgb,255:red,255; green,0; blue,0}, line width=1pt, short] (15,8.25) -- (16.25,7);
\draw [ color={rgb,255:red,255; green,0; blue,0}, line width=1pt, short] (16.25,7) -- (17.5,5.75);
\draw [line width=1pt, short] (16.25,7) -- (15,5.75);
\draw [line width=1pt, short] (15,8.25) -- (13.75,7);
\draw [line width=1pt, short] (16.25,9.5) -- (18.75,8.25);
\draw [line width=1pt, short] (18.75,8.25) -- (17.5,7);
\draw [line width=1pt, short] (18.75,8.25) -- (20,7);
\draw [line width=1pt, <->, >=Stealth] (7.75,5.5) .. controls (11.75,3) and (13.25,3) .. (17.25,5.5);
\end{circuitikz}
}%
\caption{An example of the conditional commutation tree transformation, which swaps the two vertices shown in this diagram. Their paths must be identical except for the very first step from the root.
}
\label{fig:conditional_commute}
\end{wrapfigure}

The possible operations are:
\begin{itemize}
\item A right tree rotation, which can be implemented as the associativity isomorphism $(T_0 \oplus T_1) \oplus T_2 \rarr T_0 \oplus (T_1 \oplus T_2)$ (low-level circuit implementation in \AppRef{K}{app:low_level_compilation}).
\item A left tree rotation, implementable as the adjoint of the associativity isomorphism.
\item Switching the left and right subtree, which can be implemented with a single Pauli $X$ gate.
\item Other transformations applied to the subtrees of a given tree, which can be implemented using the direct sum circuit.
% \end{itemize}
% \begin{itemize}
\item A sequence of two transformations, which can be implemented as a sequence of two circuits.
\item A special \textit{conditional commutation} transformation, shown in Figure \ref{fig:conditional_commute}. It can be implemented with a multi-controlled $X$ gate, where the target is the qubit corresponding to the root node, and the controls have states corresponding to the path taken.
\end{itemize}

The algorithm for transforming trees, expressed in terms of these allowed operations, is:

\begin{enumerate}
\item Transform the input tree so it has the same shape as the goal, ignoring the values at the leaves. This can be done by transferring nodes from one subtree to another making sure they have the correct numbers, and then making recursive calls on each subtree. The transferring of nodes can be accomplished by a sequence of tree rotations.
\item For each leaf value that needs to be transferred to a different location, do the following to switch the leaves at the two locations: \begin{enumerate}
    \item Bring the leaf at the larger depth (the one farther from the root) to the same depth as the other one. This can be accomplished by a sequence of tree rotations and commutations.
    \item Make the trees have identical paths from the root except for the first place in which their paths differ. This can be done by a sequence of commutation operations applied to the appropriate subtrees.
    \item Apply the special conditional commutation operation to switch the two leaves.
    \item Apply the operations in the first two steps in reverse order so that the only effect of the entire transformation is swapping the two leaves of interest.
\end{enumerate}
\end{enumerate}

In the OCaml implementation of the compiler, running this algorithm creates a record of all the tree transformations that need to be applied to perform the operation. Then, the resulting tree transformations are converted into the intermediate representation in the manner described above, thus completing the compilation of \lstinline|pmatch|. Beside the fact that this allows \lstinline|pmatch| to work in an arbitrary basis, it is also much more efficient than implementing a classical-basis version of it as syntactic sugar over \lstinline|ctrl|: the tree rearrangement algorithm attempts to avoid significantly increasing the height of the tree at any intermediate step, which means that it does not unnecessarily use additional qubits.

\section{Evaluation}\label{sec:results}

In this section we detail our evaluation of the Qunity compiler. We primarily focus on the effectiveness of the compiler in reducing compiled circuit size, but we also report on our 
differential unit testing that we used to gain  confidence in the correctness of our implementation.

\paragraph*{Compilation Efficiency}

We evaluate the Qunity compiler on several benchmark programs, shown in \Cref{tab:compilation_benchmarks}. We compare the most recent version of the compiler with the initially implemented version, which followed directly the compilation procedure of \citeauthor{Voichick_2023} (with the direct sum circuit corrected to preserve encoding validity but not optimized). This baseline did not support the new pattern-matching constructs, so where possible, the benchmark code was rewritten to equivalent expressions using only \lstinline|ctrl|. We also compare the performance of our compiler to low-level circuits constructed directly in Qiskit~\cite{Qiskit} version 1.1.1, using circuits from \lstinline[breaklines=false]|qiskit.circuit.library| when relevant. The gate counts reported in this table are after the circuit has been transpiled by Qiskit with \lstinline[language=python]{basis_gates=["u3", "cx"], optimization_level=3} (to only use single-qubit gates and CNOT).

\begin{table}[ht]
\caption{Benchmark tests comparing the performance of the optimized and unoptimized Qunity compilers with an implementation in Qiskit. The code used for these benchmarks is shown in \AppRef{D}{app:benchmark_test_programs}.
The missing entries in the unoptimized Qunity compiler result from the unoptimized compiler not supporting the new pattern-matching constructs. We also do not have a suitable reference for the order-finding circuit. The entry marked with ?? has such a large circuit that it was  infeasible to transpile it to obtain a gate count.}
\label{tab:compilation_benchmarks}
\centering
\begin{tabular}{>{\raggedright\arraybackslash}p{3.5cm}p{1cm}p{1cm}p{1cm}p{1cm}p{1cm}p{1cm}}
\toprule
\multirow{2}{*}{Benchmark} & \multicolumn{2}{c}{Qunity (unoptimized)} & \multicolumn{2}{c}{Qunity (optimized)} & \multicolumn{2}{c}{Qiskit} \\
\cmidrule{2-7}
& Qubits & Gates & Qubits & Gates & Qubits & Gates \\
\midrule
Phase conditioned on AND of $5$ qubits & 769 & \num{147871} & 9 & 121 & 5 & 101 \\[6mm]
Quantum Fourier Transform ($5$ qubits) & 15 & 282 & 6 & 143 & 5 & 54 \\[6mm]
Phase estimation example ($5$ qubits) & 75 & 1465 & 6 & 275 & 5 & 60 \\[6mm]
Order finding (\lstinline|#n=5, #a=13|) & -- & -- & 10 & 460 & -- & -- \\[6mm]
Reversible CDKM Adder~\cite{cdkm_ripplecarry} ($5$ bits) & 1031 & \num{160321} & 11 & 152 & 11 & 179 \\[6mm]
Grover ($5$-bit match oracle, $1$ iteration) & 1131 & ?? & 7 & 760 & 6 & 389 \\[6mm]
Grover (List sum oracle, \lstinline|#n=2|, $1$ iteration) & -- & -- & 11 & \num{13470} & 5 & 116 \\
\bottomrule
\end{tabular}
\end{table}

The results show a very significant improvement of the optimized Qunity compiler compared to the unoptimized one. In many instances, the optimized Qunity compiler reaches a gate count that is comparable to the handcrafted Qiskit implementation, whereas the baseline Qunity compiler resulted in orders of magnitude larger circuits. Still there is room for improvement: to uncover the limits of our compiler implementation, we included a benchmark that implements Grover's algorithm using the list sum oracle, which compiles into \num{13470} low-level gates, while it is possible to manually construct a small circuit that accomplishes the same task. The inefficiencies arise due to isomorphisms and transformations between types that are introduced during the compilation procedure: while the optimizations introduced in \Cref{sec:high_level_opt} help to simplify them to a large extent, there are still many inefficiencies present that are nontrivial to eliminate.

\begin{figure}
    \centering
    \begin{subfigure}{0.5\textwidth}
        \centering
        \includegraphics[width=\linewidth]{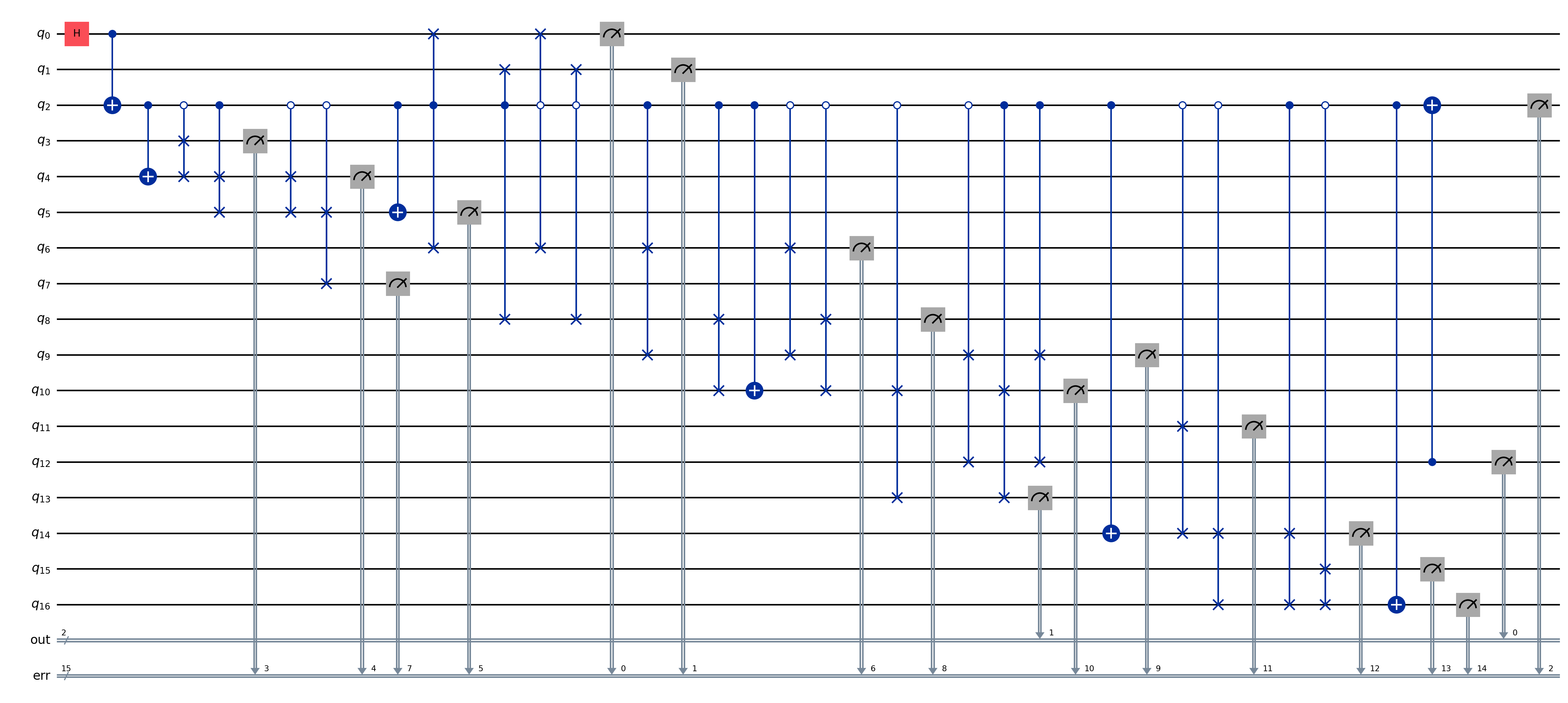}
        \caption{}
        \label{fig:bell_circuit_bad}
    \end{subfigure}
    \begin{subfigure}{0.2\textwidth}
        \centering
        \includegraphics[width=\linewidth]{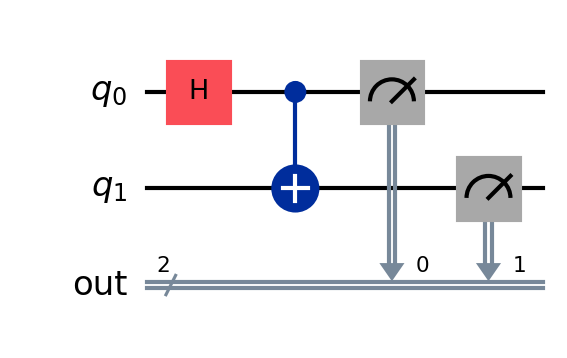}
        \caption{}
        \label{fig:bell_circuit_good}
    \end{subfigure}
    \caption[]{(a) The circuit produced by compiling a Qunity program for preparing the state $\frac{1}{\sqrt{2}}(\ket{00} + \ket{11})$ without compiler optimizations. 
    (b) The circuit produced using the the optimized compiler for the same program.\protect\footnotemark}
    \Description{}
    \label{fig:bell_circuit}
\end{figure}

\footnotetext{These circuits have measurements because the output register is always measured at the end when running the Qunity compiler.}

Still, for most programs, including smaller ones that we used for testing, our Qunity compiler
results in practical circuits. For one example, \Cref{fig:bell_circuit} demonstrates the improvement in the efficiency of the compiler resulting from the optimizations implemented in this work when preparing a Bell state using the quantum control construct. Note that the simplest way to prepare this state in Qunity is using entanglement through variable reuse: \lstinline!&plus |> lambda x -> (x, x)!, which compiles to the circuit in \Cref{fig:bell_circuit_good} even with the unoptimized compiler. However, for this example, the entanglement is achieved by defining a CNOT gate in terms of the \lstinline{ctrl} construct, which has a much more complicated compilation procedure. (This can be used to write circuit-style programs in Qunity: see \AppRef{E}{app:circuit_style}). The circuit produced by the unoptimized compiler, in \Cref{fig:bell_circuit_bad}, needs $17$ qubits and $27$ CNOT and controlled-SWAP gates just to represent an operation equivalent to a single CNOT gate. This is due to a combination of several factors, including the unoptimized direct sum circuits, the unnecessary associativity isomorphisms, and the redundancy in the original orthogonality circuit. It is difficult to simplify this circuit by post-processing alone: the higher-level optimizations described in this work are necessary. Using the combination of these optimization techniques, the compiler is able to reduce this circuit to a form that can be more easily handled by the post-processing optimizations and converted to the circuit in \Cref{fig:bell_circuit_good}.

\paragraph*{Differential Unit Testing}
Finally, we also created a test suite to verify the correctness of the compilation procedure, checking it against the output of the interpreter. The test proceeds as follows: given a Qunity expression $e$ of type $T$, we evaluate its semantics as a density matrix $\rho = \msem{\varnothing : \varnothing \partition \varnothing \Vdash e : T}$ using the Qunity interpreter. We run the compiler to generate a low-level qubit circuit, and simulate this circuit. Taking a projection onto the subspace where the flag register is in the $\ket{0}$ state and taking a partial trace over the garbage registers, we obtain a state $\tilde{\rho}$. Then, for each $v \in \VV(T)$, we check that
$\bra{v}\rho\ket{v} = \bra{\enc(v)}\tilde{\rho}\ket{\enc(v)}.$

Most programs in this suite are small, and many are just simple demonstrations of various patterns (such as pattern matching constructs nested in different ways, operations on datatypes, and error handling). We exclude programs from \Cref{tab:compilation_benchmarks} that are too expensive to simulate.

\section{Related Work}

Many existing quantum programming languages are based on the paradigm of ``quantum data, classical control,'' which was popularized by Selinger~\cite{Selinger2004a}. In this model, quantum circuits can be constructed by a classical computer, using classical data which may be obtained by performing measurements on qubits. Most of the widely-used quantum languages rely on this paradigm: from low-level languages like OpenQASM~\cite{OpenQASM}, Cirq~\cite{Cirq}, and Qiskit~\cite{Qiskit} to higher-level languages like Quipper~\cite{Quipper}, Q\#~\cite{Qsharp} and Qualtran~\cite{qualtran}.

However, we often want to express algorithms that use quantum control flow and manipulate complex data structures in quantum superposition: for instance, Grover's algorithm \cite{grover1996} over arbitrary oracles or quantum walk-based algorithms such as~\citet{Childs_2007}. This runs into problems, as fully general quantum control flow is physically infeasible, as argued by~\citet{Badescu_2015} and later~\citet{yuan-limits}. A number of languages have attempted to address this issue through restricting control flow, including QML~\cite{QML}, Symmetric Pattern Matching (SPM)~\cite{SPM}, Silq~\cite{Silq}, and Tower~\cite{tower}. Of these, Silq and Tower compile to circuits.

Venev's~\cite{venev2024} compiler for Silq programs is focused on the problem of \emph{uncomputation}, specifically in compiling and optimizing away the results of the common compute-copy-uncompute pattern used in the generation of quantum oracles.
Silq's uncomputation, however, is restricted to applications of \texttt{qfree} (strictly classical) functions - Qunity's \lstinline|ctrl| construct, on the other hand, enables uncomputation of an arbitrary mixed program (although doing this on a program that introduces superposition will lead to norm-decreasing semantics). While none of our examples make use of this feature, it may be useful in more advanced quantum algorithms such as the triangle finding algorithm \cite{triangle_finding_alg}, which uses a quantum subroutine that is itself a probabilistic quantum algorithm.

Tower's compiler, Spire~\cite{spire} is closer in its aims to this work. Like Qunity, Tower is a high-level language with sophisticated (QRAM-based) control flow, and Spire aims to optimize it at a high level. Spire uses two key optimizations: conditional flattening (which consists of combining nested if statements, reducing the controls on internal unitaries) and conditional narrowing, which pulls statements out of quantum ifs (again, reducing the number of controls). These two optimizations prove surprisingly valuable in practice, enabling a significant reduction in the circuits' $T$-complexity, and showing that high-level optimizations of quantum programs could be useful in practice.

Qunity is set apart from the previous work listed above by its focus on compositionality and its denotational semantics. It takes the unique and unusual approach of defining both a pure and mixed typing judgment, allowing for a richer interplay of reversible and irreversible programming. Qunity is similar to classical functional languages, while languages like Silq and Tower follow an imperative paradigm. The metaprogramming layer allows for higher-order functions and recursion without running into the fundamental limitations of implementing them as quantum constructs~\cite{Badescu_2015}. Qunity's pattern-matching constructs offer a novel and expressive way of expressing quantum control flow.

\section{Conclusion and Future Work}\label{sec:conclusion}

In this work, we show how to efficiently compile Qunity, a high-level quantum programming language with compositional semantics, by combining low-level optimizations that simplify the construction of qubit circuit components with higher-level optimizations that simplify manipulations of Hilbert spaces. 

There are more optimizations that we would like to implement in the future. For example, in instances where no purification is performed, the \lstinline|try|/\lstinline|catch| statement could be implemented by measuring the flag register, and applying some gates depending on the measurement outcome. Similarly, the \lstinline|match| construct could be implemented by measuring the scrutinee expression and using the result to evaluate one of the RHS expressions. Including this capability can simplify many Qunity programs with classical computation outside of quantum subroutines.

We could also change the target language of the Qunity compiler from OpenQASM to a quantum intermediate representation such as QIR \cite{qir}, MLIR \cite{mlir}, or HUGR \cite{hugr}. Note that these representations are not directly comparable to Qunity's IR, which is defined using operators closely associated to Qunity's control flow constructs. Compiling from Qunity's IR to one of these representations would allow Qunity to more easily take advantage of existing optimizers at the post-processing stage, and make the compilation to hybrid quantum-classical programs more feasible.

With improvements like these, we believe Qunity can truly become a practical quantum programming language, offering a promising alternative to circuit-based quantum languages and advancing the development of tools for high-level quantum programming and algorithm design.

\section*{Data Availability Statement}
The implementation of the Qunity interpreter and compiler is publicly available \cite{artifact}.
\ifshort
An extended version of this paper, which includes the appendices, is also available \cite{extended_version}. The appendices present the formal definitions of Qunity's type system and semantics, list the Qunity and Qiskit code used for the evaluation, and give formal proofs for several components of the compilation procedure.
\fi

\begin{acks}
We thank Michael Vanier for his advice on this work. Mikhail Mints was supported by the Samuel P. and Frances Krown SURF fellowship from Caltech. This material is based upon work supported by the \grantsponsor{afosr}{Air Force Office of Scientific Research}{https://doi.org/10.13039/100000181} under Grant No. \grantnum{afosr}{FA95502310406}.
\end{acks}

\ifshort

%\bibliographystyle{ACM-Reference-Format}
%\bibliography{references.bib}

\else

\newpage
\printbibliography

\appendix

\section*{Appendices}
\begin{enumerate}[label=\Alph*, align=parleft, leftmargin=*, labelsep=1em]
\item \hyperref[app:formal_defs]{Formal Definitions of Syntax, Typing Judgments, and Semantics}
\item \hyperref[app:surface_grammar]{Grammar of the Surface Syntax}
\item \hyperref[app:control_flow_comparison]{Comparison of Control Flow Constructs}
\item \hyperref[app:benchmark_test_programs]{Benchmark Test Programs}
\item \hyperref[app:circuit_style]{Circuit-Style Programming}
\item \hyperref[app:ortho_judgment]{The Orthogonality and Spanning Judgments}
\item \hyperref[app:isometry_judgment]{The Isometry Judgment}
\item \hyperref[app:erasure_judgment]{The Erasure Judgment}
\item \hyperref[app:dirsum_proof]{Direct Sum Circuit Correctness Proof}
\item \hyperref[app:binary_trees]{Notation and Definitions for Binary Trees}
\item \hyperref[app:low_level_compilation]{Proofs of Correctness for Low-Level Compilation}
\item \hyperref[app:high_level_compilation]{Proofs of Correctness for High-Level Compilation}
\end{enumerate}

\section{Formal Definitions of Syntax, Typing Judgments, and Semantics}\label{app:formal_defs}

\subsection{Syntax}\label{app:syntax}

\begin{figure}[ht]
  \centering
  \begin{minipage}[t]{.45\textwidth}
\begin{alignat*}{3}
	T \defeqq &\quad&&&\textit{(data type)} \\
						&&& \Void &\quad \textit{(bottom)} \\
						|&&& \Unit &\quad \textit{(unit)} \\
						|&&& T \oplus T &\quad \textit{(sum)} \\
						|&&& T \otimes T &\quad \textit{(product)} \\
						F \defeqq &\quad &&& \textit{(program type)} \\
						&&&T \rightsquigarrow T &\quad \textit{(coherent map)} \\
						|&&& T \Rrightarrow T &\quad \textit{(quantum channel)} \\
\end{alignat*}
\caption{Qunity types.}
\label{fig:types}
\smallskip
\begin{alignat*}{3}
	\Gamma &\defeqq&& &\textit{(context)} \\
								 &&& \varnothing & \textit{(empty)} \\
	&&|\;& \Gamma, x : T & \textit{(binding)} \\
	\dom(\varnothing) &\defeq&& \varnothing &        \textit{(dom-none)}\\
	\dom(\Gamma, x : T)&\defeq&& \dom(\Gamma) \cup \{ x \} \; & \textit{(dom-bind)}\\
\end{alignat*}
\caption{Typing contexts.}
\label{fig:context}
  \end{minipage}
  \hspace{.1in}
  \begin{minipage}[t]{.5\textwidth}
\begin{alignat*}{3}
	e \defeqq &&&&\quad\textit{(expression)} \\
    && \;& \unit &\quad \textit{(unit)} \\
    |&&& x &\quad \textit{(variable)} \\
    |&&& \pair e e &\quad \textit{(pair)} \\
    |&&& \cntrl{e}{T}{e &\mapsto e \\ &\cdots \\ e &\mapsto e}{T} &\quad \textit{(coherent control)} \\
    |&&& \match{e}{T}{e &\mapsto e \\ &\cdots \\ e &\mapsto e}{T} &\quad \textit{(decoherent match)} \\
    |&&& \trycatch{e}{e} &\quad \textit{(error recovery)} \\
    |&&& f\: e &\quad \textit{(application)} \\
f \defeqq &&&&\quad\textit{(program)} \\
    &&& \uthree r r r &\quad \textit{(qubit gate)} \\
    |&&& \lef T T &\quad \textit{(left tag)} \\
    |&&& \rit T T &\quad \textit{(right tag)} \\
    |&&& \lambda e \xmapsto{{\color{gray}{T}}} e &\quad \textit{(abstraction)} \\
    |&&& \rphase T e r r &\quad \textit{(relative phase)} \\
    |&&& \pmatch{T}{e &\mapsto e \\ &\cdots \\ e &\mapsto e}{T'} &\quad \textit{(symmetric matching)}
\end{alignat*}
\caption{Base Qunity syntax.}
\Description{}
\label{fig:syntax}
\end{minipage}
\end{figure}

\Cref{fig:syntax} shows the base abstract syntax of the Qunity language. Expressions $e$ are assigned types $T$ and programs $f$ are assigned program types $F$ (\Cref{fig:types}). Here, $x$ ranges over some set of variable names and $r$ ranges over a representation of real numbers. This work adds two new primitive syntactic constructs to the original Qunity language: $\lstinline|match|$ and $\lstinline|pmatch|$, discussed in \Cref{sec:extending}.

\subsection{Type System}\label{app:type_system}

\begin{figure}[ht]
%\dotfill
\[
	\inference{}{\Gamma\partition \varnothing \vdash \Unit : \Unit}[\textsc{T-Unit}]
	\quad
	\inference{}{\Gamma, x : T, \Gamma'\partition \varnothing \vdash x : T}[\textsc{T-Cvar}]
	\quad
	\inference{x \notin \dom(\Gamma)}{\Gamma\partition x : T \vdash x : T}[\textsc{T-Qvar}]
\]
\vspace{2mm}
\[
	\inference{\Gamma\partition \Delta, \Delta_0 \vdash e_0 : T_0 \qquad \Gamma\partition \Delta, \Delta_1 \vdash e_1 : T_1}{\Gamma\partition \Delta, \Delta_0, \Delta_1 \vdash \pair {e_0} {e_1} : T_0 \otimes T_1}[\textsc{T-PurePair}]
    \quad
    \inference{\vdash f : T \rightsquigarrow T' \qquad \Gamma\partition \Delta \vdash e : T}{\Gamma\partition \Delta \vdash f\: e : T'}[\textsc{T-PureApp}]
\]
\vspace{2mm}
\[
	\inference{\Gamma \partition \Delta \Vdash e : T \qquad \ortho{T}{e_1, \ldots, e_n} \qquad \varnothing\partition \Gamma_j \vdash e_j : T \; \forall j \qquad \classical(e_j) \; \forall j \\ \erases{T'}(x; e_1', \ldots, e_n') \; \forall x \in \dom(\Delta) \qquad \Gamma, \Gamma_j\partition \Delta, \Delta' \vdash e_j' : T' \; \forall j}{\Gamma \partition \Delta, \Delta' \vdash \cntrl{e}{T}{e_1 &\mapsto e_1' \\ &\cdots \\ e_n &\mapsto e_n'}{T'} : T'}[\textsc{T-Ctrl}]
\]
\caption{Pure expression typing rules. Here, $\texttt{classical}(e)$ holds if $e$ does not include any use of \texttt{u3} or \texttt{rphase}.}
\Description{}
\label{fig:t-pure-exp}
\end{figure}

Consider an expression $e$ typed as $\Gamma \partition \Delta \vdash e : T$, that is, $e$ has pure type $T$ with classical context $\Gamma$ and quantum context $\Delta$. The semantics of such an expression corresponds to a \emph{pure state} in the Hilbert space $\cH(T)$ associated with the type $T$. \Cref{fig:t-pure-exp} shows the typing rules for the pure expression typing judgment. While the separation into ``classical'' and ``quantum'' contexts may seem unexpected for a language like Qunity that aims to unify classical and quantum computation, the classical contexts do in fact play an important role in the type system. When a variable $x$ is in a classical context $\Gamma$, it does not mean that it is literally in a classical register - it means that from the perspective of the current expression, we view it as being in a classical basis state and interact with it by sharing it relative to the classical basis. This also means that the type system does not place the same relevance restrictions on variables in classical contexts as it does on quantum variables: we may assume that the compiler automatically uncomputes the variable when it goes out of scope. The primary use case for these classical contexts is the \lstinline|ctrl| construct: as seen in the \textsc{T-Ctrl} rule, the quantum contexts of the left-hand-side (LHS) pattern expressions become classical contexts when typing the right-hand-side (RHS) expressions. This allows the programmer to safely ignore these variables in the RHS expressions, knowing that when compiling \textsc{T-Ctrl}, their uncomputation will be handled automatically.

\begin{figure}[ht]
%\dotfill
\[
	\inference{\Gamma \partition \Delta \vdash e : T}{\Gamma \partition \Delta \Vdash e : T}[\textsc{T-Mix}]
	\quad
\inference{\Gamma \partition \Delta \Vdash e : T}{\Gamma \partition \Delta, \Delta_0 \Vdash e : T}[\textsc{T-Discard}]
	\]
\vspace{2mm}
	\[
	\inference{\Gamma \partition \Delta, \Delta_0 \Vdash e_0 : T_0 \qquad \Gamma \partition \Delta, \Delta_1 \Vdash e_1 : T_1}{\Gamma \partition \Delta, \Delta_0, \Delta_1 \Vdash \pair{e_0}{e_1} : T_0 \otimes T_1}[\textsc{T-MixedPair}]
\]
\vspace{2mm}
\[
	\inference{\Gamma \partition \Delta_0 \Vdash e_0 : T \qquad \Gamma \partition \Delta_1 \Vdash e_1 : T}{\Gamma \partition \Delta_0, \Delta_1 \Vdash \trycatch{e_0}{e_1} : T}[\textsc{T-Try}]
	\quad
	\inference{\vdash f : T \Rrightarrow T' \qquad \Gamma \partition \Delta \Vdash e : T}{\Gamma \partition \Delta \Vdash f\: e : T'}[\textsc{T-MixedApp}]
\]
\vspace{2mm}
\[
\inference{\Gamma \partition \Delta, \Delta_0 \Vdash e : T \qquad \ortho{T}{e_1, \dots, e_n} \qquad \varnothing \partition \Gamma_j \vdash e_j : T \; \forall j \\
\classical(e_j) \; \forall j \qquad \Gamma, \Gamma_j \partition \Delta, \Delta_1 \Vdash e_j' : T' \; \forall j}
{\Gamma \partition \Delta, \Delta_0, \Delta_1 \Vdash \match{e}{T}{e_1 &\mapsto e_1' \\ &\cdots \\ e_n &\mapsto e_n'}{T'} : T'}[\textsc{T-Match}]
\]
\caption{Mixed expression typing rules.}
\Description{}
\label{fig:t-mixed-exp}
\end{figure}

The semantics of an expression typed with the mixed typing judgment as $\Gamma \partition \Delta \Vdash e : T$ corresponds to a \emph{mixed state} in the Hilbert space $\cH(T)$ associated with the type $T$. Observe that the rule \textsc{T-Mix} allows pure expressions to be typed as mixed.

Note that this presentation differs from the original mixed expression typing rules in \citeauthor{Voichick_2023}, which did not include classical contexts since the mixed expression typing judgment has no relevance restrictions on the quantum variables. The primary reason for this change is consistency between \lstinline|ctrl| and the newly introduced \lstinline|match|: for instance, it allows expressions of the following form to be accepted by the typechecker:
\[
\match{x}{T}{y &\mapsto \cntrl{y}{T}{\text{(expressions that do not erase $y$)}}{T'} \\ &\qquad\qquad\qquad\cdots}{T'}
\]
Here, $y$ is passed into the \emph{classical} context of the \lstinline|ctrl|, which is initially typed as mixed by the \textsc{T-Match} rule, and then as pure through \textsc{T-Mix} rule. Because $y$ is in the classical context of the \lstinline|ctrl| expression, it is not required to satisfy the erasure judgment in \lstinline|ctrl| since it only applies to quantum context variables.

The \textsc{T-Discard} rule is also a new addition, as \citeauthor{Voichick_2023} used the \textsc{T-MixedAbs} program typing rule to make variable discarding possible. This new change is made to more easily deal with discarded variables in the \lstinline|match| construct.

These rules rely on several additional judgments, namely \texttt{ortho}, and \texttt{erases}. These are listed in \Cref{app:erasure_judgment} and \Cref{app:ortho_judgment}.

\begin{figure}[ht]
%	\dotfill
\[
	\inference{}{\vdash \uthree{r_\theta}{r_\phi}{r_\lambda} : \Bit \rightsquigarrow \Bit}[\textsc{T-Gate}]
\]
\vspace{2mm}
\[
	\inference{}{\vdash \lef{T_0}{T_1} : T_0 \rightsquigarrow T_0 \oplus T_1}[\textsc{T-Left}]
	\quad
	\inference{}{\vdash \rit{T_0}{T_1} : T_1 \rightsquigarrow T_0 \oplus T_1}[\textsc{T-Right}]
\]
\vspace{2mm}
\[
	\inference{\varnothing\partition \Delta \vdash e : T \qquad \varnothing\partition \Delta \vdash e' : T'}{\vdash \lambda e \xmapsto{{\color{gray}T}} e' : T \rightsquigarrow T'}[\textsc{T-PureAbs}]
	\quad
	\inference{\varnothing \partition \Delta \vdash e : T}{\vdash \rphase{T}{e}{r}{r'} : T \rightsquigarrow T}[\textsc{T-Rphase}]
\]
\vspace{2mm}
\[
\inference{\ortho{T}{e_1, \dots, e_n} & \qquad \varnothing \partition \Delta_j \vdash e_j : T \;\forall j \\ \ortho{T'}{e_1', \dots, e_n'} & \qquad \varnothing \partition \Delta_j \vdash e_j' : T' \;\forall j}{\vdash \pmatch{T}{e_1 &\mapsto e_1' \\ &\cdots \\ e_n &\mapsto e_n'}{T'} : T \rightsquigarrow T'}[\textsc{T-Pmatch}]
\]
\vspace{2mm}
\[
% 	\inference{}{\vdash \lef{T_0}{T_1} : T_0 \Rrightarrow T_0 \oplus T_1}
% 	\quad
% 	\inference{}{\vdash \rit{T_0}{T_1} : T_1 \Rrightarrow T_0 \oplus T_1}
% 	\quad
	\inference{\vdash f : T \rightsquigarrow T'}{\vdash f : T \Rrightarrow T'}[\textsc{T-Channel}]
	\quad
	\inference{\varnothing\partition \Delta \vdash e : T \qquad \varnothing \partition \Delta \Vdash e' : T'}{\vdash \lambda e \xmapsto{{\color{gray}T}} e' : T \Rrightarrow T'}[\textsc{T-MixedAbs}]
\]
\caption{Program typing rules.}
\Description{}
\label{fig:t-prog}
\label{fig:t-pure-prog}
\label{fig:t-mixed-prog}
\end{figure}

\subsection{Semantics}\label{app:semantics}

Qunity's semantics is defined in terms of operators and superoperators mapping between Hilbert spaces associated with Qunity types and contexts. These Hilbert spaces are defined as follows:

\begin{align*}
    \cH(\Void) &\defeq \{ 0 \} \\
    \cH(\Unit) &\defeq \CC \\
    \cH(T_0 \oplus T_1) &\defeq \cH(T_0) \oplus \cH(T_1) \\
    \cH(T_0 \otimes T_1) &\defeq \cH(T_0) \otimes \cH(T_1) \\
    \cH(x_1 : T_1, \ldots, x_n : T_n) &\defeq \cH(T_1) \otimes \cdots \otimes \cH(T_n)
\end{align*}

For a type $T$, we define the set of classical expressions of that type as follows:
\begin{align*}
    \VV(\Void) &\defeq \varnothing \\
    \VV(\Unit) &\defeq \{ \unit \} \\
    \VV(T_0 \oplus T_1) &\defeq \{ \lef{T_0}{T_1}\: v_0 \mid v_0 \in \VV(T_0)\} \cup \{ \rit{T_0}{T_1}\: v_1 \mid v_1 \in \VV(T_1)\} \\
    \VV(T_0 \otimes T_1) &\defeq \{ \pair{v_0}{v_1} \mid v_0 \in \VV(T_0), v_1 \in \VV(T_1) \}
\end{align*}

For a context $\Delta$, we have a set of \emph{valuations}, defined as
\[
    \VV(x_1 : T_1, \ldots, x_n : T_n) \defeq \{ x_1 \mapsto v_1, \ldots, x_n \mapsto v_n \mid v_1 \in \VV(T_1), \ldots, v_n \in \VV(T_n) \}
\]

The space $\cH(T)$ is spanned by an orthonormal basis $\{\ket{v} : v \in \VV(T)\}$, where we define
\begin{align*}
    \ket{\unit} &\defeq 1 \\
    \ket{\lef{T_0}{T_1}\: v_0} &\defeq \ket{v_0} \oplus 0 \\
    \ket{\rit{T_0}{T_1}\: v_1} &\defeq 0 \oplus \ket{v_1} \\
    \ket{\pair{v_0}{v_1}} &\defeq \ket{v_0} \otimes \ket{v_1}
\end{align*}

The basis states for the space $\cH(\Delta)$ are defined by $\ket{\tau}$ for valuations $\tau \in \VV(\Delta)$.
\begin{alignat*}{2}
    \ket{x_1 \mapsto v_1, \ldots, x_n \mapsto v_n} &\defeq \ket{v_1} \otimes \cdots \otimes \ket{v_n}
\end{alignat*}

Now, we can define the denotational semantics of Qunity. For pure expression semantics, we say that if $\Gamma \partition \Delta \vdash e : T$ and $\sigma \in \VV(\Gamma)$, then $\msem{\sigma : \Gamma \partition \Delta \vdash e : T} \in \cL(\cH(\Delta), \cH(T))$ defines the pure semantics of $e$. Here $\sigma$ is a valuation of $\Gamma$, representing ``classical data''. For mixed expression semantics, we have that if $\Gamma \partition \Delta \Vdash e : T$ and $\sigma \in \VV(\Gamma)$, then $\msem{\sigma : \Gamma \partition \Delta \Vdash e : T} \in \cL(\cL(\cH(\Delta)), \cL(\cH(T)))$, defining a superoperator acting on the space of density matrices. For program semantics, we have that $\msem{\vdash f : T \rightsquigarrow T'} \in \cL(\cH(T), \cH(T'))$, and $\msem{\vdash f : T \Rrightarrow T'} \in \cL(\cL(\cH(T)), \cL(\cH(T')))$.

\begin{figure}[ht]
\begin{alignat*}{3}
	\msem{\sigma : \Gamma \partition \varnothing \vdash \unit : \Unit} \ket{\varnothing} &\defeq&\;& \ket{\unit} \\
	\msem{\sigma : \Gamma \partition \varnothing \vdash x : T} \ket{\varnothing} &\defeq&& \ket{\sigma(x)} \\
	\msem{\sigma : \Gamma \partition x : T \vdash x : T} \ket{x \mapsto v} &\defeq&& \ket{v} \\
	\msem{\sigma : \Gamma \partition \Delta, \Delta_0, \Delta_1 \vdash \pair{e_0}{e_1} : T_0 \otimes T_1} \ket{\tau, \tau_0, \tau_1} &\defeq&& \msem{\sigma : \Gamma \partition \Delta, \Delta_0 \vdash e_0 : T_0} \ket{\tau, \tau_0} \\
		&&&\otimes \msem{\sigma : \Gamma \partition \Delta, \Delta_1 \vdash e_1 : T_1} \ket{\tau, \tau_1} \\
	\msem{\sigma : \Gamma \partition \Delta, \Delta' \vdash \cntrl{e\hspace{-2mm}}{T}{e_1 &\mapsto e_1' \\ &\cdots \\ e_n &\mapsto e_n'}{T'} \hspace{-2mm}: T'} \ket{\tau,\tau'} &\defeq&& \sum_{v \in \VV(T)} \bra{v} \msem{\Gamma \partition \Delta \Vdash e : T}\left( \op{\sigma, \tau}{\sigma, \tau} \right) \ket v \\
         &&& \cdot \sum_{j=1}^n \sum_{\sigma_j \in \VV(\Gamma_j)} \bra{\sigma_j} \msem{\varnothing : \varnothing \partition \Gamma_j \vdash e_j : T}^\dagger \ket v \\
         &&&\cdot \msem{\sigma, \sigma_j : \Gamma, \Gamma_j \partition \Delta,\Delta' \vdash e_j' : T'} \ket{\tau, \tau'} \\
	\msem{\sigma : \Gamma \partition \Delta \vdash f\; e : T'} \ket{\tau} &\defeq&& \msem{\vdash f : T \rightsquigarrow T'} \msem{\sigma : \Gamma \partition \Delta \vdash e : T} \ket{\tau}
\end{alignat*}
\caption{Pure expression semantics.}
\Description{}
\label{fig:sem-pure-expr}
\end{figure}

\begin{figure}[ht]
\begin{align*}
	\msem{\vdash \uthree{r_\theta}{r_\phi}{r_\lambda} : \Bit \rightsquigarrow \Bit} \ket{\zero} &\defeq \cos(r_\theta / 2) \ket{\zero} + e^{i r_\phi} \sin(r_\theta / 2) \ket{\one} \\
	\msem{\vdash \uthree{r_\theta}{r_\phi}{r_\lambda} : \Bit \rightsquigarrow \Bit} \ket{\one} &\defeq - e^{i r_\lambda} \sin(r_\theta / 2) \ket{\zero}  + e^{i(r_\phi + r_\lambda)} \cos(r_\theta / 2) \ket{\one} \\
	\msem{\vdash \lef{T_0}{T_1} : T_0 \rightsquigarrow T_0 \oplus T_1}\ket{v} &\defeq \ket {\lef{T_0}{T_1}\: v} \\
	\msem{\vdash \rit{T_0}{T_1} : T_1 \rightsquigarrow T_0 \oplus T_1}\ket{v} &\defeq \ket {\rit{T_0}{T_1}\: v} \\
	\msem{\vdash \lambda e \xmapsto{T} e' : T \rightsquigarrow T'}\ket{v} &\defeq \msem{\varnothing : \varnothing \partition \Delta \vdash e' : T'} \msem{\varnothing : \varnothing \partition \Delta \vdash e : T}^\dagger \ket{v} \\
	\msem{\vdash \rphase{T}{e}{r}{r'} : T \rightsquigarrow T} \ket{v} &\defeq e^{i r} \msem{\varnothing : \varnothing \partition \Delta \vdash e : T} \msem{\varnothing : \varnothing \partition \Delta \vdash e : T}^\dagger \ket{v}
\\[-0.5em]&\qquad\qquad + e^{i r'} \left(\mathbb{I} - \msem{\varnothing \partition \Delta \vdash e : T} \msem{\varnothing : \varnothing \partition \Delta \vdash e : T}^\dagger\right) \ket{v} \\
\msem{\vdash \pmatch{T}{e_1 &\mapsto e_1' \\ &\cdots \\ e_n &\mapsto e_n'}{T'} : T \rightsquigarrow T'}\ket{v} &\defeq
\sum_{j=1}^n \msem{\varnothing : \varnothing \partition \Delta_j \vdash e_j' : T'}
\msem{\varnothing : \varnothing \partition \Delta_j \vdash e_j : T}\adj \ket{v}
\end{align*}
\caption{Pure program semantics.}
\Description{}
\label{fig:sem-pure-prog}
\end{figure}

\begin{figure}[ht]
\begin{alignat*}{2}
    \msem{\sigma : \Gamma \partition \Delta \Vdash e : T}\left(\op{\tau}{\tau'}\right) &\defeq&\;& \msem{\sigma : \Gamma \partition \Delta \vdash e : T} \op{\tau}{\tau'} \msem{\sigma : \Gamma \partition \Delta \vdash e : T}^\dagger \\
	\msem{\sigma : \Gamma \partition \Delta, \Delta_{0} \Vdash e : T}\left(\rho \otimes \rho_0\right) &\defeq&\;& \tr(\rho_0) \msem{\sigma : \Gamma \partition \Delta \Vdash e : T}(\rho) \\
	\msem{\sigma : \Gamma \partition \Delta, \Delta_0, \Delta_1 \Vdash \pair{e_0}{e_1} : T_0 \otimes T_1}\left(\op{\tau, \tau_0, \tau_1}{\tau', \tau_0', \tau_1'}\right) &\defeq&& \msem{\sigma : \Gamma \partition \Delta, \Delta_0 \Vdash e_0 : T_0}\left(\op{\tau, \tau_0}{\tau', \tau_0'}\right) \\
     &&&\otimes \msem{\sigma : \Gamma \partition \Delta, \Delta_1 \Vdash e_1 : T_1}\left(\op{\tau, \tau_1}{\tau', \tau_1'}\right) \\
\msem{\sigma : \Gamma \partition \Delta_0,\Delta_1 \Vdash \trycatch{e_0}{e_1} : T}\left( \rho_0 \otimes \rho_1 \right) &\defeq&& \tr(\rho_1) \msem{\sigma : \Gamma \partition \Delta_0 \Vdash e_0 : T}\left( \rho_0 \right) \\&&&+ (\tr(\rho_0) - \tr(\msem{\sigma : \Gamma \partition \Delta_0 \Vdash e_0 : T}\left( \rho_0 \right))) \\&&&\cdot \msem{\sigma : \Gamma \partition \Delta_1 \Vdash e_1 : T}\left( \rho_1 \right)
\end{alignat*}
\begin{align*}
\msem{\sigma : \Gamma \partition \Delta, \Delta_0, \Delta_1 \Vdash \match{e}{T}{e_1 &\mapsto e_1' \\ &\cdots \\ e_n &\mapsto e_n'}{T'} : T'} \parens{\ket{\tau, \tau_0, \tau_1}\bra{\tau', \tau_0', \tau_1'}} \defeq \\
= \sum_{v \in \VV(T)} \bra{v} \parens{\msem{\sigma : \Gamma \partition \Delta, \Delta_0 \Vdash e : T}\parens{\ket{\tau, \tau_0}\bra{\tau', \tau_0'}}} \ket{v} \cdot \\
\cdot \sum_{j=1}^n \sum_{\sigma_j \in \VV(\Gamma_j)}
\bra{\sigma_j} \msem{\varnothing : \varnothing \partition \Gamma_j \vdash e_j : T}\adj \ket{v} \cdot \\
\cdot \msem{\sigma, \sigma_j : \Gamma, \Gamma_j \partition \Delta,\Delta_1 \Vdash e_j' : T'}\parens{\ket{\tau, \tau_1}\bra{\tau', \tau_1'}}
\end{align*}
\caption{Mixed expression semantics.}
\Description{}
\label{fig:sem-mixed-expr}
%\dotfill
\end{figure}

\begin{figure}[ht]
\begin{align*}
	& \msem{\vdash f : T \Rrightarrow T'}(\op{v}{v'}) \defeq \msem{\vdash f : T \rightsquigarrow T'}\op{v}{v'} \msem{\vdash f : T \rightsquigarrow T'}^\dagger \\
	& \msem{\vdash \lambda e \xmapsto{{\color{gray}T}} e' : T \Rrightarrow T'}(\op{v}{v'}) \defeq \\
	&\qquad \msem{\varnothing : \varnothing \partition \Delta \Vdash e' : T'}\left(\msem{\varnothing : \varnothing \partition \Delta \vdash e : T}^\dagger \op{v}{v'} \msem{\varnothing : \varnothing \partition \Delta \vdash e : T}\right)
\end{align*}
\caption{Mixed program semantics.}
\Description{}
\label{fig:sem-mixed-prog}
\end{figure}

\clearpage

\section{Grammar of the Surface Syntax}\label{app:surface_grammar}

A type name \synt{tname} consists of a capital letter followed by zero or more alphanumeric characters, underscores, or apostrophes. An expression name \synt{ename} consists of a dollar sign \lit{\$} followed by one or more alphanumeric characters, underscores, or apostrophes. Program names \synt{fname}, real names \synt{rname}, and type variables \synt{tvar} are similar, using \lit{@}, \lit{\#}, and \lit{'} respectively. Quantum variables \synt{qvar} must start with a lowercase letter or underscore. Real constants \synt{const} are integers (one or more digits possibly preceded by a minus sign).

\vspace{\baselineskip}

\grammarindent1cm
\begin{grammar}

<qfile> ::= <def>* <e>

<def> ::= `type' <tname> <sig> `:=' <t> `end'
\alt `type' <tname> <sig> `:='  (`|' <tname> | <fname> `of' <t>)+ `end'
\alt `def' <ename> <sig> `:' <t> `:=' <e> `end'
\alt `def' <fname> <sig> `:' <t> `->' <t> `:=' <f> `end'
\alt `def' <rname> <sig> `:=' <r> `end'

<sig> ::= `{' <param> (`,' <param>)* `}'

<param> ::= <tvar>
\alt <tname> `:' <t>
\alt <fname> `:' <t> `->' <t>
\alt <rname>

<ge> ::= <t> | <e> | <f> | <r>

<t> ::= `Void'
\alt `Unit'
\alt <t> `*' <t>
\alt <tvar>
\alt <tname> (`{' <ge>, (`,' <ge>)* `}')?
\alt `(' <t> `)'
\alt `if' <be> `then' <t> `else' <t> `endif'

<e> ::= `()'
\alt <qvar>
\alt `(' <e> `,' <e> `)'
\alt `ctrl' <e> `[' (<e> `->' <e> `;')* (`else' -> <e>)? `]'
\alt `match' <e> `[' (<e> `->' <e> `;')* (`else' -> <e>)? `]'
\alt `try' <e> `catch' <e>
\alt <f> `(' <e> `)'
\alt <e> `|>' <f>
\alt `let' <e> `=' <e> `in' <e>
\alt <ename> (`{' <ge>, (`,' <ge>)* `}')?
\alt `(' <e> `)'
\alt `if' <be> `then' <e> `else' <e> `endif'

<f> ::= `u3' `{' <r> `,' <r> `,' <r> `}'
\alt `lambda' <e> `->' <e>
\alt `gphase' `{' <r> `}'
\alt `rphase' `{' <e> `,' <r> `,' <r> `}'
\alt `pmatch' `[' (<e> `->' <e> `;')* `]'
\alt <fname> (`{' <ge>, (`,' <ge>)* `}')?
\alt `(' <f> `)'
\alt `if' <be> `then' <f> `else' <f> `endif'

<r> ::= `pi'
\alt `euler'
\alt <const>
\alt <r> `+' <r>
\alt <r> `-' <r>
\alt <r> `*' <r>
\alt <r> `/' <r>
\alt <r> `^' <r>
\alt <r> `\%' <r>
\alt `sin' `(' <r> `)'
\alt `cos' `(' <r> `)'
\alt `tan' `(' <r> `)'
\alt `arcsin' `(' <r> `)'
\alt `arccos' `(' <r> `)'
\alt `arctan' `(' <r> `)'
\alt `exp' `(' <r> `)'
\alt `ln' `(' <r> `)'
\alt `log2' `(' <r> `)'
\alt `sqrt' `(' <r> `)'
\alt `ceil' `(' <r> `)'
\alt `floor' `(' <r> `)'
\alt <rname> (`{' <ge>, (`,' <ge>)* `}')?
\alt `(' <r> `)'
\alt `if' <be> `then' <r> `else' <r> `endif'

<be> ::= `!' <be>
\alt <be> `&&' <be>
\alt <be> `||' <be>
\alt <r> <cmp> <r>
\alt `(' <be> `)'

<cmp> ::= `=' | `!=' | `<=' | `<' | `>=' | `>'
\end{grammar}

\newpage

\section{Comparison of Control Flow Constructs}\label{app:control_flow_comparison}

\begin{longtable}[]{@{}
  | >{\raggedright\arraybackslash}p{(\columnwidth - 6\tabcolsep) * \real{0.16}}
  | >{\raggedright\arraybackslash}p{(\columnwidth - 6\tabcolsep) * \real{0.28}}
  | >{\raggedright\arraybackslash}p{(\columnwidth - 6\tabcolsep) * \real{0.28}}
  | >{\raggedright\arraybackslash}p{(\columnwidth - 6\tabcolsep) * \real{0.28}}@{} |}
\hline
& \lstinline|ctrl| & \lstinline|match| & \lstinline|pmatch| \\
\hline
What kind of semantics? & Pure expression & Mixed expression & Pure
program \\
\hline
Restrictions & \begin{minipage}[t]{\linewidth}\raggedright
\begin{itemize}
\item
  LHS must satisfy orthogonality
\item
  LHS must be pure
\item
  RHS must be pure
\item
  RHS must satisfy erasure judgment for scrutinee quantum context
\item
  LHS must be classical
\end{itemize}
\end{minipage} & \begin{minipage}[t]{\linewidth}\raggedright
\begin{itemize}
\item
  LHS must satisfy orthogonality
\item
  LHS must be pure
\item
  LHS must be classical
\end{itemize}
\end{minipage} & \begin{minipage}[t]{\linewidth}\raggedright
\begin{itemize}
\item
  LHS must satisfy orthogonality
\item
  RHS must satisfy orthogonality
\item
  LHS must be pure
\item
  RHS must be pure
\item
  Each pair of matching expressions must share the same (quantum)
  context
\item
  Since it is a program, it is typed without context and
  does not contain a scrutinee directly
\end{itemize}
\end{minipage} \\
\hline
Benefits & \begin{minipage}[t]{\linewidth}\raggedright
\begin{itemize}
\item
  Scrutinee can be mixed while the entire expression is still pure
\item
  Can use classical context variables originating from an outer
  \lstinline|ctrl| or \lstinline|match| in an unrestricted way
\item
  Can be used to apply relative phases (conditional global phases)
\end{itemize}
\end{minipage} & \begin{minipage}[t]{\linewidth}\raggedright
\begin{itemize}
\item
  RHS can be mixed
\item
  Erasure judgment is not required
\item
  Can use classical context variables originating from an outer
  \lstinline|ctrl| or \lstinline|match| in an unrestricted way
\item
  No ``unavoidable source of error'' when scrutinee is not classical
\end{itemize}
\end{minipage} & \begin{minipage}[t]{\linewidth}\raggedright
\begin{itemize}
\item
  LHS and RHS can be in an arbitrary basis
\item
  Erasure judgment is not required
\item
  Can be used to apply relative phases (conditional global phases)
\end{itemize}
\end{minipage} \\
\hline
Can be used in a pattern / has an adjoint & Yes & No & Yes \\
\hline
Can be used in the LHS of a \lstinline|ctrl| or \lstinline|match| & No & No &
No \\
\hline
Can be used in the RHS of a \lstinline|ctrl| & Yes & No & Yes \\
\hline
Can be used the RHS of a \lstinline|match| & Yes & Yes & Yes \\
\hline
Can be used in the LHS or RHS of a \lstinline|pmatch| & Yes, if the isometry
judgment holds & No & Yes, if the isometry judgment holds \\
\hline
Can contain a mixed expression in the scrutinee & Yes & Yes & N/A \\
\hline
Can contain a mixed expression on the LHS & No & No & No \\
\hline
Can contain a mixed expression on the RHS & No & Yes & No \\
\hline
Applying a global phase to a RHS expression makes a semantic difference
& Yes & No & Yes \\
\hline
When to use? & \begin{minipage}[t]{\linewidth}\raggedright
\begin{itemize}
\item
  You want to control something on a subroutine which may involve
  measurement/decoherence, but you want your entire expression to be
  reversible
\item
  You want to conditionally apply a global phase depending on the result
  of a mixed expression
\item
  You are able to keep the scrutinee variables on the RHS in
  satisfaction of the erasure judgment
\end{itemize}
\end{minipage} & \begin{minipage}[t]{\linewidth}\raggedright
\begin{itemize}
\item
  You do not care whether your expression is reversible
\item
  You want to conditionally apply some expressions that involve
  measurement/decoherence
\item
  You want something that most closely corresponds to a \lstinline|match|
  in classical programming languages
\item
  What you want to do can be accomplished by measuring the outcome of
  the input expression and then evaluating the appropriate RHS
  expression
\item
  You want to avoid the erasure requirement
\end{itemize}
\end{minipage} & \begin{minipage}[t]{\linewidth}\raggedright
\begin{itemize}
\item
  You want to reversibly map between two different bases
\item
  The LHS patterns are not in the classical basis.
\item
  You want to avoid the erasure requirement
\item
  You do not need any outside variables
\end{itemize}
\end{minipage} \\
\hline
\end{longtable}

\newpage

\section{Benchmark Test Programs}\label{app:benchmark_test_programs}

For these examples, the quantum registers in all circuits were initialized to a superposition state to prevent classical propagation optimizations from taking place. Not all of the examples are useful quantum algorithms, but they serve to evaluate the performance of the Qunity compiler.

\subsection{Phase Conditioned on AND of 5 Qubits}\label{app:multi_and_impl}

Qunity implementation:

\begin{lstlisting}
def @multi_and{#n} : Array{#n, Bit} -> Bit :=
  if #n = 0 then
    lambda () -> &1
  else
    lambda (x0, x1) -> @and(x0, @multi_and{#n - 1}(x1))
  endif
end

&repeated{5, Bit, &plus} |> lambda x -> ctrl @multi_and{5}(x) [
  &0 -> x;
  &1 -> x |> gphase{pi}
]
\end{lstlisting}

\begin{figure}[ht]
    \centering
    \includegraphics[width=0.8\linewidth]{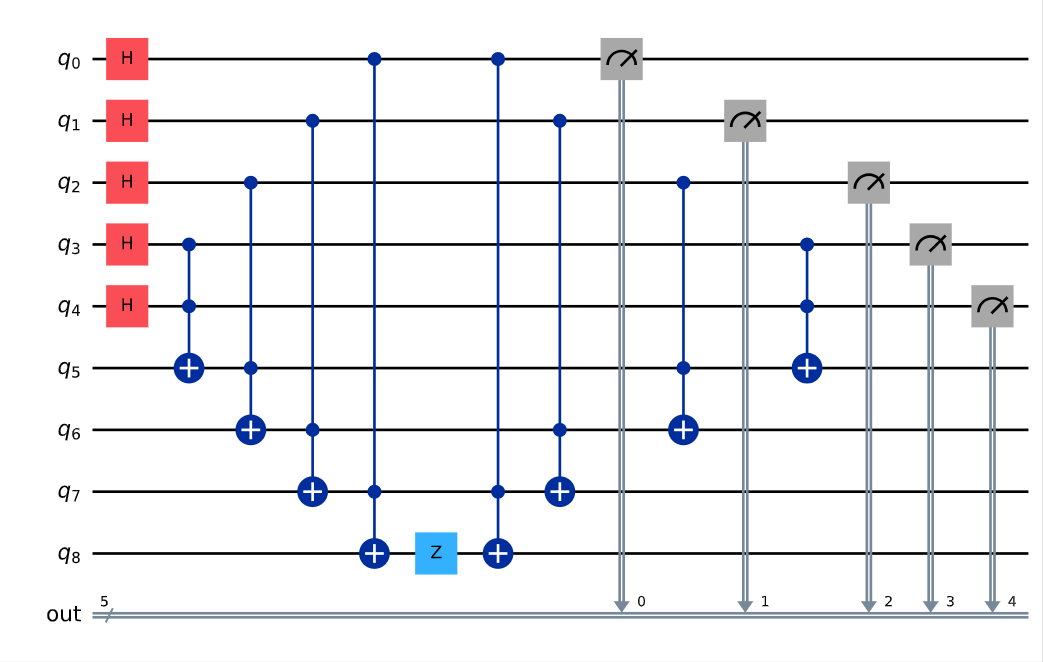}
    \caption{Compiled circuit for phase conditioned on AND of 5 qubits.}
    \Description{}
    \label{fig:multi_and_phase_circuit}
\end{figure}

Qiskit implementation:

\begin{lstlisting}[language=python]
circuit = QuantumCircuit(5, 5)
circuit.h(range(5))
circuit.mcp(math.pi, [0, 1, 2, 3], [4])
circuit.measure(range(5), range(5))
\end{lstlisting}

\subsection{Quantum Fourier Transform (5 Qubits)}\label{app:qft_impl}

Qunity implementation:

\begin{lstlisting}
def @couple{#k} : Bit * Bit -> Bit * Bit :=
  lambda (x0, x1) -> (x1, x0) |> rphase{(&1, &1), 2 * pi / (2 ^ #k), 0}
end

def @rotations{#n} : Array{#n, Bit} -> Array{#n, Bit} :=
  if #n <= 0 then
    @id{Unit}
  else if #n = 1 then
    lambda (x, ()) -> (@had(x), ())
  else
    lambda (x0, x) ->
      let (x0, (y0', y)) = (x0, @rotations{#n - 1}(x)) in
      let ((y0, y1), y) = (@couple{#n}(x0, y0'), y) in
      (y0, (y1, y))
  endif
  endif
end

def @qft{#n} : Array{#n, Bit} -> Array{#n, Bit} :=
  if #n <= 0 then
    @id{Unit}
  else
    lambda x ->
      let (x0, x') = @rotations{#n}(x) in
      (x0, @qft{#n - 1}(x'))
  endif
end

@qft{5}(&0, (&plus, (&plus, (&0, (&0, ()))))) |> @reverse{5, Bit}
\end{lstlisting}

\begin{figure}[ht]
    \centering
    \includegraphics[width=\linewidth]{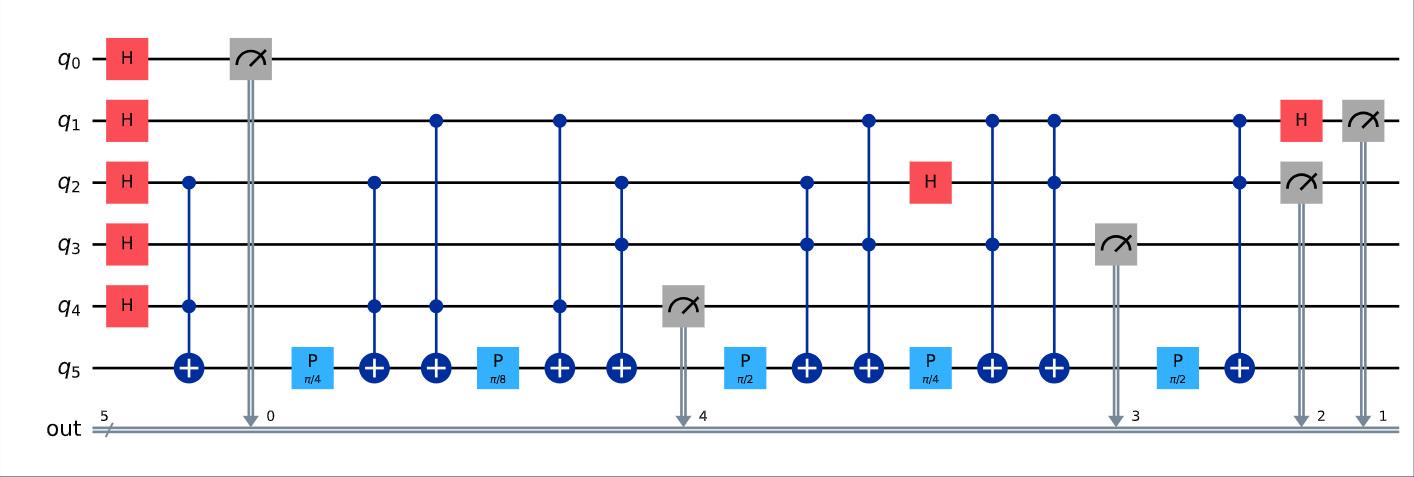}
    \caption{Compiled circuit for the Quantum Fourier Transform.}
    \Description{}
    \label{fig:fourier_transform_circuit}
\end{figure}

Qiskit implementation:

\begin{lstlisting}[language=python]
from qiskit.circuit.library import QFT
circuit = QuantumCircuit(5, 5)
circuit.h([1, 2])
circuit.append(QFT(5), [0, 1, 2, 3, 4])
circuit.measure([0, 1, 2, 3, 4], [4, 3, 2, 1, 0])
\end{lstlisting}

\newpage

\subsection{Phase Estimation Example (5 Qubits)}\label{app:phase_estimation_impl}

Qunity implementation:

\begin{lstlisting}
def @apply_phase{#n, #p} : Num{#n} -> Num{#n} :=
  if #n = 0 then
    @id{Unit}
  else
    lambda (x0, x') ->
    (ctrl x0 [
      &0 -> x0;
      &1 -> x0 |> gphase{2 * pi * #p}
    ], @apply_phase{#n - 1, 2 * #p}(x'))
  endif
end

def &phase_estimation{#n, #p} : Num{#n} :=
  &repeated{#n, Bit, &plus}
  |> @apply_phase{#n, #p}
  |> @adjoint{Num{#n}, Num{#n}, @qft{#n}}
  |> @reverse{#n, Bit}
end

&phase_estimation{5, 1/3}
\end{lstlisting}

\begin{figure}[ht]
    \centering
    \includegraphics[width=\linewidth]{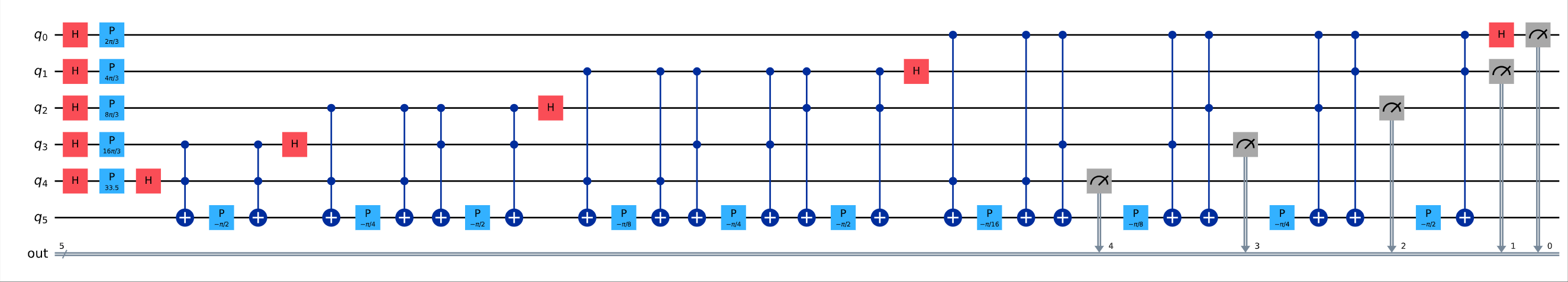}
    \caption{Compiled circuit for phase estimation.}
    \Description{}
    \label{fig:phase_estimation_circuit}
\end{figure}

Qiskit implementation:

\begin{lstlisting}[language=python]
from qiskit.circuit.library import PhaseEstimation, GlobalPhaseGate

circuit = QuantumCircuit(5, 5)
circuit.append(PhaseEstimation(5, GlobalPhaseGate(2 * math.pi / 3)), range(5))
circuit.measure(range(5), range(5))
\end{lstlisting}

\subsection{Order Finding}\label{app:order_finding_impl}

For the Qunity implementation, see \Cref{sec:examples_order_finding}.

\begin{figure}[ht]
    \centering
    \includegraphics[width=\linewidth]{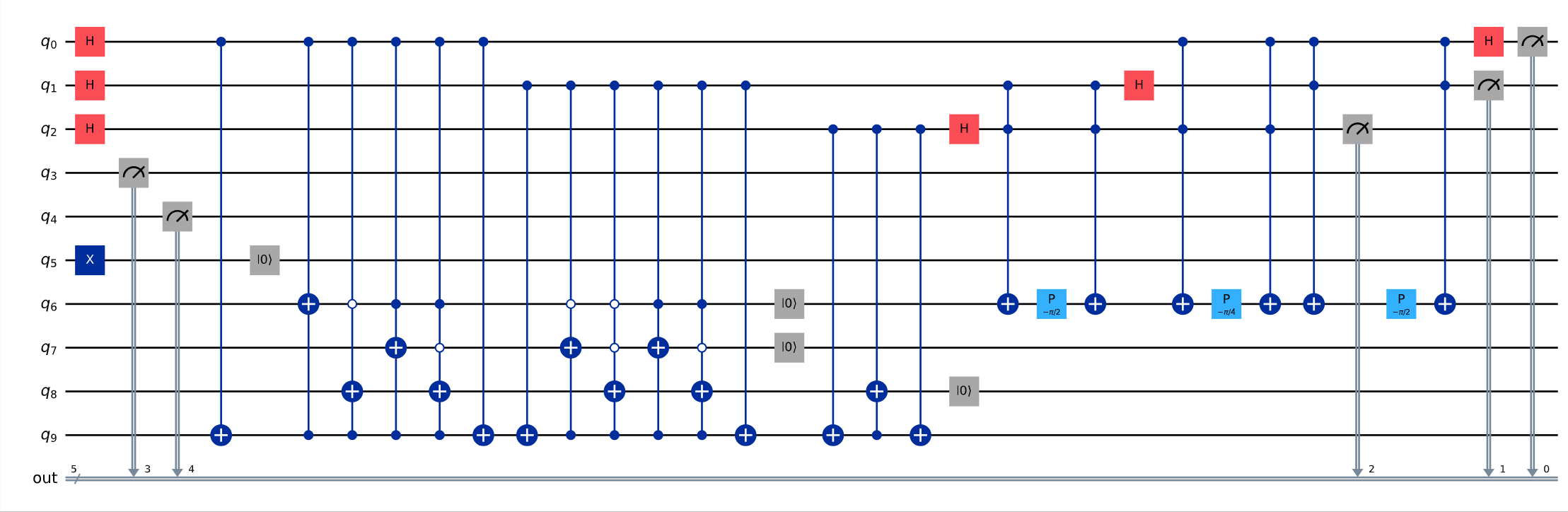}
    \caption{Compiled circuit for order finding.}
    \Description{}
    \label{fig:order_finding_circuit}
\end{figure}

\subsection{Reversible CDKM Adder (5 Bits)}\label{app:adder_impl}

Qunity implementation:

\begin{lstlisting}
def @cdkm_maj : Bit * Bit * Bit -> Bit * Bit * Bit :=
  lambda ((a, b), c) -> ctrl c [
    &0 -> (c, (a, b));
    &1 -> (c, (@not(a), @not(b)))
  ] |> lambda (c, (a, b)) -> ctrl (a, b) [
    (&1, &1) -> ((a, b), @not(c));
    else -> ((a, b), c);
  ]
end

def @cdkm_uma : Bit * Bit * Bit -> Bit * Bit * Bit :=
  lambda ((a, b), c) -> ctrl (a, b) [
    (&1, &1) -> ((a, b), @not(c));
    else -> ((a, b), c);
  ] |> lambda ((a, b), c) ->
  (a, (b, (c, ()))) |> @cnot{3, 2, 0} |> @cnot{3, 0, 1} |>
  lambda (a, (b, (c, ()))) -> ((a, b), c)
end

def @rev_adder_helper{#n} : Num{#n} * Num{#n} * Bit -> Num{#n} * Num{#n} * Bit :=
  if #n <= 0 then
    @id{Num{#n} * Num{#n} * Bit}
  else
    lambda (((a0, a1), (b0, b1)), c) ->
    let (((ca, ba), c'), (a1, b1)) =
      (@cdkm_maj((c, b0), a0), (a1, b1)) in
    let ((ca, ba), ((a1, s1), c'')) =
      ((ca, ba), @rev_adder_helper{#n - 1}((a1, b1), c')) in
    let ((a1, s1), ((c, s0), a0)) =
      ((a1, s1), @cdkm_uma((ca, ba), c'')) in
    (((a0, a1), (s0, s1)), c)
  endif
end

def @rev_adder{#n} : Num{#n} * Num{#n} -> Num{#n} * Num{#n} :=
  lambda (a, b) -> ((a, b), &0) |> @rev_adder_helper{#n} |> lambda (x, &0) -> x
end

@rev_adder{5}(&repeated{5, Bit, &plus}, &repeated{5, Bit, &plus})
\end{lstlisting}

\begin{figure}[ht]
    \centering
    \includegraphics[width=\linewidth]{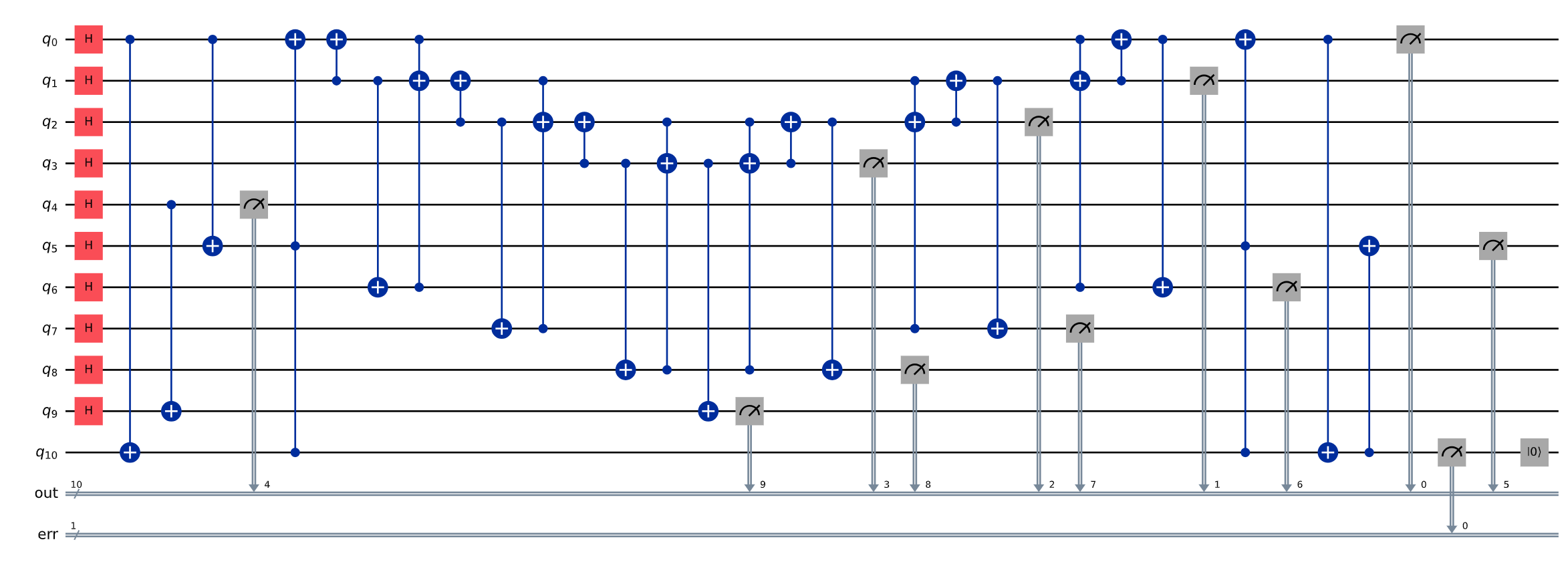}
    \caption{Compiled circuit for the reversible CDKM adder.}
    \Description{}
    \label{fig:cdkm_adder_circuit}
\end{figure}

Qiskit implementation:

\begin{lstlisting}[language=python]
from qiskit.circuit.library import CDKMRippleCarryAdder

circuit = QuantumCircuit(11, 10)
circuit.h(range(10))
circuit.append(CDKMRippleCarryAdder(5, kind="fixed"), range(11))
circuit.measure(range(10), range(10))
\end{lstlisting}

\subsection{Grover (5-Bit Match Oracle, 1 Iteration)}\label{app:grover_impl}

Qunity implementation (for the definition of \lstinline|&grover|, see \Cref{sec:examples_grover}):

\begin{lstlisting}
def #n := 5 end
def &answer : Num{#n} := (&0, (&1, (&1, (&0, (&0, ()))))) end

def @f : Num{#n} -> Bit :=
  lambda x -> ctrl x [
    @answer -> (x, &1);
    else -> (x, &0)
  ] |> @snd{Num{#n}, Bit}
end

&grover{Num{#n}, &repeated{#n, Bit, &plus}, @f, 1}
\end{lstlisting}

\begin{figure}[ht]
    \centering
    \includegraphics[width=0.8\linewidth]{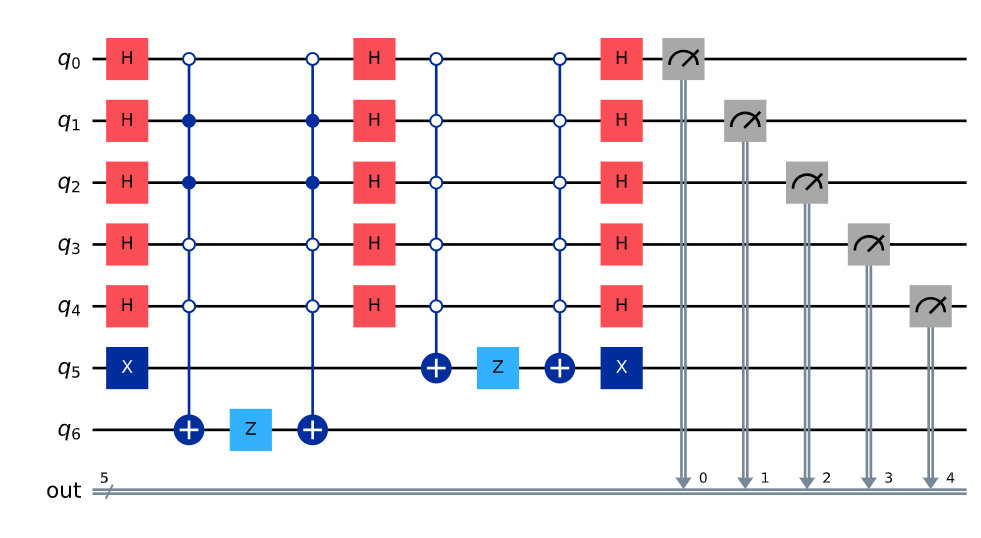}
    \caption{Compiled circuit for Grover's algorithm with a simple \lstinline|match| oracle.}
    \Description{}
    \label{fig:grover_circuit}
\end{figure}

Qiskit implementation:

\begin{lstlisting}[language=python]
circuit = QuantumCircuit(6, 5)
circuit.x(5)
circuit.h(range(6))
circuit.mcx([0, 1, 2, 3, 4], 5, ctrl_state="01100")
circuit.h(range(5))
circuit.mcx([0, 1, 2, 3, 4], 5, ctrl_state="00000")
circuit.h(range(5))
circuit.measure([0, 1, 2, 3, 4], [4, 3, 2, 1, 0])
\end{lstlisting}

\subsection{Grover (List Sum Oracle)}\label{app:grover_with_lists_impl}

Qunity implementation (for the definition of \lstinline|&grover|, see \Cref{sec:examples_grover}):
\begin{lstlisting}
def &equal_superpos_list{#n} : List{#n, Bit} :=
  if #n = 0 then
    &ListEmpty{0, Bit}
  else
    $0
    |> u3{2 * arccos(sqrt(1 / (2 ^ (#n + 1) - 1))), 0, 0}
    |> pmatch [
      &0 -> &ListEmpty{#n, Bit};
      &1 -> @ListCons{#n, Bit}(&plus,
                              &equal_superpos_list{#n - 1})
    ]
  endif
end

def @is_odd_sum{#n} : List{#n, Bit} -> Bit :=
  if #n = 0 then
    lambda l -> &0
  else
    lambda l -> match l [
      &ListEmpty{#n, Bit} -> &0;
      @ListCons{#n, Bit}(&0, l') -> @is_odd_sum{#n - 1}(l');
      @ListCons{#n, Bit}(&1, l') -> @not(@is_odd_sum{#n - 1}(l'))
    ]
  endif
end

def #n := 2 end

&grover{List{#n, Bit}, &equal_superpos_list{#n},
                       @is_odd_sum{#n}, 1}
\end{lstlisting}

\begin{figure}[ht]
    \centering
    \includegraphics[width=0.8\linewidth]{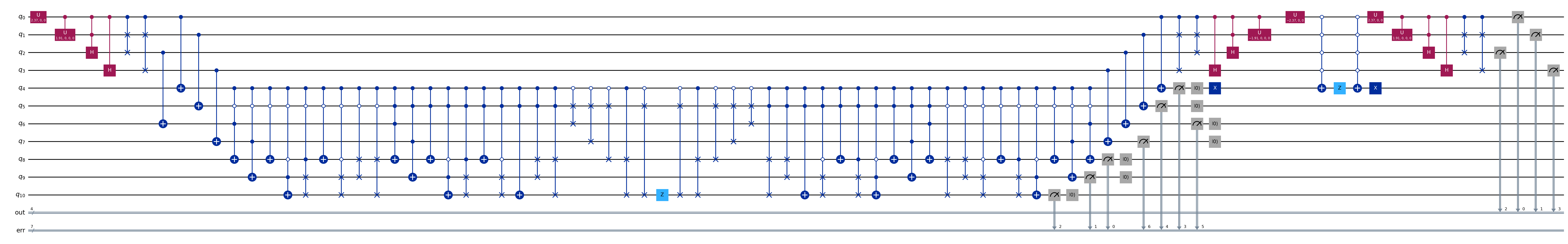}
    \caption{Compiled circuit for Grover's algorithm with a ``list sum'' oracle.}
    \Description{}
    \label{fig:grover_with_lists_circuit}
\end{figure}

\newpage

Qiskit implementation:

\begin{lstlisting}[language=python]
circuit = QuantumCircuit(5, 4)
circuit.x(4)
circuit.h(4)
circuit.ry(2 * math.acos(math.sqrt(1 / (2**3 - 1))), 0)
circuit.ch(0, 1)
circuit.cry(2 * math.acos(math.sqrt(1 / (2**2 - 1))), 0, 2)
circuit.ch(2, 3)
circuit.cx(1, 4)
circuit.cx(3, 4)
circuit.ch(2, 3)
circuit.cry(-2 * math.acos(math.sqrt(1 / (2**2 - 1))), 0, 2)
circuit.ch(0, 1)
circuit.ry(-2 * math.acos(math.sqrt(1 / (2**3 - 1))), 0)
circuit.mcx([0, 1, 2, 3], 4, ctrl_state="0000")
circuit.ry(2 * math.acos(math.sqrt(1 / (2**3 - 1))), 0)
circuit.ch(0, 1)
circuit.cry(2 * math.acos(math.sqrt(1 / (2**2 - 1))), 0, 2)
circuit.ch(2, 3)
circuit.measure([0, 1, 2, 3], [0, 1, 2, 3])
\end{lstlisting}

\section{Circuit-Style Programming}\label{app:circuit_style}

While Qunity allows for programming on a much higher level of abstraction than the quantum circuit model, it is possible to work with Qunity as if it was a low-level circuit language, by defining the following constructs:

\begin{lstlisting}
def @gate_1q{#n, #i, @f : Bit -> Bit} : Array{#n, Bit} -> Array{#n, Bit} :=
  if #i <= 0 then
    lambda (x, y) -> (@f(x), y)
  else
    lambda (x, y) -> (x, @gate_1q{#n - 1, #i - 1, @f}(y))
  endif
end

def @controlled_1q{#n, #i, #j, @f : Bit -> Bit} : Array{#n, Bit} -> Array{#n, Bit} :=
  if #i > #j then
    lambda x -> x
    |> @reverse{#n, Bit}
    |> @controlled_1q{#n, #n - 1 - #i, #n - 1 - #j, @f}
    |> @reverse{#n, Bit}
  else
    if #i <= 0 then
      lambda (x, y) -> ctrl x [
        &0 -> (x, y);
        &1 -> (x, @gate_1q{#n - 1, #j - 1, @f}(y))
      ]
    else
      lambda (x, y) -> (x, @cnot{#n - 1, #i - 1, #j - 1}(y))
    endif
  endif
end

def @cnot{#n, #i, #j} : Array{#n, Bit} -> Array{#n, Bit} :=
  @controlled_1q{#n, #i, #j, @not}
end
\end{lstlisting}

This allows us to use a Qunity array as a quantum register and apply gates to it, accessing qubits by index. Since we have defined single-qubit gates and CNOT in this way, this can in principle be used to represent any quantum computation. If performance is critical, a Qunity programmer may switch to writing in circuit style for some subroutines, since these constructs compile efficiently to the corresponding low-level gates.

\section{The Orthogonality and Spanning Judgments}\label{app:ortho_judgment}

We generalize the orthogonality judgment (used for typing Qunity's control flow constructs), as well as the spanning judgment (now used for the isometry judgment in \Cref{app:isometry_judgment}). The orthogonality judgment in \Cref{fig:ortho} allows the Qunity typechecker to statically check whether a given set of expressions corresponds to a set of orthogonal subspaces in the Hilbert space corresponding to their type. The spanning judgment in \Cref{fig:spanning} corresponds to checking that the expressions are orthogonal \emph{and} that the direct sum of their corresponding subspaces is the whole Hilbert space. Note that both of these judgments correspond to sufficient but not necessary conditions: anything that they recognize as orthogonal (spanning) must be orthogonal (spanning), but the converse is not true.

The main modifications are the $\textsc{O-IsoApp}$ rule allowing the application of arbitrary \emph{isometric} programs to all expressions in the set (which maintains orthogonality), and the $\textsc{S-UnApp}$ rule allowing the application of arbitrary \emph{unitary} programs to all expressions in the set (which maintains the spanning property). This depends on the \emph{unitary judgment}, which is itself defined in terms of the isometry judgment (\Cref{app:isometry_judgment}) as follows:
\[
\inference{\iso(f) \quad \dim(T) = \dim(T')}{\un(\vdash f : T \rightsquigarrow T')}[U-Prog]
\]
That is, if a Qunity program has isometric semantics and maps between two types whose associated Hilbert spaces have the same dimension, it must have unitary semantics.

\begin{figure}[ht]
\[
\inference{}{\ortho{\Void}{}}[\textsc{O-Void}]
\qquad
\inference{}{\ortho{\Unit}{\unit}}[\textsc{O-Unit}]
\qquad
\inference{}{\ortho{T}{x}}[\textsc{O-Var}]
\]
\[
\inference{\ortho{T}{e_1, \ldots, e_n} \qquad \iso(\vdash f : T \rightsquigarrow T')}{\ortho{T'}{f\: e_1, \ldots, f\: e_n}}[\textsc{O-IsoApp}]
\]
\vspace{2mm}
\[
\inference{\ortho{T_0}{e_1, \ldots, e_n} \qquad \ortho{T_1}{e_1', \ldots, e_{n'}'}}{\ortho{T_0 \oplus T_1}{\lef{T_0}{T_1}e_1, &\ldots, \lef{T_0}{T_1}e_n, \\ \rit{T_0}{T_1}e_1', &\ldots, \rit{T_0}{T_1}e_{n'}'}}[\textsc{O-Sum}]
\]
\vspace{2mm}
\[
\inference{\ortho{T_0}{e_1, \ldots, e_m} \qquad \ortho{T_1}{e_{j,1}', \ldots, e_{j,n_j}'} \; \forall j \\ \FV(e_j) \cap \bigcup_{k=1}^{n_m} \FV(e_{j,k}') = \varnothing \; \forall j}{\ortho{T_0 \otimes T_1}{ \pair{e_1}{e_{1,1}'}, &\ldots, \pair{e_1}{e_{1,n_1}'}, \\ &\ldots, \\ \pair{e_m}{e_{m,1}'}, &\ldots, \pair{e_m}{e_{m,n_m}'} }}[\textsc{O-Pair}]
\]
\vspace{2mm}
\[
\inference{\ortho{T}{e_1', \ldots, e_m'} \qquad [e_1, \ldots, e_n] \textup{ is a subsequence of } [e_1', \ldots, e_m']}{\ortho{T}{e_1, \ldots, e_n}}[\textsc{O-Sub}]
\]
\caption{Orthogonality inference rules, modified from the original spanning rules to allow for arbitrary isometry application. Here, $\FV(e)$ means the set of free variables in $e$ (this does not include variables in patterns).}
\Description{}
\label{fig:ortho}
\end{figure}

\begin{figure}[ht]
\[
\inference{}{\spanning{\Void}{}}[\textsc{S-Void}]
\qquad
\inference{}{\spanning{\Unit}{\Unit}}[\textsc{S-Unit}]
\qquad
\inference{}{\spanning{T}{x}}[\textsc{S-Var}]
\]
\[
\inference{\spanning{T}{e_1, \ldots, e_n} \qquad \un(\vdash f : T \rightsquigarrow T')}{\spanning{T'}{f\: e_1, \ldots, f\: e_n}}[\textsc{S-UnApp}]
\]
\vspace{2mm}
\[
\inference{\spanning{T_0}{e_1, \ldots, e_n} \qquad \spanning{T_1}{e_1', \ldots, e_{n'}'}}{\spanning{T_0 \oplus T_1}{\lef{T_0}{T_1}e_1, &\ldots, \lef{T_0}{T_1}e_n, \\ \rit{T_0}{T_1}e_1', &\ldots, \rit{T_0}{T_1}e_{n'}'}}[\textsc{S-Sum}]
\]
\vspace{2mm}
\[
\inference{\spanning{T_0}{e_1, \ldots, e_m} \qquad \spanning{T_1}{e_{j,1}', \ldots, e_{j,n_j}'} \; \forall j \\ \FV(e_j) \cap \bigcup_{k=1}^{n_m} \FV(e_{j,k}') = \varnothing \; \forall j}{\spanning{T_0 \otimes T_1}{ \pair{e_1}{e_{1,1}'}, &\ldots, \pair{e_1}{e_{1,n_1}'}, \\ &\ldots, \\ \pair{e_m}{e_{m,1}'}, &\ldots, \pair{e_m}{e_{m,n_m}'} }}[\textsc{S-Pair}]
\]
\caption{Spanning inference rules, extended to allow the application of arbitrary unitary programs.}
\Description{}
\label{fig:spanning}
\end{figure}

\clearpage

\begin{lemma}\label{lem:ortho_judgment}
Suppose that $\ortho{T}{e_1, \dots, e_n}$ holds. Take any $e_i, e_j$ with $i \neq j$. Let
\[
\msem{e_i} = \msem{\varnothing : \varnothing \partition \Delta_i \vdash e_i : T} : \cH(\Delta_i) \rarr \cH(T)
\]
and similarly for $\msem{e_j}$. Then, the images of the operators $\msem{e_i}$ and $\msem{e_j}$ as subspaces of $\cH(T)$ are orthogonal.
\end{lemma}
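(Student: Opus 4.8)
The plan is to proceed by induction on the derivation of $\ortho{T}{e_1, \dots, e_n}$, proving at each step that for any two \emph{distinct positions} $i \neq j$ the images $\operatorname{im}(\msem{e_i})$ and $\operatorname{im}(\msem{e_j})$ are orthogonal subspaces of $\cH(T)$. The base cases $\textsc{O-Void}$, $\textsc{O-Unit}$, and $\textsc{O-Var}$ each yield a list with at most one expression, so the conclusion holds vacuously. For $\textsc{O-Sub}$, any two distinct positions in the subsequence come from two distinct positions in the original list $[e_1', \dots, e_m']$, so orthogonality is immediate from the inductive hypothesis.

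The inductive steps rest on three elementary facts about operators between these Hilbert spaces, which I would isolate first. First, an isometry $V$ preserves orthogonality of subspaces: if $A \perp B$ then $V(A) \perp V(B)$, since $V$ preserves inner products. Second, in a direct sum $\cH(T_0) \oplus \cH(T_1)$ the two summands $\cH(T_0) \oplus 0$ and $0 \oplus \cH(T_1)$ are orthogonal by construction. Third, for tensor products, $A \perp B$ in $\cH(T_0)$ implies $A \otimes C \perp B \otimes D$ for arbitrary subspaces $C, D \subseteq \cH(T_1)$, because $\langle a \otimes c,\, b \otimes d\rangle = \langle a, b\rangle\,\langle c, d\rangle = 0$ whenever $\langle a, b\rangle = 0$.

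With these in hand the remaining cases unfold directly. For $\textsc{O-IsoApp}$ we have $\msem{f\, e_i} = \msem{f}\,\msem{e_i}$, where $\iso(f)$ guarantees $\msem{f}$ is an isometry (soundness of the isometry judgment, \Cref{app:isometry_judgment}); hence $\msem{f}(\operatorname{im}\msem{e_i}) \perp \msem{f}(\operatorname{im}\msem{e_j})$ follows from the inductive orthogonality of $\operatorname{im}\msem{e_i}$ and $\operatorname{im}\msem{e_j}$ via the first fact. For $\textsc{O-Sum}$, two left injections (or two right injections) reduce to the isometry case, using that $\msem{\lef{T_0}{T_1}}$ and $\msem{\rit{T_0}{T_1}}$ are isometries; a left injection paired with a right injection is handled by the second fact, since their images lie in the orthogonal summands $\cH(T_0) \oplus 0$ and $0 \oplus \cH(T_1)$. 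For $\textsc{O-Pair}$, the free-variable hypothesis $\FV(e_j) \cap \bigcup_k \FV(e'_{j,k}) = \varnothing$ forces the shared context $\Delta$ in $\textsc{T-PurePair}$ to be empty, so the pair semantics factors as $\msem{e_j} \otimes \msem{e'_{j,k}}$ with image $\operatorname{im}\msem{e_j} \otimes \operatorname{im}\msem{e'_{j,k}}$; two distinct pairs then differ in the first index $j$ (orthogonality from the outer judgment) or, for fixed $j$, in the second index $k$ (orthogonality from the inner judgment), and the third fact closes each subcase.

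The main obstacle I anticipate is the $\textsc{O-Pair}$ case, specifically justifying that the free-variable disjointness condition is exactly what makes the pair's semantics a genuine tensor product of the two component operators rather than an entangled map over a shared register; once this factorization is established, the index bookkeeping separating the ``distinct $j$'' and ``same $j$, distinct $k$'' subcases is routine, and every other case is a one-line appeal to the three auxiliary facts.
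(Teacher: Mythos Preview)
Your proposal is correct and follows essentially the same approach as the paper: induction on the derivation of the orthogonality judgment, with the base cases vacuous, \textsc{O-IsoApp} handled by isometries preserving orthogonality, \textsc{O-Sum} split into same-side (isometry) and opposite-side (orthogonal summands) subcases, \textsc{O-Pair} handled by using the free-variable disjointness to force $\Delta = \varnothing$ so the semantics tensor-factors, and \textsc{O-Sub} by restriction. Your explicit isolation of the three auxiliary facts is a slight presentational refinement, but the argument is otherwise identical to the paper's.
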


\begin{proof}
We prove this by induction on the rule used to prove the orthogonality judgment.

\textsc{O-Void}, \textsc{O-Unit}, \textsc{O-Var}: The statement is vacuously true in these cases, since the sets contain less than two expressions.

\textsc{O-IsoApp}: In this case, we use \Cref{lem:iso_judgment} to claim that the semantics $\msem{\vdash f : T \rightsquigarrow T'}$ is isometric. Since an isometric operator applied to orthogonal vectors preserves their orthogonality, it must be that $\msem{f e_i}$ and $\msem{f e_j}$ are orthogonal subspaces of $T'$.

\textsc{O-Sum}: If the two expressions are both ``left'' or both ``right'', then the claim follows by induction, since we assume that both subsequences form orthogonal sets and the left and right injections are isometric. If one is ``left'' and the other is ``right,'' the image of $\cH(T_0)$ under \textsc{left} and the image of $\cH(T_1)$ under \textsc{right} are orthogonal subspaces in $\cH(T_0 \oplus T_1) = \cH(T_0) \oplus \cH(T_1)$.

\textsc{O-Pair}: 

The \textsc{O-Pair} rule assumes that for all $j$, we have
\[
\FV(e_j) \cap \bigcup_{k=1}^{n_m} \FV(e_{j,k}') = \varnothing,
\]
that is, $e_j$ and $e_{j,k}'$ do not share any free variables. Thus, in the typing rule \textsc{T-PurePair} for the expression $(e_j, e_{j,k}')$ we must have $\Delta = \varnothing$, and so substituting quantum contexts $\Delta_j$ and $\Delta_{j,k}'$ for $\Delta_0, \Delta_1$, the semantics (omitting empty classical contexts for brevity) is given by
\[
\msem{\Delta_j, \Delta_{j,k}' \vdash (e_j, e_{j,k}') : T_0 \otimes T_1} \ket{\tau_0, \tau_1} =
\msem{\Delta_j, \vdash e_j : T_0} \ket{\tau_0} \otimes
\msem{\Delta_{j,k}', \vdash e_j' : T_1} \ket{\tau_1}.
\]
Consider two expressions $(e_i, e_{i,k}')$ and $(e_j, e_{j,l}')$. We then have that the images are
\[
\msem{(e_i, e_{i,k}')}(\cH(\Delta_i, \Delta_{i,k}')) = \msem{e_i}(\cH(\Delta_i)) \otimes \msem{e_{i,k}'}(\cH(\Delta_{i,k}'))
\]
\[
\msem{(e_j, e_{j,l}')}(\cH(\Delta_j, \Delta_{j,l}')) = \msem{e_j}(\cH(\Delta_j)) \otimes \msem{e_{j,l}'}(\cH(\Delta_{j,l}')).
\]
In the case where $i \neq j$, we have that $\msem{e_i}(\cH(\Delta_i))$ and $\msem{e_j}(\cH(\Delta_j))$ are orthogonal subspaces by the assumption that $\ortho{T_0}{e_1, \ldots, e_m}$. If $i = j$, then $\msem{e_{i,k}'}(\cH(\Delta_{i,k}'))$ and $\msem{e_{j,l}'}(\cH(\Delta_{j,l}'))$ are orthogonal subspaces by the assumption that $\ortho{T_1}{e_{j,1}', \ldots, e_{j,n_j}'}$. So, in all cases, the claim holds.

\textsc{O-Sub}: The claim holds trivially in this case, since any subsequence of a sequence of orthogonal expressions is orthogonal. This completes the proof.

\end{proof}

\begin{lemma}\label{lem:spanning_judgment}
Suppose that $\spanning{T}{e_1, \dots, e_n}$ holds. Then, $\ortho{T}{e_1, \dots, e_n}$ holds and
\[
\bigoplus_{j=1}^n \msem{e_j}(\cH(\Delta_j)) \cong \cH(T).
\]
\end{lemma}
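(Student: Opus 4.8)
The plan is to proceed by induction on the derivation of $\spanning{T}{e_1, \dots, e_n}$, establishing both conclusions simultaneously. For the orthogonality half, the key observation is that each spanning rule has a matching orthogonality rule whose premises follow (via the induction hypothesis) from the spanning premises: \textsc{S-Void}/\textsc{S-Unit}/\textsc{S-Var} correspond to \textsc{O-Void}/\textsc{O-Unit}/\textsc{O-Var}, \textsc{S-Sum} to \textsc{O-Sum}, \textsc{S-Pair} to \textsc{O-Pair} (the free-variable side conditions are literally identical), and \textsc{S-UnApp} to \textsc{O-IsoApp} since every unitary program is in particular isometric (the U-Prog rule derives $\un$ precisely from $\iso$ together with an equal-dimension condition). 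Thus the orthogonality conclusion is obtained essentially by reusing the reasoning of Lemma~\ref{lem:ortho_judgment}, and the real content of the lemma is the spanning claim: that the orthogonal subspaces $\msem{e_j}(\cH(\Delta_j))$ together fill all of $\cH(T)$, so that their internal orthogonal direct sum equals (hence is naturally isomorphic to) $\cH(T)$.

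For the base cases this is immediate: the empty sum is $\cH(\Void) = \{0\}$; the single expression $\unit$ has image all of $\cH(\Unit) = \CC$; and a bare variable $x$ typed in context $x : T$ has the identity as its semantics, whose image is all of $\cH(T)$. For \textsc{S-UnApp}, I would use that $\msem{f\, e_j} = \msem{f}\,\msem{e_j}$, so the images are $U(\msem{e_j}(\cH(\Delta_j)))$ for $U = \msem{f}$; since by the induction hypothesis the $\msem{e_j}(\cH(\Delta_j))$ form an orthogonal decomposition of $\cH(T)$ and $U$ is a surjective isometry onto $\cH(T')$ (here the equal-dimension clause of $\un$ is exactly what forces surjectivity), their images form an orthogonal decomposition of $\cH(T')$. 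For \textsc{S-Sum}, the left and right injections send $\cH(T_0)$ and $\cH(T_1)$ to the orthogonal complementary summands of $\cH(T_0 \oplus T_1) = \cH(T_0) \oplus \cH(T_1)$; as these injections are isometries and, by induction, the left-expression images fill $\cH(T_0)$ and the right-expression images fill $\cH(T_1)$, the combined family fills the whole space.

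The main obstacle is the \textsc{S-Pair} case, which requires combining a direct-sum decomposition on \emph{each} tensor factor while tracking a $j$-dependence. As in Lemma~\ref{lem:ortho_judgment}, the free-variable disjointness condition forces the shared quantum context in \textsc{T-PurePair} to be empty, so the pair's semantics factors as a tensor product, $\msem{(e_j, e_{j,k}')}(\cH(\Delta_j, \Delta_{j,k}')) = \msem{e_j}(\cH(\Delta_j)) \otimes \msem{e_{j,k}'}(\cH(\Delta_{j,k}'))$. The spanning computation then proceeds in two stages via distributivity of $\otimes$ over $\oplus$. Fixing $j$ and using $\bigoplus_k \msem{e_{j,k}'}(\cH(\Delta_{j,k}')) = \cH(T_1)$ (induction hypothesis on the $j$th inner judgment) gives $\bigoplus_k \msem{e_j}(\cH(\Delta_j)) \otimes \msem{e_{j,k}'}(\cH(\Delta_{j,k}')) = \msem{e_j}(\cH(\Delta_j)) \otimes \cH(T_1)$; then summing over $j$ and using $\bigoplus_j \msem{e_j}(\cH(\Delta_j)) = \cH(T_0)$ (induction hypothesis on the outer judgment) yields $\cH(T_0) \otimes \cH(T_1) = \cH(T_0 \otimes T_1)$.

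The care needed in this last case is precisely that the inner spanning family depends on the outer index $j$, so the two applications of distributivity cannot be collapsed into a single step, and one must keep the orthogonality of all the summands $\msem{e_j}(\cH(\Delta_j)) \otimes \msem{e_{j,k}'}(\cH(\Delta_{j,k}'))$ straight: orthogonal across distinct $j$ by the outer judgment (tensoring with $\cH(T_1)$ preserves orthogonality of the first factors), and orthogonal across distinct $k$ within a fixed $j$ by the inner judgment. This bookkeeping, rather than any single deep fact, is the heart of the argument, and once it is organized the isomorphism $\bigoplus_{j=1}^n \msem{e_j}(\cH(\Delta_j)) \cong \cH(T)$ follows in every case.
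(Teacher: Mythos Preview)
Your proposal is correct and follows essentially the same approach as the paper: induction on the derivation of the spanning judgment, with the orthogonality half obtained by observing that each spanning rule has a directly corresponding (weaker) orthogonality rule, and the spanning half verified case by case, with \textsc{S-Pair} handled via the free-variable disjointness condition to factor the pair semantics as a tensor product and then two applications of distributivity. The paper additionally remarks that the mutual dependency between the spanning and isometry judgments (via \textsc{S-UnApp} and the unitary rule) is resolved by treating the argument as a simultaneous induction, which you may want to note explicitly.
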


\begin{proof}
The first claim is clear, since each typing rule for the spanning judgment is stronger than the corresponding rule for the orthogonality judgment. Now, we prove the spanning property by induction:

\textsc{S-Void}: The claim holds trivially since $\cH(\Void) = \{0\}$, which we consider to be the direct sum of an empty set of spaces.

\textsc{S-Unit}: The image of $\msem{\unit}$ is $\Span\{\ket{\unit}\} = \CC = \cH(\Unit)$.

\textsc{S-Var}: Since $\msem{x : T}$ maps every $\ket{x \mapsto v} \in \cH(\Delta)$ for $v \in \VV(T)$ to $\ket{v} \in \cH(T)$, it is clear by definition of the Hilbert spaces that it implements an isomorphism $\cH(\Delta) \rarr \cH(T)$, and thus its image is all of $\cH(T)$.

\textsc{S-UnApp}: By the unitary rule, we must have that $T$ and $T'$ have the same dimension and $\msem{f}$ is an isometry (and thus a unitary). Thus, it sends orthogonal vectors in $\cH(T)$ to orthogonal vectors in $\cH(T')$. Since the dimensions are equal and the images of $\msem{e_1}, \dots, \msem{e_n}$ contain a basis of $\cH(T)$, we must have that the images of $\msem{f e_1}, \dots, \msem{f e_n}$ contain a basis of $\cH(T')$ and so their direct sum is all of $\cH(T')$. Note that this depends on the correctness of the isometry judgment (\ref{app:isometry_judgment}), which in turn depends on the spanning judgment, but this is not an issue since we are really performing simultaneous induction.

\textsc{S-Sum}: By assumption, we have $\spanning{T_0}{e_1, \dots, e_n}$ and $\spanning{T_1}{e_1', \dots, e_{n'}'}$, so the direct sum of the images contains all of
\[
\braces{\ket{\lef{T_0}{T_1} \: v_0} \mid v_0 \in \VV(T_0)\} \cup \{ \ket{\rit{T_0}{T_1} \: v_1} \mid v_1 \in \VV(T_1)},
\]
which forms a basis for $\cH(T_0 \oplus T_1)$.

\textsc{S-Pair}: By assumption, we have $\spanning{T_0}{e_1, \dots, e_m}$, which means that
\[
\bigoplus_{j=1}^m \msem{e_j}(\cH(\Delta_j)) = \cH(T_0).
\]
And, since for each $j$, we have $\spanning{T_0}{e_{j,1}', \dots, e_{j,n_j}'}$, we also have
\[
\bigoplus_{k=1}^{n_j} \msem{e_{j,k}'}(\cH(\Delta_{j,k}')) = \cH(T_1),
\]
so it must be the case that
\begin{align*}
& \bigoplus_{j=1}^m \bigoplus_{k=1}^{n_j} \msem{(e_j, e_{j,k}')}(\cH(\Delta_j, \Delta_{j,k}')) \\
={}& \bigoplus_{j=1}^m \bigoplus_{k=1}^{n_j} \Span\{\msem{(e_j, e_{j,k}')}\ket{\tau_0, \tau_1} \mid \tau_0 \in \VV(\Delta_j), \tau_1 \in \VV(\Delta_{j,k}')\} = \\
={}& \bigoplus_{j=1}^m \bigoplus_{k=1}^{n_j} \Span\{\msem{e_j}\ket{\tau_0} \otimes \msem{e_{j,k}'}\ket{\tau_1} \mid \tau_0 \in \VV(\Delta_j), \tau_1 \in \VV(\Delta_{j,k}')\} = \\
={}& \bigoplus_{j=1}^m \msem{e_j}(\cH(\Delta_j)) \otimes \bigoplus_{k=1}^{n_j} \msem{e_{j,k}'}(\cH(\Delta_{j,k}')) = \cH(T_0) \otimes \cH(T_1) = \cH(T_0 \otimes T_1).
\end{align*}
Note that in the first equality above we used the assumption that the sets of free variables are disjoint to decompose the context spaces as tensor products. This completes the proof.
\end{proof}

\section{The Isometry Judgment}\label{app:isometry_judgment}

The isometry judgment in \Cref{fig:iso} allows the Qunity typechecker to determine if the given expression or program is guaranteed to have isometric (norm-preserving) semantics, and thus never raise an error. This information can be used by the compiler to perform certain optimizations, such as deleting certain flag qubits that are known to always be in the $\ket{0}$ state due to the fact that the corresponding Qunity expression is isometric.

\begin{figure}[ht]
\[
	\inference{}{\iso(\unit)}[\textsc{I-Unit}]
	\quad
	\inference{}{\iso(x)}[\textsc{I-Var}]
	\quad
	\inference{\iso(e_0) \qquad \iso(e_1)}{\iso(\pair{e_0}{e_1})}[\textsc{I-Pair}]
\]
\vspace{2mm}
\[
\inference{\classical(e) \qquad \iso(e) \qquad \spanning{T}{e_1, \ldots, e_n} \qquad \iso(e_j') \; \forall j}{\iso\left(\cntrl{e}{T}{e_1 &\mapsto e_1' \\ &\cdots \\ e_n &\mapsto e_n'}{T'}\right)}[\textsc{I-Ctrl}]
\]
\vspace{2mm}
\[
\inference{\classical(e) \qquad \iso(e) \qquad \spanning{T}{e_1, \ldots, e_n} \qquad \iso(e_j') \; \forall j}{\iso\left(\match{e}{T}{e_1 &\mapsto e_1' \\ &\cdots \\ e_n &\mapsto e_n'}{T'}\right)}[\textsc{I-Match}]
\]
\vspace{2mm}
\[
	\inference{\iso(e_0)}{\iso(\trycatch{e_0}{e_1})}[\textsc{I-Try}]
	\quad
	\inference{\iso(e_0)}{\iso(\trycatch{e_0}{e_1})}[\textsc{I-Catch}]
	\quad
	\inference{\iso(f) \qquad \iso(e)}{\iso(f \: e)}[\textsc{I-App}]
\]
\vspace{2mm}
\[
	\inference{}{\iso(\uthree{r_\theta}{r_\phi}{r_\lambda})}[\textsc{I-Gate}]
	\quad
	\inference{}{\iso(\lef{T_0}{T_1})}[\textsc{I-Left}]
	\quad
	\inference{}{\iso(\rit{T_0}{T_1})}[\textsc{I-Right}]
\]
\vspace{2mm}
\[
	\inference{\spanning{T}{e} \qquad \iso(e')}{\iso(\lambda e \xmapsto{{\color{gray}T}} e')}[\textsc{I-Abs}]
	\quad
	\inference{\iso(e)}{\iso\left(\rphase{T}{e}{r}{r'}\right)}[\textsc{I-Rphase}]
\]
\vspace{2mm}
\[
\inference{\spanning{T}{e_1, \ldots, e_n} \qquad \iso(e_1') \quad \cdots \quad \iso(e_n')}{\iso\left(\pmatch{T}{e_1 &\mapsto e_1' \\ &\cdots \\ e_n &\mapsto e_n'}{T'}\right)}[\textsc{I-Pmatch}]
\]
\caption{Isometry inference rules.}
\Description{}
\label{fig:iso}
\end{figure}

\newpage

\begin{lemma}\label{lem:iso_judgment}\leavevmode
\begin{itemize}
\item If $\Gamma \partition \Delta \vdash e : T$ holds and $\iso(e)$, then for all $\ket{\psi} \in \cH(\Delta)$, we have
\[
\norm{\msem{\Gamma \partition \Delta \vdash e : T} \ket{\psi}} = \norm{\ket{\psi}}.
\]
\item If $\Gamma \partition \Delta \Vdash e : T$ holds and $\iso(e)$, then for all $\rho \in \cL(\cH(\Delta))$, we have
\[
\tr{\msem{\Gamma \partition \Delta \Vdash e : T} (\rho)} = \tr(\rho).
\]
\item If $\vdash e : T \rightsquigarrow T'$ holds and $\iso(e)$, then for all $\ket{\psi} \in \cH(\Delta)$, we have
\[
\norm{\msem{\vdash e : T \rightsquigarrow T'} \ket{\psi}} = \norm{\ket{\psi}}.
\]
\item If $\vdash e : T \Rrightarrow T'$ holds and $\iso(e)$, then for all $\rho \in \cL(\cH(T))$, we have
\[
\tr{\msem{\vdash e : T \Rrightarrow T'} (\ket{\tau}\bra{\tau})} = \tr(\rho).
\]
\end{itemize}
\end{lemma}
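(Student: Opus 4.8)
The plan is to prove all four statements simultaneously by structural induction on the derivation of $\iso(\cdot)$, since the rules are mutually entangled (for example \textsc{I-App} couples a program with an expression, and \textsc{I-Ctrl}/\textsc{I-Match} refer back to their branch expressions). For a given syntactic form I verify the claim for whichever of the four typing judgments applies. The elementary fact that an isometry $V$ induces a trace-preserving superoperator $\rho \mapsto V\rho V\adj$ (because $\tr(V\rho V\adj) = \tr(V\adj V\rho) = \tr\rho$ whenever $V\adj V = \mathbb{I}$) lets me derive the mixed-expression and channel statements from the pure ones for every purely-typed construct; the mixed statements additionally descend through the rules \textsc{T-Mix} and \textsc{T-Discard}, both of which preserve trace (the latter because the partial trace is trace-preserving). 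Thus the genuine work concentrates in the pure-expression cases and the intrinsically-mixed constructs.

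First I dispatch the base cases. For \textsc{I-Unit} and \textsc{I-Var} the semantics is the canonical relabeling isometry $\cH(\Delta) \to \cH(T)$; for \textsc{I-Gate} a direct computation shows the $\uthree{r_\theta}{r_\phi}{r_\lambda}$ matrix of \Cref{fig:sem-pure-prog} is unitary; and for \textsc{I-Left}, \textsc{I-Right} the injections into a direct summand are isometric embeddings. The straightforward inductive cases follow: \textsc{I-Pair} factors as the classical-basis sharing of the common context $\Delta$ (isometric on basis states, since the shared valuation is recoverable) followed by the tensor product $\msem{e_0} \otimes \msem{e_1}$ of two isometries; \textsc{I-App} is composition of isometries (resp. trace-preserving maps); \textsc{I-Rphase} uses that $\iso(e)$ makes $\msem{e}\msem{e}\adj$ an orthogonal projection $P$, so that $e^{ir}P + e^{ir'}(\mathbb{I}-P)$ is unitary; and \textsc{I-Abs} uses that a single spanning pattern has, by \Cref{lem:spanning_judgment}, image all of $\cH(T)$ while being isometric, hence unitary, so $\msem{e}\adj$ is norm-preserving and composes with the isometric body $\msem{e'}$. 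The intrinsically-mixed \textsc{I-Try}/\textsc{I-Catch} cases are a short computation on the try/catch trace formula of \Cref{fig:sem-mixed-expr}: once the relevant branch is trace-preserving, the complementary ``error weight'' term either vanishes or is exactly compensated, leaving $\tr(\rho_0 \otimes \rho_1) = \tr(\rho_0)\tr(\rho_1)$.

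A small auxiliary observation I will record (as part of the same simultaneous induction, the isometry of the programs in \textsc{O-IsoApp}/\textsc{S-UnApp} being supplied by the program statements of this very lemma on smaller derivations) is that every expression occurring in an orthogonality, hence spanning, family has isometric semantics. Combined with \Cref{lem:ortho_judgment} and \Cref{lem:spanning_judgment}, this means that when $\spanning{T}{e_1, \dots, e_n}$ holds the maps $\msem{e_j}$ are isometries onto mutually orthogonal subspaces whose direct sum is all of $\cH(T)$. I use this for \textsc{I-Pmatch}: decomposing $\ket{v} = \sum_j \ket{v_j}$ along these subspaces, each $\msem{e_j}\adj\ket{v} = \msem{e_j}\adj\ket{v_j}$ has norm $\norm{\ket{v_j}}$, and the outputs $\msem{e_j'}\msem{e_j}\adj\ket{v}$ are pairwise orthogonal because \textsc{T-Pmatch} forces $\ortho{T'}{e_1', \dots, e_n'}$ and each $\msem{e_j'}$ is isometric; the norms therefore recombine to $\sum_j \norm{\ket{v_j}}^2 = \norm{\ket{v}}^2$.

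The main obstacle is \textsc{I-Ctrl} and \textsc{I-Match}, whose semantics (\Cref{fig:sem-pure-expr} and \Cref{fig:sem-mixed-expr}) are defined through the intricate double sum over classical values $v \in \VV(T)$ and over branches with their pattern valuations $\sigma_j$. My plan here is to exploit that the scrutinee $e$ is \emph{classical} and isometric, so measuring it in the classical basis yields a genuine probability distribution over $\VV(T)$ summing to the incoming trace (by $\iso(e)$); the premise $\spanning{T}{e_1, \dots, e_n}$, read through \Cref{lem:spanning_judgment}, guarantees that every such value $v$ is matched by exactly one pattern $e_j$ under a unique valuation $\sigma_j$, so no weight is dropped; and $\iso(e_j')$ guarantees that each selected branch preserves the norm (resp. trace) of the portion routed to it. Assembling these three ingredients so that the weights recombine to exactly $\norm{\ket{\psi}}^2$ (resp. $\tr\rho$), while correctly accounting for the shared quantum context $\Delta$ threaded through both the scrutinee and the branches, is the delicate bookkeeping step; \textsc{I-Match} runs in parallel with the superoperator formula replacing the operator one.
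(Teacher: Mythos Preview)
Your proposal is correct and follows essentially the same approach as the paper: a rule-by-rule simultaneous induction on the $\iso$ derivation, handling the base cases (\textsc{I-Unit}, \textsc{I-Var}, \textsc{I-Gate}, \textsc{I-Left}, \textsc{I-Right}) directly, the compositional cases (\textsc{I-Pair}, \textsc{I-App}, \textsc{I-Abs}, \textsc{I-Rphase}, \textsc{I-Pmatch}, \textsc{I-Try}/\textsc{I-Catch}) by the same structural arguments you describe, and \textsc{I-Ctrl}/\textsc{I-Match} via the classical-isometric-scrutinee plus spanning-patterns plus isometric-branches decomposition. Your explicit auxiliary observation that every expression in an orthogonality family has isometric semantics is something the paper uses only implicitly (in the final step of its \textsc{I-Pmatch} case, where it needs $\sum_j \msem{e_j}\msem{e_j}^\dagger = \mathbb{I}$), so making it explicit is a mild improvement; and you are right to flag the ``recombination'' step in \textsc{I-Ctrl} as the delicate one, since the paper's own write-up glosses over the fact that different basis components $\ket{\tau,\tau'}$ may route to different branches $j(\tau)$, and one must argue (e.g.\ via the known norm-non-increasing property of $\msem{\texttt{ctrl}}$ together with norm preservation on each basis state) that cross-terms do not spoil the result.
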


\begin{proof}
\textsc{I-Unit}:
\[
\norm{\msem{\sigma : \Gamma \partition \Delta \vdash \unit : \Unit} \ket{\psi}} = \abs{\braket{\varnothing}{\psi}} \norm{\ket{\unit}} = \norm{\ket{\psi}}.
\]
\textsc{I-Var}:
\begin{align*}
& \norm{\msem{\sigma : \Gamma \partition \Delta \vdash x : T} \ket{\psi}}^2 = \\
={}& \norm{\sum_{v \in \VV(T)} \braket{v}{\psi} \msem{\sigma : \Gamma \partition \Delta \vdash x : T} \ket{x \mapsto v}}^2 =
\norm{\sum_{v \in \VV(T)} \braket{v}{\psi} \ket{v}}^2 =
\sum_{v \in \VV(T)} \abs{\braket{v}{\psi}}^2 = \norm{\ket{\psi}}^2.
\end{align*}
\textsc{I-Pair}:
\begin{align*}
& \norm{\msem{\sigma : \Gamma \partition \Delta, \Delta_0, \Delta_1 \vdash (e_0, e_1) : T_0 \otimes T_1} \ket{\psi}}^2 = \\
={}& \bra{\psi} \msem{\sigma : \Gamma \partition \Delta, \Delta_0, \Delta_1 \vdash (e_0, e_1) : T_0 \otimes T_1}\adj \msem{\sigma : \Gamma \partition \Delta, \Delta_0, \Delta_1 \vdash (e_0, e_1) : T_0 \otimes T_1} \ket{\psi} = \\
={}& \bra{\psi} \msem{\sigma : \Gamma \partition \Delta, \Delta_0, \Delta_1 \vdash (e_0, e_1) : T_0 \otimes T_1}\adj \sum_{(\tau, \tau_0, \tau_1) \in \VV(\Delta, \Delta_0, \Delta_1)} \braket{\tau, \tau_0, \tau_1}{\psi} (\msem{\sigma : \Gamma \partition \Delta, \Delta_0 \vdash e_0 : T_0}\ket{\tau, \tau_0} \otimes \\
    & \hspace{22.5em} \otimes \msem{\sigma : \Gamma \partition \Delta, \Delta_1 \vdash e_1 : T_1}\ket{\tau, \tau_1}) \\
={}& \bra{\psi} \sum_{(\tau', \tau_0', \tau_1') \in \VV(\Delta, \Delta_0, \Delta_1)} \sum_{(\tau, \tau_0, \tau_1) \in \VV(\Delta, \Delta_0, \Delta_1)} \braket{\tau, \tau_0, \tau_1}{\psi} \ket{\tau', \tau_0', \tau_1'} \cdot \\
& \hspace{15em} \cdot \bra{\tau', \tau_0'} \msem{\sigma : \Gamma \partition \Delta, \Delta_0 \vdash e_0 : T_0}\adj \msem{\sigma : \Gamma \partition \Delta, \Delta_0 \vdash e_0 : T_0}\ket{\tau, \tau_0} \cdot \\
& \hspace{15em} \cdot \bra{\tau', \tau_1'} \msem{\sigma : \Gamma \partition \Delta, \Delta_1 \vdash e_1 : T_1}\adj \msem{\sigma : \Gamma \partition \Delta, \Delta_1 \vdash e_1 : T_1}\ket{\tau, \tau_1} = \\
={}& \bra{\psi} \sum_{(\tau', \tau_0', \tau_1') \in \VV(\Delta, \Delta_0, \Delta_1)} \sum_{(\tau, \tau_0, \tau_1) \in \VV(\Delta, \Delta_0, \Delta_1)} \braket{\tau, \tau_0, \tau_1}{\psi} \braket{\tau', \tau_0'}{\tau, \tau_0} \braket{\tau', \tau_1'}{\tau, \tau_1} \ket{\tau', \tau_0', \tau_1'} = \\
={}& \bra{\psi} \sum_{(\tau, \tau_0, \tau_1) \in \VV(\Delta, \Delta_0, \Delta_1)} \braket{\tau, \tau_0, \tau_1}{\psi} = \norm{\ket{\psi}}^2.
\end{align*}

\textsc{I-Ctrl}:
\begin{align*}
& \norm{\msem{\sigma : \Gamma \partition \Delta, \Delta' \vdash \cntrl{e\hspace{-2mm}}{T}{e_1 &\mapsto e_1' \\ &\cdots \\ e_n &\mapsto e_n'}{T'} \hspace{-2mm}: T'} \ket{\psi}} = \\
={}& \left\| \sum_{(\tau, \tau') \in \VV(\Delta, \Delta')} \braket{\tau, \tau'}{\psi} \sum_{v \in \VV(T)} \bra{v} \msem{\Gamma \partition \Delta \Vdash e : T}\left( \op{\sigma, \tau}{\sigma, \tau} \right) \ket{v} \right. \\
         & \left. \cdot \sum_{j=1}^n \sum_{\sigma_j \in \VV(\Gamma_j)} \bra{\sigma_j} \msem{\varnothing : \varnothing \partition \Gamma_j \vdash e_j : T}^\dagger \ket{v}
         \cdot \msem{\sigma, \sigma_j : \Gamma, \Gamma_j \partition \Delta,\Delta' \vdash e_j' : T'} \ket{\tau, \tau'} \right\|
\end{align*}

Since $e$ is classical and $\tr\parens{\msem{\Gamma \partition \Delta \Vdash e : T}\left( \op{\sigma, \tau}{\sigma, \tau} \right)} = 1$ by the isometry assumption for the scrutinee, we must have that
\[
\msem{\Gamma \partition \Delta \Vdash e : T}\left( \op{\sigma, \tau}{\sigma, \tau} \right) = \ket{v}\bra{v}
\]
for some particular $v \in \VV(T)$. Then, the entire expression becomes
\begin{align*}
\norm{\sum_{(\tau, \tau') \in \VV(\Delta, \Delta')} \braket{\tau, \tau'}{\psi} \sum_{j=1}^n \sum_{\sigma_j \in \VV(\Gamma_j)} \bra{\sigma_j} \msem{\varnothing : \varnothing \partition \Gamma_j \vdash e_j : T}^\dagger \ket{v}
\msem{\sigma, \sigma_j : \Gamma, \Gamma_j \partition \Delta,\Delta' \vdash e_j' : T'} \ket{\tau, \tau'}}.
\end{align*}
Now, since the $e_j$ satisfy the spanning judgment, the direct sum of their images in $\cH(T)$ must contain all of $\ket{v'} \in \VV(T)$. Furthermore, since we assume the $e_j$ are all classical (from \textsc{T-Ctrl}), each must map each $\sigma_j$ to some $v' \in \VV(T)$, and there must be a one-to-one correspondence between $v'$ and $(j, \sigma_j)$. Thus, exactly one value of $j$ makes the value of summand nonzero. So, the expression becomes
\begin{align*}
& \norm{\sum_{(\tau, \tau') \in \VV(\Delta, \Delta')} \braket{\tau, \tau'}{\psi} \msem{\sigma, \sigma_j : \Gamma, \Gamma_j \partition \Delta,\Delta' \vdash e_j' : T'} \ket{\tau, \tau'}} = \\
={}& \norm{\msem{\sigma, \sigma_j : \Gamma, \Gamma_j \partition \Delta,\Delta' \vdash e_j' : T'} \ket{\psi}} = \norm{\ket{\psi}},
\end{align*}
which follows from the isometry assumption on the RHS.

\textsc{I-Match}:
\begin{align*}
& \tr\parens{\msem{\sigma : \Gamma \partition \Delta, \Delta_0, \Delta_1 \vdash \match{e\hspace{-2mm}}{T}{e_1 &\mapsto e_1' \\ &\cdots \\ e_n &\mapsto e_n'}{T'} \hspace{-2mm}: T'} (\rho)} = \\
={}& \sum_{(\tau, \tau_0, \tau_1), (\tau', \tau_0', \tau_1') \in \VV(\Delta, \Delta_0, \Delta_1)} \bra{\tau, \tau_0, \tau_1} \rho \ket{\tau', \tau_0', \tau_1'} \sum_{v \in \VV(T)} \bra{v} \parens{\msem{\sigma : \Gamma \partition \Delta, \Delta_0 \Vdash e : T}\parens{\ket{\tau, \tau_0}\bra{\tau', \tau_0'}}} \ket{v} \cdot \\
& \cdot \sum_{j=1}^n \sum_{\sigma_j \in \VV(\Gamma_j)}
\bra{\sigma_j} \msem{\varnothing : \varnothing \partition \Gamma_j \vdash e_j : T}\adj \ket{v} \cdot
\tr\parens{\msem{\sigma, \sigma_j : \Gamma, \Gamma_j \partition \Delta,\Delta_1 \Vdash e_j' : T'}\parens{\ket{\tau, \tau_1}\bra{\tau', \tau_1'}}}
\end{align*}
Then, by the isometry assumption for the scrutinee, we have that $\tr\parens{\msem{\Gamma \partition \Delta, \Delta_0 \Vdash e : T}\left( \op{\tau, \tau_0}{\tau', \tau_0'} \right)} = \tr{\op{\tau, \tau_0}{\tau', \tau_0'}} = \delta_{\tau, \tau'} \delta_{\tau_0, \tau_0'}$. So then, by the same argument as for \textsc{T-Ctrl}, we can eliminate the inner sums and obtaining
\begin{align*}
& \sum_{(\tau, \tau_0, \tau_1) \in \VV(\Delta, \Delta_0, \Delta_1), \tau_1' \in \VV(\Delta_1)} \bra{\tau, \tau_0, \tau_1} \rho \ket{\tau, \tau_0, \tau_1'} \tr\parens{\msem{\sigma, \sigma_j : \Gamma, \Gamma_j \partition \Delta,\Delta_1 \Vdash e_j' : T'}\parens{\ket{\tau, \tau_1}\bra{\tau, \tau_1'}}} = \\
={}& \sum_{(\tau, \tau_0, \tau_1) \in \VV(\Delta, \Delta_0, \Delta_1), \tau_1' \in \VV(\Delta_1)} \bra{\tau, \tau_0, \tau_1} \rho \ket{\tau, \tau_0, \tau_1'} \tr\parens{\ket{\tau, \tau_1}\bra{\tau, \tau_1'}} = \\
={}& \sum_{(\tau, \tau_0, \tau_1) \in \VV(\Delta, \Delta_0, \Delta_1), \tau_1' \in \VV(\Delta_1)} \bra{\tau, \tau_0, \tau_1} \rho \ket{\tau, \tau_0, \tau_1'} \tr\parens{\ket{\tau, \tau_1}\bra{\tau, \tau_1'}} = \\
={}& \sum_{(\tau, \tau_0, \tau_1) \in \VV(\Delta, \Delta_0, \Delta_1)} \bra{\tau, \tau_0, \tau_1} \rho \ket{\tau, \tau_0, \tau_1} = \tr(\rho).
\end{align*}

\textsc{I-Try}:
Assuming $\iso(e_0)$,
\begin{align*}
& \tr\parens{\msem{\sigma : \Gamma \partition \Delta_0,\Delta_1 \Vdash \trycatch{e_0}{e_1} : T}(\rho)} = \\
={}& \sum_{(\tau_0, \tau_1), (\tau_0', \tau_1') \in \VV(\Delta_0, \Delta_1)} \bra{\tau_0, \tau_1} \rho \ket{\tau_0', \tau_1'} \left[ \delta_{\tau_1, \tau_1'} \tr\parens{\msem{\sigma : \Gamma \partition \Delta_0 \Vdash e_0 : T}\left( \op{\tau_0}{\tau_0'} \right)} + \right. \\
& \left. + \delta_{\tau_0, \tau_0'} (1 - \tr(\msem{\sigma : \Gamma \partition \Delta_0 \Vdash e_0 : T}\left( \op{\tau_0}{\tau_0'} \right))) \cdot \tr\parens{\msem{\sigma : \Gamma \partition \Delta_1 \Vdash e_1 : T}\left( \op{\tau_1}{\tau_1'} \right)} \right] = \\
={}& \sum_{\tau_0 \in \VV(\Delta_0), \tau_1 \in \VV(\Delta_1)} \bra{\tau_0, \tau_1} \rho \ket{\tau_0, \tau_1} = \tr(\rho).
\end{align*}

\textsc{I-Catch}:
Assuming $\iso(e_1)$,
\begin{align*}
& \tr\parens{\msem{\sigma : \Gamma \partition \Delta_0,\Delta_1 \Vdash \trycatch{e_0}{e_1} : T}(\rho)} = \\
={}& \sum_{(\tau_0, \tau_1), (\tau_0', \tau_1') \in \VV(\Delta_0, \Delta_1)} \bra{\tau_0, \tau_1} \rho \ket{\tau_0', \tau_1'} \left[ \delta_{\tau_1, \tau_1'} \tr\parens{\msem{\sigma : \Gamma \partition \Delta_0 \Vdash e_0 : T}\left( \op{\tau_0}{\tau_0'} \right)} + \right. \\
& \left. + \delta_{\tau_0, \tau_0'} (1 - \tr(\msem{\sigma : \Gamma \partition \Delta_0 \Vdash e_0 : T}\left( \op{\tau_0}{\tau_0'} \right))) \cdot \tr\parens{\msem{\sigma : \Gamma \partition \Delta_1 \Vdash e_1 : T}\left( \op{\tau_1}{\tau_1'} \right)} \right] = \\
={}& \sum_{(\tau_0, \tau_1), (\tau_0', \tau_1') \in \VV(\Delta_0, \Delta_1)} \bra{\tau_0, \tau_1} \rho \ket{\tau_0', \tau_1'} \left[ \delta_{\tau_1, \tau_1'} \tr\parens{\msem{\sigma : \Gamma \partition \Delta_0 \Vdash e_0 : T}\left( \op{\tau_0}{\tau_0'} \right)} + \right. \\
& \left. + \delta_{\tau_0, \tau_0'} (1 - \tr(\msem{\sigma : \Gamma \partition \Delta_0 \Vdash e_0 : T}\left( \op{\tau_0}{\tau_0} \right))) \delta_{\tau_1, \tau_1'} \right] = \\
={}& \sum_{\tau_0 \in \VV(\Delta_0), \tau_1 \in \VV(\Delta_1)} \bra{\tau_0, \tau_1} \rho \ket{\tau_0, \tau_1} \left[ \tr\parens{\msem{\sigma : \Gamma \partition \Delta_0 \Vdash e_0 : T}\left( \op{\tau_0}{\tau_0} \right)} + \right. \\
& \left. + (1 - \tr(\msem{\sigma : \Gamma \partition \Delta_0 \Vdash e_0 : T}\left( \op{\tau_0}{\tau_0} \right))) \right] = \\
={}& \sum_{\tau_0 \in \VV(\Delta_0), \tau_1 \in \VV(\Delta_1)} \bra{\tau_0, \tau_1} \rho \ket{\tau_0, \tau_1} = \tr(\rho).
\end{align*}
Note that in the above, we used the trace non-increasing property to say that
\[
\tr\parens{\msem{\sigma : \Gamma \partition \Delta_0 \Vdash e_0 : T}\left( \op{\tau_0}{\tau_0'} \right)} \leq \tr\parens{\op{\tau_0}{\tau_0'}} = \delta_{\tau_0, \tau_0'}
\]
so it is zero when $\tau_0 \neq \tau_0'$: this can be seen as a consequence of the correctness of the compilation procedure (\Cref{app:typing_judgment_compilation}).

\textsc{I-App}:
Here, we must consider both pure and mixed typing. If $f$ can be typed as a pure program, we have that
\begin{align*}
& \norm{\msem{\sigma : \Gamma \partition \Delta \vdash f e : T'} \ket{\psi}} =
\norm{\msem{\vdash f : T \rightsquigarrow T'} \msem{\sigma : \Gamma \partition \Delta \vdash e : T} \ket{\psi}} = \\
={}& \norm{\msem{\sigma : \Gamma \partition \Delta \vdash e : T} \ket{\psi}} =
\norm{\ket{\psi}}.
\end{align*}

Similarly, for mixed typing, we have that
\begin{align*}
& \tr\parens{\msem{\sigma : \Gamma \partition \Delta \Vdash f e : T'} (\rho)} = \\
={}& \tr\parens{\msem{\vdash f : T \Rrightarrow T'} \parens{\msem{\sigma : \Gamma \partition \Delta \vdash e : T} (\rho)}} = \\
={}& \tr\parens{\msem{\sigma : \Gamma \partition \Delta \vdash e : T} (\rho)} = \tr(\rho).
\end{align*}

\textsc{I-Gate}:
This is clear since single-qubit unitary gates are isometries.

\textsc{I-Left}, \textsc{I-Right}:
The direct sum injections map between $\cH(T_0)$ (resp. $\cH(T_1)$) and the corresponding isomorphic subspace in $\cH(T_0 \oplus T_1)$, and hence they are clearly isometries.

\textsc{I-Abs}:
Here, we again consider both pure and mixed typing for $e'$. If $e'$ can be typed as a pure expression, then
\begin{align*}
& \norm{\msem{\vdash \lambda e \xmapsto{{\color{gray}T}} e' : T \rightsquigarrow T'}\ket{\psi}} = \\
={}& \norm{\msem{\varnothing : \varnothing \partition \Delta \vdash e' : T'} \msem{\varnothing : \varnothing \partition \Delta \vdash e : T}^\dagger \ket{\psi}} = \\
={}& \norm{\msem{\varnothing : \varnothing \partition \Delta \vdash e : T}^\dagger \ket{\psi}}
\end{align*}
Now, consider the circumstances under which a single expression $e$ can satisfy the spanning judgment. It is only possible for \textsc{S-Sum} to have been applied if one of the two types summed is $\Void$. So, in all cases, the dimension of $\cH(T)$ must be equal to the dimension of $\cH(\Delta)$ (for the case of \textsc{S-Pair} this is enforced by the ``no shared free variables'' restriction). Since $e$ is formed from just variables, pairs, and applications of isometric programs, it has isometric, and thus unitary, semantics. This means that the norm of the expression above must be equal to $\norm{\ket{\psi}}$.

Now, for mixed typing:
\begin{align*}
& \tr\parens{\msem{\vdash \lambda e \xmapsto{{\color{gray}T}} e' : T \Rrightarrow T'}(\rho)} = \\
={}& \sum_{v, v' \in \VV(T)} \bra{v}\rho\ket{v'} \tr\parens{\msem{\varnothing : \varnothing \partition \Delta \Vdash e' : T'}\left(\msem{\varnothing : \varnothing \partition \Delta \vdash e : T}^\dagger \op{v}{v'} \msem{\varnothing : \varnothing \partition \Delta \vdash e : T}\right)} = \\
={}& \sum_{v, v' \in \VV(T)} \bra{v}\rho\ket{v'} \tr\parens{\msem{\varnothing : \varnothing \partition \Delta \vdash e : T}^\dagger \op{v}{v'} \msem{\varnothing : \varnothing \partition \Delta \vdash e : T}} = \\
={}& \sum_{v, v' \in \VV(T)} \bra{v}\rho\ket{v'} \tr\parens{\bra{v'}\msem{\varnothing : \varnothing \partition \Delta \vdash e : T} \msem{\varnothing : \varnothing \partition \Delta \vdash e : T}^\dagger \ket{v}}.
\end{align*}

Now, by the same argument as above, $\msem{\varnothing : \varnothing \partition \Delta \vdash e : T}$ must be unitary, so the expression simplifies to $\tr(\rho)$.

\textsc{I-Rphase}:
\begin{align*}
& \msem{\vdash \rphase{T}{e}{r}{r'} : T \rightsquigarrow T} \ket{\psi} = \\
={}& e^{i r} \msem{\varnothing \partition \Delta \vdash e : T} \msem{\varnothing : \varnothing \partition \Delta \vdash e : T}^\dagger \ket{\psi}
+ e^{i r'} \left(\mathbb{I} - \msem{\varnothing : \varnothing \partition \Delta \vdash e : T} \msem{\varnothing : \varnothing \partition \Delta \vdash e : T}^\dagger\right) \ket{\psi}
\end{align*}
Letting $P = \msem{\varnothing : \varnothing \partition \Delta \vdash e : T} \msem{\varnothing : \varnothing \partition \Delta \vdash e : T}^\dagger$, and observe that by the isometry assumption, $P^2 = P$ and $P\adj = P$. So, $P (\II - P) = P - P^2 = 0$ and $(\II - P)^2 = \II - 2P + P^2 = \II - P$. Then,
\begin{align*}
& \norm{\msem{\vdash \rphase{T}{e}{r}{r'} : T \rightsquigarrow T} \ket{\psi}}^2 = \\
={}& \norm{e^{i r} P \ket{\psi} + e^{i r'} (\II - P) \ket{\psi}} = \\
={}& \parens{e^{-ir} \bra{\psi} P + e^{-i r'} \bra{\psi} (\II - P)} \parens{e^{i r} P \ket{\psi} + e^{i r'} (\II - P) \ket{\psi}} = \\
={}& \bra{\psi} P \ket{\psi} + \bra{\psi} (\II - P) \ket{\psi} = \norm{\ket{\psi}}^2.
\end{align*}

\textsc{I-Pmatch}:
\begin{align*}
& \norm{\msem{\vdash \pmatch{T}{e_1 &\mapsto e_1' \\ &\cdots \\ e_n &\mapsto e_n'}{T'} : T \rightsquigarrow T'}\ket{\psi}}^2 = \\
={}& \norm{\sum_{j=1}^n \msem{\varnothing : \varnothing \partition \Delta_j \vdash e_j' : T'}
\msem{\varnothing : \varnothing \partition \Delta_j \vdash e_j : T}\adj \ket{\psi}}^2 = \\
={}& \sum_{j=1}^n \sum_{k=1}^n \bra{\psi} \msem{\varnothing : \varnothing \partition \Delta_j \vdash e_j : T} \msem{\varnothing : \varnothing \partition \Delta_j \vdash e_j' : T'}\adj \msem{\varnothing : \varnothing \partition \Delta_k \vdash e_k' : T'}
\msem{\varnothing : \varnothing \partition \Delta_k \vdash e_k : T}\adj \ket{\psi}
\end{align*}
By the orthogonality assumption for the RHS in \textsc{T-Pmatch}, we must have $\msem{\varnothing : \varnothing \partition \Delta_j \vdash e_j' : T'}\adj \msem{\varnothing : \varnothing \partition \Delta_j \vdash e_k' : T'} = 0$ unless $j = k$, in which case it must be the identity by the isometry assumption for the RHS, so the expression becomes
\begin{align*}
& \sum_{j=1}^n \bra{\psi} \msem{\varnothing : \varnothing \partition \Delta_j \vdash e_j : T}
\msem{\varnothing : \varnothing \partition \Delta_j \vdash e_j : T}\adj \ket{\psi} = \\
={}& \sum_{v in \VV(T)} \abs{\braket{v}{\psi}} \sum_{j=1}^n \bra{v} \msem{\varnothing : \varnothing \partition \Delta_j \vdash e_j : T}
\msem{\varnothing : \varnothing \partition \Delta_j \vdash e_j : T}\adj \ket{v} = \norm{\ket{\psi}}^2,
\end{align*}
which follows by the spanning assumption on the LHS expressions. This completes the analysis of all the cases for the isometry judgment.

\end{proof}

\section{The Erasure Judgment}\label{app:erasure_judgment}

The erasure judgment (\Cref{fig:erasure}) is used for typing Qunity's \lstinline|ctrl| construct, ensuring that it is possible to correctly perform the necessary uncomputation. This requires that all quantum variables in the scrutinee must be present ``in the same way'' in all the RHS expressions. If the purpose of the control is to apply a controlled phase, this is easily satisfied. If it needs to output some additional data, it needs to be paired with the original variables, possibly inside a nested \lstinline|ctrl|. This condition is necessary to ensure that \lstinline|ctrl| does not discard any quantum information and can be typed as a pure expression.

\begin{figure}[ht]
\[
	\inference{\erases{T}(x; e_1, \ldots, e_n)}{\erases{T}(x; e_1, \ldots, e_{j-1}, e_j \triangleright \gphase{T}{r}, e_{j+1}, \ldots, e_n)}[\textsc{E-Gphase}]
\]
\vspace{2mm}
\[
	\inference{\erases{T}(x; e_1, \ldots, e_{j-1}, e_{j,1}, \ldots, e_{j,m}, e_{j+1}, \ldots, e_n)}{\erases{T}\left(x; e_1, \ldots, e_{j-1}, \cntrl{e}{T'}{e_1' &\mapsto e_{j,1} \\ &\cdots \\ e_m' &\mapsto e_{j,m}}{T}, e_{j+1}, \ldots, e_n\right)}[\textsc{E-Ctrl}]
\]
\vspace{2mm}
\[
	\inference{}{\erases{T}(x; x, x, \ldots, x)}[\textsc{E-Var}]
	\quad
	\inference{\erases{T_0}(x; e_{0,1}, \ldots, e_{0,n})}{\erases{T_0 \otimes T_1}(x; \pair{e_{0,1}}{e_{1,1}}, \ldots, \pair{e_{0,n}}{e_{1,n}})}[\textsc{E-Pair0}]
\]
\vspace{2mm}
\[
	\inference{\erases{T_1}(x; e_{1,1}, \ldots, e_{1,n})}{\erases{T_0 \otimes T_1}(x; \pair{e_{0,1}}{e_{1,1}}, \ldots, \pair{e_{0,n}}{e_{1,n}})}[\textsc{E-Pair1}]
\]
\caption{Erasure inference rules.}
\Description{}
\label{fig:erasure}
\end{figure}

\newpage

\section{Direct Sum Circuit Correctness Proof}\label{app:dirsum_proof}

\begin{figure}[ht]
\centering
\begin{subfigure}{0.4\textwidth}
    \centering
    \begin{quantikz}
    \lstick{$1$} & \octrl{1} & \ctrl{3} & \ctrl{1} & \rstick{1} \\
    \lstick{$s_1$} & \gate[4]{U_0} & \gate{U_1} & \gate[5]{P^{(1)}} & \rstick{$s_1'$} \\
    \lstick{$s_0 - s_1$} &&&& \rstick{$s_0' - s_1'$} \\
    \lstick{$p_1$} && \gate{U_1} & \\
    \lstick{$p_0 - p_1$} &&& \\
    \lstick{$k$} &&&& \rstick{$f$}
    \end{quantikz}
\end{subfigure}
\hfil
\begin{subfigure}{0.4\textwidth}
    \centering
    \begin{quantikz}
    \lstick{$s_1$} && \midstick[2]{} & \qwbundle{s_1'} && \\
    \lstick{$s_0 - s_1$} & \permute{2,1} && \qwbundle{f_1} & \permute{4,1,2,3} & \\
    \lstick{$p_1$} &&&&& \\
    \lstick{$p_0 - p_1$} &&&&&  \\
    \lstick{$k$} &&&&&
    \end{quantikz}
\end{subfigure}
\caption{Improved direct sum circuit for Case 1: $s_0 \geq s_1, s_0' \geq s_1', p_0 \geq p_1$. For this circuit, let $k = \max\{0, f_1 - f_0\}$, and let $f = \max\{f_0, f_1\} = f_0 + k$. The circuit on the left implements the direct sum, while the one on the right is the implementation of the component $P^{(1)}$. Note that where $U_1$ is drawn as two separate boxes, these are not two separate gates, but rather a single gate that only acts on the $s_1$ and $p_1$ registers and not on the one in between them. The curly braces in the circuit on the right indicate a repartitioning of a set of registers taken together into registers of different sizes, preserving the order of qubits.}
\Description{}
\label{fig:dirsum_improved_case1}
\end{figure}

\begin{figure}[ht]
\centering
\begin{subfigure}{0.4\textwidth}
    \centering
    \begin{quantikz}
    \lstick{$1$} & \octrl{1} & \ctrl{3} & \ctrl{1} & \rstick{$1$} \\
    \lstick{$s_1$} & \gate[3]{U_0} & \gate{U_1} & \gate[5]{P^{(2)}} & \rstick{$s_1'$} \\
    \lstick{$s_0 - s_1$} &&&& \rstick{$s_0' - s_1'$} \\
    \lstick{$p_0$} && \gate[2]{U_1} & \\
    \lstick{$p_1 - p_0$} &&& \\
    \lstick{$k$} &&&& \rstick{$f$}
    \end{quantikz}
\end{subfigure}
\hfil
\begin{subfigure}{0.4\textwidth}
    \centering
    \begin{quantikz}
    \lstick{$s_1$} && \midstick[3]{} \\
    \lstick{$s_0 - s_1$} & \permute{3,1,2} && \qwbundle{s_1'} && \\
    \lstick{$p_0$} &&& \qwbundle{f_1} & \permute{3,1,2} & \\
    \lstick{$p_1 - p_0$} &&&&& \\
    \lstick{$k$} &&&&&
    \end{quantikz}
\end{subfigure}
\caption{Improved direct sum circuit for Case 2: $s_0 \geq s_1, s_0' \geq s_1', p_0 \leq p_1$. For this circuit, let \\ $k = \max\{0, f_1 - f_0 + p_0 - p_1\}$, and let $f = \max\{f_0 + p_1 - p_0, f_1\}$. The circuit on the left implements the direct sum, while the one on the right is the implementation of the component $P^{(2)}$.}
\Description{}
\label{fig:dirsum_improved_case2}
\end{figure}

\begin{figure}[ht]
\centering
\begin{subfigure}{0.5\textwidth}
    \centering
    \begin{adjustbox}{width=\textwidth}
    \begin{quantikz}
    \lstick{$1$} & \octrl{1} & \ctrl{3} & \octrl{1} & \ctrl{1} & \rstick{$1$} \\
    \lstick{$s_1$} & \gate[4]{U_0} & \gate{U_1} & \gate[5]{P^{(3)}_0} & \gate[5]{P^{(3)}_1} & \rstick{$s_0'$} \\
    \lstick{$s_0 - s_1$} &&&&& \rstick{$s_1' - s_0'$} \\
    \lstick{$p_1$} && \gate{U_1} && \\
    \lstick{$p_0 - p_1$} &&&&  \\
    \lstick{$k$} &&&&& \rstick{$f$}
    \end{quantikz}
    \end{adjustbox}
\end{subfigure}
\hfil
\begin{subfigure}[valign=c]{0.25\textwidth}
    \begin{subfigure}{\textwidth}
        \centering
        \begin{adjustbox}{width=\textwidth}
        \begin{quantikz}
        \lstick{$s_1$} & \midstick[4]{} \\
        \lstick{$s_0 - s_1$} & \\
        \lstick{$p_1$} && \qwbundle{s_0'} && \\
        \lstick{$p_0 - p_1$} && \qwbundle{f_0} & \permute{2,1} & \\
        \lstick{$k$} &&&&
        \end{quantikz}
        \end{adjustbox}
    \end{subfigure}
    \par\bigskip\bigskip\bigskip
    \begin{subfigure}{\textwidth}
        \centering
        \begin{adjustbox}{width=\textwidth}
        \begin{quantikz}
        \lstick{$s_1$} && \midstick[2]{} & \qwbundle{s_1'} \\
        \lstick{$s_0 - s_1$} & \permute{2,1} && \qwbundle{f_1} \\
        \lstick{$p_1$} &&& \\
        \lstick{$p_0 - p_1$} &&& \\
        \lstick{$k$} &&&
        \end{quantikz}
        \end{adjustbox}
    \end{subfigure}
\end{subfigure}
\caption{Improved direct sum circuit for Case 3: $s_0 \geq s_1, s_0' \leq s_1', p_0 \geq p_1$. For this circuit, let $k = s_1' - s_0'$, and let $f = f_0$. The circuit on the left implements the direct sum, while the ones on the right are the implementations of the components $P^{(3)}_0$ and $P^{(3)}_1$.}
\Description{}
\label{fig:dirsum_improved_case3}
\end{figure}

\begin{figure}[ht]
\centering
\begin{subfigure}{0.5\textwidth}
    \centering
    \begin{adjustbox}{width=\textwidth}
    \begin{quantikz}
    \lstick{$1$} & \octrl{1} & \ctrl{3} & \octrl{1} & \ctrl{1} & \rstick{$1$} \\
    \lstick{$s_1$} & \gate[3]{U_0} & \gate{U_1} & \gate[5]{P^{(4)}_0} & \gate[5]{P^{(4)}_1} & \rstick{$s_0'$} \\
    \lstick{$s_0 - s_1$} &&&&& \rstick{$s_1' - s_0'$} \\
    \lstick{$p_0$} && \gate[2]{U_1} && \\
    \lstick{$p_1 - p_0$} &&&& \\
    \lstick{$k$} &&&&& \rstick{$f$}
    \end{quantikz}
    \end{adjustbox}
\end{subfigure}
\hfil
\begin{subfigure}[valign=c]{0.25\textwidth}
    \begin{subfigure}{\textwidth}
        \centering
        \begin{adjustbox}{width=\textwidth}
        \begin{quantikz}
        \lstick{$s_1$} & \midstick[3]{} \\
        \lstick{$s_0 - s_1$} && \qwbundle{s_0'} && \\
        \lstick{$p_0$} && \qwbundle{f_0} & \permute{3,1,2} & \\
        \lstick{$p_1 - p_0$} &&&& \\
        \lstick{$k$} &&&&
        \end{quantikz}
        \end{adjustbox}
    \end{subfigure}
    \par\bigskip\bigskip\bigskip
    \begin{subfigure}{\textwidth}
        \centering
        \begin{adjustbox}{width=\textwidth}
        \begin{quantikz}
        \lstick{$s_1$} && \midstick[3]{} \\
        \lstick{$s_0 - s_1$} & \permute{3,1,2} && \qwbundle{s_1'} \\
        \lstick{$p_0$} &&& \qwbundle{f_1} \\
        \lstick{$p_1 - p_0$} &&& \\
        \lstick{$k$} &&&
        \end{quantikz}
        \end{adjustbox}
    \end{subfigure}
\end{subfigure}
\caption{Improved direct sum circuit for Case 4: $s_0 \geq s_1, s_0' \leq s_1', p_0 \leq p_1$. For this circuit, let \\ $k = \max\{0, s_1' - s_0' + p_0 - p_1\}$, and let $f = s_0 + p_1 - s_1' + k$. The circuit on the left implements the direct sum, while the ones on the right are the implementations of the components $P^{(4)}_0$ and $P^{(4)}_1$.}
\Description{}
\label{fig:dirsum_improved_case4}
\end{figure}

\clearpage

\begin{lemma}
\label{lem:dirsum_possible}
Suppose it is possible to implement (as specified in Definition \ref{def:possible_to_implement_op}) the operators \\ $E_0 : \cH(T_0) \rarr \cH(T_0')$ and $E_1 : \cH(T_1) \rarr \cH(T_1')$. Then it is possible to implement the operator $E_0 \oplus E_1 : \cH(T_0 \oplus T_1) \rarr \cH(T_0' \oplus T_1')$, using the circuits shown in Figures \ref{fig:dirsum_improved_case1}, \ref{fig:dirsum_improved_case2}, \ref{fig:dirsum_improved_case3}, and \ref{fig:dirsum_improved_case4}.
\end{lemma}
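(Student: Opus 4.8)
The plan is to verify, for each of the four circuits, the three conditions of \Cref{def:possible_to_implement_op} applied to the combined operator $E_0 \oplus E_1$, using the output size $1 + \max\{s_0', s_1'\}$, the prep count recorded in \Cref{tab:dirsum_counts}, and the flag count $f$ declared in each figure. First I would reduce to the stated four cases: interchanging the two summands is the commutativity isomorphism $T_0 \oplus T_1 \cong T_1 \oplus T_0$, implemented by an $X$ gate on the tag qubit, so the four cases with $s_0 < s_1$ follow immediately from those with $s_0 \geq s_1$.

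The core of the argument is to compute the action of each circuit $U$ on an input basis state $\ket{\enc(v), 0^{\otimes p}}$ and split on the tag qubit, i.e.\ on whether $v = \lef{T_0}{T_1} w$ (tag $0$) or $v = \rit{T_0}{T_1} w$ (tag $1$). In the left-injection branch the anti-controlled $U_0$ fires while the controlled $U_1$ and the controlled permutation block act trivially; since $U_0$ operates on the full $s_0$-qubit input register and $p_0$-qubit prep register, the hypothesis on $E_0$ produces a state in $\WW(T_0') \otimes \CC^{2^{f_0}}$ whose leading qubits encode the output value, while the $k$ auxiliary qubits remain in $\ket 0$. In the right-injection branch the controlled $U_1$ fires on the (non-adjacent) $s_1$ and $p_1$ registers, treated as a single block of size $s_1 + p_1 = s_1' + f_1$, leaving the intervening $s_0 - s_1$ input-padding qubits, the excess $p_0 - p_1$ prep qubits, and the $k$ auxiliary qubits all in $\ket 0$ by the forward-validity condition on $E_1$. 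The role of the permutation component ($P^{(1)}$, $P^{(2)}$, or the tag-conditioned pair $P^{(i)}_0, P^{(i)}_1$ in \Cref{fig:dirsum_improved_case3,fig:dirsum_improved_case4}) is precisely to regroup these registers so that the $s_i'$ output qubits occupy the leading positions of the output encoding and every remaining qubit is gathered into a single flag register.

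With the branch-wise action in hand, the three conditions follow. For the matrix-element identity (first bullet), an overlap $\bra{\enc(v'), 0^{\otimes f}} U \ket{\enc(v), 0^{\otimes p}}$ with $v, v'$ on the same side collapses to $\bra{w'} E_i \ket{w}$ by the corresponding condition for $U_i$, and an overlap with $v, v'$ on opposite sides vanishes because the tag qubit is a control throughout and hence preserved; together these reproduce $\bra{v'}(E_0 \oplus E_1)\ket{v}$. For forward validity (second bullet) I would check that the output is a superposition of valid encodings of $T_0' \oplus T_1'$: the tag bit is correct, the sub-encoding is valid by the hypothesis on $E_i$, and the residual $\ket 0$ qubits supply exactly the required padding $\max\{s_0', s_1'\} - s_i'$. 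This is where the definitions of $k$ and $f$ in each figure are used; for instance, in \Cref{fig:dirsum_improved_case3} the $k = s_1' - s_0'$ auxiliary qubits double as the left-injection padding. Backward validity (third bullet) is the same computation applied to $U\adj$, invoking the third condition for $U_0\adj$ and $U_1\adj$.

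The main obstacle is the register bookkeeping in the right-injection branch, and it is a matter of careful qubit accounting rather than conceptual difficulty. One must confirm, separately in each of the four cases, that the declared values of $k$ and $f$ make the input and output dimensions of the permutation block agree, that the permutation drawn in the figure is well-defined and sends the post-$U_i$ arrangement to the order ``output qubits, then padding, then flag,'' and that the qubits folded into the flag register are in $\ket 0$ exactly when $U_i$'s own flag register is, so that the composite circuit signals an error precisely when the sub-circuit would. Verifying these size constraints and permutation actions across Cases~1 through~4 is tedious but routine, and completes the proof.
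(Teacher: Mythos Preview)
Your proposal is correct and follows essentially the same approach as the paper: reduce to $s_0 \geq s_1$ via the commutativity $X$ gate, then for each of the four cases verify the wire counts, compute the matrix elements by splitting on the tag qubit and invoking the hypotheses on $U_0$ and $U_1$, and check forward and backward encoding validity by tracking which registers are guaranteed to remain $\ket{0}$. The paper carries out each of these computations explicitly rather than arguing structurally, but the organization and the key observations (tag qubit preserved as a control, permutation blocks shuffle padding zeros into place, $k$ and $f$ chosen so dimensions match) are identical.
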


\begin{proof}

Let $s_0 = \size(T_0)$, $s_0' = \size(T_0')$, $s_1 = \size(T_1)$, $s_1' = \size(T_1')$. Suppose that $E_0$ is implemented by a unitary $U_0$ using $p_0$ prep qubits and $p_1$ flag qubits, and $E_1$ is implemented by a unitary $U_1$ using $p_1$ prep qubits and $f_1$ flag qubits. We have that
\[
s_0 + p_0 = s_0' + f_0
\]
\[
s_1 + p_1 = s_1' + f_1
\]
We introduce the sets of computational basis states for the qubits in the corresponding registers: we label these as $\ket{\xi_0}, \ket{\pi_0}, \ket{\xi_0'}, \ket{\phi_0}$ for the registers of size $s_0, p_0, s_0', f_0$ respectively, and $\ket{\xi_1}, \ket{\pi_1}, \ket{\xi_1'}, \ket{\phi_1}$ for $s_1, p_1, s_1', f_1$. 

Now, without loss of generality, we can assume that $s_0 \geq s_1$, because if this were not the case, then we can implement $E_1 \oplus E_0$ and apply the direct sum's commutativity isomorphism by simply surrounding the circuit with Pauli $X$ gates on the first ``signal'' qubit. Now, we have four cases to analyze.

\textbf{Case 1}: Suppose that $s_0' \geq s_1'$ and $p_0 \geq p_1$. For this case, let $k = \max\{0, f_1 - f_0\}$. Let \\ $f = \max\{f_0, f_1\} = f_0 + k$. Then, consider the circuit $U$ shown in \Cref{fig:dirsum_improved_case1}. First, to verify that the wire counts are correct:
\[
1 + s_0 + p_0 + k =
1 + s_0 + p_0 + \max\{0, f_1 - f_0\} =
1 + s_0' + f_0 + \max\{0, f_1 - f_0\} =
1 + s_0' + \max\{f_0, f_1\} =
1 + s_0' + f
\]

Now, to verify the correctness of the circuit:
\begin{align*}
& \bra{0, \enc(v_0'), 0}U\ket{0, \enc(v_0), 0, 0} =
\bra{\enc(v_0'), 0}U_0\ket{\enc(v_0), 0} = \\
={}& \bra{v_0'}E_0\ket{v_0} =
(\bra{v_0'} \oplus 0)(E_0 \oplus E_1)(\ket{v_0} \oplus 0) \\
\\
& \bra{1, \enc(v_1'), 0, 0}U\ket{0, \enc(v_0), 0, 0} = 0 =
(0 \oplus \bra{v_1'})(E_0 \oplus E_1)(\ket{v_0} \oplus 0) \\
\\
& \bra{0, \enc(v_0'), 0}U\ket{1, \enc(v_1), 0, 0, 0, 0} = 0 =
(\bra{v_0'} \oplus 0)(E_0 \oplus E_1)(0 \oplus \ket{v_1}) \\
\\
& \bra{1, \enc(v_1'), 0, 0}U\ket{1, \enc(v_1), 0, 0, 0, 0} = \\
={}& \bra{\enc(v_1'), 0, 0} P^{(1)} \sum_{\xi_1, \pi_1} \ket{\xi_1, 0, \pi_1, 0, 0}\bra{\xi_1, \pi_1}U_1\ket{\enc(v_1), 0} = \\
={}& \bra{\enc(v_1'), 0, 0} \sum_{\xi_1, \pi_1} \sum_{\xi_1', \phi_1} \ket{\xi_1', 0, 0, 0, \phi_1}\braket{\xi_1', \phi_1}{\xi_1, \pi_1}\bra{\xi_1, \pi_1}U_1\ket{\enc(v_1), 0} = \\
={}& \bra{\enc(v_1'), 0, 0} \sum_{\xi_1'} \ket{\xi_1', 0, 0, 0, 0}\bra{\xi_1', 0}U_1\ket{\enc(v_1), 0} = \\
={}& \bra{\enc(v_1'), 0}U_1\ket{\enc(v_1), 0} = \\
={}& \bra{\enc(v_1'), 0}U_1\ket{\enc(v_1), 0} = \bra{v_1'}E_1\ket{v_1} =
(0 \oplus \bra{v_1'})(E_0 \oplus E_1)(0 \oplus \ket{v_1})
\end{align*}

And now, to verify the preservation of encoding validity:

If a valid encoding of $T_0$ is given as input, the output will always be a valid encoding of $T_0'$ since $s_0' \geq s_1'$. If a valid encoding of $T_1$ is given as input, observe that the register of size $s_0 - s_1$ must be in the $\ket{0}$ state. It is not affected by the application of $U_1$. Then, the output register of size $s_0' - s_1'$ can only be formed from qubits in the input registers of sizes $s_0 - s_1$, $p_0 - p_1$, and $k$, and the latter two must also be $\ket{0}$ since they are part of the prep register. Thus $U$ preserves encoding validity.

Now, for the adjoint $U^\dagger$, we consider the version of this circuit run in reverse, where the flag registers are now prep registers and vice versa. If a valid encoding of $T_0'$ is given as input, the output will always be a valid encoding of $T_0$ since $s_0 \geq s_1$. If a valid encoding of $T_1$ is given as input, the register of size $s_0' - s_1'$ must be in the $\ket{0}$ state. The output register of size $s_0 - s_1$ can only be formed from qubits in that register and from the input register of size $f$, which is a prep register. Thus, $U^\dagger$ also preserves encoding validity.

\textbf{Case 2}: Suppose that $s_0' \geq s_1'$ and $p_0 \leq p_1$. For this case, let $k = \max\{0, f_1 - f_0 + p_0 - p_1\}$, and let $f = \max\{f_0 + p_1 - p_0, f_1\}$. Then, consider the circuit $U$ shown in \Cref{fig:dirsum_improved_case2}. First, to verify that the wire counts are correct:
\begin{gather*}
1 + s_0 + p_1 + k =
1 + s_0 + p_1 + \max\{0, f_1 - f_0 + p_0 - p_1\} = \\
= 1 + s_0' + f_0 - p_0 + p_1 + \max\{0, f_1 - f_0 + p_0 - p_1\} =
1 + s_0' + \max\{f_0 + p_1 - p_0, f_1\} =
1 + s_0' + f
\end{gather*}

Now, to verify the correctness of the circuit (we will now omit the obvious cases where the state of the first qubit does not match):

\begin{align*}
& \bra{0, \enc(v_0'), 0}U\ket{0, \enc(v_0), 0, 0, 0} =
\bra{\enc(v_0'), 0}U_0\ket{\enc(v_0), 0} = \\
={}& \bra{v_0'}E_0\ket{v_0} =
(\bra{v_0'} \oplus 0)(E_0 \oplus E_1)(\ket{v_0} \oplus 0) \\
& \bra{1, \enc(v_1'), 0, 0}U\ket{1, \enc(v_1), 0, 0, 0} = \\
={}& \bra{\enc(v_1'), 0, 0} P^{(2)} \sum_{\xi_1, \pi_1} \ket{\xi_1, 0, \pi_1, 0}\bra{\xi_1, 0, \pi_1, 0}U_1\ket{\enc(v_1), 0} = \\
={}& \bra{\enc(v_1'), 0, 0} \sum_{\xi_1, \pi_1} \sum_{\xi_1', \phi_1} \ket{\xi_1', 0, 0, \phi_1} \braket{\xi_1, \phi_1}{\xi_1, \pi_1}\bra{\xi_1, 0, \pi_1, 0}U_1\ket{\enc(v_1), 0} = \\
={}& \bra{\enc(v_1'), 0, 0} \sum_{\xi_1'} \ket{\xi_1', 0, 0, 0} \bra{\xi_1', 0}U_1\ket{\enc(v_1), 0} = \\
={}& \bra{\enc(v_1'), 0}U_1\ket{\enc(v_1), 0} = \bra{v_1'}E_1\ket{v_1} =
(0 \oplus \bra{v_1'})(E_0 \oplus E_1)(0 \oplus \ket{v_1})
\end{align*}

And now, to verify the preservation of encoding validity:

The encoding validity for $T_0$ and $T_0'$ is clear as in the previous case. If a valid encoding of $T_1$ is given as input, the register of size $s_0' - s_1'$ is again formed only from the register of size $s_0 - s_1$ and prep registers (not the flag register since $f \geq f_1$). Similarly, for the adjoint circuit, the register of size $s_0 - s_1$ is only formed from the registers of size $s_0' - s_1'$ and $f$. So, the encoding validity is preserved.

\textbf{Case 3}: Suppose that $s_0' \leq s_1'$ and $p_0 \geq p_1$. For this case, let $k = s_1' - s_0'$, and let $f = f_0$. Then, consider the circuit $U$ shown in \Cref{fig:dirsum_improved_case3}. First, to verify that the wire counts are correct:
\[
1 + s_0 + p_0 + k = 1 + s_0' + f_0 + s_1' - s_0' = 1 + s_1' + f
\]

Now, to verify the correctness of the circuit:
\begin{align*}
& \bra{0, \enc(v_0'), 0, 0}U\ket{0, \enc(v_0), 0, 0} = \\
={}& \bra{\enc(v_0'), 0, 0} P^{(3)}_0 \sum_{\xi_0, \pi_0} \ket{\xi_0, \pi_0, 0} \bra{\xi_0, \pi_0} U_0 \ket{\enc(v_0), 0} = \\
={}& \bra{\enc(v_0'), 0, 0} \sum_{\xi_0, \pi_0} \sum_{\xi_0', \phi_0} \ket{\xi_0', 0, \phi_0} \braket{\xi_0', \phi_0}{\xi_0, \pi_0} \bra{\xi_0, \pi_0} U_0 \ket{\enc(v_0), 0} = \\
={}& \bra{\enc(v_0'), 0, 0} \sum_{\xi_0'} \ket{\xi_0', 0, 0} \bra{\xi_0', 0} U_0 \ket{\enc(v_0), 0} = \\
={}& \bra{\enc(v_0'), 0}U_0\ket{\enc(v_0), 0} = \bra{v_0'}E_0\ket{v_0} =
(\bra{v_0'} \oplus 0)(E_0 \oplus E_1)(\ket{v_0} \oplus 0) \\
& \bra{1, \enc(v_1'), 0}U\ket{1, \enc(v_1), 0, 0, 0} = \\
={}& \bra{\enc(v_1'), 0} P^{(3)}_1 \sum_{\xi_1, \pi_1} \ket{\xi_1, 0, \pi_1, 0, 0} \bra{\xi_1, \pi_1} U_1 \ket{\enc(v_1), 0} = \\
={}& \bra{\enc(v_1'), 0} \sum_{\xi_1, \pi_1} \sum_{\xi_1', \phi_1} \ket{\xi_1', \phi_1, 0, 0, 0} \braket{\xi_1', \phi_1}{\xi_1, \pi_1} \bra{\xi_1, \pi_1} U_1 \ket{\enc(v_1), 0} = \\
={}& \bra{\enc(v_1'), 0} \sum_{\xi_1'} \ket{\xi_1', 0, 0, 0, 0} \bra{\xi_1', \phi_1} U_1 \ket{\enc(v_1), 0} = \\
={}& \bra{\enc(v_1'), 0}U_1\ket{\enc(v_1), 0} = \bra{v_1'}E_1\ket{v_1} =
(0 \oplus \bra{v_1'})(E_0 \oplus E_1)(0 \oplus \ket{v_1})
\end{align*}

The encoding validity preserving property follows by a similar argument as before. If a valid encoding of $T_0$ is given as input, the register of size $s_1' - s_0'$ exactly corresponds to the prep register of size $k$, which must be in the $\ket{0}$ state. If a valid encoding of $T_1$ is given as input, the encoding validity of the output is clear. For the adjoint circuit, if $T_0'$ is given, the encoding validity is clear since $s_0 \geq s_1$. If $T_1'$ is given the output register of size $s_0 - s_1$ is only formed from the prep registers. So, the encoding validity is preserved.

\textbf{Case 4}: Suppose that $s_0' \leq s_1'$ and $p_0 \leq p_1$. For this case, let $k = \max\{0, s_1' - s_0' + p_0 - p_1\}$, and let $f = s_0 + p_1 - s_1' + k$. Then, consider the circuit $U$ shown in \Cref{fig:dirsum_improved_case4}. First, to verify that the wire counts are correct:
\[
1 + s_0 + p_1 + k =
1 + s_0 + p_1 + f - s_0 - p_1 + s_1' =
1 + s_1' + f
\]
Now, to verify the correctness of the circuit:
\begin{align*}
& \bra{0, \enc(v_0'), 0, 0}U\ket{0, \enc(v_0), 0, 0, 0} = \\
={}& \bra{\enc(v_0'), 0, 0} P^{(4)}_0 \sum_{\xi_0, \pi_0} \ket{\xi_0, \pi_0, 0, 0} \bra{\xi_0, \pi_0} U_0 \ket{\enc(v_0), 0} = \\
={}& \bra{\enc(v_0'), 0, 0} \sum_{\xi_0, \pi_0} \sum_{\xi_0', \phi_0} \ket{\xi_0', 0, 0, \phi_0} \braket{\xi_0', \phi_0}{\xi_0, \pi_0} \bra{\xi_0, \pi_0} U_0 \ket{\enc(v_0), 0} = \\
={}& \bra{\enc(v_0'), 0, 0} \sum_{\xi_0'} \ket{\xi_0', 0, 0, 0} \bra{\xi_0', 0} U_0 \ket{\enc(v_0), 0} = \\
={}& \bra{\enc(v_0'), 0}U_0\ket{\enc(v_0), 0} = \bra{v_0'}E_0\ket{v_0} =
(\bra{v_0'} \oplus 0)(E_0 \oplus E_1)(\ket{v_0} \oplus 0) \\
& \bra{1, \enc(v_1'), 0}U\ket{1, \enc(v_1), 0, 0} = \\
={}& \bra{\enc(v_1'), 0} P^{(4)}_1 \sum_{\xi_1, \pi_1} \ket{\xi_1, 0, \pi_1, 0} \bra{\xi_1, \pi_1} U_1 \ket{\enc(v_1), 0} = \\
={}& \bra{\enc(v_1'), 0} \sum_{\xi_1, \pi_1} \sum_{\xi_1', \phi_1} \ket{\xi_1', \phi_1, 0, 0} \braket{\xi_1', \phi_1}{\xi_1, \pi_1} \bra{\xi_1, \pi_1} U_1 \ket{\enc(v_1), 0} = \\
={}& \bra{\enc(v_1'), 0} \sum_{\xi_1'} \ket{\xi_1', 0, 0, 0} \bra{\xi_1', \phi_1} U_1 \ket{\enc(v_1), 0} = \\
={}& \bra{\enc(v_1'), 0}U_1\ket{\enc(v_1), 0} = \bra{v_1'}E_1\ket{v_1} =
(0 \oplus \bra{v_1'})(E_0 \oplus E_1)(0 \oplus \ket{v_1})
\end{align*}

For the encoding validity: if a valid encoding of $T_0$ is given as input, the register of size $s_1' - s_0'$ is only formed from the prep registers of size $p_1 - p_0$ and $k$: we can confirm that
\begin{gather*}
f = s_0 + p_1 - s_1' + \max\{0, s_1' - s_0' + p_0 - p_1\} =
s_0 + \max\{p_1 - s_1', - s_0' + p_0\} = \\
= s_0' + f_0 - p_0 + \max\{p_1 - s_1', - s_0' + p_0\} =
f_0 + \max\{p_1 - s_1' - p_0 + s_0', 0\} \geq f_0,
\end{gather*}
so the flag register does not overlap with the $s_1' - s_0'$ register. If a valid encoding of $T_1$ is given as input, the encoding validity of the output is clear. For the adjoint: if a valid encoding of $T_0'$ is given as input, the encoding validity is clear. If a valid encoding of $T_1'$ is given as input, the register of size $s_0 - s_1$ is only formed from the prep registers, so the encoding validity is preserved.

This completes the analysis of all four cases, and thus completes the proof for the direct sum circuit construction.
\end{proof}

\section{Notation and Definitions for Binary Trees}\label{app:binary_trees}

First, we can define a binary tree $\cR$ as either $\Leaf$ or $(\cR_0, \cR_1)$, where $\cR_0$ and $\cR_1$ are binary trees, the left and right children of $\cR$.

We can define size and height functions as:
\begin{definition}
\begin{gather*}
\size(\Leaf) := 1 \\
\size((\cR_0, \cR_1)) := \size(\cR_0) + \size(\cR_1) \\
\height(\Leaf) := 0 \\
\height((\cR_0, \cR_1)) := 1 + \max(\height(\cR_0), \height(\cR_1))
\end{gather*}
\end{definition}

Then, we can define the following:
\begin{definition}[Direct sum of operators (or states or spaces) along a binary tree]
\begin{gather*}
\bigoplus_{j : \Leaf} E_j := E_1 \\
\bigoplus_{j : (\cR_0, \cR_1)} E_j :=
\parens{\bigoplus_{j : \cR_0} E_j} \oplus
\parens{\bigoplus_{j : \cR_1} E_{j + \size(\cR_0)}}
\end{gather*}
As a shorthand, we can write
\[
E^{\oplus \cR} = \bigoplus_{j : \cR} E.
\]
\end{definition}

\begin{definition}[Direct sum of operators (or states or spaces) along a binary tree, leveled to height $h$]\label{def:leveled_tree_sum}
\begin{gather*}
\bigoplus_{j : L(\Leaf, 0)} E_j := E_1 \\
\bigoplus_{j : L(\Leaf, h)} E_j := \parens{\bigoplus_{j : L(\Leaf, h - 1)} E_j} \oplus 0 \qquad \text{(where $h \in \NN$, $h > 0$)} \\
\bigoplus_{j : L((\cR_0, \cR_1), h)} E_j :=
\parens{\bigoplus_{j : L(\cR_0, h - 1)} E_j} \oplus
\parens{\bigoplus_{j : L(\cR_1, h - 1)} E_{\size(\cR_0)}}
\end{gather*}
Note that the $0$ in the above definition exists in the space $\{0\}$: thus, leveling operation is effectively an extension of the additive unit isomorphism. As a shorthand, we can write
\[
\bigoplus_{j : L(\cR)} E_j := \bigoplus_{j : L(\cR, \height(\cR))} E_j,
\]
and
\[
E^{\oplus L(\cR)} = \bigoplus_{j : L(\cR)} E.
\]
\end{definition}

Summing along a leveled tree, as in \Cref{def:leveled_tree_sum}, can be thought of as adding left children to all nodes in the tree, until they reach a specified height (which must be at least the height of the tree). This construction will be useful when constructing the new spanning circuit.

\begin{definition}[Indexed injection into a tree]\label{def:indexed_injection}
\begin{gather*}
\inj_1^\Leaf \ket{v} = \ket{v} \\
\inj_k^{(\cR_0, \cR_1)} \ket{v} = \begin{cases}
\inj_k^{\cR_0} \ket{v} \oplus 0^{\oplus \cR_1} & \text{if}\; 1 \leq k \leq \size(\cR_1) \\
0^{\oplus \cR_1} \oplus \inj_{k - \size(\cR_0)}^{\cR_1} \ket{v} & \text{otherwise}.
\end{cases}
\end{gather*}
\end{definition}

\begin{definition}[Indexed injection into a leveled tree]\label{def:leveled_indexed_injection}
\begin{gather*}
\inj_1^{L(\Leaf, 0)} \ket{v} = \ket{v} \\
\inj_1^{L(\Leaf, h)} \ket{v} = \inj_1^{L(\Leaf, h-1)} \ket{v} \oplus 0 \qquad \text{(where $h \in \NN$, $h > 0$)} \\
\inj_k^{L((\cR_0, \cR_1), h)} \ket{v} = \begin{cases}
\inj_k^{L(\cR_0, h-1)} \ket{v} \oplus 0^{\oplus L(\cR_1,h-1)} & \text{if}\; 1 \leq k \leq \size(\cR_1) \\
0^{\oplus L(\cR_1, h-1)} \oplus \inj_{k - \size(\cR_0)}^{L(\cR_1,h-1)} \ket{v} & \text{otherwise}.
\end{cases}
\end{gather*}
\end{definition}

\section{Proofs of Correctness for Low-Level Compilation}\label{app:low_level_compilation}

In this section, we reproduce the proofs from Appendix H.1 in \citeauthor{Voichick_2023}, with appropriate modifications. The constructions in this section correspond to primitive operations in Qunity's intermediate representation. Note, however, that this does not constitute a proof of correctness of the entire Qunity compilation procedure, as mathematically formalizing the circuit specification data structures, the process of circuit instantiation, and qubit allocation and recycling in registers is a complex task beyond the scope of this work. Note that we do not need the ``validation'' circuits introduced in \citeauthor{Voichick_2023} since our construction guarantees the preservation of valid encodings: this property is not stated explicitly in the following, but it is is clear for the circuits in this section as the direct sum (\Cref{app:dirsum_proof}) is the only truly nontrivial construction in terms of encoding validity.

\begin{lemma}
	\label{lem:compile-inj}
	It is always possible to implement the direct sum injections $\msem{\lef{T_0}{T_1}}$ and $\msem{\rit{T_0}{T_1}}$.
\end{lemma}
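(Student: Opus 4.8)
The plan is to prove the left injection in full and get the right one by symmetry. Writing $E = \msem{\lef{T_0}{T_1}}$, the pure program semantics (\Cref{fig:sem-pure-prog}) gives the isometry $E\ket{v} = \ket{\lef{T_0}{T_1} v}$, so I only need to exhibit a circuit meeting the three conditions of \Cref{def:possible_to_implement_op}. Set $s_0 = \size(T_0)$, $s_1 = \size(T_1)$, and $m = \max\{s_0, s_1\}$, so that $\size(T_0 \oplus T_1) = 1 + m$ and, by \Cref{def:valid_enc_space}, $\enc(\lef{T_0}{T_1} v) = \texttt{"0"} \doubleplus \enc(v) \doubleplus \texttt{"0"}^{m - s_0}$. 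The natural circuit prepends a ``signal'' qubit prepared in $\ket{0}$ (marking the left summand) and appends $m - s_0$ padding qubits in $\ket{0}$, routing the $s_0$ input qubits into the middle of the output encoding register. Since this acts as a pure rearrangement of computational-basis encodings, the first condition, $\bra{\enc(v'),0^{\otimes f}} U \ket{\enc(v),0^{\otimes p}} = \bra{v'} E \ket{v}$, and the forward validity condition, $U\ket{\enc(v),0^{\otimes p}} = \ket{\enc(\lef{T_0}{T_1} v),0^{\otimes f}} \in \WW(T_0 \oplus T_1)\otimes \CC^{2^f}$, are immediate from the encoding definition.

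The delicate condition is the third, which demands $U\adj\ket{\enc(v'),0^{\otimes f}} \in \WW(T_0)\otimes\CC^{2^p}$ for \emph{every} $v' \in \VV(T_0 \oplus T_1)$, including the right injections $v' = \rit{T_0}{T_1} w$. Here the $s_0$-bit content region is shared between the two summands, and a bare wire permutation fails: applying the adjoint to $\enc(\rit{T_0}{T_1} w) = \texttt{"1"}\doubleplus\enc(w)\doubleplus\texttt{"0"}^{m-s_1}$ would deposit the first $s_0$ bits of $\enc(w)\doubleplus\texttt{"0"}^{m-s_1}$ into the input register, and this truncation need not be a valid $T_0$-encoding (for instance when $T_0$ is a qutrit). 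To repair this I would augment the circuit with an operation controlled on the signal qubit that, when the signal is $\texttt{1}$, swaps the shared content region into an ancillary flag register and resets the input register to a fixed valid encoding $\enc(v_0)$ of some $v_0 \in \VV(T_0)$ (the all-zeros string works whenever the canonically-leftmost value of $T_0$ encodes to zero). Because the signal is $\texttt{0}$ on every left encoding, this controlled operation is the identity there, so the first two conditions are untouched; on right encodings it forces the input register into $\WW(T_0)$, establishing the third.

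With this in place the verification is a short case split on whether $v'$ is a left or right injection, using that the controlled swap is its own inverse, and the right injection follows by exchanging the roles of $T_0$ and $T_1$ and of the signal values $\texttt{0}$ and $\texttt{1}$. I expect the main obstacle to be exactly the adjoint validity condition of \Cref{def:possible_to_implement_op}: guaranteeing that inverting the injection never manufactures an invalid encoding of the input type out of an encoding of the complementary summand --- this is precisely why a wire permutation is insufficient and a signal-controlled reset is needed. A minor edge case worth isolating is $T_0 = \Void$ (and symmetrically $T_1 = \Void$ for the right injection), where $\cH(T_0) = \{0\}$ makes the map degenerate and the statement should be read vacuously.
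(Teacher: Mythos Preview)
The paper's proof is much simpler than yours: it uses the bare wire permutation with zero flag wires, checks only the matrix-element condition of \Cref{def:possible_to_implement_op}, and defers the two encoding-validity conditions to a remark at the start of the section asserting they are ``clear'' for every circuit except the direct sum. You have correctly identified that this assertion fails for the injections. A concrete instance of your qutrit observation: take $T_0 = \Unit \oplus (\Unit \oplus \Unit)$, so that $\WW(T_0)=\Span\{\ket{\texttt{00}},\ket{\texttt{10}},\ket{\texttt{11}}\}$, and $T_1 = \Bit \otimes \Bit$; then the paper's circuit has $s_0=s_1=m=2$, $p=1$, $f=0$, and $U\adj\ket{\enc(\rit{T_0}{T_1}(\zero,\one))} = U\adj\ket{\texttt{101}} = \ket{\texttt{011}}$, which places $\ket{\texttt{01}}\notin\WW(T_0)$ in the input register and violates the third condition. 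Your signal-controlled swap into an added flag register is a sound repair (with the reset target chosen as any $\enc(v_0)$ for $v_0\in\VV(T_0)$, implemented via controlled bit-flips if the all-zeros string is not itself valid), so on this point your argument is strictly more careful than the paper's.

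One correction to your edge-case remark: the empty-$T_0$ case is not vacuous in the way you suggest. If $\VV(T_0)=\emptyset$ but $\VV(T_1)\neq\emptyset$, then $\VV(T_0\oplus T_1)\neq\emptyset$ and the third condition of \Cref{def:possible_to_implement_op} demands $U\adj\ket{\enc(v'),0^{\otimes f}}\in\WW(T_0)\otimes\CC^{2^p}=\{0\}$ for some unit vector, which no unitary can satisfy. Under the paper's own definition the lemma is therefore literally false for uninhabited $T_0$; this reflects a defect in \Cref{def:possible_to_implement_op} (or an unstated nonemptiness hypothesis) rather than something any circuit construction can route around, and your fix likewise needs a $v_0$ to exist.
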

\begin{proof}
First, one can implement the operator $\lef{T_0}{T_1} : \cH(T_0) \to \cH(T_0) \oplus \cH(T_1)$ using $1 + \max\{\size(T_0), \size(T_1)\} - \size(T_0)$ prep wires and $0$ flag wires with this circuit:
\[
\begin{quantikz}
	\lstick{$\size(T_0)$} & \permute{2,1} & \qw & \rstick{1} \qw \\
	\lstick{1} & & \qwbundle{\size(T_0)} & \rstick[wires=2]{$\max\{\size(T_0),\size(T_1)\}$} \qw \\
	\lstick{$\max\{\size(T_0),\size(T_1)\} - \size(T_0)$} & \qw & \qw & \qw
\end{quantikz}
\]
Using $U$ to represent the qubit-based unitary implemented by this circuit, we can show that it meets the needed criteria:
	\begin{alignat*}{2}
		&&\;& \bra{\enc(\lef{T_0}{T_1} v'), 0} U \ket{\enc(v), 0, 0} \\
		&=&& \bra{0, \enc(v'), 0} U \ket{\enc(v), 0, 0} \\
		&=&& \ip{0, \enc(v'), 0}{0, \enc(v), 0} \\
		&=&& \ip{\enc(v')}{\enc(v)} \\
		&=&& \ip{v'}{v} \\
		&=&& \left(\bra{v'} \oplus 0\right)\left(\ket{v} \oplus 0\right) \\
		&=&& \ip{\lef{T_0}{T_1} v'}{\lef{T_0}{T_1} v} \\
		&&\;& \bra{\enc(\rit{T_0}{T_1} v'), 0} U \ket{\enc(v), 0, 0} \\
		&=&& \bra{1, \enc(v'), 0} U \ket{\enc(v), 0, 0} \\
		&=&& \ip{1, \enc(v'), 0}{0, \enc(v), 0} \\
		&=&& 0 \\
		&=&& \left(0 \oplus \bra{v'}\right)\left(\ket{v} \oplus 0\right) \\
		&=&& \ip{\rit{T_0}{T_1} v'}{\lef{T_0}{T_1} v} \\
	\end{alignat*}

Similarly, one can implement the operator $\rit{T_0}{T_1} : \cH(T_1) \to \cH(T_0) \oplus \cH(T_1)$ using $1 + \max\{\size(T_0), \size(T_1)\} - \size(T_1)$ prep wires and $0$ flag wires like this:
\[
\begin{quantikz}
	\lstick{$\size(T_1)$} & \permute{2,1} & \gate{X} & \rstick{1} \qw \\
	\lstick{1} & & \qwbundle{\size(T_1)} & \rstick[wires=2]{$\max\{\size(T_0),\size(T_1)\}$} \qw \\
	\lstick{$\max\{\size(T_0),\size(T_1)\} - \size(T_1)$} & \qw & \qw & \qw
\end{quantikz}
\]
	\begin{alignat*}{2}
		&&\;& \bra{\enc(\lef{T_0}{T_1} v'), 0} U \ket{\enc(v), 0, 0} \\
		&=&& \bra{0, \enc(v'), 0} U \ket{\enc(v), 0, 0} \\
		&=&& \ip{0, \enc(v'), 0}{1, \enc(v), 0} \\
		&=&& \ip{\enc(v')}{\enc(v)} \\
		&=&& 0 \\
		&=&& \left(\bra{v'} \oplus 0\right)\left(0 \oplus \ket{v}\right) \\
		&=&& \ip{\lef{T_0}{T_1} v'}{\rit{T_0}{T_1} v} \\
		&&\;& \bra{\enc(\rit{T_0}{T_1} v'), 0} U \ket{\enc(v), 0, 0} \\
		&=&& \bra{1, \enc(v'), 0} U \ket{\enc(v), 0, 0} \\
		&=&& \ip{1, \enc(v'), 0}{1, \enc(v), 0} \\
		&=&& \ip{\enc(v')}{\enc(v)} \\
		&=&& \ip{v'}{v} \\
		&=&& \left(0 \oplus \bra{v'}\right)\left(0 \oplus \ket{v}\right) \\
		&=&& \ip{\rit{T_0}{T_1} v'}{\rit{T_0}{T_1} v}
	\end{alignat*}
\end{proof}

When we construct circuits of these operators, we are implicitly using three things: identity operators (represented by bare wires), tensor products (represented by vertically-stacked gates), and operator composition (represented by horizontally-stacked gates).
To justify this form of circuit diagram, we must show that these constructions are always possible.
It is always possible to implement the identity operator $\mathbb{I} : \cH(T) \to \cH(T)$, by using an identity circuit with no prep or flag wires. The next two lemmas demonstrate that tensor products and compositions are possible.

\begin{lemma}
	Suppose it is possible to implement the operators $E_0 : \cH_0 \to \cH_0'$ and $E_1 : \cH_1 \to \cH_1'$.
	Then it is possible to implement the operator $E_0 \otimes E_1 : \cH_0 \otimes \cH_1 \to \cH_0' \otimes \cH_1'$.
\end{lemma}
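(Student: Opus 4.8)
The plan is to realize the desired circuit as the tensor product $U_0 \otimes U_1$ of the two given unitaries, together with fixed wire permutations that regroup the output and flag registers. The structural fact that makes this work is that the encoding of a pair is a concatenation, $\enc(\pair{v_0}{v_1}) = \enc(v_0) \doubleplus \enc(v_1)$, so an input register holding a valid encoding of $T_0 \otimes T_1$ already contains the $\size(T_0)$-qubit encoding of $v_0$ followed immediately by the $\size(T_1)$-qubit encoding of $v_1$. This lets me feed the first block together with a fresh prep register of size $p_0$ into $U_0$, and the second block together with a fresh prep register of size $p_1$ into $U_1$.

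First I would fix notation: write $s_i = \size(T_i)$, $s_i' = \size(T_i')$, and let $U_0$ use $p_0$ prep and $f_0$ flag wires and $U_1$ use $p_1$ prep and $f_1$ flag wires. The combined circuit then uses $p = p_0 + p_1$ prep wires and $f = f_0 + f_1$ flag wires. Up to permutations of wires (realizable from the swap gate in the gate set, exactly as the wire permutations appearing in Lemma~\ref{lem:compile-inj}), the circuit is $U_0 \otimes U_1$ applied to the regrouped registers $(\enc(v_0), 0^{\otimes p_0})$ and $(\enc(v_1), 0^{\otimes p_1})$. After this application the wires are laid out as (output $s_0'$, flag $f_0$, output $s_1'$, flag $f_1$); I append a final permutation that swaps the flag-$f_0$ block past the output-$s_1'$ block, reordering them to (output $s_0'$, output $s_1'$, flag $f_0$, flag $f_1$), so that the first $s_0' + s_1'$ wires constitute the output register for $T_0' \otimes T_1'$ and the last $f$ wires constitute the combined flag register.

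Then I would verify the three conditions of Definition~\ref{def:possible_to_implement_op}. For the amplitude condition, after undoing the input and output permutations the matrix element factors across the tensor product as
\[
\bra{\enc(v_0'), 0^{\otimes f_0}} U_0 \ket{\enc(v_0), 0^{\otimes p_0}} \cdot \bra{\enc(v_1'), 0^{\otimes f_1}} U_1 \ket{\enc(v_1), 0^{\otimes p_1}} = \bra{v_0'} E_0 \ket{v_0} \cdot \bra{v_1'} E_1 \ket{v_1},
\]
which is exactly $\bra{\pair{v_0'}{v_1'}}(E_0 \otimes E_1)\ket{\pair{v_0}{v_1}}$. For encoding validity, applying $U_0 \otimes U_1$ to $\ket{\enc(v_0), 0^{\otimes p_0}} \otimes \ket{\enc(v_1), 0^{\otimes p_1}}$ lands in $\WW(T_0') \otimes \CC^{2^{f_0}} \otimes \WW(T_1') \otimes \CC^{2^{f_1}}$ by the validity of $U_0$ and $U_1$, and after the output permutation this is $\WW(T_0') \otimes \WW(T_1') \otimes \CC^{2^{f}}$. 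The adjoint condition follows identically with the roles of prep and flag (and input and output) interchanged, using validity of $U_0\adj$ and $U_1\adj$.

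The one genuinely necessary ingredient beyond bookkeeping is the identity $\WW(T_0' \otimes T_1') = \WW(T_0') \otimes \WW(T_1')$, which I would record as a short observation: by Definition~\ref{def:valid_enc_space} together with $\ket{\enc(\pair{v_0'}{v_1'})} = \ket{\enc(v_0')} \otimes \ket{\enc(v_1')}$, the pair encodings range exactly over the product basis, so their spans agree. I expect the main (if modest) obstacle to be the careful tracking of the interleaved register layout and confirming that the two regrouping permutations suffice to present the result in the canonical (output, then flag) form; everything substantive is a direct factorization across the tensor product.
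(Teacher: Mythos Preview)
Your proposal is correct and follows essentially the same approach as the paper: build the circuit as $U_0 \otimes U_1$ with wire permutations to regroup the input, prep, output, and flag registers, then verify the amplitude condition by factoring across the tensor product. If anything, your write-up is more thorough than the paper's, which only checks the amplitude identity and relies on a blanket remark that encoding validity is ``clear'' for all constructions except the direct sum; you explicitly argue $\WW(T_0' \otimes T_1') = \WW(T_0') \otimes \WW(T_1')$ and the adjoint validity.
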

\begin{proof}
	Assume $E_0$ is implemented by $U_0$ with $p_0$ prep wires and $f_0$ flag wires, and assume that $E_1$ is implemented by $U_1$ with $p_1$ prep wires and $f_1$ flag wires.
	The following qubit circuit $U$ then implements $E_0 \otimes E_1$ with $p_0 + p_1$ prep wires and $f_0 + f_1$ flag wires: 

	\[
	\begin{quantikz}
		\lstick{$\size(\cH_0)$} & \qw & \gate[2]{U_0} & \qw & \rstick{$\size(\cH_0')$} \qw \\
		\lstick{$\size(\cH_1)$} & \permute{2,1} & & \permute{2,1} & \rstick{$\size(\cH_1')$} \qw \\
		\lstick{$p_0$} & & \permute{2,1} & & \rstick{$f_0$} \qw \\
		\lstick{$p_1$} & \qw & & \qw & \rstick{$f_1$} \qw \\
	\end{quantikz}
	\]

	\begin{alignat*}{2}
		&&\;& \bra{\enc(v_0'), \enc(v_1'), 0, 0} U \ket{\enc(v_0), \enc(v_1), 0, 0} \\
		&=&& \bra{\enc(v_0'), 0, \enc(v_1'), 0} (U_0 \otimes U_1) \ket{\enc(v_0), 0, \enc(v_1), 0} \\
		&=&& \bra{\enc(v_0'), 0} U_0 \ket{\enc(v_0), 0} \cdot \bra{\enc(v_1'), 0} U_1 \ket{\enc(v_1), 0} \\
		&=&& \bra{v_0'} E_0 \ket{v_0} \cdot \bra{v_1'} E_1 \ket{v_1} \\
		&=&& \bra{v_0', v_1'} (E_0 \otimes E_1) \ket{v_0, v_1}
	\end{alignat*}
\end{proof}

\begin{lemma}
	\label{lem:compose}
	Suppose it is possible to implement the operators $E_0 : \cH_0 \to \cH'$ and $E_1 : \cH' \to \cH_1$.
	Then it is possible to implement the operator $E_1 E_0 : \cH_0 \to \cH_1$.
\end{lemma}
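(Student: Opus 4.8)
The plan is to verify the three conditions of Definition~\ref{def:possible_to_implement_op} for the circuit $U$ obtained by chaining the two given circuits. Write $\cH_0 = \cH(T_0)$, $\cH' = \cH(T')$, and $\cH_1 = \cH(T_1)$ for the corresponding types, and suppose $E_0$ is implemented by $U_0$ with $p_0$ prep wires and $f_0$ flag wires, and $E_1$ by $U_1$ with $p_1$ prep wires and $f_1$ flag wires. I would route the $\size(T')$-qubit output register of $U_0$ (carrying an encoding of $T'$) directly into the encoding input of $U_1$, supply $p_1$ fresh prep wires for $U_1$, and leave the $f_0$ flag wires of $U_0$ untouched while $U_1$ runs. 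The composite uses $p_0 + p_1$ prep wires and $f_0 + f_1$ flag wires; the wire counts are consistent because $\size(T_0) + p_0 = \size(T') + f_0$ and $\size(T') + p_1 = \size(T_1) + f_1$ sum to $\size(T_0) + (p_0 + p_1) = \size(T_1) + (f_0 + f_1)$.

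For the correctness condition (first bullet), I would insert a resolution of the identity $\sum_{w \in \VV(T')} \op{\enc(w)}{\enc(w)}$ on the intermediate register, restricted to valid encodings. Projecting the $f_0$ register onto $\ket{0^{\otimes f_0}}$ selects exactly the amplitudes $\bra{\enc(w), 0^{\otimes f_0}} U_0 \ket{\enc(v_0), 0^{\otimes p_0}} = \bra{w} E_0 \ket{v_0}$ by condition~1 for $U_0$, each weighted by $\bra{\enc(v_1'), 0^{\otimes f_1}} U_1 \ket{\enc(w), 0^{\otimes p_1}} = \bra{v_1'} E_1 \ket{w}$ by condition~1 for $U_1$. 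Summing over $w$ collapses $\sum_{w} \op{w}{w} = \II_{\cH'}$ and yields $\bra{v_1'} E_1 E_0 \ket{v_0}$, as required.

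For the validity conditions (second and third bullets), the one genuinely non-mechanical point is that condition~2 for $U_0$ guarantees $U_0 \ket{\enc(v_0), 0^{\otimes p_0}} \in \WW(T') \otimes \CC^{2^{f_0}}$, so the intermediate state is supported only on valid encodings $\ket{\enc(w)}$ with $w \in \VV(T')$. This is precisely what permits me to invoke condition~2 for $U_1$ term by term on each $\ket{\enc(w)}$ and conclude $U \ket{\enc(v_0), 0^{\otimes (p_0 + p_1)}} \in \WW(T_1) \otimes \CC^{2^{f_0 + f_1}}$. The adjoint direction is symmetric: running $U\adj$ applies $U_1\adj$ first, and condition~3 for $U_1$ places $U_1\adj \ket{\enc(v_1'), 0^{\otimes f_1}} \in \WW(T') \otimes \CC^{2^{p_1}}$, again feeding only valid encodings into $U_0\adj$, whence condition~3 for $U_0$ gives membership in $\WW(T_0) \otimes \CC^{2^{p_0 + p_1}}$.

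I expect the main obstacle to be conceptual rather than computational, namely the encoding-validity bookkeeping that the paper identifies as the gap in the original construction of \citeauthor{Voichick_2023}. If $U_0$ were merely correct on the encoded subspace (condition~1) without the validity guarantee (condition~2), its output could carry amplitude onto invalid bitstrings of the intermediate register, where $U_1$'s behavior is entirely unconstrained by Definition~\ref{def:possible_to_implement_op}; both the correctness sum and the validity claims would then fail. Thus the step requiring care is justifying the restriction of the intermediate resolution of the identity to $w \in \VV(T')$ (rather than summing over all computational basis states of that register), which is exactly where the augmented definitions of \Cref{def:possible_to_implement_op} earn their keep.
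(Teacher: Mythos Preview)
Your proposal is correct and follows the same construction as the paper: chain $U_0$ into $U_1$ with $p_0+p_1$ prep and $f_0+f_1$ flag wires, and verify condition~1 by inserting a resolution of the identity on the intermediate $\cH(T')$ register. You are in fact more explicit than the paper here: the paper only writes out the condition~1 computation (silently restricting the intermediate sum from all bitstrings to valid encodings) and states at the start of \Cref{app:low_level_compilation} that the validity conditions are ``clear'' for composition, whereas you spell out exactly how conditions~2 and~3 for $U_0$ and $U_1$ propagate through the composite and justify that restriction.
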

\begin{proof}
	Assume $E_0$ is implemented by $C(E_0) = U_0$ with $p_0$ prep wires and $f_0$ flag wires, and assume that $E_1$ is implemented by $C(E_1) = U_1$ with $p_1$ prep wires and $f_1$ flag wires.
	Let $B'$ be a basis for $\cH'$.
	The following qubit circuit $U$ then implements $E_1 E_0$ with $p_0 + p_1$ prep wires and $f_0 + f_1$ flag wires: 

	\[
	\begin{quantikz}
		\lstick{$\size(\cH_0)$} & \gate[2]{U_0} & \qwbundle{\size(\cH')} & \gate[2]{U_1} & \rstick{$\size(\cH_1)$} \qw \\
		\lstick{$p_0$} & & \permute{2,1} & & \rstick{$f_1$} \qw \\
		\lstick{$p_1$} & \qw & & \qw & \rstick{$f_0$} \qw
	\end{quantikz}
	\]

	\begin{alignat*}{2}
		&&\;& \bra{\enc(v_1), 0, 0} U \ket{\enc(v_0), 0, 0} \\
		&=&& \bra{\enc(v_1), 0, 0} (U_1 \otimes \mathbb{I}) (\mathbb{I} \otimes \textsc{swap}) (U_0 \otimes \mathbb{I}) \ket{\enc(v_0), 0, 0} \\
		&=&& \left(\bra{\enc(v_1), 0} U_1 \otimes \bra{0}\right) (\mathbb{I} \otimes \textsc{swap}) \left(U_0\ket{\enc(v_0),0} \otimes \ket{0}\right) \\
		&=&& \left(\bra{\enc(v_1), 0} U_1\right) (\mathbb{I} \otimes \op{0}{0}) \left(U_0\ket{\enc(v_0),0}\right) \\
		&=&& \sum_{b \in \{\zero, \one\}^{\size(\cH')}} \bra{\enc(v_1), 0} U_1 \op{b,0}{b,0} U_0\ket{\enc(v_0), 0} \\
		&=&& \sum_{\ket{v'} \in B'} \bra{\enc(v_1), 0} C(E_1) \op{\enc(v'), 0}{\enc(v'), 0} C(E_0) \ket{\enc(v_0), 0} \\
		&=&& \sum_{\ket{v'} \in B'} \bra{v_1} E_1 \op{v'}{v'} E_0 \ket{v_0} \\
		&=&& \bra{v_1} E_1 \left( \sum_{v' \in B'} \op{v'}{v'} \right) E_0 \ket{v_0} \\
		&=&& \bra{v_1} E_1 \mathbb{I} E_0 \ket{v_0} \\
		&=&& \bra{v_1} E_1 E_0 \ket{v_0} \\
	\end{alignat*}

\end{proof}

The adjoint is another useful construction that we will need in~\Cref{app:high_level_compilation}, motivating the following lemma:
\begin{lemma}
	\label{lem:compile-adj}
	Suppose it is possible to implement the operator $E : \cH \to \cH'$.
	Then it is possible to implement the operator $E^\dagger : \cH' \to \cH$.
\end{lemma}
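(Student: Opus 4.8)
The plan is to take the circuit $U$ witnessing that $E$ is implementable and simply run it backwards: the adjoint $U^\dagger$ will witness that $E^\dagger$ is implementable, once the roles of the prep and flag registers are swapped. Concretely, suppose $U : \CC^{2^{\size(T)+p}} \to \CC^{2^{\size(T')+f}}$ is a unitary implementing $E : \cH(T) \to \cH(T')$ as in \Cref{def:possible_to_implement_op}, with $p$ prep wires and $f$ flag wires. Since the gate set of a unitary low-level circuit (identity, $U_3$, global phase, swap, and controlled gates) is closed under the adjoint, reversing the gate order and replacing each gate by its adjoint yields a low-level circuit implementing $V := U^\dagger : \CC^{2^{\size(T')+f}} \to \CC^{2^{\size(T)+p}}$. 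I would then declare $V$ to use $f$ prep wires and $p$ flag wires, making it a candidate implementation of $E^\dagger : \cH(T') \to \cH(T)$ (the dimension match $\size(T)+p = \size(T')+f$ being automatic from $U$ being unitary).

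Next I would verify the three conditions of \Cref{def:possible_to_implement_op} for $V$ and $E^\dagger$. The matrix-element condition follows by conjugating the corresponding identity for $U$:
\[
\bra{\enc(v),0^{\otimes p}} V \ket{\enc(v'),0^{\otimes f}}
= \overline{\bra{\enc(v'),0^{\otimes f}} U \ket{\enc(v),0^{\otimes p}}}
= \overline{\bra{v'}E\ket{v}} = \bra{v} E^\dagger \ket{v'},
\]
for all $v \in \VV(T)$ and $v' \in \VV(T')$. The remaining two encoding-validity conditions for $V$ are exactly those for $U$ with the forward and adjoint directions interchanged: the forward condition $V\ket{\enc(v'),0^{\otimes f}} = U^\dagger\ket{\enc(v'),0^{\otimes f}} \in \WW(T) \otimes \CC^{2^p}$ is precisely the third bullet of \Cref{def:possible_to_implement_op} for $U$, while the adjoint condition $V^\dagger\ket{\enc(v),0^{\otimes p}} = U\ket{\enc(v),0^{\otimes p}} \in \WW(T') \otimes \CC^{2^f}$ is precisely the second bullet.

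The heart of the argument — and the only place where the augmented definition matters — is that \Cref{def:possible_to_implement_op} was deliberately stated with two symmetric encoding-validity conditions, one for $U$ and one for $U^\dagger$. Had only the forward direction been required to preserve valid encodings (as in the original definition of \citeauthor{Voichick_2023}), the adjoint construction would give no guarantee that $V = U^\dagger$ keeps its inputs inside $\WW(T)\otimes\CC^{2^p}$, and the lemma would fail. Thus I expect the main obstacle to be conceptual rather than computational: recognizing that taking the adjoint merely interchanges the two validity conditions, so that no genuinely new work is needed beyond observing this symmetry and the closure of the gate set under adjoints.
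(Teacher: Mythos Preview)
Your proposal is correct and matches the paper's approach exactly: reverse the circuit to obtain $U^\dagger$, swap the roles of prep and flag wires, and verify the matrix-element condition by conjugation. If anything, you are more careful than the paper, which (as noted in the preamble to \Cref{app:low_level_compilation}) leaves the encoding-validity conditions implicit, whereas you explicitly check that the two bullets of \Cref{def:possible_to_implement_op} simply interchange under the adjoint.
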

\begin{proof}
	Assume $E$ is implemented by $U$ with $p$ prep wires and $f$ flag wires.
	Given the circuit for $U$, one can construct a circuit for $U^\dagger$ in the standard way: by taking the adjoint of each gate in the circuit and reversing the order.
	This circuit $U^\dagger$ then implements $E^\dagger$ with $f$ prep wires and $p$ flag wires.
	\[
	\begin{quantikz}
		\lstick{$\size(\cH')$} & \gate[2]{U^\dagger} & \rstick{$\size(\cH)$} \qw \\
		\lstick{$f$} & & \rstick{$p$} \qw \\
	\end{quantikz}
	\]
	\begin{alignat*}{2}
		&&\;& \bra{\enc(v'), 0} U^\dagger \ket{\enc(v), 0} \\
		&=&& \left(\bra{\enc(v), 0} U \ket{\enc(v'), 0}\right)^{*} \\
		&=&& \left(\bra{v} E \ket{v'}\right)^{*} \\
		&=&& \bra{v'} E^\dagger \ket{v}
	\end{alignat*}
\end{proof}

\begin{lemma}
	Suppose it is possible to implement the operators $E_0 : \cH_0 \to \cH_0'$ and $E_1 : \cH_1 \to \cH_1'$.
	Then it is possible to implement the operator $E_0 \oplus E_1 : \cH_0 \oplus \cH_1 \to \cH_0' \oplus \cH_1'$.
\end{lemma}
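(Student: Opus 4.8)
The plan is to observe that this abstract statement is, in the compilation setting, identical to the concrete direct-sum construction already established in \Cref{lem:dirsum_possible}. Every Hilbert space occurring in \Cref{def:possible_to_implement_op} is of the form $\cH(T)$ for a Qunity type $T$, since the notion of implementability is defined through the encoding $\enc$ of values $v \in \VV(T)$. Writing $\cH_0 = \cH(T_0)$, $\cH_1 = \cH(T_1)$, $\cH_0' = \cH(T_0')$, and $\cH_1' = \cH(T_1')$, the definition of $\cH$ on sum types gives $\cH_0 \oplus \cH_1 = \cH(T_0 \oplus T_1)$ and $\cH_0' \oplus \cH_1' = \cH(T_0' \oplus T_1')$, with the combined encoding being exactly the sum-type encoding of \Cref{def:valid_enc_space}: a leading signal bit selecting the summand, followed by the encoding of the injected value padded to $\max\{\size(T_0),\size(T_1)\}$ bits. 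Thus implementing $E_0 \oplus E_1$ in the sense of \Cref{def:possible_to_implement_op} is exactly the content of \Cref{lem:dirsum_possible}, and I would simply invoke that lemma.

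Spelled out, the construction proceeds as follows. First I would reduce without loss of generality to the case $\size(T_0) \geq \size(T_1)$ by conjugating with the commutativity isomorphism $\cH_0 \oplus \cH_1 \cong \cH_1 \oplus \cH_0$, which is implemented by a single Pauli $X$ on the signal qubit and evidently preserves all the required conditions. I would then branch on the signs of $\size(T_0') - \size(T_1')$ and of $p_0 - p_1$ (the prep counts of the given implementations $U_0, U_1$), giving the four cases whose circuits are drawn in \Cref{fig:dirsum_improved_case1,fig:dirsum_improved_case2,fig:dirsum_improved_case3,fig:dirsum_improved_case4}. In each case there are three conditions from \Cref{def:possible_to_implement_op} to verify: the matrix-element identity $\bra{\enc(v'),0}U\ket{\enc(v),0} = \bra{v'}(E_0 \oplus E_1)\ket{v}$, which I would check by splitting into the four blocks indexed by the two signal values at input and output and collapsing each through the repartitioning subcircuit $P^{(i)}$ to the corresponding matrix element of $U_0$ or $U_1$; that $U$ sends valid encodings into $\WW(T_0' \oplus T_1') \otimes \CC^{2^f}$; and the same image condition for $U^\dagger$.

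The hard part — and the only genuinely new content beyond the tensor, composition, and adjoint lemmas of this section — is the pair of encoding-validity conditions on $U$ and $U^\dagger$. Everything else is routine linear algebra, but one must verify that no execution path, forward or backward, can populate a non-standard bitstring outside the valid-encoding subspace. The key bookkeeping is that when a valid $T_1$-encoding enters, the high-order value wires of size $\size(T_0) - \size(T_1)$ are fed only from input wires already known to be $\ket{0}$ and from prep wires, so they stay in $\ket{0}$; the symmetric observation for the output wires of size $\size(T_0') - \size(T_1')$ and for the adjoint run (where flag and prep registers swap roles) then closes the argument. Since this is precisely the case-by-case encoding-validity analysis carried out in the proof of \Cref{lem:dirsum_possible}, no work beyond citing that lemma is actually needed.
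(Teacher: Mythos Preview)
Your proposal is correct and takes exactly the same approach as the paper: the paper's proof of this lemma is literally the single sentence ``See \Cref{lem:dirsum_possible},'' and your additional identification of $\cH_i$ with $\cH(T_i)$ and sketch of the case analysis just spells out what that cross-reference contains.
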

\begin{proof}
	See \Cref{lem:dirsum_possible}.
\end{proof}

It is standard to use the tensor product monoidally, implicitly using vector space isomorphisms $(\cH_1 \otimes \cH_2) \otimes \cH_3 \approx \cH_1 \otimes (\cH_2 \otimes \cH_3)$ and $\CC \otimes \cH \approx \cH \approx \cH \otimes \CC$.
Under the most general definition of the tensor product, these are isomorphisms rather than strict equality \cite[Chapter 14]{advanced-linear-algebra}, but these isomorphisms can typically be used implicitly.
Our encoding ensures that these implicit isomorphisms do not require any low-level gates because $\enc(\pair{\pair{v_1}{v_2}}{v_3}) = \enc(\pair{v_1}{\pair{v_2}{v_3}})$ and $\enc(\pair{\unit}{v}) = \enc(v) = \enc(\pair{v}{\unit})$.

However, the associativity and distributivity isomorphisms for the direct sum \emph{do} correspond to different encodings, so we must show how to implement them. Note that the additive unit isomorphisms were already implemented in Lemma~\ref{lem:compile-inj}. They are $\lef{T}{\Void}$ and $\rit{\Void}{T}$, and their inverses are their adjoints as compiled in Lemma~\ref{lem:compile-adj}. The following lemma demonstrates how to compile the associativity isomorphism:

\begin{lemma}
	Let $T_1$, $T_2$, and $T_3$ be arbitrary types.
	Then it is possible to implement the direct sum's associativity isomorphism $\textsc{assoc} : (\cH(T_1) \oplus \cH(T_2)) \oplus \cH(T_3) \to \cH(T_1) \oplus (\cH(T_2) \oplus \cH(T_3))$.
\end{lemma}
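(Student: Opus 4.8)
The plan is to construct the circuit directly, in the same style as \Cref{lem:compile-inj} and \Cref{lem:dirsum_possible}, rather than trying to assemble $\textsc{assoc}$ from the coproduct injections. The obstruction to a purely compositional argument is that $\textsc{assoc}$ is essentially a ``fold'' that identifies the three summands $\cH(T_1),\cH(T_2),\cH(T_3)$ across two different bracketings, and combining maps that share a common codomain is not one of the primitive operations (tensor, composition, direct sum) established above; any attempt to reduce it to those primitives stalls on exactly this conditional merge. So first I would set $s_1=\size(T_1)$, $s_2=\size(T_2)$, $s_3=\size(T_3)$ and write out, using \Cref{def:valid_enc_space}, the explicit bit layout of the encodings of the three families of values (those injected from $T_1$, from $T_2$, and from $T_3$) in both the source type $(T_1\oplus T_2)\oplus T_3$ and the target type $T_1\oplus(T_2\oplus T_3)$. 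The key observation is that $\textsc{assoc}$ acts as the identity on the ``payload'' qubits carrying $\enc(v_i)$ and only relocates the one or two tag qubits together with the blocks of zero padding; the circuit is therefore nothing more than a wire permutation, plus prep wires that supply padding where the target encoding is larger and flag wires that absorb padding where it is smaller, all conditioned on the source tag qubit(s).

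Concretely, I would branch on the source's outer tag qubit in the same way the direct-sum circuit branches on its ``signal'' qubit. On the branch corresponding to $\cH(T_1)\oplus\cH(T_2)$ the two source tag bits are rewritten by $(0,0)\mapsto(0)$ and $(0,1)\mapsto(1,0)$ into the target tag layout while the payload and padding are permuted into position; on the branch corresponding to $\cH(T_3)$ the single tag bit is rewritten $(1)\mapsto(1,1)$ and the payload repositioned similarly. Because the padding lengths are governed by $\max\{s_i,s_j\}$ terms that depend on which type is largest, I expect a short case split (analogous to the four cases of \Cref{lem:dirsum_possible}) to pin down the exact permutation and the numbers of prep and flag wires; in each case these counts are forced by the identity $\size\big((T_1\oplus T_2)\oplus T_3\big)+p=\size\big(T_1\oplus(T_2\oplus T_3)\big)+f$.

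Finally I would verify the conditions of \Cref{def:possible_to_implement_op}. For the matrix-element condition I compute $\bra{\enc(w')}U\ket{\enc(w),0}$ with $w,w'$ ranging over the three families and check that it equals $\bra{w'}\,\mathrm{assoc}\,\ket{w}$: the three ``diagonal'' families give the correct inner product because payload and rewritten tags match, while all cross terms vanish because the rewritten tag bits disagree, exactly as in \Cref{lem:compile-inj}. For encoding validity I check that any valid source encoding is sent into the valid-encoding subspace of the target with the flag wires zeroed, and symmetrically for $U\adj$ (invoking \Cref{lem:compile-adj}); this holds because the only qubits ever moved into a target padding slot are themselves source padding qubits, hence already in $\ket{0}$. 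I expect the main obstacle to be not any single computation but the bookkeeping of the size cases together with making the permutation manifestly validity-preserving in \emph{both} directions, which is precisely the subtlety that motivated \Cref{def:possible_to_implement_op} and demanded the most care in \Cref{lem:dirsum_possible}.
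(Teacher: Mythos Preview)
Your plan is correct and matches the paper's high-level strategy: build the circuit directly at the bit level, branching on the tag qubit(s) and shuffling the payload/padding into the target layout. Where you diverge is in the bookkeeping. You anticipate a case split on the relative sizes $s_1,s_2,s_3$ in the style of \Cref{lem:dirsum_possible}. The paper avoids this entirely by padding uniformly to $n_{\max}=\max\{s_1,s_2,s_3\}$: with $p$ prep and $f$ flag wires chosen so that the working register has width $2+n_{\max}$, a single circuit handles all size orderings. Concretely the paper applies a controlled cyclic right-shift $\textsc{rsh}$ (controlled on the outer tag), then a \textsc{cnot} from the outer tag into the new top bit, a swap of the top two bits, and an anti-controlled cyclic left-shift $\textsc{lsh}$; one then checks the three encoding families $\ket{0,0,\enc(v_1),0}\mapsto\ket{0,\enc(v_1),0}$, $\ket{0,1,\enc(v_2),0}\mapsto\ket{1,0,\enc(v_2),0}$, $\ket{1,\enc(v_3),0}\mapsto\ket{1,1,\enc(v_3),0}$ directly, with no further case analysis.

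Your route would work but is heavier than necessary; the uniform-padding trick buys a single circuit and a three-line behavioural check. On the other hand, you are more explicit than the paper about verifying the encoding-validity clauses of \Cref{def:possible_to_implement_op} in both directions, which the paper leaves implicit (it is immediate once one sees that the circuit is a classical permutation plus an $X$ on padding that is already zero). Either way the argument goes through.
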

\begin{proof}
	Define the following integers:
	\begin{align*}
		n\subcap{max} &= \max\{\size(T_1), \size(T_2), \size(T_3)\} \\
		p &= n\subcap{max} - \max\{\max\{\size(T_1), \size(T_2)\}, \size(T_3) - 1\} \\
		f &= n\subcap{max} - \max\{\size(T_1) - 1, \max\{\size(T_2), \size(T_3)\}
	\end{align*}
	We can implement ``shifting'' operations \textsc{rsh} and \textsc{lsh} via a series of swap gates that enacts a rotation permutation, so that $\textsc{rsh} \ket{\psi_1, \psi_2, \ldots, \psi_{n-1}, \psi_n} = \ket{\psi_n, \psi_1, \psi_2, \ldots, \psi_{n-1}}$ and $\textsc{lsh} = \textsc{rsh}^{-1}$.
	The following qubit circuit $U$ then implements the associator with the $p$ prep wires and $f$ flag wires.
	\[
	\begin{quantikz}
		\lstick{1} & \ctrl{1} & \qw & \ctrl{1} & \permute{2,1} & \qw & \octrl{1} & \rstick{1} \qw \\
		\lstick{$1 + n\subcap{max}$} & \gate[2]{\textsc{rsh}} & \qwbundle{1} & \targ{} & & \qwbundle{1} & \gate[2]{\textsc{lsh}} & \rstick{$1 + n\subcap{max}$} \qw \\
         & & \qwbundle{n\subcap{max}} & \qw & \qw & \qw & \qw & &
	\end{quantikz}
	\]
	We can show that this circuit has the desired behavior:
	\begin{alignat*}{2}
		&&\;& \ket{\enc\left(\lef{(T_1 \oplus T_2)}{T_3} (\lef{T_1}{T_2} v) \right), 0} \\
		&=&& \ket{0, 0, \enc(v), 0} \\
		&\mapsto^{*}&& \ket{0, \enc(v), 0, 0} \\
		&=&& \ket{\enc(\lef{T_1}{(T_2 \oplus T_3)} v), 0} \\
		&&\;& \ket{\enc\left(\lef{(T_1 \oplus T_2)}{T_3} (\rit{T_1}{T_2} v) \right), 0} \\
		&=&& \ket{0, 1, \enc(v), 0} \\
		&\mapsto^{*}&& \ket{1, 0, \enc(v), 0} \\
		&=&& \ket{\enc\left(\rit{T_1}{(T_2 \oplus T_3)} (\lef{T_2}{T_3} v)\right), 0} \\
		&&\;& \ket{\enc\left(\rit{(T_1 \oplus T_2)}{T_3} v\right), 0} \\
		&=&& \ket{1, \enc(v), 0} \\
		&\mapsto&& \ket{1, 0, \enc(v), 0} \\
		&\mapsto&& \ket{1, 1, \enc(v), 0} \\
		&=&& \ket{\enc\left(\rit{T_1}{(T_2 \oplus T_3)} (\rit{T_2}{T_3} v)\right), 0}
	\end{alignat*}
\end{proof}

We can now implement the monoidal isomorphisms for the tensor product and direct sum.
Both of these are \emph{symmetric} monoidal categories, and the swap maps are straightforward to implement, using swap gates to implement $T_0 \otimes T_1 \cong T_1 \otimes T_0$, and using a single Pauli-X gate on the indicator qubit to implement $T_0 \oplus T_1 \cong T_1 \oplus T_0$.
We will also need \emph{distributivity} of the tensor product over the direct sum, part of the definition of a \emph{bimonoidal} category \cite{symmetric-bimonoidal} (sometimes known as a ``rig category'' \cite{rig-programming}). 

\begin{lemma}
	Let $T$, $T_0$, and $T_1$ be arbitrary types.
	It is possible to implement the distributivity isomorphism $\textsc{distr} : \cH(T) \otimes (\cH(T_0) \oplus \cH(T_1)) \cong (\cH(T) \otimes \cH(T_0)) \oplus (\cH(T) \otimes \cH(T_1))$, which acts like $\ket{v} \otimes \left( \ket{v_0} \oplus \ket{v_1} \right) \mapsto \ket{v, v_0} \oplus \ket{v, v_1}$.
\end{lemma}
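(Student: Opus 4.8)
The plan is to show that $\textsc{distr}$ is realized by a pure permutation of wires, using neither prep nor flag qubits, so that verification reduces to tracking where each input basis encoding is sent. First I would compute the sizes of the two types. Writing $s = \size(T)$, $s_0 = \size(T_0)$, $s_1 = \size(T_1)$, and $m = \max\{s_0, s_1\}$, both the source type $T \otimes (T_0 \oplus T_1)$ and the target type $(T \otimes T_0) \oplus (T \otimes T_1)$ have size $s + 1 + m$, so there is hope of a size-preserving (prep-free, flag-free) implementation.

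Next I would write out the encoding layouts explicitly. For a left-injected input $\pair{v}{\lef{T_0}{T_1} v_0}$, the encoding is $\enc(v) \doubleplus \texttt{"0"} \doubleplus \enc(v_0) \doubleplus \texttt{"0"}^{m - s_0}$, so the tag bit distinguishing the two summands sits at position $s+1$, immediately after $\enc(v)$. On the target side, $\lef{(T \otimes T_0)}{(T \otimes T_1)} \pair{v}{v_0}$ encodes as $\texttt{"0"} \doubleplus \enc(v) \doubleplus \enc(v_0) \doubleplus \texttt{"0"}^{m - s_0}$, where the same tag bit now sits at position $1$, at the very front. The right-injected cases are identical, with the tag bit set to $\texttt{"1"}$ and $s_1$ replacing $s_0$. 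The crucial observation is that in both the left and right cases the only change between input and output encodings is that the tag qubit moves from position $s+1$ to position $1$, while $\enc(v)$ slides one position to the right and the tail of $m$ qubits (namely $\enc(v_0)$ or $\enc(v_1)$ together with its zero padding) is left untouched.

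This identifies $\textsc{distr}$ with a cyclic right-shift $\textsc{rsh}$ applied to the first $s + 1$ qubits, tensored with the identity on the remaining $m$ qubits; as in the associativity lemma, $\textsc{rsh}$ is a rotation permutation built from swap gates, so it uses $0$ prep and $0$ flag wires. Because the shift is \emph{unconditional}---it does not depend on the value of the tag---a single $\textsc{rsh}$ handles both summands simultaneously, which is what makes this construction simpler than the associator (no controlled shifts or CNOT are needed). I would then finish by verifying, for each of the two families of basis states, that $U \ket{\enc(w)} = \ket{\enc(\textsc{distr}(w))}$ for each input value $w$, in exactly the basis-state-tracing style used for the associativity isomorphism.

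The main thing to get right is the padding bookkeeping: I must confirm that the suffix after the first $s+1$ qubits really is identical in the input and output encodings, which hinges on the identity $\max\{s+s_0, s+s_1\} - (s+s_0) = m - s_0$ so that the target's outer padding length matches the padding already present inside the source encoding. Once this index alignment is checked, encoding validity is automatic: since $U$ is a permutation with no prep or flag wires, it maps basis encodings to basis encodings and hence cannot carry any state out of the valid-encoding subspace $\WW(\cdot)$. Beyond this index arithmetic, the argument is routine.
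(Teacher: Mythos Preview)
Your proposal is correct and takes essentially the same approach as the paper: the paper also implements $\textsc{distr}$ with no prep or flag wires via a single block-swap that moves the tag qubit in front of $\enc(v)$ while leaving the trailing $\max\{\size(T_0),\size(T_1)\}$ qubits untouched, and simply asserts that correctness is clear from the encoding. Your write-up just makes explicit the encoding arithmetic (in particular the padding identity $\max\{s+s_0,s+s_1\}-(s+s_0)=m-s_0$) that the paper leaves to the reader.
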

\begin{proof}
	This can be done with no prep or flag wires:
	\[
	\begin{quantikz}
		\lstick{$\size(T)$} & \permute{2,1} & \rstick{1} \qw \\
		\lstick{1} & & \rstick{$\size(T)$} \qw \\
		\lstick{$\max\{\size(T_0), \size(T_1)\}$} & \qw & \rstick{$\max\{\size(T_0), \size(T_1)\}$} \qw
	\end{quantikz}
	\]
	It should be clear from the value encoding that this circuit has the correct behavior.
\end{proof}
In the high-level circuits, we will often use some transformation of the \textsc{distr} construction, for example:
\begin{itemize}
	\item its adjoint $(\cH(T) \otimes \cH(T_0)) \oplus (\cH(T) \otimes \cH(T_1)) \cong \cH(T) \otimes (\cH(T_0) \oplus \cH(T_1))$,
	\item its composition with swaps $(\cH(T_0) \oplus \cH(T_1)) \otimes \cH(T) \cong (\cH(T_0) \otimes \cH(T)) \oplus (\cH(T_1) \otimes \cH(T))$,
	\item compositions of distributions $(\cH(T_{00}) \oplus \cH(T_{01})) \otimes (\cH(T_{10}) \oplus \cH(T_{11})) \cong (\cH(T_{00}) \otimes \cH(T_{10})) \oplus (\cH(T_{00}) \otimes \cH(T_{11})) \oplus (\cH(T_{01}) \otimes \cH(T_{10})) \oplus (\cH(T_{01}) \otimes \cH(T_{11}))$.
\end{itemize}
We will denote all of these as ``\textsc{distr},'' but the transformation being applied should always be clear from context.

Additional operators used in Qunity's intermediate representation include ``context partition'' and ``context merge'' operators. We maintain the invariant that when contexts are encoded into qubit registers, the encodings of the variables are stored in lexicographic order. Therefore, implementing the isomorphism $\cH(\Delta_0, \Delta_1) \rarr \cH(\Delta_0) \otimes \cH(\Delta_1)$ may require the use of SWAP gates. We use these operators implicitly in most of the constructions in \Cref{app:high_level_compilation}.

\begin{lemma}[pure error handling]
	\label{lem:pure-error}
	Suppose it is possible to implement a contraction $E : \cH \to \cH'$.
	Then it is possible to implement a norm-preserving operator $E\subcap{f} : \cH \to \cH' \oplus \cH\subcap{f}$ for some ``flag space'' $\cH\subcap{f}$ such that $\msem{\lef{T}{T'}}^\dagger E\subcap{f} = E$.
\end{lemma}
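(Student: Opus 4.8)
The plan is to build $E\subcap{f}$ directly from the circuit that already implements $E$, by reinterpreting its flag register as part of the output. First I would apply \Cref{def:possible_to_implement_op} to obtain a unitary $U : \CC^{2^{\size(T) + p}} \rarr \CC^{2^{\size(T') + f}}$ implementing $E$ with $p$ prep wires and $f$ flag wires, satisfying the three stated conditions. The key observation is that the map $\ket{v} \mapsto U\ket{\enc(v), 0^{\otimes p}}$ is an isometry on $\cH = \cH(T)$: the encodings $\{\ket{\enc(v)} : v \in \VV(T)\}$ are orthonormal since the bitstrings $\enc(v)$ are distinct, appending the prep register $\ket{0^{\otimes p}}$ preserves orthonormality, and the unitary $U$ preserves it as well. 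Concretely, $\bra{\enc(v), 0^{\otimes p}} U\adj U \ket{\enc(w), 0^{\otimes p}} = \ip{\enc(v)}{\enc(w)} = \delta_{v,w}$. By the second condition of \Cref{def:possible_to_implement_op}, this isometry lands in $\WW(T') \otimes \CC^{2^f}$.

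Next I would fix the identification that turns the codomain into a direct sum. Splitting $\CC^{2^f}$ into $\CC\ket{0^{\otimes f}}$ and its orthogonal complement gives
\[
\WW(T') \otimes \CC^{2^f} = \parens{\WW(T') \otimes \CC\ket{0^{\otimes f}}} \oplus \cH\subcap{f},
\]
where $\cH\subcap{f}$ denotes $\WW(T')$ tensored with the complement of $\ket{0^{\otimes f}}$. I would identify the first summand with $\cH' = \cH(T')$ (the flag-zero, ``no error'' subspace) and take the second as the flag space $\cH\subcap{f}$. Then I define $E\subcap{f} : \cH \rarr \cH' \oplus \cH\subcap{f}$ to be the isometry above, implemented by the circuit $U$ with its flag register relabeled as output, so that $E\subcap{f}$ uses $p$ prep wires and no flag wires. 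It is norm-preserving by the isometry computation of the previous step, and it preserves encoding validity since its image lies in $\WW(T') \otimes \CC^{2^f}$, exactly the valid-encoding space of the combined output register.

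Finally I would verify $\msem{\lef{T}{T'}}\adj E\subcap{f} = E$. The left injection into $\cH' \oplus \cH\subcap{f}$ includes $\cH'$ as the flag-zero subspace, so its adjoint is the orthogonal projection onto $\WW(T') \otimes \ket{0^{\otimes f}}$. For all $v \in \VV(T)$ and $v' \in \VV(T')$,
\[
\bra{v'} \msem{\lef{T}{T'}}\adj E\subcap{f} \ket{v} = \bra{\enc(v'), 0^{\otimes f}} U \ket{\enc(v), 0^{\otimes p}} = \bra{v'} E \ket{v},
\]
where the last equality is the first condition of \Cref{def:possible_to_implement_op}; since this holds on a basis, $\msem{\lef{T}{T'}}\adj E\subcap{f} = E$ as required.

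The main obstacle I expect is reconciling the ``flag-zero versus flag-nonzero'' splitting used here with the signal-qubit convention of the standard direct-sum encoding of \Cref{def:valid_enc_space}, where left and right injections are distinguished by a leading bit as in \Cref{lem:compile-inj}. The cleanest resolution is to keep $\cH\subcap{f}$ \emph{abstract} (the statement only requires \emph{some} flag space) and to realize $\msem{\lef{T}{T'}}$ as the inclusion of the flag-zero subspace rather than insisting on a concrete type $T\subcap{f}$ with its canonical encoding; this keeps the construction gate-free, as the flag register is merely reinterpreted, while still exhibiting an honest orthogonal direct-sum decomposition of the reachable output space. A secondary care point is the adjoint direction of encoding validity for $E\subcap{f}$, which follows because the flag-zero outputs coincide with valid encodings of $T'$ and the third condition of \Cref{def:possible_to_implement_op} for $U$ controls $U\adj$ on exactly those states.
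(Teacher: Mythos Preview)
Your approach is correct and shares the core idea with the paper: both reinterpret the flag register of the implementing unitary $U$ as part of the output so that the norm-preserving map $\ket{v}\mapsto U\ket{\enc(v),0^{\otimes p}}$ becomes $E\subcap{f}$, and both verify $\msem{\texttt{left}}^\dagger E\subcap{f}=E$ via the first condition of \Cref{def:possible_to_implement_op}.

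The difference is precisely the obstacle you flag. You resolve it by keeping $\cH\subcap{f}$ abstract and letting the left injection be the inclusion of the flag-zero subspace, which makes the construction gate-free. The paper instead makes the direct-sum encoding concrete: it spends one extra prep qubit as a dedicated signal bit and sets it via a NOT controlled on the OR of the flag wires, so the output literally carries the leading-bit encoding of $\cH'\oplus\cH\subcap{f}$ used elsewhere (e.g., in the compilation of \textsc{T-Rphase}, where one then applies $e^{ir}\mathbb{I}_\Delta\oplus e^{ir'}\mathbb{I}\subcap{f}$ as an ordinary direct-sum circuit). Your version is cleaner mathematically but would require callers of the lemma to work with a nonstandard left injection; the paper's version costs one qubit and a multi-controlled gate but composes directly with the standard direct-sum constructions in \Cref{app:low_level_compilation}.
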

\begin{proof}
	Assume that $E$ is implemented by the unitary $U$ with $p$ prep wires and $f$ flag wires. Then, since we assume encoding validity to be preserved, we can write
	\[
		U\ket{\enc(v), \zero} = \left(\sum_{v' \in \VV(T')} \bra{v'}E\ket{v} \ket{\enc(v'), \zero}\right) + \ket{\psi_{v}}
	\]
	for some $\ket{\psi_v} \in \cH\subcap{f}$ with $(\mathbb{I}\otimes \op{\zero}{\zero})\ket{\psi_v} = 0$.
	We use $E\subcap{f}$ defined such that 
	\[
		E\subcap{f}\ket{v} = \left(\sum_{v' \in \VV(T')} \bra{v'}E\ket{v} \ket{v'}\right) \oplus \ket{\psi_{v}} = E\ket{v} \oplus \ket{\psi_{v}}.
	\]

	This definition ensures that $\msem{\lef{T}{T'}}^\dagger E\subcap{f}\ket{v} = \msem{\lef{T}{T'}}^\dagger \left(E\ket{v} \oplus \ket{\psi_{v}}\right) = E\ket{v}$.
	See that $\norm{E\ket{v}} = \norm{U\ket{\enc{v}, \zero}} = 1$, so $E\subcap{f}$ is norm-preserving.

	The following circuit implements $E\subcap{f}$ with $p+1$ prep wires and no flag wires.
	\[
	\begin{quantikz}
		\lstick{$\size(\cH)$} & \permute{2,1} & \qw & \hphantomgate{} & \targ{} & \rstick{1} \qw \\
		\lstick{1} & & \gate[2]{U} & \qwbundle{\size(\cH')} & \qw &  \rstick{$\size(\cH')$} \qw \\
		\lstick{$p$} & \qw & & \qwbundle{f} & \ctrl{-2} & \rstick{$f$} \qw
	\end{quantikz}
	\]
	\begin{align*}
		\ket{\enc(v), 0, 0} &\mapsto \ket{0, \enc(v), 0} \\
        &\mapsto \ket{0} \otimes U\ket{\enc(v), 0} \\
        &= \left(\sum_{v' \in \VV(T')} \bra{v'}E\ket{v} \ket{\zero, \enc{v'}, \zero}\right) + \ket{\zero, \psi_{v}} \\
        &\mapsto \left(\sum_{v' \in \VV(T')} \bra{v'}E\ket{v} \ket{\zero, \enc{v'}, \zero}\right) + \ket{\one, \psi_{v}} \\
		\bra{0, v', 0} C(E\subcap{f}) \ket{\enc(v), 0, 0} &= \bra{0, v', 0} \left(\sum_{v' \in \VV(T')} \bra{v'}E\ket{v} \ket{\zero, \enc{v'}, \zero}\right) \\
            &= (\bra{v'} \oplus 0) E\subcap{f}\ket{v} \\
		\bra{1, v'} C(E\subcap{f}) \ket{\enc(v), 0, 0} &= \ip{\one, v'}{\one, \psi_{v}} \\
         &= (0 \oplus \bra{v'}) E\subcap{f}\ket{v}
	\end{align*}
\end{proof}

\begin{lemma}[mixed error handling]
	\label{lem:mixed-error}
Suppose it is possible to implement the completely positive trace-non-increasing linear superoperator $\mathcal E \in \cL(\cL(\cH), \cL(\cH'))$.
Then, it is possible to implement a completely positive trace-preserving linear superoperator $\cptp(\mathcal{E}) \in \cL(\cL(\cH), \cL(\cH' \oplus \CC))$ such that for all $\rho \in \cL(\cH)$,
\[
	\cptp(\mathcal{E})(\rho) = \mathcal{E}(\rho) \oplus (\tr(\rho) - \tr(\mathcal{E}(\rho))).
\]
\end{lemma}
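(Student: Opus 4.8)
The plan is to imitate the pure error-handling construction (Lemma~\ref{lem:pure-error}) one level up, at the level of superoperators: the extra summand $\CC = \cH(\Unit)$ in the target serves as a single ``error flag value'' into which all of the lost trace $\tr(\rho) - \tr(\mathcal{E}(\rho))$ is coherently collected and then dephased through the garbage register. Concretely, by Definition~\ref{def:possible_to_implement_superop} we are handed a unitary $U : \CC^{2^{\size(T)+p}} \to \CC^{2^{\size(T')+f+g}}$ implementing $\mathcal{E}$, with prep register of size $p$, flag register of size $f$, and garbage register of size $g$. I would build a new unitary $U'$ implementing $\cptp(\mathcal{E})$ as follows: (i) prepend a fresh signal qubit in state $\ket{0}$, the indicator qubit of the output sum $\cH(T'\oplus\Unit)$; (ii) apply $U$ to the remaining wires; (iii) set the signal qubit to the logical OR of the $f$ flag qubits (a multi-(anti)controlled $X$, available from the low-level gate set); and (iv) controlled on the signal qubit being $\ket{1}$, swap the $\size(T')$-qubit output register into a fresh garbage register of the same size, resetting the output register to $\ket{0^{\otimes\size(T')}}$. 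I would declare $p' = p + 1 + \size(T')$, flag size $f' = 0$, and garbage size $g' = f + g + \size(T')$; the dimensions match because $\size(T)+p = \size(T')+f+g$ for $U$.

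Next I would trace the action on a basis input $\ket{\enc(v),0^{\otimes p}}$ through the four steps, writing $U\ket{\enc(v),0^{\otimes p}} = \sum_{v',c,d} \gamma^{v}_{v',c,d}\ket{\enc(v'),c,d}$, where the $\cH(T')$-part is always a valid encoding by the encoding-validity clause for $U$. The outcome is that every flag-zero branch ($c=0$) ends in output register $\ket{\enc(\lef{T'}{\Unit}v')}$ with signal $0$, while every flag-nonzero branch ($c\neq0$) ends in the single canonical error encoding $\ket{\enc(\rit{T'}{\Unit}\unit)} = \ket{1,0^{\otimes\size(T')}}$, with the former output content $\enc(v')$ pushed into the fresh garbage. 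This gives encoding validity of $U'$ (every branch lands in $\WW(T'\oplus\Unit)\otimes\CC^{2^{g'}}$), and the swap in step (iv) is exactly what prevents the illegal output $\ket{1,\enc(v')}$ in the error case.

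Finally I would verify the defining identity of Definition~\ref{def:possible_to_implement_superop} (with $f'=0$) case by case on $v_1',v_2' \in \VV(T'\oplus\Unit)$. The $\lef$--$\lef$ case reproduces $\bra{w_1}\mathcal{E}(\op{v_1}{v_2})\ket{w_2}$ since only the $c=0$ branches contribute and the formula for $\mathcal{E}$ is precisely the $c=0$, traced-over-$d$ sum; the mixed $\lef$--$\rit$ cases vanish on both sides (block-diagonality of $\mathcal{E}(\rho)\oplus(\cdots)$ matches the fact that the $c=0$ and $c\neq0$ garbage labels can never coincide). The crux is the $\rit$--$\rit$ case, where the traced circuit expression becomes $\sum_{v',c\neq0,d}\gamma^{v_1}_{v',c,d}(\gamma^{v_2}_{v',c,d})^{*}$, which must equal $\tr(\op{v_1}{v_2}) - \tr(\mathcal{E}(\op{v_1}{v_2}))$. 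Here unitarity of $U$ does the real work: $\sum_{v',c,d}\gamma^{v_1}_{v',c,d}(\gamma^{v_2}_{v',c,d})^{*} = \ip{\enc(v_1),0^{\otimes p}}{\enc(v_2),0^{\otimes p}} = \tr(\op{v_1}{v_2})$, while the $c=0$ portion of the same sum equals $\tr(\mathcal{E}(\op{v_1}{v_2}))$, so the $c\neq0$ remainder is exactly the difference. I expect this step---converting the flag-nonzero weight into the trace deficiency via unitarity of $U$---to be the main obstacle, with everything else amounting to routine index-tracking through the circuit.
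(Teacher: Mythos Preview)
Your construction and verification are correct and match the paper's proof essentially step for step: the same circuit (signal qubit, apply $U$, OR the flags into the signal, conditionally swap the output into fresh garbage), the same register accounting ($p'=p+1+\size(\cH')$, $f'=0$, $g'=f+g+\size(\cH')$), and the same three-case check on $v_1',v_2'$. The only cosmetic differences are that the paper controls the final swap directly on the flag wires rather than on the signal (equivalent after step~(iii)), and for the $\rit$--$\rit$ case the paper argues indirectly via trace preservation of the flag-free circuit while you compute the deficiency explicitly from unitarity of $U$; your version is if anything a bit more transparent.
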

\begin{proof}
	Assuming $\mathcal{E}$ is implemented by the unitary $U$ with $p$ prep wires, $f$ flag wires, and $g$ garbage wires, the following circuit achieves the desired result with $p+\size(\cH')+1$ prep wires, no flag wires, and $f + \size(\cH') + g$ garbage wires: 
	\[
	\begin{quantikz}
		\lstick{$\size(\cH)$} & \permute{2,1} & \qw & \hphantomgate{} & \targ{} & \qw & \rstick{1} \qw \\
		\lstick{1} & & \gate[3]{U} & \qwbundle{\size(\cH')} & \qw &  \swap{1} & \rstick{$\size(\cH')$} \qw \\
		\lstick{$p$} & \qw & & \qwbundle{f} & \ctrl{-2} & \ctrl{2} & \rstick{$f$} \qw \\
             &&& \qw & \qw & \qw & \rstick{$g$} \qw \\
		\lstick{$\size(\cH')$} & \qw & \qw & \qw & \qw & \swap{-2} & \rstick{$\size(\cH')$} \qw 
	\end{quantikz}
	\]
	This circuit works by turning flag wires from $\mathcal{E}$ into garbage wires for $\cptp(\mathcal{E})$.
	The first qubit is used as an indicator of failure, and in the event of failure, the output of $U$ is treated as garbage and replaced with a fresh set of $\ket\zero$ qubits from the prep register, as required by the bitstring encoding of sum types.
	Here, the ``control'' on the flag wires should be understood to apply the gate conditioned on \emph{any} of the qubits on the control register being in the $\ket\one$ state, which can be implemented using a control construct conditioned on \emph{all} of the qubits on the control register being in the $\ket\zero$ state followed by an uncontrolled gate.

	To prove that this circuit superoperator $\Ccptp$ correctly implements $\cptp(\cE)$, we must show that
	\[
		\bra{v_1'}\Ccptp(\op{v_1}{v_2})\ket{v_2'} = \sum_b \bra{\enc(v_1'), b} \mathcal{E}(\op{\enc(v_1), \zero}{\enc(v_2), \zero}) \ket{\enc(v_2'), b}
	\]
	for all $v_1, v_2 \in \VV(T), v_1', v_2' \in \VV(T' \oplus \Unit)$.

	We will consider three cases for $v_1'$ and $v_2'$.
	\begin{itemize}
		\item First, suppose both are in the ``success'' (non-error) subspace, and see that
        \[
        (\bra{v_1'} \oplus 0)\cptp(\cE)(\op{v_1}{v_2})(\ket{v_2'} \oplus 0) = \bra{v_1'}\cE(\op{v_1}{v_2})\ket{v_2'}.
        \]
			To see that this equals
            \[
            \sum_b \bra{\zero, \enc(v_1'), b} \Ccptp(\op{\enc(v_1), \zero}{\enc(v_2), \zero}) \ket{\zero, \enc(v_2'), b},
            \]
            see from the circuit that a $\zero$ output on the first wire also implies a $\zero$ output on the $f$ segment of the output wires, as well as the final $\size(\cH')$ section, so we really care about
			\[
				\sum_{b \in \{\zero,\one\}^g} \bra{\zero, \enc(v_1'), \zero, b, \zero} \Ccptp(\op{\enc(v_1), \zero}{\enc(v_2), \zero}) \ket{\zero, \enc(v_2'), \zero, b, \zero}.
			\]
			The requirements placed on $U$ ensure that this equality holds.
		\item Consider the case where exactly one of the two values is in the error subspace, for example $(\bra{v_1'} \oplus 0)\cptp(\cE)(\op{v_1}{v_2})(0 \oplus \ket{v_2'}) = 0$.
			Our circuit works correctly in this case because $\bra{\zero, \enc(v_1'), b} \Ccptp(\op{\enc(v_1), \zero}{\enc(v_2), \zero}) \ket{\one, \enc(v_2'), b}$ is always zero, regardless of $b$.
			To see this, see that the $f$ garbage bits are all zero if and only if the first indicator output bit is zero, so any setting of $b$ would cause one of the two sides of the expression to vanish. In other words, there is no superposition between the error and non-error subspaces because the discarded garbage collapses the state into one of the two.
		\item In the final case, both $v_1'$ and $v_2'$ are in the error subspace, so we must show that
			\[
            \sum_b \bra{\one, \zero, b} \Ccptp(\op{\enc(v_1), \zero}{\enc(v_2), \zero}) \ket{\one, \zero, b} = \tr(\op{v_1}{v_2}) - \tr(\mathcal{E}(\op{v_1}{v_2})),
            \]
            as we are encoding an error value as $\texttt{"1"} \doubleplus \texttt{"0"}^{\size(\cH')}$.
			This case is constrained by the first two, as it is the only possible value that would ensure that $\cptp(\cE)$ is trace-preserving.
			To verify that $\cptp(\cE)$ is trace-preserving, see that there are no flag wires and $U$ cannot output invalid encodings without setting flag qubits, so there is no way for the trace to decrease.
	\end{itemize}
\end{proof}

\begin{lemma}[purification]
	\label{lem:purify}
	Suppose it is possible to implement the completely positive trace-non-increasing linear superoperator $\mathcal E \in \cL(\cL(\cH), \cL(\cH'))$.
	Then, it is possible to implement a contraction $E \in \cL(\cH, \cH' \otimes \cH_{\textsc{g}})$ for some ``garbage'' Hilbert space $\cH_{\textsc{g}}$ with the following property: for any $\rho \in \cL(\cH), \ket{\psi} \in \cH'$, there is some $\ket{g_{\rho,\psi}} \in \cH_{\textsc{g}}$ such that
	\[
		\bra{\psi} \mathcal{E}(\rho) \ket{\psi} = \bra{g_{\rho,\psi}, \psi} E\rho E^\dagger \ket{g_{\rho,\psi}, \psi}
	\]
\end{lemma}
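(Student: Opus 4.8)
The plan is to realize $E$ by the very same qubit unitary that implements $\cE$, simply reclassifying its flag and garbage output wires as the garbage output of $E$. Let $U : \CC^{2^{\size(T)+p}} \to \CC^{2^{\size(T')+f+g}}$ be the unitary witnessing that $\cE$ is implementable (Definition~\ref{def:possible_to_implement_superop}), with $p$ prep, $f$ flag, and $g$ garbage wires. I set $\cH_{\textsc{g}} = \CC^{2^{f+g}}$ and define $E\ket{v} = U\ket{\enc(v), 0^{\otimes p}}$ for $v \in \VV(T)$, viewing the codomain as $\cH(T') \otimes \cH_{\textsc{g}}$; this $E$ is implemented by the circuit for $U$ with $p$ prep wires and \emph{no} flag wires. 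Since $\ket{v} \mapsto \ket{\enc(v), 0^{\otimes p}}$ is an isometry and $U$ is unitary, $E$ is an isometry and hence a contraction, and the encoding-validity clause $U\ket{\enc(v),0^{\otimes p}} \in \WW(T') \otimes \CC^{2^{f+g}}$ guarantees that $E$ really lands in $\cH(T') \otimes \cH_{\textsc{g}}$ (identifying $\WW(T') \cong \cH(T')$), which is exactly why no separate validation step is needed here.

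The second step is a ``diagonal-matching'' identity. Rewriting the defining equation of Definition~\ref{def:possible_to_implement_superop} in terms of $E$, using $U\ket{\enc(v),0^{\otimes p}} = E\ket{v}$ and $\bra{\enc(v'),0^{\otimes f},b} = \bra{v'} \otimes \bra{0^{\otimes f}, b}$, gives for basis values
\[
\bra{v_1'}\cE(\op{v_1}{v_2})\ket{v_2'} = \sum_{b \in \{0,1\}^{g}} (\bra{v_1'}\otimes\bra{0^{\otimes f},b})\, E\op{v_1}{v_2}E^\dagger\,(\ket{v_2'}\otimes\ket{0^{\otimes f},b}).
\]
Extending by linearity in $\rho$ and sesquilinearity in the output vector, I obtain for every density operator $\rho \in \cL(\cH)$ and $\ket{\psi}\in\cH'$
\[
\bra{\psi}\cE(\rho)\ket{\psi} = \sum_{b\in\{0,1\}^{g}} \bra{\psi,0^{\otimes f},b}\, E\rho E^\dagger\,\ket{\psi,0^{\otimes f},b} = \tr\big[(\op{\psi}{\psi}\otimes\op{0^{\otimes f}}{0^{\otimes f}}\otimes\II)\,E\rho E^\dagger\big].
\]

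It remains to collapse this sum of $2^{g}$ diagonal entries into the single expectation $\bra{g_{\rho,\psi},\psi}E\rho E^\dagger\ket{g_{\rho,\psi},\psi}$ demanded by the statement, and this is the crux of the argument. The key observation is that $\ket{g_{\rho,\psi}}$ is allowed to depend on both $\rho$ and $\psi$ and need not be normalized. Taking $\rho$ to be a density operator (the physically relevant inputs, on which the lemma is used), $E\rho E^\dagger \succeq 0$, so $A_\psi = (\bra{\psi}\otimes\II)\,E\rho E^\dagger\,(\ket{\psi}\otimes\II) \in \cL(\cH_{\textsc{g}})$ is positive semidefinite and $\lambda = \bra{\psi}\cE(\rho)\ket{\psi} = \tr[(\op{0^{\otimes f}}{0^{\otimes f}}\otimes\II)A_\psi]$ is a nonnegative real. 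Since $\ket{g}\mapsto\bra{g}A_\psi\ket{g}$ is a nonnegative quadratic form, I choose $\ket{g_{\rho,\psi}} = 0$ when $\lambda=0$, and otherwise pick a unit $\ket{g_0}$ with $\bra{g_0}A_\psi\ket{g_0} = c > 0$ (possible because $\lambda>0$ forces $A_\psi\neq 0$) and scale $\ket{g_{\rho,\psi}} = \sqrt{\lambda/c}\,\ket{g_0}$ to hit $\bra{g_{\rho,\psi}}A_\psi\ket{g_{\rho,\psi}} = \lambda$. I expect this collapsing step to be the main obstacle: it reconciles the rank mismatch between the $2^{g}$-fold flag-zero projection and a rank-one garbage state, and it makes clear that the construction only certifies diagonal matrix elements (as exploited by the differential test of Section~\ref{sec:results}) rather than reproducing $\cE$ as a genuine partial trace.
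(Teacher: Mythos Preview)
Your construction is correct but differs from the paper's in one notable respect: the paper keeps the $f$ flag wires as flag wires of the new contraction $E$ and only reroutes the $g$ garbage wires into the output, so that $\cH_{\textsc{g}} = \CC^{2^{g}}$ and $E$ is a genuine contraction implemented with $p$ prep and $f$ flag wires. You instead absorb both flag and garbage into the output, taking $\cH_{\textsc{g}} = \CC^{2^{f+g}}$ and obtaining an isometry with no flag wires. Both choices satisfy the lemma; your version is cleaner to reason about (no residual flag register), while the paper's keeps the garbage space smaller and lines up directly with how the purified circuit is later wired into the \textsc{T-Ctrl} compilation, where the flag wires of $E$ are still treated as flags of the enclosing circuit.

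Your explicit ``collapsing'' step is also worth remarking on. The paper's proof is essentially one sentence---``this circuit simply feeds the existing garbage wires into an additional output''---and never constructs $\ket{g_{\rho,\psi}}$ at all; it leaves the passage from the sum $\sum_b \bra{\psi,b}E\rho E^\dagger\ket{\psi,b}$ to a single diagonal entry implicit. Your argument via the positive quadratic form $\ket{g}\mapsto\bra{g}A_\psi\ket{g}$ and scaling is correct for positive semidefinite $\rho$, and your restriction to density operators is the right call: the downstream use in \textsc{T-Ctrl} only ever instantiates $\rho = \op{\sigma,\tau}{\sigma,\tau}$ and $\ket{\psi} = \ket{v}$, where the garbage vector is simply the (normalized) conditional state $(\II\otimes\bra{v})E\ket{\sigma,\tau}$. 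So you have in fact filled in a step the paper glosses over, at the cost of a non-constructive existential choice of $\ket{g_0}$ that is harmless for the lemma's purposes.
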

\begin{proof}
	Assuming $\mathcal{E}$ is implemented by the unitary $U$ with $p$ prep wires, $f$ flag wires, and $g$ garbage wires, the following circuit achieves the desired result with $p$ prep wires and $f$ flag wires by setting $\cH\subcap{g} = \CC^{2^{g}}$:
	\[
	\begin{quantikz}
		\lstick{$\size(\cH)$} & \gate[3]{U} & \qw & \rstick{$\size(\cH')$} \qw \\
		\lstick{$p$} & & \permute{2,1} & \rstick{$g$} \qw \\
													 &&& \rstick{$f$} \qw
	\end{quantikz}
	\]
	This circuit simply feeds the existing garbage wires into an additional output.
\end{proof}

Next, we consider some results about how to implement trace-non-increasing superoperators with low-level qubit-based unitaries.

\begin{lemma}
	For any type $T$, it is possible to implement a superoperator $\mathcal{E} : \cL(\cH(T)) \to \cL(\cH(\Unit))$ that computes the trace of its input, effectively discarding it.
\end{lemma}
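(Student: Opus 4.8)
The plan is to implement the superoperator $\mathcal{E}$ defined by $\mathcal{E}(\rho) = \tr(\rho)\,\op{\unit}{\unit}$, using the identification $\cH(\Unit) = \CC = \Span\braces{\ket{\unit}}$. On basis elements this reads $\mathcal{E}(\op{v_1}{v_2}) = \ip{v_2}{v_1}\,\op{\unit}{\unit} = \delta_{v_1,v_2}\,\op{\unit}{\unit}$ for $v_1, v_2 \in \VV(T)$. This map is completely positive and trace-preserving, hence in particular trace-non-increasing, so it is a legitimate candidate for \Cref{def:possible_to_implement_superop}.

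The circuit I would use is the trivial one: take the unitary $U = \mathbb{I}$ acting on $\size(T)$ qubits, with prep size $p = 0$, flag size $f = 0$, and garbage size $g = \size(T)$. In other words, the circuit applies no gates and simply reclassifies the entire $\size(T)$-qubit input register as the garbage register, which is discarded by a partial trace. This matches the intuition that tracing out a system is the same as moving all of its qubits into garbage. Since $\size(\Unit) = 0$, the genuine output register is empty, as required for an output in $\cH(\Unit)$.

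It then remains to verify the two conditions of \Cref{def:possible_to_implement_superop}. For the correctness condition, the only value in $\VV(\Unit)$ is $\unit$ with $\enc(\unit) = \texttt{""}$, and with $f = 0$ the left-hand side $\bra{\unit}\mathcal{E}(\op{v_1}{v_2})\ket{\unit} = \delta_{v_1,v_2}$ must equal
\[
\sum_{b \in \braces{0,1}^{\size(T)}} \ip{b}{\enc(v_1)}\ip{\enc(v_2)}{b} = \ip{\enc(v_2)}{\enc(v_1)},
\]
where I have used $U = \mathbb{I}$ and let $b$ range over the garbage register of size $g = \size(T)$. The key remaining fact is that $\enc$ is injective on $\VV(T)$ for a fixed type $T$, so $\ip{\enc(v_2)}{\enc(v_1)} = \delta_{v_1,v_2}$, giving the desired equality. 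For the encoding-validity condition, note that $\WW(\Unit) = \Span\braces{\ket{\enc(\unit)}} = \CC$ is all of $\cH(\Unit)$, so $\WW(\Unit) \otimes \CC^{2^{f+g}}$ is the entire output space and the requirement $U\ket{\enc(v), 0^{\otimes p}} \in \WW(\Unit) \otimes \CC^{2^{f+g}}$ holds trivially.

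There is essentially no obstacle here: the construction is immediate once one recognizes that the garbage-register summation appearing in \Cref{def:possible_to_implement_superop} is precisely a partial trace, so discarding all input qubits as garbage computes exactly the trace. The only point that needs to be stated rather than recomputed is the injectivity of the encoding on $\VV(T)$, which collapses the garbage sum to the Kronecker delta.
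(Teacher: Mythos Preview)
Your proposal is correct and takes essentially the same approach as the paper: both use the identity circuit with $p = f = 0$ and $g = \size(T)$, routing all input wires to garbage so that the garbage sum in \Cref{def:possible_to_implement_superop} becomes the trace. You give more detail than the paper (explicitly checking encoding validity and spelling out the injectivity of $\enc$), but the construction and verification are the same.
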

\begin{proof}
	This gate is implemented with an empty (identity) circuit by setting $p = f = 0$, $g = \size(T)$.
	\[
	\begin{quantikz}
		\lstick{$\size(T)$} & \rstick{$g$} \qw
	\end{quantikz}
	\]
	\[
	\sum_{g \in \{\zero, \one\}^{\size(T)}} \bra{\enc(\unit), g} \mathbb{I} \rho \mathbb{I}^\dagger \ket{\enc(\unit), g} = \tr(\rho)
	\]
\end{proof}

	In \Cref{app:high_level_compilation}, we will also be constructing circuits from these trace-non-increasing superoperators.
	Again, we must justify this by demonstrating that tensor products and function composition are possible.

\begin{lemma}
	Suppose it is possible to implement the superoperators $\mathcal{E}_0 : \cL(\cH_0) \to \cL(\cH_0')$ and $\mathcal{E}_1 : \cL(\cH_1) \to \cL(\cH_1')$.
	Then it is possible to implement the operator $\mathcal{E}_0 \otimes \mathcal{E}_1 : \cL(\cH_0 \otimes \cH_1) \to \cL(\cH_0' \otimes \cH_1')$.
\end{lemma}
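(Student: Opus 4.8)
The plan is to mimic the pure-state tensor-product lemma proven above for operators, but to additionally carry along the garbage registers that the superoperator notion of implementation (\Cref{def:possible_to_implement_superop}) requires. Suppose $\mathcal{E}_0$ is implemented by a unitary $U_0$ with $p_0$ prep wires, $f_0$ flag wires, and $g_0$ garbage wires, and $\mathcal{E}_1$ is implemented by $U_1$ with $p_1$ prep, $f_1$ flag, and $g_1$ garbage wires. I would construct a circuit $U$ with $p_0 + p_1$ prep wires, $f_0 + f_1$ flag wires, and $g_0 + g_1$ garbage wires, built by first permuting the wires (with \textsc{swap} gates, exactly as in the pure case) so that the input register for $\cH_0$ is adjacent to its prep block and the input register for $\cH_1$ is adjacent to its prep block, then applying $U_0 \otimes U_1$, and finally permuting the outputs so that the two output encoding registers of sizes $\size(\cH_0')$ and $\size(\cH_1')$ are contiguous, followed by the combined flag register and then the combined garbage register.

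With this circuit in hand there are two conditions from \Cref{def:possible_to_implement_superop} to verify. For the matrix-element condition, I would use that the tensor product of superoperators acts factorwise, namely $(\mathcal{E}_0 \otimes \mathcal{E}_1)(\op{v_1, w_1}{v_2, w_2}) = \mathcal{E}_0(\op{v_1}{v_2}) \otimes \mathcal{E}_1(\op{w_1}{w_2})$, so the left-hand side $\bra{v_1', w_1'}(\mathcal{E}_0 \otimes \mathcal{E}_1)(\cdots)\ket{v_2', w_2'}$ factors as the product of the corresponding matrix elements of $\mathcal{E}_0$ and $\mathcal{E}_1$. On the circuit side, writing the garbage index as $b = (b_0, b_1) \in \{0,1\}^{g_0} \times \{0,1\}^{g_1}$ and using that $U$ equals $U_0 \otimes U_1$ conjugated by fixed permutations, together with $\enc(\pair{v}{w}) = \enc(v) \doubleplus \enc(w)$, the sum over $b$ splits into a product of two independent garbage sums. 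Each factor is precisely the right-hand side of the implementation condition for $\mathcal{E}_0$ and $\mathcal{E}_1$, so the two sides agree by hypothesis.

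For the encoding-validity condition, I would first record the observation that $\WW(T_0' \otimes T_1') = \WW(T_0') \otimes \WW(T_1')$: since $\enc(\pair{v_0}{v_1}) = \enc(v_0) \doubleplus \enc(v_1)$, the spanning set $\{\ket{\enc(v_0, v_1)} : (v_0, v_1) \in \VV(T_0' \otimes T_1')\}$ is exactly $\{\ket{\enc(v_0)} \otimes \ket{\enc(v_1)}\}$. Then, applying the validity hypothesis to each factor gives $U_0 \ket{\enc(v_0), 0^{\otimes p_0}} \in \WW(T_0') \otimes \CC^{2^{f_0 + g_0}}$ and likewise for $U_1$, so their tensor product (reindexed by the output permutation) remains in $\WW(T_0') \otimes \WW(T_1') \otimes \CC^{2^{(f_0+f_1)+(g_0+g_1)}} = \WW(T_0' \otimes T_1') \otimes \CC^{2^{f + g}}$.

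The main obstacle is purely bookkeeping: tracking the wire permutations so that the output encoding, flag, and garbage registers end up correctly grouped, and confirming that the garbage sum genuinely factors across the two subsystems. Once the circuit is laid out as $U_0 \otimes U_1$ up to permutations, both properties reduce to the factorization of tensor products together with the assumptions on $U_0$ and $U_1$, so no genuinely new analytic input is needed.
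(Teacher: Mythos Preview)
Your proposal is correct and follows essentially the same approach as the paper: both build $U$ as $U_0 \otimes U_1$ conjugated by wire permutations, with $p_0+p_1$ prep, $f_0+f_1$ flag, and $g_0+g_1$ garbage wires, and both verify the matrix-element condition by factoring the garbage sum over $b=(b_0,b_1)$ into the product of the two separate implementation conditions. Your treatment is in fact slightly more thorough, since you explicitly verify the encoding-validity clause of \Cref{def:possible_to_implement_superop} via $\WW(T_0'\otimes T_1')=\WW(T_0')\otimes\WW(T_1')$, whereas the paper relies on its earlier blanket remark that validity preservation is clear for all constructions in this section apart from the direct sum.
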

\begin{proof}
	Assume $E_0$ is implemented by $U_0$ with $p_0$ prep wires, $f_0$ flag wires, and $g$ garbage wires.
	Assume $E_1$ is implemented by $U_1$ with $p_1$ prep wires, $f_1$ flag wires, and $g$ garbage wires.
	The following qubit circuit $U$ then implements $E_0 \otimes E_1$ with $p = (p_0 + p_1)$ prep wires, $f = (f_0 + f_1)$ flag wires, and $g = (g_0 + g_1)$ garbage wires:

	\[
	\begin{quantikz}
		\lstick{$\size(\cH_0)$} & \qw & \gate[3]{U_0} & \qw & \qw & \rstick{$\size(\cH_0')$} \qw \\
															&&& \qw & \permute{2,1} & \rstick{$\size(\cH_1')$} \qw \\
		\lstick{$\size(\cH_1)$} & \permute{2,1} & & \permute{2,1} & & \rstick{$f_0$} \qw \\
		\lstick{$p_0$} & & \gate[3]{U_1} & & \permute{2,1} & \rstick{$f_1$} \qw \\
        &&& \qw & & \rstick{$g_0$} \qw \\
		\lstick{$p_1$} & \qw & & \qw & \qw & \rstick{$g_1$} \qw
	\end{quantikz}
	\]

	\begin{alignat*}{2}
		&&\;& \sum_{g_0 \in \{\zero, \one\}^{g_0}} \sum_{g_1 \in \{\zero, \one\}^{g_1}} \bra{\enc(v_{0,1}'), \enc(v_{1,1}'), 0, g_0, g_1} U \left(\rho_0 \otimes \rho_1 \otimes \op{\zero}{\zero}\right) \\ &&&\qquad \cdot U^\dagger \ket{\enc(v_{0,2}'), \enc(v_{1,2}'), \zero, g_0, g_1} \\
		&=&& \sum_{g_0 \in \{\zero, \one\}^{g_0}} \sum_{g_1 \in \{\zero, \one\}^{g_1}} \bra{\enc(v_{0,1}'), \zero, g_0, \enc(v_{1,1}'), \zero, g_1} (U_0 \otimes U_1) \left(\rho_0 \otimes \op{\zero}{\zero} \otimes \rho_1 \otimes \op{\zero}{\zero}\right) \\ &&&\qquad \cdot (U_0 \otimes U_1)^\dagger \ket{\enc(v_{0,2}'), \zero, g_0, \enc(v_{1,2}'), \zero, g_1} \\
		&=&& \sum_{g_0 \in \{\zero, \one\}^{g_0}} \bra{\enc(v_{0,1}'), \zero, g_0} U_0 \left(\rho_0 \otimes \op{\zero}{\zero}\right) U_0^\dagger \ket{\enc(v_{0,2}'), \zero, g_0} \\ &&&\qquad \cdot \sum_{g_1 \in \{\zero, \one\}^{g_1}} \bra{\enc(v_{1,1}'), \zero, g_1} U_1 (\rho_1 \otimes \op{\zero}{\zero}) U_1^\dagger \ket{\enc(v_{1,2}'), \zero, g_1} \\
		&=&& \bra{v_{0,1}'} \mathcal{E}_0(\rho_0) \ket{v_{0,2}'} \cdot \bra{v_{1,1}'} \mathcal{E}_1(\rho_1) \ket{v_{1,2}'} \\
		&=&& \bra{v_{0,1}', v_{1,1}'} (\mathcal{E}_0 \otimes \mathcal{E}_1)(\rho_0 \otimes \rho_1) \ket{v_{0,2}', v_{1,2}'} \\
	\end{alignat*}
\end{proof}

\begin{lemma}
	Suppose it is possible to implement the superoperators $\mathcal{E}_0 : \cL(\cH_0) \to \cL(\cH')$ and $\mathcal{E}_1 : \cL(\cH') \to \cL(\cH_1)$.
	Then it is possible to implement the operator $\mathcal{E}_1 \circ \mathcal{E}_0 : \cL(\cH_0) \to \cL(\cH_1)$.
\end{lemma}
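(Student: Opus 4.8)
The plan is to follow the pure composition lemma (\Cref{lem:compose}) closely, building the circuit for $\mathcal{E}_1 \circ \mathcal{E}_0$ by feeding the $\cH'$-register output of the circuit implementing $\mathcal{E}_0$ into the input of the circuit implementing $\mathcal{E}_1$, while routing the utility registers appropriately. Concretely, assume $\mathcal{E}_0$ is implemented by a unitary $U_0$ with $p_0$ prep, $f_0$ flag, and $g_0$ garbage wires, and $\mathcal{E}_1$ by $U_1$ with $p_1$ prep, $f_1$ flag, and $g_1$ garbage wires. First I would apply $U_0$ to the $\size(\cH_0)$-qubit input register together with its $p_0$ prep qubits, producing a $\size(\cH')$-qubit register alongside the $f_0$ flag and $g_0$ garbage wires; then I would apply $U_1$ to that $\size(\cH')$-qubit register together with $p_1$ fresh prep qubits, grouping the two flag registers together and the two garbage registers together. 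The resulting circuit $U$ uses $p = p_0 + p_1$ prep wires, $f = f_0 + f_1$ flag wires, and $g = g_0 + g_1$ garbage wires, exactly the shape demanded by \Cref{def:possible_to_implement_superop}.

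To verify the defining identity, I would start from $\bra{v_1'}(\mathcal{E}_1 \circ \mathcal{E}_0)(\op{v_1}{v_2})\ket{v_2'} = \bra{v_1'}\mathcal{E}_1\parens{\mathcal{E}_0(\op{v_1}{v_2})}\ket{v_2'}$ and expand $\mathcal{E}_1$ using its $U_1$-implementation from \Cref{def:possible_to_implement_superop}. Since $\mathcal{E}_0(\op{v_1}{v_2}) \in \cL(\cH')$, I would resolve it in the value basis $\braces{\ket{w} : w \in \VV(T')}$ of $\cH'$ as $\sum_{w_1, w_2 \in \VV(T')} \bra{w_1}\mathcal{E}_0(\op{v_1}{v_2})\ket{w_2}\,\op{w_1}{w_2}$ and then substitute the $U_0$-implementation for each matrix element $\bra{w_1}\mathcal{E}_0(\op{v_1}{v_2})\ket{w_2}$. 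At this point the expression is a product of an $\mathcal{E}_0$-block summed over $b_0 \in \braces{0,1}^{g_0}$ and an $\mathcal{E}_1$-block summed over $b_1 \in \braces{0,1}^{g_1}$, tied together through the intermediate values $w_1, w_2$.

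The crux is to recognise this product as the single garbage sum of the composite circuit $U$. I would insert the completeness relation $\sum_{x \in \braces{0,1}^{\size(\cH')}}\op{x}{x}$ on the intermediate $\size(\cH')$-qubit register between the two sub-circuits, and then invoke the encoding-validity clause of \Cref{def:possible_to_implement_superop}: because $U_0\ket{\enc(v), 0^{\otimes p_0}} \in \WW(T') \otimes \CC^{2^{f_0 + g_0}}$ (and the analogous statement holds on the bra side), the only basis vectors $\ket{x}$ that contribute are the valid encodings $x = \enc(w)$, so the resolution collapses to $\sum_{w \in \VV(T')}\op{\enc(w)}{\enc(w)}$, matching the $w_1, w_2$ resolution above. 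The garbage indices then combine as $b = (b_0, b_1) \in \braces{0,1}^{g}$, yielding exactly the composite trace formula. This is the main obstacle, and it is precisely where the encoding-validity condition that we added to \Cref{def:possible_to_implement_superop} is essential: unlike the pure case (\Cref{lem:compose}), where a resolution over all bitstrings of the intermediate register sufficed, here the intermediate state is genuinely mixed, so we must guarantee that its support lies entirely within $\WW(T')$ before $U_1$ can be applied to it as an encoded input.

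Finally, I would confirm that $U$ itself preserves encoding validity: $U_0$ carries a valid encoding of $T_0$ into $\WW(T') \otimes \CC^{2^{f_0 + g_0}}$, and $U_1$ carries each valid encoding of $T'$ into $\WW(T_1) \otimes \CC^{2^{f_1 + g_1}}$, so the composite satisfies $U\ket{\enc(v), 0^{\otimes p}} \in \WW(T_1) \otimes \CC^{2^{f + g}}$. The remaining steps are routine index bookkeeping of the same flavour as the tensor-product and pure-composition lemmas, so I would not expect any further difficulty.
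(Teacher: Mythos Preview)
Your proposal is correct and takes the same approach as the paper: sequence $U_0$ and $U_1$ on the $\cH'$ register and collect $p_0+p_1$ prep, $f_0+f_1$ flag, and $g_0+g_1$ garbage wires. Your verification is in fact more detailed than the paper's (which just presents the circuit diagram and a one-line remark); one small correction to your commentary is that the pure composition in \Cref{lem:compose} also relies on encoding validity to restrict the intermediate resolution from all bitstrings to valid encodings, so this is not a new phenomenon in the mixed case.
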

\begin{proof}
	Assume $E_0$ is implemented by $U_0$ with $p_0$ prep wires, $f_0$ flag wires, and $g_0$ garbage wires.
	Assume $E_1$ is implemented by $U_1$ with $p_1$ prep wires, $f_1$ flag wires, and $g_1$ garbage wires.
	The following qubit circuit $U$ then implements $E_0 \otimes E_1$ with $p_0 + p_1$ prep wires, $f_0 + f_1$ flag wires, and $g_0 + g_1$ garbage wires:
	\[
		\begin{quantikz}
			\lstick{$\size(\cH_0)$} & \gate[3]{U_0} & \qwbundle{\size(\cH')} & \qw & \qw & \gate[3]{U_1} & \qw & \rstick{$\size(\cH_1)$} \qw \\
			\setwiretype{n} & & \setwiretype{q} \qw & \permute{2,1} & \qw & & \qw & \rstick{$f_1$} \qw \\
			\lstick{$p_0$} & & \permute{2,1} & & \gate[2,swap]{} & \setwiretype{n} & \permute{2,1} & \rstick{$f_0$} \qw \\
			\lstick{$p_1$} & \qw & & \gate[2,swap]{} & \setwiretype{n} & \qw & & \rstick{$g_1$} \qw \\
			\setwiretype{n} & & & & \setwiretype{q} & & & \rstick{$g_0$} \qw
		\end{quantikz}
	\]
	As with the other compositions, we're grouping the flags from the two together and grouping the garbage from the two together. This diagram uses a couple of \textsc{swap} gates that do not correspond to physical gates but are just there to ensure wires in the diagram don't collide.
	This is necessary because of the way that we are using single wires to represent different numbers of qubits, for example the unitaries $U_0$ and $U_1$ would have the same number of input qubits as output qubits.
\end{proof}

\begin{lemma}[Tree Leveling Operator]\label{lem:leveling_operator}
Referring to the notation in \Cref{app:binary_trees}, let $\cR$ be a binary tree of size $n$. Let $h \geq \height(\cR)$, and let $\cH_1, \dots, \cH_n$ be Hilbert spaces corresponding to Qunity types or contexts. It is possible to implement an operator
\[
\LEVEL(\cR; h; \cH_1, \dots, \cH_n) : \bigoplus_{j : \cR} \cH_j \rarr \bigoplus_{j : L(\cR)} \cH_j,
\]
such that
\[
\LEVEL(\cR; h; \cH_1, \dots, \cH_n) \inj_j^\cR \ket{v} = \inj_j^{L(\cR, h)} \ket{v}
\]
\end{lemma}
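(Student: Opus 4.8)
The plan is to proceed by structural induction on the binary tree $\cR$, defining $\LEVEL$ recursively so that it mirrors the recursive definition of the leveled direct sum in \Cref{def:leveled_tree_sum}. The operator will be assembled entirely from the building blocks established in \AppRef{K}{app:low_level_compilation}: the additive-unit injection $\lef{T}{\Void}$ (\Cref{lem:compile-inj}), operator composition (\Cref{lem:compose}), and the direct sum of operators (\Cref{lem:dirsum_possible}). Since leveling only inserts $\Void$-summands (i.e.\ copies of the zero space $\cH(\Void) = \{0\}$) to push leaves down to the target height, no genuinely new circuit component is needed, and encoding validity is automatically preserved because the only nontrivial construction involved is the direct sum.

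For the base case $\cR = \Leaf$, I would define $\LEVEL(\Leaf; h; \cH_1)$ as the $h$-fold composition of additive-unit isomorphisms $\lef{T}{\Void} : \cH \rarr \cH \oplus \{0\}$, where at each step $T$ is the type whose Hilbert space is the current (left-nested) direct sum. Each factor is implementable by \Cref{lem:compile-inj} and the composition by \Cref{lem:compose}. A short inner induction on $h$, comparing the recursive clauses for $\bigoplus_{j:L(\Leaf, h)}$ and $\inj_1^{L(\Leaf, h)}$ in \Cref{def:leveled_tree_sum,def:leveled_indexed_injection}, then shows $\LEVEL(\Leaf; h; \cH_1)\ket{v} = \inj_1^{L(\Leaf, h)}\ket{v}$, which is the required identity since $\inj_1^\Leaf \ket{v} = \ket{v}$.

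For the inductive case $\cR = (\cR_0, \cR_1)$, the hypothesis $h \geq \height(\cR) = 1 + \max\{\height(\cR_0), \height(\cR_1)\}$ gives $h - 1 \geq \height(\cR_i)$ for $i \in \{0, 1\}$, so the inductive hypothesis applies to each subtree at target height $h - 1$. I would then define
\[
\LEVEL((\cR_0, \cR_1); h; \cH_1, \dots, \cH_n) \defeq \LEVEL(\cR_0; h-1; \dots) \oplus \LEVEL(\cR_1; h-1; \dots),
\]
which is implementable by \Cref{lem:dirsum_possible}. Because both the source and target spaces split along the same root direct sum and $\inj_j^\cR$, $\inj_j^{L(\cR, h)}$ are each defined by the matching case split, the verification reduces to the blockwise action of the direct sum: for an index $k$ landing in the left subtree, $\inj_k^\cR\ket{v} = \inj_k^{\cR_0}\ket{v} \oplus 0$, and applying the direct sum sends this to $\inj_k^{L(\cR_0, h-1)}\ket{v} \oplus 0 = \inj_k^{L(\cR, h)}\ket{v}$ (using linearity to carry the zero block to the zero block of the leveled right subtree), with the right-subtree case symmetric.

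I expect the main obstacle to be purely bookkeeping: keeping the index offsets $j \mapsto j + \size(\cR_0)$ aligned between the tree-structured and leveled-tree-structured direct sums, and confirming that the height decrement in the recursion stays at least the respective subtree heights throughout. There is a minor additional subtlety in the base case, namely checking that composing the unit isomorphisms produces exactly the left-nested association used in \Cref{def:leveled_tree_sum}; this is immediate once the induction on $h$ is set up, but must be phrased carefully to match the definitions.
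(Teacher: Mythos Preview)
Your proposal is correct and follows essentially the same inductive structure as the paper's proof: the recursive case is identical (the direct sum of the two subtree leveling operators via \Cref{lem:dirsum_possible}), while for the base case the paper gives a direct low-level circuit---a single wire permutation with $h$ prep wires---in place of your $h$-fold composition of $\lef{T}{\Void}$ injections, but these realize the same operator.
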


\begin{proof}
For the case where $\cR = \Leaf$, we can implement this using $h$ prep wires and no flag wires with the following circuit $U$:
\[
\begin{quantikz}
\lstick{$\size(\cH_1)$} & \permute{2,1} & \rstick{$h$} \\
\lstick{$h$} & & \rstick{$\size(\cH_1)$}
\end{quantikz}
\]

\begin{align*}
& \bra{0, \enc(v')} U \ket{\enc(v), 0} = \\
={}& \braket{\enc(v')}{\enc{v}} = \\
={}& \bra{v'} \inj_1^{\cR} \inj_1^{L(\cR, h) \dagger} \inj_1^{L(\cR, h)} \inj_1^{\cR \dagger} \ket{v} = \\
={}& \bra{v'} \LEVEL(\Leaf; h; \cH_1)\adj \LEVEL(\Leaf; h; \cH_1) \ket{v}
\end{align*}

Now, when $\cR = (\cR_0, \cR_1)$, we can simply write
\[
\LEVEL((\cR_0, \cR_1); h; \cH_1, \dots, \cH_n) =
\LEVEL(\cR_0; h - 1; \cH_1, \dots, \cH_{\size(\cR_0)}) \oplus
\LEVEL(\cR_1; h - 1; \cH_{\size(\cR_0) + 1}, \dots, \cH_n),
\]
where we take the direct sum of the operators using the construction in \Cref{app:dirsum_proof}. It is clear that this is correct from \Cref{def:leveled_tree_sum}.
\end{proof}

The tree leveling operator allows us to take advantage of the way the encoding is structured and split a direct sum encoding into separate ``index'' and ``data'' registers, which will be used for constructing the orthogonality circuit in \Cref{app:ortho_compilation}.

\section{Proofs of Correctness for High-Level Compilation}\label{app:high_level_compilation}

In this section, we justify the correctness of the high-level stage of the compilation procedure, reproducing the original proofs from \citeauthor{Voichick_2023}, with modifications from the addition of the new constructs and the changes described in \Cref{sec:high_level_opt}.

\subsection{Erasure Compilation}\label{app:erasure_compilation}

\begin{lemma}[erasure compilation]
	\label{lem:erasure}
	Suppose $(\Gamma, \Gamma_j \partition \Delta, \Delta' \vdash e_j' : T')$ for all $j \in \{1, \ldots, n\}$ and $\erases{T'}(x; e_1', \ldots, e_n')$ is true for all $x \in \dom(\Delta)$.
	Then, one can implement an operator
    \[
    \msem{\erases{T'}(\Delta; e_1', \ldots, e_n')} : \cH(\Delta) \otimes \cH(T') \to \cH(T')
    \]
    with the following behavior for all $\sigma \in \VV(\Gamma), \sigma_j \in \VV(\Gamma_j), \tau \in \VV(\Delta), \tau' \in \VV(\Delta')$:
	\[
		\ket{\tau} \otimes \msem{\sigma, \sigma_j : \Gamma, \Gamma_j \partition \Delta, \Delta' \vdash e_j' : T'} \ket{\tau, \tau'}
		\mapsto \msem{\sigma, \sigma_j : \Gamma, \Gamma_j \partition \Delta, \Delta' \vdash e_j' : T'} \ket{\tau, \tau'}
	\]
\end{lemma}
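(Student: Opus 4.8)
The plan is to prove this by induction on the derivation of the erasure judgment, after first reducing the claim for the whole context $\Delta$ to a single-variable claim. Writing $\msem{e_j'}$ as shorthand for the full operator $\msem{\sigma, \sigma_j : \Gamma, \Gamma_j \partition \Delta, \Delta' \vdash e_j' : T'}$, and using $\cH(\Delta) \cong \bigotimes_{x \in \dom(\Delta)} \cH(T_x)$, I would first build for each $x \in \dom(\Delta)$ a \emph{single-variable} erasure operator $\cH(T_x) \otimes \cH(T') \to \cH(T')$ sending $\ket{\tau(x)} \otimes \msem{e_j'}\ket{\tau,\tau'} \mapsto \msem{e_j'}\ket{\tau,\tau'}$, and then compose these over all $x$. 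Because each single-variable operator touches only its own input factor $\cH(T_x)$ together with the shared output $\cH(T')$ (which it leaves unchanged) and ignores the other input factors, the composition and tensor-product constructions of \AppRef{K}{app:low_level_compilation} assemble them into the full operator $\msem{\erases{T'}(\Delta; e_1', \ldots, e_n')}$: after peeling off one variable the output register still holds $\msem{e_j'}\ket{\tau,\tau'}$, so the next variable can be read out of it in turn.

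For the single-variable case I would induct on the rule deriving $\erases{T'}(x; e_1', \ldots, e_n')$. In \textbf{\textsc{E-Var}} every $e_j' = x$, so $\msem{e_j'}\ket{x \mapsto v} = \ket{v}$ with $T' = T_x$, and the required map $\ket{v} \otimes \ket{v} \mapsto \ket{v}$ is the adjoint of the sharing (fan-out) operation: a register of CNOTs controlled on the output copy and targeting the extra copy turns $\ket{v}$ in the extra register into $\ket{0}$, after which that register is a flag register expected to be $\ket{0}$, which is implementable per \Cref{def:possible_to_implement_op}. In \textbf{\textsc{E-Gphase}}, replacing some $e_j'$ by $e_j' \triangleright \gphase{T'}{r}$ only multiplies $\msem{e_j'}$ by the scalar $e^{ir}$; since the operator produced by the induction hypothesis for the unphased list is linear, it maps $\ket{\tau(x)} \otimes e^{ir}\msem{e_j'}\ket{\tau,\tau'} \mapsto e^{ir}\msem{e_j'}\ket{\tau,\tau'}$ and works unchanged. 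In \textbf{\textsc{E-Pair0}} (and symmetrically \textbf{\textsc{E-Pair1}}) we have $T' = T_0 \otimes T_1$ with each $e_j' = \pair{e_{0,j}}{e_{1,j}}$ and the first components erasing $x$; since the pure-pair semantics factors as $\msem{\pair{e_{0,j}}{e_{1,j}}}\ket{\tau,\tau'} = \msem{e_{0,j}}\ket{\cdots} \otimes \msem{e_{1,j}}\ket{\cdots}$ over $\cH(T_0) \otimes \cH(T_1)$, I would apply the induction-hypothesis operator for $\erases{T_0}(x; e_{0,1}, \ldots, e_{0,n})$ to the $\cH(T_x) \otimes \cH(T_0)$ factor and tensor in the identity on $\cH(T_1)$.

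The main obstacle is \textbf{\textsc{E-Ctrl}}, where the offending RHS expression $e_j'$ is itself a control expression whose $m$ branches produce $e_{j,1}, \ldots, e_{j,m}$, and the premise asserts erasure for the flattened list in which $e_j'$ is replaced by $e_{j,1}, \ldots, e_{j,m}$. The induction hypothesis yields an erasure operator $E$ for that flattened list, which correctly uncomputes $x$ from the output whenever the output equals $\msem{e_{j,k}}\ket{\cdots}$ for any branch $k$. I would then argue that this same $E$ serves at position $j$ of the outer list: expanding $\msem{e_j'}\ket{\tau,\tau'}$ through the control-flow semantics of \Cref{fig:sem-pure-expr} writes it as a linear combination, over the branches $k$ and over the classical values of the scrutinee, of vectors of the form $\msem{e_{j,k}}\ket{\cdots}$; since $E$ acts correctly and linearly on each such summand, it sends $\ket{\tau(x)} \otimes \msem{e_j'}\ket{\tau,\tau'}$ to $\msem{e_j'}\ket{\tau,\tau'}$.

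The delicate point throughout is well-definedness: the stated behavior pins down the operator only on the subspace spanned by the relevant $\ket{\tau} \otimes \msem{e_j'}\ket{\tau,\tau'}$, with its action elsewhere irrelevant, so I must confirm that the circuit I construct is \emph{consistent} on that subspace. This is exactly what makes \textsc{E-Ctrl} the hard case: I must verify that the uncomputation of $x$ does not depend on which control branch was taken — branch-independence being precisely what the shared-flattened-list premise supplies — so that the superposition structure of quantum control and the coherent deletion of the extra $\cH(\Delta)$ copy are compatible.
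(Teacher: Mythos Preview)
Your proposal is correct and follows essentially the same approach as the paper: reduce to a single-variable erasure operator by recursion on $\Delta$, then construct that operator by induction on the erasure derivation, handling \textsc{E-Var} with the adjoint of sharing, \textsc{E-Gphase} by linearity, \textsc{E-Pair0}/\textsc{E-Pair1} by applying the inductive-hypothesis operator on one tensor factor, and \textsc{E-Ctrl} by observing that the \texttt{ctrl} semantics is a linear combination of the branch semantics so the flattened-list operator works unchanged. Your explicit attention to the well-definedness/consistency issue in \textsc{E-Ctrl} is a nice addition that the paper leaves implicit.
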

\begin{proof}
	We construct the circuit by recursing on $\Delta$.
	In the base case, $\Delta = \varnothing$ and an identity operator (empty circuit) suffices.
	Thus, we focus on the inductive case where our context is $(x : T_x, \Delta)$, assuming the inductive hypothesis that $\msem{\erases{T'}(\Delta; e_1', \ldots, e_n')}$ is implementable with the behavior:
	\[
		\ket{\tau} \otimes \msem{\sigma, \sigma_j : \Gamma, \Gamma_j \partition x : T_x, \Delta, \Delta' \vdash e_j' : T'} \ket{x \mapsto v, \tau, \tau'}
		\mapsto \msem{\sigma, \sigma_j : \Gamma, \Gamma_j \partition x : T_x, \Delta, \Delta' \vdash e_j' : T'} \ket{x \mapsto v, \tau, \tau'}
	\]
	(Note that $x$ appears here even though this is the inductive hypothesis, which would normally be free of $x$. This is still a valid induction principle; we are effectively inducting on the number of variables that must be erased.)
	The problem is then reduced to implementing an operator $\msem{\erases{T'}(x; e_1', \ldots, e_n')} : \cH(T_x) \otimes \cH(T') \to \cH(T')$ with the following behavior:
	\[
		\ket{v} \otimes \msem{\sigma, \sigma_j : \Gamma, \Gamma_j \partition x : T_x, \Delta, \Delta' \vdash e_j' : T'} \ket{x \mapsto v, \tau, \tau'}
		\mapsto \msem{\sigma, \sigma_j : \Gamma, \Gamma_j \partition x : T_x, \Delta, \Delta' \vdash e_j' : T'} \ket{x \mapsto v, \tau, \tau'}
	\]
	\begin{quantikz}
		\lstick{$\cH(T_x)$} & \qw & \hphantomgate{} & \gate[3]{\erases{T'}(x; e_1', \ldots, e_n')} \\
		\lstick{$\cH(\Delta)$} & \gate[2]{\erases{T'}(\Delta; e_1', \ldots, e_n')} & & \\
		\lstick{$\cH(T')$} & & \qwbundle{\cH(T')} & & \rstick{$\cH(T')$} \qw
	\end{quantikz}
	\begin{alignat*}{2}
		&&\;& \msem{\erases{T'}(x : T_x, \Delta; e_1', \ldots, e_n')} \\
		&:&& \ket{x \mapsto v, \tau} \otimes \msem{\sigma, \sigma_j : \Gamma, \Gamma_j \partition x : T_x, \Delta, \Delta' \vdash e_j' : T'} \ket{x \mapsto v, \tau, \tau'} \\
		&\mapsto&& \ket{v} \otimes \msem{\sigma, \sigma_j : \Gamma, \Gamma_j \partition x : T_x, \Delta, \Delta' \vdash e_j' : T'} \ket{x \mapsto v, \tau, \tau'} \\
		&\mapsto&& \msem{\sigma, \sigma_j : \Gamma, \Gamma_j \partition x : T_x, \Delta, \Delta' \vdash e_j' : T'} \ket{x \mapsto v, \tau, \tau'} \\
	\end{alignat*}

	The rest of this proof constructs this gate $\msem{\erases{T'}(x; e_1', \ldots, e_n')}$ by induction on the rule used to prove the erasure judgment.

	\textsc{E-Var}:
	In this case, $e_1' = \cdots = e_n' = x$ and $T' = T_x$.
	We know that $\Delta = \Delta' = \varnothing$ because these contexts must be relevant.

	\begin{quantikz}
		\lstick{$\cH(T')$} & \gate[style={cloud}]{} \\
		\lstick{$\cH(T')$} & \ctrl{-1} & \rstick{$\cH(T')$} \qw
	\end{quantikz}
	\begin{alignat*}{2}
		&&\;& \msem{\erases{T'}(x; e_1', \ldots, e_n')} \\
		&:&& \ket{v} \otimes \msem{\sigma, \sigma_j : \Gamma, \Gamma_j \partition x : T' \vdash x : T'} \ket{x \mapsto v} \\
		&=&& \ket{v} \otimes \ket{v} \\
		&\mapsto&& \ket{v} \\
		&=&& \msem{\sigma, \sigma_j : \Gamma, \Gamma_j \partition x : T' \vdash x : T'} \ket{x \mapsto v}
	\end{alignat*}

	\textsc{E-Gphase}:
	In this case, the circuit produced by the inductive hypothesis already has the needed behavior.
	\begin{alignat*}{2}
		&&\;& \msem{\erases{T'}(x; e_1', \ldots, e_{j-1}', e_j' \triangleright \gphase{T}{r}, e_{j+1}', \ldots, e_n')} \\
		&:&& \ket{v} \otimes \msem{\sigma, \sigma_j : \Gamma, \Gamma_j \partition x : T_x, \Delta, \Delta' \vdash e_j' \triangleright \gphase{T}{r} : T'} \ket{x \mapsto v, \tau, \tau'} \\
		&=&& \ket{v} \otimes e^{ir} \msem{\sigma, \sigma_j : \Gamma, \Gamma_j \partition x : T_x, \Delta, \Delta' \vdash e_j' : T'} \ket{x \mapsto v, \tau, \tau'} \\
		&\mapsto&& e^{ir} \msem{\sigma, \sigma_j : \Gamma, \Gamma_j \partition x : T_x, \Delta, \Delta' \vdash e_j' : T'} \ket{x \mapsto v, \tau, \tau'} \\
		&=&& \msem{\sigma, \sigma_j : \Gamma, \Gamma_j \partition x : T_x, \Delta, \Delta' \vdash e_j' \triangleright \gphase{T}{r} : T'} \ket{x \mapsto v, \tau, \tau'}
	\end{alignat*}

	\textsc{E-Ctrl}:
	This rule allows us effectively to ``inline'' the right side of the \texttt{ctrl} expressions for the purpose of the erases judgment.
	Assume one of the expressions is of the following form:
	\[
\cntrl{e}{T}{e_{j,1} &\mapsto e_{j,1}' \\ &\cdots \\ e_{j,m} &\mapsto e_{j,m}'}{T'}
	\]

	Then, we can use the fact that the semantics of \texttt{ctrl} is a linear combination of the semantics of its subexpressions:
	\begin{alignat*}{2}
		&&\;& \msem{\erases{T'}(x; e_1', \ldots, e_{j-1}', \cntrl{e}{T}{\cdots}{T'}, e_{j+1}', \ldots, e_n')} \\
		&:&& \ket{v} \otimes \msem{\sigma, \sigma_j : \Gamma, \Gamma_j \partition x : T_x, \Delta, \Delta' \vdash \cntrl{e}{T}{\cdots}{T'} : T'} \ket{x \mapsto v, \tau, \tau'} \\
		&=&& \sum \cdots \ket{v} \otimes \msem{\sigma, \sigma_{j,k} : \Gamma, \Gamma_{j,k} \partition x : T_x, \Delta, \Delta' \vdash e_{j,k}' : T'} \ket{x \mapsto v, \tau, \tau'} \\
		&\mapsto&& \sum \cdots \msem{\sigma, \sigma_{j,k} : \Gamma, \Gamma_{j,k} \partition x : T_x, \Delta, \Delta' \vdash e_{j,k}' : T'} \ket{x \mapsto v, \tau, \tau'} \\
		&=&& \msem{\sigma, \sigma_j : \Gamma, \Gamma_j \partition x : T_x, \Delta, \Delta' \vdash \cntrl{e}{T}{\cdots}{T'} : T'} \ket{x \mapsto v, \tau, \tau'}
	\end{alignat*}

	\textsc{E-Pair0}:
	\[
	\begin{quantikz}
		\lstick{$\cH(T_x)$} & \gate[2]{\erases{T_0}(x; e_{0,1}, \ldots, e_{0,n})} \\
		\lstick{$\cH(T_0)$} & & \rstick{$\cH(T_0)$} \qw \\
		\lstick{$\cH(T_1)$} & \qw & \rstick{$\cH(T_1)$} \qw
	\end{quantikz}
\]
	\begin{alignat*}{2}
		&&\;& \msem{\erases{T_0 \otimes T_1}(x; \pair{e_{0,1}}{e_{1,1}}, \ldots, \pair{e_{0,n}}{e_{1,n}})} \\
		&:&& \ket{v} \otimes \msem{\sigma, \sigma_j : \Gamma, \Gamma_j \partition x : T_x, \Delta, \Delta' \vdash \pair{e_{0,j}}{e_{1,j}} : T_0 \otimes T_1} \ket{x \mapsto v, \tau, \tau'} \\
		&=&& \ket{v} \otimes \msem{\sigma, \sigma_j : \Gamma, \Gamma_j \partition x : T_x, \Delta_{*}, \Delta_0, \Delta_1, \Delta_{*}', \Delta_0', \Delta_1' \vdash \pair{e_{0,j}}{e_{1,j}} : T_0 \otimes T_1} \ket{x \mapsto v, \tau_{*}, \tau_0, \tau_1, \tau_{*}', \tau_0', \tau_1'} \\
		&=&& \ket{v} \otimes \msem{\sigma, \sigma_j : \Gamma, \Gamma_j \partition x : T_x, \Delta_{*}, \Delta_0, \Delta_{*}', \Delta_0' \vdash e_{0,j} : T_0} \ket{x \mapsto v, \tau_{*}, \tau_0, \tau_{*}', \tau_0'} \\ &&&\otimes \msem{\sigma, \sigma_j : \Gamma, \Gamma_j \partition x : T_x, \Delta_{*}, \Delta_1, \Delta_{*}', \Delta_1' \vdash e_{1,j} : T_1} \ket{x \mapsto v, \tau_{*}, \tau_1, \tau_{*}', \tau_1'} \\
		&\mapsto&& \msem{\sigma, \sigma_j : \Gamma, \Gamma_j \partition x : T_x, \Delta_{*}, \Delta_0, \Delta_{*}', \Delta_0' \vdash e_{0,j} : T_0} \ket{x \mapsto v, \tau_{*}, \tau_0, \tau_{*}', \tau_0'} \\ &&&\otimes \msem{\sigma, \sigma_j : \Gamma, \Gamma_j \partition x : T_x, \Delta_{*}, \Delta_1, \Delta_{*}', \Delta_1' \vdash e_{1,j} : T_1} \ket{x \mapsto v, \tau_{*}, \tau_1, \tau_{*}', \tau_1'} \\
		&=&& \msem{\sigma, \sigma_j : \Gamma, \Gamma_j \partition x : T_x, \Delta, \Delta' \vdash \pair{e_{0,j}}{e_{1,j}} : T_0 \otimes T_1} \ket{x \mapsto v, \tau, \tau'}
	\end{alignat*}

	\textsc{E-Pair1}:
	\begin{quantikz}
		\lstick{$\cH(T_x)$} & \qw & \gate[2]{\erases{T_1}(x; e_{1,1}, \ldots, e_{1,n})} \\
		\lstick{$\cH(T_0)$} & \permute{2,1} & & \permute{2,1} & \rstick{$\cH(T_0)$} \qw \\
		\lstick{$\cH(T_1)$} & & \qw & & \rstick{$\cH(T_1)$} \qw
	\end{quantikz}
	\begin{alignat*}{2}
		&&\;& \msem{\erases{T_0 \otimes T_1}(x; \pair{e_{0,1}}{e_{1,1}}, \ldots, \pair{e_{0,n}}{e_{1,n}})} \\
		&:&& \ket{v} \otimes \msem{\sigma, \sigma_j : \Gamma, \Gamma_j \partition x : T_x, \Delta, \Delta' \vdash \pair{e_{0,j}}{e_{1,j}} : T_0 \otimes T_1} \ket{x \mapsto v, \tau, \tau'} \\
		&=&& \ket{v} \otimes \msem{\sigma, \sigma_j : \Gamma, \Gamma_j \partition x : T_x, \Delta_{*}, \Delta_0, \Delta_1, \Delta_{*}', \Delta_0', \Delta_1' \vdash \pair{e_{0,j}}{e_{1,j}} : T_0 \otimes T_1} \ket{x \mapsto v, \tau_{*}, \tau_0, \tau_1, \tau_{*}', \tau_0', \tau_1'} \\
		&=&& \ket{v} \otimes \msem{\sigma, \sigma_j : \Gamma, \Gamma_j \partition x : T_x, \Delta_{*}, \Delta_0, \Delta_{*}', \Delta_0' \vdash e_{0,j} : T_0} \ket{x \mapsto v, \tau_{*}, \tau_0, \tau_{*}', \tau_0'} \\ &&&\otimes \msem{\sigma, \sigma_j : \Gamma, \Gamma_j \partition x : T_x, \Delta_{*}, \Delta_1, \Delta_{*}', \Delta_1' \vdash e_{1,j} : T_1} \ket{x \mapsto v, \tau_{*}, \tau_1, \tau_{*}', \tau_1'} \\
		&\mapsto&& \msem{\sigma, \sigma_j : \Gamma, \Gamma_j \partition x : T_x, \Delta_{*}, \Delta_0, \Delta_{*}', \Delta_0' \vdash e_{0,j} : T_0} \ket{x \mapsto v, \tau_{*}, \tau_0, \tau_{*}', \tau_0'} \\ &&&\otimes \msem{\sigma, \sigma_j : \Gamma, \Gamma_j \partition x : T_x, \Delta_{*}, \Delta_1, \Delta_{*}', \Delta_1' \vdash e_{1,j} : T_1} \ket{x \mapsto v, \tau_{*}, \tau_1, \tau_{*}', \tau_1'} \\
		&=&& \msem{\sigma, \sigma_j : \Gamma, \Gamma_j \partition x : T_x, \Delta, \Delta' \vdash \pair{e_{0,j}}{e_{1,j}} : T_0 \otimes T_1} \ket{x \mapsto v, \tau, \tau'}
	\end{alignat*}

	We have thus demonstrated that a circuit with this semantics can always be constructed.
\end{proof}

\subsection{The New Orthogonality Circuit}\label{app:ortho_compilation}

\begin{lemma}
Suppose that $\ortho{T}{e_1, \dots, e_n}$ holds, with tree structure $\cR$ (\Cref{def:ortho_tree}) and $n > 0$. Suppose that each $e_j$ is typed using the pure expression typing judgment with no classical context and quantum context $\Delta_j$. Then, it is possible to construct
\[
\msem{\ortho{T}{e_1, \dots, e_n}} : \cH(T) \rarr \bigoplus_{j : \cR} \cH(\Delta_j),
\]
such that
\[
\msem{\ortho{T}{e_1, \dots, e_n}}\msem{e_j}\ket{\tau_j} = \inj_j^\cR \ket{\tau_j}.
\]
\end{lemma}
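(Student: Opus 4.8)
The plan is to induct on the derivation of $\ortho{T}{e_1,\dots,e_n}$, constructing the operator in lockstep with the tree $\cR$ built by \Cref{def:ortho_tree}, and to assemble each stage from the low-level circuit combinators of \AppRef{K}{app:low_level_compilation} (composition, adjoint, tensor product, direct sum, distributivity). Throughout, the key structural fact is \Cref{lem:ortho_judgment}: the images $\msem{e_j}(\cH(\Delta_j))$ are mutually orthogonal subspaces of $\cH(T)$, which is precisely what guarantees that a single linear map can consistently send the $j$th image onto the $j$th summand indexed by $\inj_j^\cR$.

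For the base cases \textsc{O-Void}, \textsc{O-Unit}, and \textsc{O-Var}, the tree is a single $\Leaf$ and the required map is essentially the identity: for \textsc{O-Unit} we send $\ket{\unit}$ to $\ket{\varnothing}\in\cH(\varnothing)=\CC$, and for \textsc{O-Var} we send $\ket{v}$ to $\ket{x\mapsto v}$, both implemented by the empty circuit. For \textsc{O-IsoApp}, where each expression becomes $f\,e_j$ with $\iso(f)$, I would set $\msem{\ortho{T'}{f\,e_1,\dots,f\,e_n}} \defeq \msem{\ortho{T}{e_1,\dots,e_n}}\,\msem{f}\adj$, using \Cref{lem:compile-adj} and \Cref{lem:compose}; correctness follows because $\iso(f)$ gives $\msem{f}\adj\msem{f}=\II$ by \Cref{lem:iso_judgment}, so $\msem{f}\adj$ cancels the isometry before the inductive map is applied. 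For \textsc{O-Sum}, the tree is $(\cR_0,\cR_1)$ and I would take the direct sum of the two inductively-constructed operators via \Cref{lem:dirsum_possible}, using $\cH(T_0\oplus T_1)=\cH(T_0)\oplus\cH(T_1)$ and the fact that a left (resp.\ right) injection of $e_j$ lands in the corresponding summand, matching $\inj_j^{(\cR_0,\cR_1)}$.

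The substantive case is \textsc{O-Pair}, where the tree is obtained by grafting the tree $\cR_1^j$ of the second-component expressions onto the $j$th leaf of the first-component tree $\cR_0$; this mirrors decomposing $\cH(T_0\otimes T_1)$ into nested direct sums. I would build the operator in two layers: first apply $\msem{\ortho{T_0}{e_1,\dots,e_m}}\otimes\II$ followed by a distributivity isomorphism to reach $\bigoplus_{j:\cR_0}\parens{\cH(\Delta_j)\otimes\cH(T_1)}$; then, on each $j$th summand, apply $\II\otimes\msem{\ortho{T_1}{e_{j,1}',\dots,e_{j,n_j}'}}$ and distribute again to obtain $\bigoplus_{j:\cR_0}\bigoplus_{k:\cR_1^j}\cH(\Delta_j)\otimes\cH(\Delta_{j,k}')$. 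The disjointness of free variables assumed by \textsc{O-Pair} is exactly what lets me identify $\cH(\Delta_j,\Delta_{j,k}')$ with $\cH(\Delta_j)\otimes\cH(\Delta_{j,k}')$ and conclude that the composite sends $\msem{(e_j,e_{j,k}')}\ket{\tau_0,\tau_1}$ to $\inj_{(j,k)}\ket{\tau_0,\tau_1}$ in the grafted tree. Finally, for \textsc{O-Sub}, where a subsequence is retained, I would compose with the adjoints of the relevant injections to project away (and delete, via the additive-unit isomorphisms of \Cref{lem:compile-inj}) the summands whose subtrees contain only discarded expressions; orthogonality of images again ensures this projection leaves the retained summands untouched.

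I expect the main obstacle to be the \textsc{O-Pair} case: correctly threading the two distributivity steps through a tree-shaped direct sum (rather than a flat one) and verifying that the resulting summand order agrees with the leaf-grafting prescribed by \Cref{def:ortho_tree} demands careful bookkeeping, and it is here that the tree-indexed sum notation of \AppRef{J}{app:binary_trees} together with the leveling operator (\Cref{lem:leveling_operator}) do the real work of keeping the construction both correct and efficient.
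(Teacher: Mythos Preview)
Your proposal is correct and follows the same induction on the orthogonality derivation as the paper; the base cases, \textsc{O-IsoApp}, \textsc{O-Sum}, and \textsc{O-Sub} are handled essentially identically.

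The genuine divergence is in \textsc{O-Pair}. You propose the conceptually clean route: apply $\msem{\ortho{T_0}{\dots}}\otimes\II$, distribute over the tree $\cR_0$, then on each summand apply $\II\otimes\msem{\ortho{T_1}{\dots}}$ and distribute again, landing directly at $\bigoplus_{j:\cR_0}\bigoplus_{k:\cR_1^j}\cH(\Delta_j,\Delta_{j,k}')$, which is exactly the grafted tree $\cR$. The paper instead takes a more circuitous path through the $\LEVEL$ operator: it flattens $\bigoplus_{j:\cR_0}\cH(\Delta_j)$ into separate index and data registers, applies the second-component orthogonality circuits as a leveled direct sum, stacks the resulting index registers, then un-levels and applies a $\FINALMERGE$ to recombine contexts. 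Your construction is simpler and suffices for correctness; the paper's buys concrete circuit efficiency---in particular, when the first components $e_j$ contain no variables, the paper's circuit collapses to the identity because the encodings already align, whereas tree-wide distributivity would still shuffle tag qubits. You correctly anticipate in your final paragraph that the leveling machinery is where the efficiency gains live, but your distributivity-based outline would stand on its own as a proof of the lemma.
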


\begin{proof}
For \textsc{O-Void} and \textsc{O-Unit}, this operator is simply the identity circuit on an empty register. For \textsc{O-Var}, it is also the identity circuit, since it maps $\ket{v} \mapsto \ket{x \mapsto v}$.

For \textsc{O-IsoApp}, we define
\[
\msem{\ortho{T'}{f e_1, \dots, f e_n}} =
\msem{\ortho{T}{e_1, \dots, e_n}} \msem{f}\adj.
\]
Then, we have that
\begin{align*}
& \msem{\ortho{T'}{f e_1, \dots, f e_n}}\msem{f e_j}\ket{\tau_j} =
\msem{\ortho{T}{e_1, \dots, e_n}} \msem{f}\adj \msem{f} \msem{e_j} \ket{\tau_j} = \\
={}& \msem{\ortho{T}{e_1, \dots, e_n}} \msem{e_j} \ket{\tau_j} =
\inj_j^\cR \ket{\tau_j}.
\end{align*}
Here, we used the fact that $f$ is an isometry to say that $\msem{f}\adj \msem{f}$ is the identity.

For \textsc{O-Sum}, we define
\[
\msem{\ortho{T_0 \oplus T_1}{\lef{T_0}{T_1}e_1, &\ldots, \lef{T_0}{T_1}e_n, \\ \rit{T_0}{T_1}e_1', &\ldots, \rit{T_0}{T_1}e_{n'}'}} =
\msem{\ortho{T_0}{e_1, \dots, e_n}} \oplus \msem{\ortho{T_1}{e'_1, \dots, e'_{n'}}}.
\]
Suppose that the tree structures corresponding to $\ortho{T_0}{e_1, \dots, e_n}$ and $\ortho{T_1}{e'_1, \dots, e'_{n'}}$ are $\cR_0$ and $\cR_1$. This circuit acts as:
\begin{align*}
& \msem{\lef{T_0}{T_1} e_j}\ket{\tau_j}
= \msem{e_j}\ket{\tau_j} \oplus 0
\mapsto \inj_j^{\cR_0} \ket{\tau_j} \oplus 0^{\oplus \cR_1}
= \inj_j^{(\cR_0, \cR_1)} \ket{\tau_j} \\
& \msem{\rit{T_0}{T_1} e_j'}\ket{\tau_j'}
= 0 \oplus \msem{e_j'}\ket{\tau_j'}
\mapsto 0^{\oplus \cR_1} \oplus \inj_j^{\cR_0} \ket{\tau_j}
= \inj_{\size(\cR_0) + j}^{(\cR_0, \cR_1)} \ket{\tau_j}.
\end{align*}

Now, for \textsc{O-Pair}: suppose that we have $\ortho{T_0}{e_1, \dots, e_n}$ holds with tree structure $\cR_0$ where each $e_j$ is typed with quantum context $\Delta_j$, and for each $j$, we have $\ortho{T_1}{e_{j,1}', \dots, e_{j,n_j}'}$ with tree structure $\cR_1^j$, where each $e_{j,k}$ is typed with quantum context $\Delta_{j,k}'$. The tree structure of the combined sequence of pairs will be $\cR$, which is obtained by replacing the $j$th leaf of $\cR_0$ with a copy of $\cR_1^j$. Now, the operator is defined by the following circuit:

\resizebox{\textwidth}{!}{
\shortstack{
\begin{quantikz}
\setwiretype{n} &&&&& \gate[2]{\LEVEL(\cR_0)} & \setwiretype{q} \qwbundle{\CC^{\oplus L(\cR_0)}} & \permute{2,1} &&& \rstick{$\cdots$} \\
\lstick{$\cH(T_0)$} & \gate{\msem{\ortho{T_0}{e_1, \dots, e_n}}} & \qwbundle{\bigoplus_{j : \cR_0} \cH(\Delta_j)} &&&&&& \gate[2]{\bigoplus_{j:L(\cR_0)} \msem{\ortho{T_1}{e'_{j,1}, \dots, e'_{j, n_j}}}} & \qwbundle{\bigoplus_{j:L(\cR_0)}\bigoplus_{k:L(\cR_1^j)}\cH(\Delta_{j,k})} & \rstick{$\cdots$} \\
\lstick{$\cH(T_1)$} &&&&&&&& 
\end{quantikz}
\\ \vspace{20pt} \\
\begin{quantikz}
\lstick{$\cdots$} && \permute{2,1} & \gate[3]{\bigoplus_{j:L(\cR_0)} \LEVEL\adj(\cR_1^j, h)} & \gate[2]{\LEVEL\adj(\cR_0)} & \gate{\bigoplus_{j:\cR_0}\bigoplus_{k:\cR_1^j}\FINALMERGE_{j,k}} & \rstick{$\bigoplus_{j:\cR} \cH(\Delta_j)$} \\
\lstick{$\cdots$} & \gate[2]{\bigoplus_{j:L(\cR_0)} \LEVEL(\cR_1^j, h)} &&& \\
\setwiretype{n} && \setwiretype{q} &
\end{quantikz}
}}
where we define $h = \max_j\{\height(\cR_1^j)\}$, we define $\FINALMERGE_{j,k}$ (with a diagram drawn in the style of low-level circuits) as:

\begin{quantikz}
\lstick{$\size(\Delta_j)$} && \gate[2]{\MERGE(\Delta_j, \Delta_{j,k}')} \\
\lstick{$\max_{j'} \parens{\size(\Delta_{j'}) - \size(\Delta_j)}$} & \permute{2,1} && \\
\lstick{$\size(\Delta_{j,k}')$} &&& \\
\lstick{$\max_{j',k'} \parens{\size(\Delta_{j',k'}') - \size(\Delta_{j,k}')}$} &&&
\end{quantikz}

where the bottom two wires are marked as flag registers, since they contain zero-padding regions from direct sum encodings.

The $\LEVEL$ operator (\Cref{lem:leveling_operator}) is designed to separate a potentially uneven direct sum encoding structure into a separate \emph{index register} and \emph{data register}. This allows us to concatenate the tree path from $\cR_0$ and $\cR_1^j$ into a single path in the combined tree, then merge the pieces of data from the two sides of the pair and put them in the correct place in $\cR$. The action of the circuit proceeds as follows:

\begin{enumerate}
\item First, applying $\ortho{T_0}{e_1, \dots, e_n}$, we obtain the direct sum of the $\Delta_j$ over $\cR_0$.
\item After applying the leveling operator, we are able to partition our quantum register into two: the index register contains only the information about which path was taken on the tree (with trailing zeros where appropriate due to different-sized branches), and the data register corresponds to each of the $\Delta_j$ (with trailing zeros where appropriate due to different-sized contexts).
\item We take a direct sum of the operators $\msem{\ortho{T_1}{e'_{j,1}, \dots, e'_{j, n_j}}}$ over the leveled tree $L(\cR_0)$ and then take the direct sum of the leveling operators for each $\cR_1^j$, where the leveling is done to the maximum height of all the $\cR_1^j$. This creates a ``stack'' of the index registers of the trees: the first block of qubits encodes a branch in $\cR_0$, possibly followed by some zeros, then the next block encodes a branch in $\cR_1^j$ for the $j$ corresponding to the path taken in the first block, possibly followed by more trailing zeros. The data register then contains all the $\Delta_{j,k}'$ starting at the same position.
\item We put the stacked index registers on top and stack the data registers below them. Now, we need to eliminate the zeros interspersed between encodings and undo the separation into blocks. We apply the direct sum of the adjoints of the leveling operators for the $\cR_1^j$, which removes the zeros between the index encodings and the combined data of $\Delta_j$ and $\Delta_{j,k}'$. Note that these leveling operators are not exactly the same as the ones before, since the data registers are now larger (we omit this in the notation in the circuit diagram).
\item We then apply the adjoint of the leveling operator for $\cR_0$. We now have a sum over the correct structure $\cR$, but each leaf corresponds to two blocks of contexts, possibly separated by zeros.
\item Applying the direct sum of the $\FINALMERGE$ operators, we combine all the contexts in the correct way, obtaining the desired result.
\end{enumerate}

Finally, for \textsc{O-Sub}: Since we assume $n > 0$ (we will treat $n = 0$ as a special case for the control flow constructs), at least one of the expressions before the application of \textsc{O-Sub} must be kept. If $\cR = \Leaf$, the circuit then must be the identity. Now, we define the circuit $U_\cR$ recursively in terms of subtrees and selected indices. Suppose that $\cR = (\cR_0, \cR_1)$. We can associate these subtrees with types $T_0, T_1$ that are formed as sum types taken along the binary trees, with the contexts at the leaves treated as product types (the encoding is identical). In the case where $\cR_0$ has no selected indices, we define $U_\cR = \msem{\rit{T_0}{T_1}}\adj$, and if $\cR_1$ has no selected indices, we define $U_\cR = \msem{\lef{T_0}{T_1}}\adj$. Otherwise, we just define $U_\cR = U_{\cR_0} \oplus U_{\cR_1}$. It is clear that this operator specifically removes those subtrees that only contain discarded expressions.

\end{proof}

While the above circuit appears complicated, it is more efficient than the existing solution in many cases. In fact, if the expressions contain no variables in the first elements of the pairs, this entire circuit actually simplifies to the identity. This is because in that case, we do not need to do anything to combine the indices of the two trees, as the encodings already correspond to the desired structure.

\subsection{Qunity Typing Judgment Compilation}\label{app:typing_judgment_compilation}

Here we give the compiled circuits for all of the typing judgment cases and demonstrate algebraically that they are correct. In the following circuit diagrams, the ``controlled cloud'' is a ``share'' gate, implemented simply as a series of CNOT gates between the given register and an ancilla register. Additionally, we put control symbols on wires associated with classical contexts, to indicate that any interaction with these wires only uses such share gates.

\textsc{T-Gate}: We assume that our low-level circuits contain these gates as primitives.
\[
\begin{quantikz}
	\lstick{$\cH(\Bit)$} & \gate{\begin{matrix} \cos(r_\theta / 2) & -e^{i r_\lambda} \sin(r_\theta / 2) \\ e^{i r_\phi} \sin(r_\theta / 2) & e^{i(r_\phi + r_\lambda)} \cos(r_\theta / 2) \end{matrix}} & \rstick{$\cH(\Bit)$} \qw
\end{quantikz}
\]
\begin{alignat*}{3}
	&&\;& \ket{\zero} &&\in \cH(\Bit) \\
	&\mapsto&& \cos(r_\theta / 2) \ket{\zero} + e^{i r_\phi} \sin(r_\theta / 2) \ket{\one} &&\in \cH(\Bit) \\
	&&\;& \ket{\one} &&\in \cH(\Bit) \\
	&\mapsto&& -e^{i r_\lambda} \sin(r_\theta / 2) \ket{\zero} + e^{i(r_\phi + r_\lambda)} \cos(r_\theta / 2) \ket{\one} &&\in \cH(\Bit)
\end{alignat*}

\textsc{T-Left}: This was already implemented in \Cref{app:low_level_compilation}.
\[
\begin{quantikz}
	\lstick{$\cH(T_0)$} & \gate{\lef{T_0}{T_1}} & \rstick{$\cH(T_0 \oplus T_1)$} \qw
\end{quantikz}
\]
\begin{alignat*}{3}
	&&\;& \ket{v} &&\in \cH(T_0) \\
	&\mapsto&& \ket{v} \oplus 0 \\
\end{alignat*}

\textsc{T-Right}: This was already implemented in \Cref{app:low_level_compilation}.
\[
\begin{quantikz}
	\lstick{$\cH(T_1)$} & \gate{\rit{T_0}{T_1}} & \rstick{$\cH(T_0 \oplus T_1)$} \qw
\end{quantikz}
\]
\begin{alignat*}{3}
	&&\;& \ket{v} &&\in \cH(T_1) \\
	&\mapsto&& 0 \oplus \ket{v} \\
\end{alignat*}

\textsc{T-PureAbs}:
\begin{quantikz}
	\lstick{$\cH(T)$} & \gate{e^\dagger} & \qwbundle{\cH(\Delta)} & \gate{e'} &\rstick{$\cH(T')$} \qw
\end{quantikz}
\begin{alignat*}{3}
	&&\;& \ket{v} &&\in \cH(T) \\
	&\mapsto&& \msem{e}^\dagger \ket{v} &&\in \cH(\Delta) \\
	&\mapsto&& \msem{e'} \msem{e}^\dagger \ket{v} &&\in \cH(T') \\
\end{alignat*}

\textsc{T-Rphase}:
Let $E\subcap{f} : \cH(T) \to \cH(\Delta) \oplus \cH\subcap{f}$ be the norm-preserving operator constructed from $\msem{\varnothing \partition \Delta \vdash e : T}^\dagger$ using Lemma~\ref{lem:pure-error}.
The compiled \textsc{T-Rphase} circuit then looks like this:
\[
\begin{quantikz}
	\lstick{$\cH(T)$} & \gate{E\subcap{f}} & \qwbundle{\cH(\Delta) \oplus \cH\subcap{f}} & \hphantomgate{ex} & \gate{e^{i r'} \mathbb{I}_\Delta \oplus e^{i r} \mathbb{I}\subcap{f}} & \qwbundle{\cH(\Delta) \oplus \cH\subcap{f}} & \hphantomgate{ex} & \gate{E\subcap{f}^\dagger} &\rstick{$\cH(T)$} \qw
\end{quantikz}
\]
Using the fact that $\msem{\varnothing \partition \Delta \vdash e : T}^\dagger = \msem{\lef{\Delta}{\textsc{G}}}^\dagger E\subcap{f}$,
\begin{align*}
	& E\subcap{f}^\dagger \left(e^{i r} \mathbb{I}_\Delta \oplus e^{i r'} \mathbb{I}\subcap{f}\right) E\subcap{f} \\
	=&\; E\subcap{f}^\dagger \left(e^{i r} \msem{\lef{\Delta}{\textsc{G}}} \msem{\lef{\Delta}{\textsc{G}}}^\dagger + e^{i r'} \msem{\rit{\Delta}{\textsc{G}}}\msem{\rit{\Delta}{\textsc{G}}}^\dagger\right) E\subcap{f} \\
	=&\; E\subcap{f}^\dagger \left(e^{i r} \msem{\lef{\Delta}{\textsc{G}}} \msem{\lef{\Delta}{\textsc{G}}}^\dagger + e^{i r'} (\mathbb{I} - \msem{\lef{\Delta}{\textsc{G}}} \msem{\lef{\Delta}{\textsc{G}}}^\dagger)\right) E\subcap{f} \\
	=&\; e^{i r} E\subcap{f}^\dagger \msem{\lef{\Delta}{\textsc{G}}} \msem{\lef{\Delta}{\textsc{G}}}^\dagger E\subcap{f} + e^{i r'} \left(\mathbb{I} - E\subcap{f}^\dagger \msem{\lef{\Delta}{\textsc{G}}} \msem{\lef{\Delta}{\textsc{G}}}^\dagger E\subcap{f}\right) \\
	=&\; e^{i r} \msem{\varnothing \partition \Delta \vdash e : T} \msem{\varnothing \partition \Delta \vdash e : T}^\dagger + e^{i r'} \left(\mathbb{I} - \msem{\varnothing \partition \Delta \vdash e : T} \msem{\varnothing \partition \Delta \vdash e : T}^\dagger\right) \\
\end{align*}

\textsc{T-Pmatch}:

\resizebox{\textwidth}{!}{
\begin{quantikz}
\lstick{$\cH(T)$} & \gate{\msem{\ortho{T}{e_1, \dots, e_n}}} & \qwbundle{\bigoplus_{j:\cR_0} \cH(\Delta_j)} &&& \gate{\textsc{TreeRearrange}(\cR_0, \cR_1)} & \qwbundle{\bigoplus_{j:\cR_1} \cH(\Delta_j)} &&& \gate{\msem{\ortho{T}{e_1, \dots, e_n}}\adj} & \rstick{$\cH(T')$}
\end{quantikz}
}

This is explained in \Cref{sec:pmatch_compilation}.

\textsc{T-Channel}:
\begin{quantikz}
	\lstick{$\cH(T)$} & \gate{f} & \rstick{$\cH(T')$} \qw
\end{quantikz}
\begin{alignat*}{3}
	&&\;& \op{v}{v'} &&\in \cL(\cH(T)) \\
	&\mapsto&& \msem{f} \op{v}{v'} \msem{f}^\dagger &&\in \cL(\cH(T')) \\
\end{alignat*}

\textsc{T-MixedAbs}:
\begin{quantikz}
	\lstick{$\cH(T)$} & \gate{e^\dagger} & \qwbundle{\cH(\Delta)} & \gate{e'} & \rstick{$\cH(T')$} \qw
\end{quantikz}
\begin{alignat*}{3}
	&&\;& \op{v}{v'} &&\in \cL(\cH(T)) \\
	&\mapsto&& \msem{e}^\dagger \op{v}{v'} \msem{e} &&\in \cL(\cH(\Delta)) \\
	&\mapsto&& \msem{e'} \left(\msem{e}^\dagger \op{v}{v'} \msem{e}\right) &&\in \cL(\cH(T'))
\end{alignat*}

\textsc{T-Unit}:
\begin{quantikz}
	\lstick{$\cH(\Gamma)$} & \rstick{$\cH(\Gamma)$} \qw \\
	\lstick{$\cH(\varnothing)$} & \rstick{$\cH(\Unit)$} \\
\end{quantikz}
\begin{alignat*}{2}
	&&\;& \ket{\sigma, \varnothing} \in \cH(\Gamma) \otimes \cH(\varnothing) \\
	&=&& \ket{\sigma, \unit} \in \cH(\Gamma) \otimes \cH(\Unit)
\end{alignat*}

\textsc{T-Cvar}:
\begin{quantikz}
	\lstick{$\cH(\Gamma)$} & \qw & \rstick{$\cH(\Gamma)$} \qw \\
	\lstick{$\cH(x : T)$} & \ctrl{2} & \rstick{$\cH(x : T)$} \qw \\
	\lstick{$\cH(\Gamma')$} & \qw & \rstick{$\cH(\Gamma')$} \qw \\
	\lstick{$\cH(\varnothing)$} \setwiretype{n} & \gate[style={cloud}]{} & \setwiretype{q} \rstick{$\cH(T)$} \qw
\end{quantikz}
\begin{alignat*}{2}
	&&\;& \ket{\sigma, x \mapsto v, \sigma'} \in \cH(\Gamma, x : T, \Gamma') \\
	&\mapsto&& \ket{\sigma, x \mapsto v, \sigma', v} \in \cH(\Gamma, x : T, \Gamma') \otimes \cH(T)
\end{alignat*}

\textsc{T-Qvar}:
\begin{quantikz}
	\lstick{$\cH(\Gamma)$} & \rstick{$\cH(\Gamma)$} \qw \\
	\lstick{$\cH(x : T)$} & \rstick{$\cH(T)$} \qw
\end{quantikz}
\begin{alignat*}{2}
	&&\;& \ket{\sigma, x \mapsto v} \in \cH(\Gamma, x : T) \\
	&=&& \ket{\sigma, v} \in \cH(\Gamma) \otimes \cH(T)
\end{alignat*}

\textsc{T-PurePair}:
\begin{quantikz}
	\lstick{$\cH(\Gamma)$} & \qw & \qw & \ctrl{1} & \ctrl{3} & \qw \rstick{$\cH(\Gamma)$} \\
	\lstick{$\cH(\Delta)$} & \ctrl{2} & \qwbundle{\cH(\Delta)} & \gate[2]{{\Gamma \partition \Delta, \Delta_0 \vdash e_0 : T_0}} & \qw & \qw \rstick{$\cH(T_0)$} \\
	\lstick{$\cH(\Delta_0)$} & \qw & \qw & & \setwiretype{n} \\
     \setwiretype{n} & \gate[style={cloud}]{} & \setwiretype{q} \qwbundle{\cH(\Delta)} & \qw & \gate[2]{{\Gamma \partition \Delta, \Delta_1 \vdash e_1 : T_1}} & \qw \rstick{$\cH(T_1)$} \\
	\lstick{$\cH(\Delta_1)$} & \qw & \qw & \qw & & \setwiretype{n}
\end{quantikz}
\begin{alignat*}{3}
	&&& \ket{\sigma, \tau, \tau_0, \tau_1} &&\in \cH(\Gamma, \Delta, \Delta_0, \Delta_1) \\
	&\mapsto&\;& \ket{\sigma, \tau, \tau_0, \tau, \tau_1} &&\in \cH(\Gamma, \Delta, \Delta_0, \Delta, \Delta_1) \\
	&\mapsto&\;& \ket{\sigma} \otimes \msem{\sigma : \Gamma \partition \Delta, \Delta_0 \vdash e_0 : T_0} \ket{\tau, \tau_0} \otimes \ket{\tau, \tau_1} &&\in \cH(\Gamma) \otimes \cH(T_0) \otimes \cH(\Delta, \Delta_1) \\
	&\mapsto&\;& \ket{\sigma} \otimes \msem{\sigma : \Gamma \partition \Delta, \Delta_0 \vdash e_0 : T_0} \ket{\tau, \tau_0} \otimes \msem{\sigma : \Gamma \partition \Delta, \Delta_1 \vdash e_1 : T_1} \ket{\tau, \tau_1} &&\in \cH(\Gamma) \otimes \cH(T_0 \otimes T_1)
\end{alignat*}

\textsc{T-Ctrl}:

In the special case where $n = 0$, the semantics corresponds to $0 \in \cL(\cH(\Delta, \Delta')) \rarr \cH(\Void)$, so this may be implemented by a circuit that sends all input qubits to the flag register.

Now, consider $n > 0$. The compiled circuit for \textsc{T-Ctrl} uses modified versions of its subcircuits. We use Lemma~\ref{lem:purify} to get a purified version of the circuit for $e$ with semantics $\msem{e} : \cH(\Gamma, \Delta) \to \cH(T) \otimes \cH_{\textsc{g}}$.
Here, $\cH_{\textsc{g}}$ is some ``garbage'' Hilbert space containing vectors
\[
\braces{\ket{g_{\sigma, \tau, v}} : \sigma \in \VV(\Gamma), \tau \in \VV(\Delta), v \in \VV(T)}
\]
such that
\[
\msem{e}\ket{\sigma, \tau} = \sum_{v \in \VV(T)} \bra{g_{\sigma,\tau,v}, v} \msem{e}\ket{\sigma, \tau} \cdot \ket{g_{\sigma, \tau,v}, v}
\]
for all $\sigma, \tau, v$.

The circuit below is too large to fit on a single page, so the dots denote where the two pieces must fit together. All direct sums below are to be understood as being taken over the tree $\cR$ associated with the orthogonality judgment (\Cref{def:ortho_tree}). We use the orthogonality circuit from \Cref{app:ortho_compilation}.

\[
\begin{quantikz}
	\lstick{$\cH(\Gamma)$} & \qw & \qwbundle{\cH(\Gamma)} & \ctrl{1} & \qw & \qw & \qw & \qw & \qw & \ctrl{3} & \qwbundle{\cH(\Gamma)} \rstick{$\cdots$} \\
    \setwiretype{n} & & & \gate[2]{e} & \setwiretype{q} \qwbundle{\cH_{\textsc{g}}} & \qw & \qw & \qw & \qw & \qw & \qwbundle{\cH_{\textsc{g}}} \rstick{$\cdots$} \\
    \setwiretype{n} & \gate[style={cloud}]{} & \setwiretype{q} \qwbundle{\cH(\Delta)} & & \qwbundle{\cH(T)} & \gate{\msem{\ortho{T}{e_1, \dots, e_n}}} & \qwbundle{\substack{\bigoplus_j \cH(\Gamma_j) \\ \hfill}} & \gate[3]{\textsc{distr}} \\
	\lstick{$\cH(\Delta)$} & \ctrl{-1} & \qwbundle{\cH(\Delta)} & \qw & \qw & \qw & \qw & & \qwbundle{\substack{\bigoplus_j \cH(\Gamma_j, \Delta, \Delta') \\ \\ \hfill}} & \gate{\bigoplus_j \msem{e_j'}} & \qwbundle{\bigoplus_j (\cH(\Gamma_j) \otimes \cH(T'))} \rstick{$\cdots$} \\
	\lstick{$\cH(\Delta')$} & \qw & \qw & \qw & \qw & \qw & \qw & \\
\end{quantikz}
\]
\[
\begin{quantikz}
	\lstick{$\cdots$} & \qwbundle{\cH(\Gamma)} & \qw & \qw & \qw & \qw & \ctrl{2} & \qw & \qw & \qw \rstick{$\cH(\Gamma)$} \\
	\lstick{$\cdots$} & \qwbundle{\cH_{\textsc{g}}} & \qw & \qw & \qw & \qw & \gate[2]{e^\dagger} \\
    \setwiretype{n} & \hphantom{ext} & \gate[2]{\textsc{distr}} & \setwiretype{q} \qwbundle{\substack{\bigoplus_j \cH(\Gamma_j) \\ \\ \hfill}} & \gate{\msem{\ortho{T}{e_1, \dots, e_n}}\adj} & \qwbundle{\cH(T)} & & \qwbundle{\cH(\Delta)} & \gate[2]{\textsc{erase}} \\
	\lstick{$\cdots$} & \qwbundle{\bigoplus_j \cdots} & & \qwbundle{\cH(T')} & \qw & \qw & \qw & \qw & & \qw \rstick{$\cH(T')$} \\
\end{quantikz}
\]

\begin{alignat*}{2}
	&&& \ket{\sigma, \tau, \tau'} \\ &&&\in \cH(\Gamma, \Delta, \Delta') \\
	&\mapsto&\;& \ket{\sigma, \tau, \tau, \tau'} \\ &&&\in \cH(\Gamma, \Delta, \Delta, \Delta') \\
	&\mapsto&\;& \ket{\sigma} \otimes \msem{e} \ket{\sigma, \tau} \otimes \ket{\tau, \tau'} \\
	&=&& \ket{\sigma} \otimes \sum_{v \in \VV(T)} \bra{g_{\sigma,\tau,v}, v} \msem{e} \ket{\sigma, \tau} \cdot \ket{g_{\sigma,\tau,v}, v} \otimes \ket{\tau, \tau'} \\ &&&\in \cH(\Gamma) \otimes \cH_{\textsc{g}} \otimes \cH(T) \otimes \cH(\Delta, \Delta') \\
	&\mapsto&& \ket{\sigma} \otimes \sum_{v \in \VV(T)} \bra{g_{\sigma,\tau,v}, v} \msem{e} \ket{\sigma, \tau} \cdot \ket{g_{\sigma,\tau,v}} \otimes \left(\bigoplus_{j:\cR} \sum_{\sigma_j \in \VV(\Gamma_j)} \bra{\sigma_j} \msem{e_j}^\dagger \ket{v} \cdot \msem{e_j}^\dagger \ket{v} \right) \otimes \ket{\tau, \tau'} \\
	&=&& \ket{\sigma} \otimes \sum_{v \in \VV(T)} \bra{g_{\sigma,\tau,v}, v} \msem{e} \ket{\sigma, \tau} \cdot \ket{g_{\sigma,\tau,v}} \otimes \left(\bigoplus_{j:\cR} \sum_{\sigma_j \in \VV(\Gamma_j)} \bra{\sigma_j} \msem{e_j}^\dagger \ket{v} \cdot \bra{\sigma_j} \msem{e_j}^\dagger \ket{v} \cdot \ket{\sigma_j} \right) \otimes \ket{\tau, \tau'} \\
	&=&& \ket{\sigma} \otimes \sum_{v \in \VV(T)} \bra{g_{\sigma,\tau,v}, v} \msem{e} \ket{\sigma, \tau} \cdot \ket{g_{\sigma,\tau,v}} \otimes \left(\bigoplus_{j:\cR} \sum_{\sigma_j \in \VV(\Gamma_j)} \bra{\sigma_j} \msem{e_j}^\dagger \ket{v} \cdot \ket{\sigma_j} \right) \otimes \ket{\tau, \tau'} \\ &&&\in \cH(\Gamma) \otimes \cH_{\textsc{g}} \otimes \bigoplus_{j:\cR} \cH(\Gamma_j) \otimes \cH(\Delta, \Delta') \\
	&\mapsto&& \ket{\sigma} \otimes \sum_{v \in \VV(T)} \bra{g_{\sigma,\tau,v}, v} \msem{e} \ket{\sigma, \tau} \cdot \ket{g_{\sigma,\tau,v}} \otimes \left(\bigoplus_{j:\cR} \sum_{\sigma_j \in \VV(\Gamma_j)} \bra{\sigma_j} \msem{e_j}^\dagger \ket{v} \cdot \ket{\sigma_j, \tau, \tau'} \right) \\ &&&\in \cH(\Gamma) \otimes \cH_{\textsc{g}} \otimes \bigoplus_{j:\cR} \left(\cH(\Gamma_j, \Delta, \Delta')\right) \\
	&\mapsto&& \ket{\sigma} \otimes \sum_{v \in \VV(T)} \bra{g_{\sigma,\tau,v}, v} \msem{e} \ket{\sigma, \tau} \cdot \ket{g_{\sigma,\tau,v}} \otimes \left(\bigoplus_{j:\cR} \sum_{\sigma_j \in \VV(\Gamma_j)} \bra{\sigma_j} \msem{e_j}^\dagger \ket{v} \cdot \msem{e_j'} \ket{\sigma_j, \tau, \tau'} \right)
\end{alignat*}
\begin{alignat*}{3}
	&&&\ket{\sigma} \otimes \sum_{v \in \VV(T)} \bra{g_{\sigma,\tau,v}, v} \msem{e} \ket{\sigma, \tau} \cdot \ket{g_{\sigma,\tau,v}} \\ &&&\qquad\otimes \left(\bigoplus_{j:\cR} \sum_{\sigma_j \in \VV(\Gamma_j)} \bra{\sigma_j} \msem{e_j}^\dagger \ket{v} \cdot \sum_{v' \in \VV(T')} \bra{\sigma_j, v'} \msem{e_j'} \ket{\sigma_j, \tau, \tau'} \cdot \ket{\sigma_j, v'} \right) \\
	&=&&\ket{\sigma} \otimes \sum_{v \in \VV(T)} \bra{g_{\sigma,\tau,v}, v} \msem{e} \ket{\sigma, \tau} \cdot \ket{g_{\sigma,\tau,v}} \\ &&&\qquad\otimes \left(\sum_{j=1}^n \sum_{\sigma_j \in \VV(\Gamma_j)} \bra{\sigma_j} \msem{e_j}^\dagger \ket{v} \cdot \sum_{v' \in \VV(T')} \bra{\sigma_j, v'} \msem{e_j'} \ket{\sigma_j, \tau, \tau'} \cdot \inj_j \ket{\sigma_j, v'} \right) \\ &&&\in \cH(\Gamma) \otimes \cH_{\textsc{g}} \otimes \bigoplus_{j:\cR} \left(\cH(\Gamma_j) \otimes \cH(T')\right) \\
	&\mapsto&&\ket{\sigma} \otimes \sum_{v \in \VV(T)} \bra{g_{\sigma,\tau,v}, v} \msem{e} \ket{\sigma, \tau} \cdot \ket{g_{\sigma,\tau,v}} \\ &&&\qquad\otimes \left(\sum_{j=1}^n \sum_{\sigma_j \in \VV(\Gamma_j)} \bra{\sigma_j} \msem{e_j}^\dagger \ket{v} \cdot \inj_j^\cR \ket{\sigma_j} \otimes \sum_{v' \in \VV(T')} \bra{\sigma_j, v'} \msem{e_j'} \ket{\sigma_j, \tau, \tau'} \cdot \ket{v'} \right) \\ &&&\in \cH(\Gamma) \otimes \cH_{\textsc{g}} \otimes \bigoplus_{j:\cR} \cH(\Gamma_j) \otimes \cH(T') \\
	&\mapsto&&\ket{\sigma} \otimes \sum_{v \in \VV(T)} \bra{g_{\sigma,\tau,v}, v} \msem{e} \ket{\sigma, \tau} \cdot \ket{g_{\sigma,\tau,v}} \\ &&&\qquad\otimes \left(\sum_{j=1}^n \sum_{\sigma_j \in \VV(\Gamma_j)} \bra{\sigma_j} \msem{e_j}^\dagger \ket{v} \cdot \ket{v} \otimes \sum_{v' \in \VV(T')} \bra{\sigma_j, v'} \msem{e_j'} \ket{\sigma_j, \tau, \tau'} \cdot \ket{v'} \right) \\ &&&\in \cH(\Gamma) \otimes \cH_{\textsc{g}} \otimes \cH(T) \otimes \cH(T') \\
	&\mapsto&&\ket{\sigma} \otimes \sum_{v \in \VV(T)} \bra{g_{\sigma,\tau,v}, v} \msem{e} \ket{\sigma, \tau} \\ &&&\qquad\otimes \left(\sum_{j=1}^n \sum_{\sigma_j \in \VV(\Gamma_j)} \bra{\sigma_j} \msem{e_j}^\dagger \ket{v} \cdot \msem{e}^\dagger \ket{g_{\sigma,\tau,v},v} \otimes \sum_{v' \in \VV(T')} \bra{\sigma_j, v'} \msem{e_j'} \ket{\sigma_j, \tau, \tau'} \cdot \ket{v'} \right) \\ &&&\in \cH(\Gamma, \Delta) \otimes \cH(T') \\
\end{alignat*}
\begin{alignat*}{3}
	&&&\ket{\sigma} \otimes \sum_{v \in \VV(T)} \bra{g_{\sigma,\tau,v}, v} \msem{e} \ket{\sigma, \tau} \\ &&&\qquad\otimes \left(\sum_{j=1}^n \sum_{\sigma_j \in \VV(\Gamma_j)} \bra{\sigma_j} \msem{e_j}^\dagger \ket{v} \cdot \sum_{\tau_\star} \bra{\sigma,\tau_\star}\msem{e}^\dagger \ket{g_{\sigma,\tau,v},v} \cdot \ket{\tau_\star} \otimes \sum_{v' \in \VV(T')} \bra{\sigma_j, v'} \msem{e_j'} \ket{\sigma_j, \tau, \tau'} \cdot \ket{v'} \right) \\ &&&\in \cH(\Gamma, \Delta) \otimes \cH(T') \\
	&=&&\ket{\sigma} \otimes \sum_{v \in \VV(T)} \bra{g_{\sigma,\tau,v}, v} \msem{e} \ket{\sigma, \tau} \bra{\sigma,\tau}\msem{e}^\dagger \ket{g_{\sigma,\tau,v},v}\\ &&&\qquad\otimes \left(\sum_{j=1}^n \sum_{\sigma_j \in \VV(\Gamma_j)} \bra{\sigma_j} \msem{e_j}^\dagger \ket{v} \cdot \ket{\tau} \otimes \sum_{v' \in \VV(T')} \bra{\sigma_j, v'} \msem{e_j'} \ket{\sigma_j, \tau, \tau'} \cdot \ket{v'} \right) \\ &&&\in \cH(\Gamma, \Delta) \otimes \cH(T') \\
	&\mapsto&&\ket{\sigma} \otimes \sum_{v \in \VV(T)} \bra{g_{\sigma,\tau,v}, v} \msem{e} \ket{\sigma, \tau} \bra{\sigma,\tau}\msem{e}^\dagger \ket{g_{\sigma,\tau,v},v}\\ &&&\qquad\otimes \left(\sum_{j=1}^n \sum_{\sigma_j \in \VV(\Gamma_j)} \bra{\sigma_j} \msem{e_j}^\dagger \ket{v} \otimes \sum_{v' \in \VV(T')} \bra{\sigma_j, v'} \msem{e_j'} \ket{\sigma_j, \tau, \tau'} \cdot \ket{v'} \right) \\ &&&\in \cH(\Gamma) \otimes \cH(T') \\
\end{alignat*}

\textsc{T-PureApp}:
\begin{quantikz}
	\lstick{$\cH(\Gamma)$} & \ctrl{1} & \qw & \qw & \rstick{$\cH(\Gamma)$} \qw \\
	\lstick{$\cH(\Delta)$} & \gate{e} & \qwbundle{\cH(T)} & \gate{f} & \rstick{$\cH(T')$} \qw
\end{quantikz}
\begin{alignat*}{2}
	&&\;& \ket{\sigma, \tau} \in \cH(\Gamma, \Delta) \\
	&\mapsto&& \ket{\sigma} \otimes \msem{e} \ket{\tau} \in \cH(\Gamma) \otimes \cH(T) \\
	&\mapsto&& \ket{\sigma} \otimes \msem{f} \msem{e} \ket{\tau} \in \cH(\Gamma) \otimes \cH(T') \\
\end{alignat*}

\textsc{T-Mix}:
\begin{quantikz}
    \lstick{$\cH(\Gamma)$} & \ctrl{1} & \rstick{$\cH(\Gamma)$} \\
	\lstick{$\cH(\Delta)$} & \gate{e} & \rstick{$\cH(T)$} \qw
\end{quantikz}
\begin{alignat*}{2}
	&&\;& \op{\sigma, \tau}{\sigma, \tau'} \in \cL(\cH(\Gamma, \Delta)) \\
	&\mapsto&& \op{\sigma}{\sigma} \otimes \msem{e} \op{\tau}{\tau'} \msem{e}^\dagger \in \cL(\cH(\Gamma)) \otimes \cL(\cH(T)) \\
\end{alignat*}

\textsc{T-Discard}:
\begin{quantikz}
\lstick{$\cH(\Gamma)$} & \ctrl{1} & \rstick{$\Gamma$} \\
\lstick{$\cH(\Delta)$} & \gate{e} & \rstick{$\cH(T)$} \\
\lstick{$\cH(\Delta_0)$} & \trash{\cH(\Delta_0)}
\end{quantikz}
\begin{alignat*}{2}
	&&\;& \op{\sigma, \tau, \tau_0}{\sigma, \tau', \tau_0'} \in \cL(\cH(\Gamma, \Delta, \Delta_0)) \\
	&\mapsto&& \op{\sigma}{\sigma} \otimes \tr(\op{\tau_0}{\tau_0'}) \msem{e}(\op{\tau}{\tau'}) \in \cL(\cH(\Gamma)) \otimes \cL(\cH(T)) \\
\end{alignat*}

\textsc{T-MixedPair}:
\begin{quantikz}
	\lstick{$\cH(\Gamma)$} & \qw & \qw & \ctrl{1} & \ctrl{3} & \qw \rstick{$\cH(\Gamma)$} \\
	\lstick{$\cH(\Delta)$} & \ctrl{2} & \qwbundle{\cH(\Delta)} & \gate[2]{{\Gamma \partition \Delta, \Delta_0 \Vdash e_0 : T_0}} & \qw & \qw \rstick{$\cH(T_0)$} \\
	\lstick{$\cH(\Delta_0)$} & \qw & \qw & & \setwiretype{n} \\
     \setwiretype{n} & \gate[style={cloud}]{} & \setwiretype{q} \qwbundle{\cH(\Delta)} & \qw & \gate[2]{{\Gamma \partition \Delta, \Delta_1 \Vdash e_1 : T_1}} & \qw \rstick{$\cH(T_1)$} \\
	\lstick{$\cH(\Delta_1)$} & \qw & \qw & \qw & & \setwiretype{n}
\end{quantikz}
\begin{alignat*}{3}
	&&\;& \op{\sigma, \tau, \tau_0, \tau_1}{\sigma, \tau', \tau_0', \tau_1'} &&\in \cL(\cH(\Gamma, \Delta, \Delta_0, \Delta_1)) \\
	&\mapsto&& \op{\sigma, \tau, \tau_0, \tau, \tau_1}{\sigma, \tau', \tau_0', \tau', \tau_1'} \\
	&=&& \op{\sigma}{\sigma} \otimes \op{\tau,\tau_0}{\tau',\tau_0'} \otimes \op{\tau,\tau_1}{\tau',\tau_1'} &&\in \cL(\cH(\Gamma)) \otimes \cL(\cH(\Delta, \Delta_0)) \otimes \cL(\cH(\Delta, \Delta_1)) \\
	&\mapsto&& \op{\sigma}{\sigma} \otimes \msem{e_0} \left( \op{\tau,\tau_0}{\tau',\tau_0'} \right) \otimes \msem{e_1} \left( \op{\tau,\tau_1}{\tau',\tau_1'} \right) &&\in \cL(\cH(\Gamma) \otimes \cH(T_0 \otimes T_1)) \\
\end{alignat*}

\textsc{T-Try}:
	\[
\begin{quantikz}
    \lstick{$\cH(\Gamma)$} & \ctrl{2} &&&&&&& \rstick{$\cH(\Gamma)$} \\
	\lstick{$\cH(\Delta_0)$} & \gate{{\cptp(\Delta_0 \Vdash e_0 : T)}} & \qwbundle{\substack{\cH(T) \oplus \CC \\ \hfill}} &[1cm] \gate[2]{\textsc{distr}} & \qwbundle{\substack{\cH(T) \otimes (\cH(T) \oplus \CC) \oplus \cH(T) \oplus \CC \\ \\ \hfill}} & \gate{\texttt{left}^\dagger} & \qwbundle{\substack{\cH(T) \otimes (\cH(T) \oplus \CC) \oplus \cH(T) \\ \hfill}} & \gate[2]{\textsc{distr}} & \rstick{$\cH(T)$} \qw \\
	\lstick{$\cH(\Delta_1)$} & \gate{{\cptp(\Delta_1 \Vdash e_1 : T)}} & \qwbundle{\cH(T) \oplus \CC} & &&&&& \trash{\cH(T) \oplus \CC \oplus \CC}
\end{quantikz}
\]
For brevity, define:
\begin{align*}
	\rho_0' &\defeq \msem{\sigma : \Gamma \partition \Delta_0 \Vdash e_0 : T}(\rho_0) \\
	\rho_1' &\defeq \msem{\sigma : \Gamma \partition \Delta_1 \Vdash e_1 : T}(\rho_1)
\end{align*}

Then, the circuit acts as follows:
\begin{alignat*}{3}
	&&\;& \rho_0 \otimes \rho_1 \\
	&\mapsto&& \left(\rho_0' \oplus (\tr\left(\rho_0\right) - \tr(\rho_0'))\right) \otimes \left(\rho_1' \oplus (\tr\left(\rho_1\right) - \tr(\rho_1')) \right) \\
	&\mapsto&& \rho_0' \otimes \left( \rho_1' \oplus (\tr\left(\rho_1\right) - \tr(\rho_1'))\right) \oplus (\tr\left(\rho_0\right) - \tr(\rho_0')) \left( \rho_1' \oplus (\tr\left(\rho_1\right) - \tr(\rho_1')) \right) \\
	&\mapsto&& \rho_0' \otimes \left( \rho_1' \oplus (\tr\left(\rho_1\right) - \tr(\rho_1'))\right) \oplus (\tr\left(\rho_0\right) - \tr(\rho_0')) \rho_1' \\
	&\mapsto&& \rho_0' \otimes \left( \rho_1' \oplus (\tr\left(\rho_1\right) - \tr(\rho_1')) \oplus 0 \right) + \rho_1' \left(0 \oplus 0 \oplus (\tr\left(\rho_0\right) - \tr(\rho_0'))\right) \\
	&\mapsto&& \tr\left(\rho_1\right) \rho_0' + (\tr\left(\rho_0\right) - \tr(\rho_0'))\rho_1'
\end{alignat*}

\textsc{T-MixedApp}:
\begin{quantikz}
	\lstick{$\cH(\Gamma)$} & \ctrl{1} & \qw & \qw & \rstick{$\cH(\Gamma)$} \qw \\
	\lstick{$\cH(\Delta)$} & \gate{e} & \qwbundle{\cH(T)} & \gate{f} & \rstick{$\cH(T')$} \qw
\end{quantikz}
\begin{alignat*}{2}
	&&\;& \op{\sigma, \tau}{\sigma, \tau'} \in \cL(\cH(\Gamma, \Delta)) \\
	&\mapsto&& \op{\sigma}{\sigma} \otimes \msem{e} \left( \op{\tau}{\tau'} \right) \in \cL(\cH(\Gamma)) \otimes \cL(\cH(T)) \\
	&\mapsto&& \op{\sigma}{\sigma} \otimes \msem{f} \left( \msem{e} ( \op{\tau}{\tau'} ) \right) \in \cL(\cH(\Gamma)) \otimes \cL(\cH(T'))
\end{alignat*}

\textsc{T-Match}:

This construction is similar to that of \textsc{T-Ctrl}, but it does not perform uncomputation and instead discards quantum data. We use purified versions of the circuits for the $e_j'$, obtaining $\textsc{purify}(e_j') : \cH(\Gamma,\Gamma_j, \Delta, \Delta_1) \rarr \cH(\Gamma, \Gamma_j) \otimes \cH(T') \otimes \cH_j$. We will still write it as $\msem{e_j'}$ where it is clear from context we are referring to the purification. Here, $\cH_j$ is a ``garbage Hilbert space'' containing vectors
\[
\braces{\ket{g_{j, \sigma, \sigma_j, \tau, \tau_1, v'}} : \sigma \in \VV(\Gamma), \sigma_j \in \VV(\Gamma_j), \tau \in \VV(\Delta), \tau_1 \in \VV(\Delta_1), v' \in \VV(T')},
\]
such that
\[
\msem{e_j'}\ket{\sigma, \sigma_j, \tau, \tau_1} = \sum_{v' \in \VV(T')} \bra{v', g_{j,\sigma,\sigma_j,\tau,\tau_1,v'}} \msem{e_j'}\ket{\sigma,\sigma_j,\tau,\tau_1} \cdot \ket{v', g_{j,\sigma,\sigma_j,\tau,\tau_1,v'}}.
\]
Note that unlike in \textsc{T-Ctrl}, we do \emph{not} purify $e$. As for \textsc{T-Ctrl}, all direct sums are to be understood as being taken over the tree $\cR$ associated with the orthogonality judgment.

\begin{quantikz}
\lstick{$\cH(\Gamma)$} & & & \ctrl{1} &&&&& \gate[5]{\textsc{distr}} \\
\lstick{$\cH(\Delta)$} & \ctrl{2} & \qwbundle{\cH(\Delta)} & \gate[2]{e} & \qwbundle{\cH(T)} & \gate{\msem{\ortho{T}{e_1, \dots, e_n}}} & \qwbundle{\bigoplus_j \cH(\Gamma_j)} &&& \qwbundle{\bigoplus_j \cH(\Gamma, \Gamma_j, \Delta, \Delta_1)} &&& \rstick{$\cdots$} \\
\lstick{$\cH(\Delta_0)$} &&&& \setwiretype{n} \\
\setwiretype{n} & \gate[style={cloud}]{} & \setwiretype{q} \qwbundle{\cH(\Delta)} &&&&&& \\
\lstick{$\cH(\Delta_1)$} &&&&&&&&
\end{quantikz}

\begin{quantikz}
\setwiretype{n} &&&&&&&& \gate[3]{\textsc{distr}} & \setwiretype{q} \qw & \rstick{$\cH(\Gamma)$} \\
\lstick{$\cdots$} & \qwbundle{\substack{\bigoplus_j \cH(\Gamma, \Gamma_j, \Delta, \Delta_1) \\ \\ \hfill}} & \gate{\bigoplus_j \textsc{purify}(e_j')} & \qwbundle{\hspace{-12pt} \bigoplus_j \parens{\cH(\Gamma, \Gamma_j) \otimes \cH(T') \otimes \cH_j}} &&&&&&& \rstick{$\cH(T')$} \\
\setwiretype{n} &&&&&&&&& \setwiretype{q} \qw & \trash{\bigoplus_j \parens{\cH(\Gamma_j) \otimes \cH_j}}
\end{quantikz}

\begin{alignat*}{3}
&&& \op{\sigma, \tau, \tau_0, \tau_1}{\sigma, \tau', \tau_0', \tau_1'} \\
&&&\in \cL(\cH(\Gamma, \Delta, \Delta_0, \Delta_1)) \\
&\mapsto&\;& \op{\sigma, \tau, \tau_0, \tau, \tau_1}{\sigma, \tau', \tau_0', \tau', \tau_1'} \\
&&&\in \cL(\cH(\Gamma, \Delta, \Delta_0, \Delta, \Delta_1)) \\
&\mapsto&\;& \op{\sigma}{\sigma} \otimes \msem{e}\parens{\op{\sigma, \tau, \tau_0}{\sigma, \tau', \tau_0'}} \otimes \op{\tau_0, \tau, \tau_1}{\tau_0', \tau', \tau_1'} \\
&=&\;& \op{\sigma}{\sigma} \otimes \sum_{v, w \in \VV(T)} \bra{v} \parens{\msem{e}\parens{\op{\sigma, \tau, \tau_0}{\sigma, \tau', \tau_0'}}}\ket{w} \cdot \op{v}{w} \otimes \op{\tau, \tau_1}{\tau', \tau_1'} \\
&&&\in \cL(\cH(\Gamma)) \otimes \cL(\cH(T)) \otimes \cL(\cH(T)) \otimes \cL(\cH(\Delta, \Delta_1)) \\
&\mapsto&\;& \op{\sigma}{\sigma} \otimes \sum_{v, w \in \VV(T)} \bra{v} \parens{\msem{e}\parens{\op{\sigma, \tau, \tau_0}{\sigma, \tau', \tau_0'}}}\ket{w} \cdot \\
&&& \qquad \parens{\sum_{j,k} \sum_{\substack{\sigma_j \in \VV(\Gamma_j) \\ \sigma_k \in \VV(\Gamma_k)}} \bra{\sigma_j}\msem{e_j}\adj\ket{v} \bra{w}\msem{e_j}\ket{\sigma_k} \cdot \inj_j^\cR \msem{e_j}\adj\ket{v}\bra{w}\msem{e_k} \inj_k^{\cR\dagger}}
\otimes \op{\tau, \tau_1}{\tau', \tau_1'} \\
&=&\;& \op{\sigma}{\sigma} \otimes \sum_{v, w \in \VV(T)} \bra{v} \parens{\msem{e}\parens{\op{\sigma, \tau, \tau_0}{\sigma, \tau', \tau_0'}}}\ket{w} \cdot \\
&&& \qquad \parens{\sum_{j,k} \sum_{\substack{\sigma_j \in \VV(\Gamma_j) \\ \sigma_k \in \VV(\Gamma_k)}} \bra{\sigma_j}\msem{e_j}\adj\ket{v} \bra{w}\msem{e_j}\ket{\sigma_k} \cdot \inj_j^\cR \op{\sigma_j}{\sigma_k} \inj_k^{\cR\dagger}}
\otimes \op{\tau, \tau_1}{\tau', \tau_1'} \\
&&&\in \cL(\cH(\Gamma)) \otimes \bigoplus_{j:\cR} \cL(\cH(\Gamma_j)) \otimes \cL(\cH(\Delta, \Delta_1)) \\
&\mapsto&\;& \sum_{v, w \in \VV(T)} \bra{v} \parens{\msem{e}\parens{\op{\sigma, \tau, \tau_0}{\sigma, \tau', \tau_0'}}}\ket{w} \cdot \\
&&& \qquad \parens{\sum_{j,k} \sum_{\substack{\sigma_j \in \VV(\Gamma_j) \\ \sigma_k \in \VV(\Gamma_k)}} \bra{\sigma_j}\msem{e_j}\adj\ket{v} \bra{w}\msem{e_j}\ket{\sigma_k} \cdot \inj_j^\cR \op{\sigma, \sigma_j, \tau, \tau_1}{\sigma, \sigma_k, \tau', \tau_1'} \inj_k^{\cR\dagger}} \\
&&&\in \bigoplus_{j:\cR} \cL(\cH(\Gamma,\Gamma_j,\Delta,\Delta_1)) \\
\end{alignat*}
\begin{alignat*}{3}
&\mapsto&\;& \sum_{v, w \in \VV(T)} \sum_{v', w' \in \VV(T')} \bra{v} \parens{\msem{e}\parens{\op{\sigma, \tau, \tau_0}{\sigma, \tau', \tau_0'}}}\ket{w} \cdot \\
&&& \qquad \cdot \sum_{j,k} \sum_{\substack{\sigma_j \in \VV(\Gamma_j) \\ \sigma_k \in \VV(\Gamma_k)}} \bra{\sigma_j}\msem{e_j}\adj\ket{v} \bra{w}\msem{e_j}\ket{\sigma_k} \cdot
\bra{v', g_{j,\sigma,\sigma_j,\tau,\tau_1,v'}} \msem{e_j'} \ket{\sigma,\sigma_j,\tau,\tau_1} \cdot \\
&&& \qquad \cdot \bra{\sigma,\sigma_k,\tau',\tau_1'}\msem{e_k'}\adj \ket{w', g_{k,\sigma,\sigma_k,\tau',\tau_1',w'}} \cdot
\inj_j^\cR \op{\sigma, \sigma_j, v', g_{j,\sigma,\sigma_j,\tau,\tau_1,v'}}{\sigma, \sigma_k, w', g_{k,\sigma,\sigma_k,\tau',\tau_1',w'}} \inj_k^{\cR\dagger} \\
&&&\in \bigoplus_{j:\cR} \cL(\cH(\Gamma,\Gamma_j) \otimes \cH(T') \otimes \cH_j) \\
&\mapsto&\;& \op{\sigma}{\sigma} \otimes \sum_{v, w \in \VV(T)} \sum_{v', w' \in \VV(T')} \bra{v} \parens{\msem{e}\parens{\op{\sigma, \tau, \tau_0}{\sigma, \tau', \tau_0'}}}\ket{w} \cdot \\
&&& \qquad \cdot \sum_{j,k} \sum_{\substack{\sigma_j \in \VV(\Gamma_j) \\ \sigma_k \in \VV(\Gamma_k)}} \bra{\sigma_j}\msem{e_j}\adj\ket{v} \bra{w}\msem{e_j}\ket{\sigma_k} \cdot
\bra{v', g_{j,\sigma,\sigma_j,\tau,\tau_1,v'}} \msem{e_j'} \ket{\sigma,\sigma_j,\tau,\tau_1} \cdot \\
&&& \qquad \cdot \bra{\sigma,\sigma_k,\tau',\tau_1'}\msem{e_k'}\adj \ket{w', g_{k,\sigma,\sigma_k,\tau',\tau_1',w'}} \cdot
\op{v'}{w'} \otimes \inj_j^\cR \op{\sigma_j, g_{j,\sigma,\sigma_j,\tau,\tau_1,v'}}{\sigma_k, g_{k,\sigma,\sigma_k,\tau',\tau_1',w'}} \inj_k^{\cR\dagger} \\
&&&\in \cL(\cH(\Gamma)) \otimes \cL(\cH(T')) \otimes \bigoplus_{j:\cR} \cL(\cH(\Gamma_j) \otimes \cH_j) \\
&\mapsto&\;& \op{\sigma}{\sigma} \otimes \sum_{v, w \in \VV(T)} \sum_{v', w' \in \VV(T')} \bra{v} \parens{\msem{e}\parens{\op{\sigma, \tau, \tau_0}{\sigma, \tau', \tau_0'}}}\ket{w} \cdot \\
&&& \qquad \cdot \sum_{j=1}^n \sum_{\sigma_j \in \VV(\Gamma_j)} \bra{\sigma_j}\msem{e_j}\adj\ket{v} \bra{w}\msem{e_j}\ket{\sigma_j} \cdot
\bra{v', g_{j,\sigma,\sigma_j,\tau,\tau_1,v'}} \msem{e_j'} \ket{\sigma,\sigma_j,\tau,\tau_1} \cdot \\
&&& \qquad \cdot \bra{\sigma,\sigma_j,\tau',\tau_1'}\msem{e_j'}\adj \ket{w', g_{j,\sigma,\sigma_j,\tau',\tau_1',w'}} \cdot
\op{v'}{w'} \\
&=&\;& \op{\sigma}{\sigma} \otimes \sum_{v \in \VV(T)} \bra{v} \parens{\msem{\sigma : \Gamma \partition \Delta, \Delta_0 \Vdash e : T}\parens{\ket{\tau, \tau_0}\bra{\tau', \tau_0'}}} \ket{v} \cdot
\sum_{j=1}^n \sum_{\sigma_j \in \VV(\Gamma_j)} \bra{\sigma_j}\msem{\varnothing : \varnothing \partition \Gamma_j \vdash e_j : T}\adj \ket{v} \cdot \\
&&& \qquad \cdot \msem{\sigma, \sigma_j : \Gamma, \Gamma_j \partition \Delta, \Delta_1 \Vdash e_j'}\parens{\op{\tau, \tau_1}{\tau', \tau_1'}} \\
&&&\in \cL(\cH(\Gamma)) \otimes \cL(\cH(T'))
\end{alignat*}

\fi

\end{document}